\renewcommand{\arraystretch}{0.99}
\newtheorem{theorem}{Theorem}
\newtheorem{lemma}[theorem]{Lemma}
\newtheorem{proposition}[theorem]{Proposition}
\theoremstyle{definition}
\newtheorem{definition}[theorem]{Definition}
\theoremstyle{remark}
\newtheorem{remark}[theorem]{Remark}
\newtheorem{example}[theorem]{Example}
\newtheorem{exercise}[theorem]{Exercise}
\newtheorem{problem}[theorem]{Problem}
\numberwithin{equation}{section}
\newcommand{\bigslant}[2]{{\raisebox{.2em}{$#1$}\left/\raisebox{-.2em}{$#2$}\right.}}
\DeclareMathOperator{\rk}{rk}
\DeclareMathOperator{\Id}{Id}
\DeclareMathOperator{\GL}{GL}
\DeclareMathOperator{\supp}{Supp}
\DeclareMathOperator{\wt}{wt}
\newcommand{\F}{\mathbb{F}}
\newcommand{\N}{\mathbb{N}}
\newcommand{\mC}{\mathcal{C}}
\newcommand{\bP}{\mathbf{P}} 
\newcommand{\bS}{\mathbf{S}} 
\newcommand{\bD}{\mathbf{D}} 
\newcommand{\bE}{\mathbf{E}} 
\newcommand{\bG}{\mathbf{G}}  
\newcommand{\bM}{\mathbf{M}} 
\newcommand{\bs}{\mathbf{s}} 
\newcommand{\bQ}{\mathbf{Q}}  
\newcommand{\bH}{\mathbf{H}}  
\newcommand{\bR}{\mathbf{R}}  
\newcommand{\bU}{\mathbf{U}} 
\newcommand{\bT}{\mathbf{T}} 
\newcommand{\bA}{\mathbf{A}} 
\newcommand{\bB}{\mathbf{B}}
\newcommand{\bC}{\mathbf{C}}
\newcommand{\be}{\mathbf{e}} 
\newcommand{\bX}{\mathbf{X}}  
\newcommand{\br}{\mathbf{r}} 
\newcommand{\bh}{\mathbf{h}} 
\newcommand{\bm}{\mathbf{m}} 
\newcommand{\bc}{\mathbf{c}} 
\newcommand{\bv}{\mathbf{v}} 
\newcommand{\bu}{\mathbf{u}} 
\newcommand{\bx}{\mathbf{x}}
\newcommand{\by}{\mathbf{y}}
\newcommand{\ba}{\mathbf{a}} 
\newcommand{\bb}{\mathbf{b}} 
\newcommand{\bz}{\mathbf{0}} 
\newcommand{\bg}{\mathbf{g}}
 \newcommand{\wtH}{\text{wt}_H}
\newcommand{\lcq}{\left\lceil \log_2(q) \right\rceil}
\newcommand{\gb}{\genfrac{[}{]}{0pt}{}}
\begin{document}

\title{A Survey on Code-based Cryptography}

 	\author[1]{Violetta Weger}
	\affil[1]{Department of Electrical and Computer Engineering\\
		Technical University of Munich\\
		Theresienstrasse 90\\
		80333 Munich, Germany\\ violetta.weger@tum.de
	}
	
	\author[2]{Niklas Gassner}
	\affil[2]{Institute of Mathematics\\
		University of Zurich\\
		Winterthurerstrasse 190\\
		8057 Zurich, Switzerland\\ $\{$niklas.gassner, rosenthal$\}$@math.uzh.ch
	}

	\author[2]{Joachim Rosenthal}



\maketitle

\begin{abstract}

The improvements on quantum technology are threatening our daily
cybersecurity, as a capable quantum computer can break all currently
employed asymmetric cryptosystems. In preparation for the quantum-era
the National Institute of Standards and Technology (NIST) has
initiated in 2016 a standardization process for public-key encryption (PKE)
schemes, key-encapsulation mechanisms (KEM) and digital signature schemes. In 2023, NIST made an additional call for post-quantum signatures. 
With this chapter we aim at providing a survey on code-based
cryptography, focusing on PKEs and signature schemes. We cover the main frameworks introduced in code-based cryptography and analyze their security assumptions. We provide
the mathematical background in a lecture notes style, with the
intention of reaching a wider audience.
\end{abstract}

\clearpage
 
\tableofcontents
 
\clearpage
 
\section{Introduction}\label{sec:intro}

Current public-key cryptosystems are based on integer factorization or the discrete logarithm problem over an elliptic curve or over a finite field. While there are no algorithms known for classical computers to solve these problems efficiently, Shor's algorithm allows a quantum computer to solve these problems in polynomial time \cite{shor}. 
As research on quantum computers advances, the cryptographic community is searching for cryptosystems that will survive attacks on quantum computers. This area of research is called \emph{post-quantum cryptography}.
\medskip

In 2016, the National Institute of Standards and Technology (NIST) has initiated a standardization process for post-quantum cryptosystems.
Such cryptosystems can be based on any hard problem, which cannot be solved by a capable quantum computer in polynomial time. Preferably, these are NP-complete problems, i.e., at least as hard as the hardest problems in NP.
\medskip

The main candidates for post-quantum cryptography are:

\begin{itemize}
\item \textbf{Code-based cryptography} (CBC): CBC is using hard problems from algebraic coding theory. Usually, this is the  NP-complete problem of decoding a random linear code. 
\item \textbf{Lattice-based cryptography}: Lattice-based cryptography is based on hard problems over lattices, such as the NP-complete problems of finding the shortest vector, respectively the closest vector to a given vector in a lattice.  For an overview see \cite{lattice}.
\item \textbf{Multivariate cryptography}: Multivariate cryptography is based on the NP-complete problem of solving multivariate (quadratic) equations defined over some finite field. For an overview see \cite{multivariate}.
\item  
\textbf{Isogeny-based cryptography}: Isogeny-based cryptography is based on finding the isogeny map between two supersingular elliptic curves \cite{isogeny}.
\item \textbf{Hash-based cryptography}: These cryptosystems base their security on the security of hash functions.
\end{itemize}
This survey only covers code-based cryptography, thus, we refer an interested reader to \cite{PQC}, for an overview on post-quantum cryptography in general. 
 \medskip

\emph{Code-based cryptography} denotes any cryptographic system, which bases its security on hard problems from algebraic coding theory. Classically, this problem is the decoding of a random linear code. This problem was shown to be NP-complete in 1978, by Berlekamp, McEliece and Van Tilborg in \cite{berlekamp}. In the same year, McEliece proposed the first code-based cryptosystem \cite{mceliece}, in which one picks a code with underlying algebraic structure that allows efficient decoding and then disguises this code as a seemingly random linear code. A message gets encrypted as corrupted codeword. With the knowledge of the secret code, one can recover the initial message, but an adversary faces the challenge of decoding a random linear code.
\medskip

In 2022, NIST selected 4 cryptographic systems to get standardized, namely  the lattice-based encryption scheme KYBER \cite{NISTKyber}, the lattice-based signature schemes DILITHIUM \cite{NISTdilithium} and FALCON \cite{NISTfalcon} and the hash-based signature scheme SPHINCS$^+$ \cite{NISTsphincs}. However, the standardization process of 2016 is not over yet, as three code-based schemes have moved to the fourth and final round, namely Classical McEliece \cite{NISTMcEliece}, HQC \cite{NISTHQC} and BIKE \cite{NISTBike}. 

The research in this area is, however, far from complete. In fact, in 2023, NIST has reopened the standardization call for signature schemes. 
Within this new call, we can find many code-based schemes and many new and interesting problems.
\medskip

In this chapter we give an extensive  survey on code-based cryptography, explaining the mathematical background of such systems and the difficulties of proposing secure and at the same time practical schemes. We cover the main proposals in the standardization call and the approaches to break such systems. With the reopened standardization process for digital signature schemes, we hope to reach  different research communities to tackle this new challenge   together. 
\medskip

 \subsection{Organization of the Chapter}
 This chapter is organized as follows. In Section \ref{sec:prelim}, we introduce some basics of algebraic coding theory as well as the basics of asymmetric cryptography, such as public-key encryption schemes and signature schemes. In particular, we aim at introducing all used coding-theoretic objects in Section \ref{sec:code} and to describe on a high-level the considered cryptographic schemes in \ref{sec:crypto}. This includes public-key encryption (PKE), key-encapsulation mechanism (KEM) and signature schemes. In particular, we show how to construct a signature scheme via the Fiat-Shamir transform on a Zero-Knowledge (ZK) protocol.
 We also cover the new methods, such as protocols with helpers and Multi-Party Computations (MPC).
 \medskip
 
The main focus of this chapter will lay on Section \ref{sec:pkeframework} where we introduce the public-key encryption frameworks by McEliece, Niederreiter, Alekhnovich as well as the quasi-cyclic scheme, the GPT cryptosystem and the Faure-Loidreau cryptosystem.
\medskip

In Section \ref{sec:sign}, we discuss some code-based signatures, starting with the first construction method, namely hash-and-sign in Section \ref{sec:hash}, then moving to some classic code-based ZK protocols in Section \ref{sec:ZKID} and describe  some new techniques, such as MPC-in-the-head.

In Section \ref{sec:security}, we analyze the security of these systems, where we first focus on the decoding problem of a random linear code: we present the proofs of NP-completeness in Section \ref{sec:NP} and the best-known solvers for the underlying problems in Section \ref{sec:ISD}.  
In the second part of the security analysis, namely Section \ref{sec:attack} we also present some algebraic attacks, which clearly depend on the chosen secret code. For this section, we focus on two of the most preferred codes, one being Reed-Solomon codes and the other being their rank metric analog, Gabidulin codes. Finally, we end the security analysis by shortly reporting on some other ways of attacking code-based systems, such as side-channel attacks, in Section \ref{sec:other}.
\medskip

In Section \ref{sec:overview}, we provide a historical overview on the main code-based PKE and signature scheme proposals, stating their differences, in the notion of the given frameworks, and whether they are  broken.  

In Section \ref{sec:nist}, we shortly cover the submissions to the NIST standardization process with a focus on the finalists in Section \ref{sec:finalist}: Classic McEliece, BIKE and HQC.

In Section \ref{sec:new}, we present the 11 code-based signature schemes submitted to the reopened standardization call and compare their performance in terms of signature and public key size and their running times.

\section{Preliminaries}\label{sec:prelim}

In order to make this chapter as self-contained as possible, we present here a rather long preliminary section, which hopefully makes this survey also accessible to non-experts. We start with the notation used throughout this chapter, followed by the basics of algebraic coding theory and defining all concepts and codes that will be used or mentioned and finally presenting the basics of the considered schemes on a very high-level and with specific examples.

\subsection{Notation}\label{sec:notation}
We denote by $\mathbb{F}_q$ the finite field with $q$ elements, where $q$ is a prime power and denote by $\mathbb{F}_q^\star$ its multiplicative group, i.e., $\mathbb{F}_q \setminus \{0\}$.
 Throughout this chapter, we denote by bold upper case or lower case letters matrices, respectively vectors, e.g. $\bx \in \mathbb{F}_q^n$ and $\bA \in \mathbb{F}_q^{k \times n}$. The identity matrix of size $k$ is denoted by  $\text{Id}_k$. Sets are denoted by upper case letters and for a set $S$, we denote by $\mid S \mid$ its cardinality. By $\text{GL}_n(q)$ we denote the $n \times n$ invertible matrices over $\mathbb{F}_q.$
Notation specific to only one part of this chapter will be defined right before they are used.
 
\subsection{Algebraic Coding Theory}\label{sec:code}

This section is designed to recall and/or introduce all definitions and coding theoretic objects required in this chapter. Most proofs will be omitted or left as an exercise. For interested readers that are completely new to algebraic coding theory we recommend the following books \cite{roth,berlekampbook,vanlint,macwilliams}. 
We also leave away the references to standard definitions and results, which can be found in any book on coding theory. For more specific results, we will give a proper reference.

\subsubsection{Basics on Hamming-Metric Codes}
In classical coding theory one considers the finite field $\mathbb{F}_q$ of $q$ elements, where $q$ is a prime power. 

\begin{definition}[Linear Code]
 Let $1 \leq k \leq n$ be integers. Then, an $[n,k]$ \emph{linear code} $\mC$ over $\F_q$ is a $k$-dimensional linear subspace of $\F_q^n$.
\end{definition}

Note that we emphasize the linearity, as a \emph{code} is simply any subset $\mathcal{C} \subseteq \mathbb{F}_q^n$.

The parameter $n$  is called the \emph{length} of the code,  the elements in the code are called \emph{codewords} and $R=k/n$ is called the \emph{rate} of the code.
In order to measure how far apart two vectors are, we endow $\F_q$ with a metric. Usually, this is the \emph{Hamming metric}.
\begin{definition}[Hamming Metric]
Let $n$ be a positive integer. For $\bx \in \F_q^n$, the \emph{Hamming weight} of $\bx$   is given by the size of its support, {i.e.},
$$\wtH(\bx) = \mid \{ i \in \{1, \ldots, n\} \mid x_i \neq 0 \} \mid.$$
For $\bx,\by \in \F_q^n$, the \emph{Hamming distance} between $\bx$ and $\by$ is given by the number of positions in which they differ, {i.e.},
$$d_H(\bx,\by) = \mid \{ i \in \{1, \ldots, n\} \mid x_i \neq y_i \} \mid.$$
\end{definition}
Note that the Hamming distance is induced from the Hamming weight, that is $d_H(\bx,\by) = \wtH(\bx-\by).$
Having defined a metric, one can also consider the minimum distance of a code, i.e., the smallest distance achieved by its distinct codewords. 
\begin{definition}[Minimum Distance]
Let $\mC$ be a  code over $\F_q$. The \emph{minimum Hamming distance} of $\mC$ is denoted by $d_H(\mC)$ and given by
$$d_H(\mC) = \min \{ d_H(\bx, \by) \mid \bx, \by \in \mC, \ \bx \neq \by\}.$$
\end{definition}
\begin{exercise}
Show that for a linear code $\mathcal{C}$, we have $$d_H(\mC) = \min\{ \wtH(\bx) \mid \bx \in \mC, \bx \neq \mathbf{0} \}.$$
\end{exercise}

\begin{exercise}
Give an example, where $$d_H(\mC) \neq \min\{ \wtH(\bx) \mid \bx \in \mC, \bx \neq \mathbf{0} \}.$$
\end{exercise}
We denote by $d_H(\bx, \mC)$ the minimal distance between $\bx \in \F_q^n$ and a codeword in $\mC$.

Let $r$ be a positive integer. We define the Hamming ball as all the vectors which have at most Hamming weight $r$, i.e.,
$$B_H(r,n,q) = \{ \bx \in \F_q^n \mid \wtH( \bx) \leq r\}.$$
\begin{exercise} Show that 
$$\mid B_H( r,n,q) \mid = \sum_{i=0}^r \binom{n}{i}(q-1)^i.$$
\end{exercise}
The minimum distance of a code is an important parameter, since it is connected to the error correction capability of the code. \\
 We say that a code can \emph{correct} up to $t$ errors, if for all $\bx \in \F_q^n$ with $d_H(\bx, \mC) \leq  t$, there exists exactly one $\by \in \mC$, such that $d_H(\bx, \by) \leq t$.
A  \emph{decoding algorithm} $\mathcal{D}$ is an algorithm that is given such a word $\bx \in \F_q^n$ and returns the closest codeword, $\by \in \mC$, such that $d_H(\bx, \by) \leq t$. 
The most interesting codes for applications  are codes with an efficient decoding algorithm, which clearly not every code possesses. 

\begin{exercise}
Let $\mC$ be a linear code over $\F_q$ of length $n$ and of minimum distance $d_H$. Show that the code can  correct  up to $t:=\left\lfloor \frac{d_H-1}{2} \right\rfloor$ errors.
\end{exercise}
One of the most important bounds in coding theory is the Singleton bound, which provides an upper bound on the minimum distance of a code.
\begin{theorem}[Singleton Bound \cite{sing}]
Let $k \leq n$ be positive integers and let $\mC$ be an $[n,k]$ linear code over $\F_q$. Then, 
$$d_H \leq n-k+1.$$
\end{theorem}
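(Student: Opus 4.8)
The plan is to prove the bound by a \emph{puncturing} argument: delete just enough coordinates to force the dimension below $k$, and exploit the resulting collapse to manufacture a codeword of small Hamming weight.

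Concretely, I would consider the $\mathbb{F}_q$-linear map $\pi \colon \mathbb{F}_q^n \to \mathbb{F}_q^{k-1}$ given by $\pi(x_1,\dots,x_n) = (x_1,\dots,x_{k-1})$ and restrict it to $\mC$. Since $\dim_{\mathbb{F}_q} \mC = k$ while the codomain has dimension $k-1$, the rank-nullity theorem forces $\dim \Ker(\pi|_{\mC}) \geq k-(k-1) = 1$, so there is a \emph{nonzero} codeword $\bc \in \mC$ with $c_1 = \cdots = c_{k-1} = 0$. Such a $\bc$ has $\supp(\bc) \subseteq \{k, k+1, \dots, n\}$, hence $\wtH(\bc) \leq n-(k-1) = n-k+1$. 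Invoking the exercise above, which identifies the minimum distance of a linear code with the least weight of a nonzero codeword, we conclude $d_H = d_H(\mC) \leq \wtH(\bc) \leq n-k+1$, as desired.

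There is no genuine obstacle here; the argument is essentially immediate once the map $\pi$ is written down. The only points that deserve a line of care are (i) the dimension count guaranteeing that $\Ker(\pi|_{\mC})$ is nontrivial — pure linear algebra — and (ii) the passage from ``the first $k-1$ coordinates vanish'' to the weight bound, via the identification of the Hamming weight with the size of the support. An alternative route, which I would mention but not pursue at this stage, permutes the columns of $\mC$ so that it admits a generator matrix in systematic form $[\,\text{Id}_k \mid A\,]$ — a column permutation changes neither $n$, $k$, nor the minimum distance — and then simply reads off that the first row has weight at most $1 + (n-k) = n-k+1$; this is equally short but presupposes the theory of generator matrices, which has not yet been introduced in the text.
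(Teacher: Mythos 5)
Your proof is correct, but it follows a genuinely different route from the one the paper has in mind. The text's intended argument (stated as the exercise immediately after the theorem) is to delete $d_H-1$ coordinates and show that the resulting projection is injective on $\mC$: two codewords agreeing on the remaining $n-d_H+1$ positions would be at distance at most $d_H-1$, so $q^k = \mid \mC \mid \leq q^{\,n-d_H+1}$, which rearranges to $d_H \leq n-k+1$. That argument uses only the minimum distance and a counting bound, and in particular it works verbatim for non-linear codes (giving $\mid \mC \mid \leq q^{\,n-d+1}$). Your argument instead projects onto the first $k-1$ coordinates, uses rank--nullity to produce a nonzero codeword supported on the last $n-k+1$ positions, and then invokes the identification of $d_H(\mC)$ with the minimum weight of a nonzero codeword. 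This is shorter and exhibits an explicit low-weight codeword, but it is intrinsically a linear-code argument: both the dimension count and the distance-equals-weight step require linearity (the latter is exactly the earlier exercise in the text, and the subsequent exercise shows it can fail for non-linear codes). One small quibble of terminology: what you describe is closer to a shortening argument (codewords vanishing on a prescribed coordinate set) than a puncturing argument; the paper's injectivity proof is the one that is genuinely about puncturing. Your systematic-form variant is also fine and amounts to the same kernel argument after a column permutation.
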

\begin{exercise}
Prove the Singleton Bound by showing that  deleting $d_H-1$ of the positions is an injective map.
\end{exercise}
A code that achieves the Singleton bound is called a \emph{maximum distance separable} (MDS) code. MDS codes   are of immense interest, since they can correct the maximal amount of errors for fixed code parameters.

Linear codes allow for an easy representation through their generator matrices, which have the code as an image.
\begin{definition}[Generator Matrix]
Let $k \leq n$ be positive integers and let $\mC$ be an $[n,k]$ linear code over $\F_q$. Then, a matrix $\bG \in \F_q^{k \times n}$ is called a \emph{generator matrix} of $\mC$ if
$$\mC = \left\{ \bx \bG \mid \bx \in \F_q^k\right\},$$
that is, the rows of $\bG$ form a basis of $\mC$.
\end{definition}
We will often write $\langle \bG \rangle$ to denote the code generated by $\bG.$

One can also represent the code through a matrix $\bH$, which has the code as kernel.
\begin{definition}[Parity-Check Matrix]
Let $k \leq n$ be positive integers and let $\mC$ be an $[n,k]$ linear code over $\F_q$. Then, a matrix $\bH \in \F_q^{(n-k) \times n}$ is called a \emph{parity-check matrix} of $\mC$, if
$$\mC = \left\{ \by \in \F_q^n \mid \bH\by^\top = \bz\right\}.$$
\end{definition}
For any $\bx \in \F_q^n$, we  call $\bx \bH^\top$ a \emph{syndrome}.
\begin{exercise}
Let $k \leq n$ be positive integers and let $\mC$ be an $[n,k]$ linear code over $\F_q$. Let $\bH$ be a parity-check matrix of $\mC$. Show that  $\mC$ has minimum distance $d_H$ if and only if every $d_H-1$ columns of $\bH$ are linearly independent and there exist $d_H$ columns, which are linearly dependent. 
\end{exercise}
For $\bx, \by \in \F_q^n$ let us denote by $\langle \bx, \by \rangle$ the standard inner product, {i.e.}, 
$$\langle \bx, \by \rangle = \sum_{i=1}^n x_iy_i.$$
Then, we can define the dual of an $[n,k]$ linear  code $\mC$ over $\F_q$ as the orthogonal space of $\mC$.
\begin{definition}[Dual Code]
Let  $k \leq n$ be positive integers and let  $\mC$ be an $[n,k]$ linear  code over $\F_q$. The \emph{dual code} $\mC^\perp$ is  an $[n,n-k]$ linear code over $\F_q$, defined as 
$$\mC^\perp = \{ \bx \in \F_q^n \mid \langle \bx, \by \rangle = 0 \ \forall \ \by \in \mC \}.$$
\end{definition}

\begin{exercise} Show that a parity-check matrix of $\mC$ is in fact a generator matrix of $\mC^\perp$. 
\end{exercise}
\begin{exercise}
Show that the dual of an MDS code is an MDS code.
\end{exercise}
For $\bx \in \F_q^n$ and $S \subseteq \{1, \ldots, n\}$ we denote by $\bx_S$ the vector consisting of the entries of $\bx$ indexed by $S$. While for $\bA \in \F_q^{k \times n}$, we denote by $\bA_S$ the matrix consisting of the columns of $\bA$ indexed by $S$. Similarly, we denote by $\mC_S$ the code consisting of the codewords $\bc_S$.

Observe that an $[n,k]$ linear code can  be completely defined by certain sets 
of $k$ positions. The following  concept characterizes such defining sets.
\begin{definition}[Information Set]
Let $k\leq n$ be positive integers and let $\mC$ be an $[n,k]$ linear code over $\F_q$. Then, a set $I \subset \{1, \ldots, n\}$ of size $k$ is called an \emph{information set} of $\mathcal{C}$ if $$\mid \mC \mid = \mid \mC_I \mid.$$
\end{definition}
\begin{exercise}
How many information sets can an $[n,k]$ linear code have at most?
\end{exercise} 
\begin{exercise}
Let $\mC$  be an $[n,k]$ linear code,  $I$ an information set and let $\bG$ be a generator matrix  and $\bH$ a parity-check matrix. Show that $\bG_I$ is an invertible matrix of size $k$.
If $I^C:=  \{1,\ldots ,n\}\setminus I$ is the complement set of $I$, then, 
$\bH_{I^C}$ is an invertible matrix of size $n-k$.
\end{exercise}
\begin{exercise}
Let $\mC$ be the code generated by $\bG \in \F_5^{2 \times 4}$, given as
$$\bG = \begin{pmatrix} 1 & 3 & 2 & 3 \\ 0 & 4 & 4 & 3 \end{pmatrix}.$$ 
Determine all information sets of this code.
\end{exercise}
\begin{definition}[Systematic Form]
Let $k \leq n$ be positive integers and $\mC$ be an $[n,k]$ linear code over $\F_q$. Then, there exist  some permutation matrix $\bP$ and some invertible matrix $\bU$ that bring $\bG$ in  \emph{systematic form}, i.e., 
$$\bU\bG\bP =\begin{pmatrix}
\Id_{k} & \bA
\end{pmatrix}, $$ where $\bA \in \F_q^{k \times (n-k)}$. Similarly, there exist  some permutation matrix $\bP'$ and some invertible matrix $\bU'$, that bring $\bH$ into systematic form as  $$\bU'\bH\bP' =\begin{pmatrix}
\bB & \Id_{n-k}
\end{pmatrix}, $$ where $\bB \in \F_q^{(n-k) \times k}$. 
\end{definition}

Let us denote by $V_H(r,n,q)$ the volume of a ball in the Hamming metric, i.e.,
$$V_H(r,n,q) = \mid B_H(r,n,q) \mid.$$
The Gilbert-Varshamov bound \cite{gilbert,varshamov,sacks} is one of the most prominent bounds in coding theory and widely used in code-based cryptography since it provides a sufficient condition for the existence of linear codes. 
\begin{theorem}[Gilbert-Varshamov bound]
Let $q$ be a prime power and let $k \leq n$ and $d_H$ be positive integers, such that 
\begin{equation*}
    V_H(d_H-2,n-1, q) <q^{n-k}.
  \end{equation*}
Then, there exists a  $[n,k]$ linear code over $\mathbb{F}_q$ with minimum Hamming distance at least $d_H$. 
\end{theorem}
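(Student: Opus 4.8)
The form of the bound — with $q^{n-k}$ on the right and $V_H(d_H-2,n-1,q)$ on the left — is exactly the one produced by greedily building a parity-check matrix, so that is the route I would take. By the earlier exercise characterizing the minimum distance through a parity-check matrix, it suffices to produce a matrix $\bH \in \F_q^{(n-k)\times n}$ every $d_H-1$ of whose columns are linearly independent; the code with parity-check matrix $\bH$ then has minimum Hamming distance at least $d_H$. I would construct $\bH$ by choosing its columns $\bh_1,\dots,\bh_n \in \F_q^{n-k}$ one at a time.

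The first step is the standard reduction: a matrix has the property that every set of at most $d_H-1$ of its columns is linearly independent \emph{exactly} when no column $\bh_i$ lies in the linear span of $d_H-2$ of the columns $\bh_1,\dots,\bh_{i-1}$ preceding it. Indeed, if $\{\bh_i : i\in S\}$ with $|S|\le d_H-1$ were a minimal linearly dependent subset, then solving for the highest-indexed column $\bh_m$ (its coefficient is nonzero by minimality) expresses it as a combination of the remaining $|S|-1\le d_H-2$ columns, all of smaller index, contradicting the construction.

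The second step is the count. After $\bh_1,\dots,\bh_{i-1}$ are fixed, the number of vectors of $\F_q^{n-k}$ expressible as a linear combination of at most $d_H-2$ of them is at most the number of coefficient tuples $(a_1,\dots,a_{i-1})\in\F_q^{i-1}$ with at most $d_H-2$ nonzero entries, i.e.\ at most $V_H(d_H-2,i-1,q)\le V_H(d_H-2,n-1,q) < q^{n-k}$ by the hypothesis and monotonicity of $V_H$ in its second argument. Hence there always remains a vector of $\F_q^{n-k}$ outside this forbidden set to take as $\bh_i$; since the zero vector belongs to the forbidden set, every column chosen is nonzero, and iterating through $i=1,\dots,n$ yields the desired $\bH$.

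One bookkeeping point remains: the $\bH$ so produced need not have full rank $n-k$, so the code $\mathcal{C}=\{\by\in\F_q^n : \bH\by^\top=\bz\}$ has some dimension $k'\ge k$. To land on an $[n,k]$ code one passes to any $k$-dimensional subspace of $\mathcal{C}$; a subcode of a linear code has minimum distance at least that of the ambient code, so this subcode still has minimum distance at least $d_H$. The only genuinely delicate aspect of the whole argument is keeping the index shifts consistent — the $d_H-2$ in the spanning condition versus $d_H-1$ in the independence condition, and the passage from $i-1$ to $n-1$ in the count — everything else is routine.
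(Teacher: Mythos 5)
Your proof is correct. The paper itself omits the proof of this statement (it simply cites Roth, Theorem 4.4), and your argument — greedily building a parity-check matrix column by column so that no new column lies in the span of any $d_H-2$ previously chosen columns, using the count $V_H(d_H-2,n-1,q) < q^{n-k}$ to guarantee a valid (nonzero) choice at every step, and passing to a $k$-dimensional subcode if the resulting matrix is not of full rank — is exactly the standard argument behind the cited result, with the reduction to column independence and the bookkeeping of the $d_H-2$ versus $d_H-1$ indices handled correctly.
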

The better known Gilbert-Varshamov bound is a statement on the maximal size of a code, that is:
let us denote by $A_H(n,d,q)$ the maximal size of a code in $\mathbb{F}_q^n$ having minimum Hamming distance $d$. 
\begin{theorem}[Gilbert-Varshamov Bound] Let $q$ be a prime power and $n,d$ be positive integers. Then,
\begin{equation*}
  A_H(n,d,q) \geq \frac{q^{n}}{V_H(d-1,n,q)}.  
\end{equation*}
\end{theorem}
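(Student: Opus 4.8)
The plan is to run the classical greedy (maximal code) argument, which has the virtue of working for arbitrary — not necessarily linear — codes and matching the volume $V_H(d-1,n,q)$ appearing in the statement exactly.

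First I would note that, since $\F_q^n$ is finite, among all codes $\mC \subseteq \F_q^n$ with $d_H(\mC) \geq d$ there is one that is maximal with respect to inclusion: no vector of $\F_q^n$ can be adjoined to $\mC$ without creating two codewords at Hamming distance $< d$. Fix such a $\mC$ and set $M = \mid \mC \mid$; by the definition of $AL(n,d,q)$ we have $AL(n,d,q) \geq M$.

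The key step is the covering observation. By maximality, every $\bx \in \F_q^n$ satisfies $d_H(\bx, \bc) \leq d-1$ for some $\bc \in \mC$ — otherwise $\mC \cup \{\bx\}$ would still have minimum distance at least $d$, contradicting maximality (note a codeword is at distance $0 \leq d-1$ from itself, so this also covers $\mC$ itself). Equivalently, the Hamming balls of radius $d-1$ centered at the codewords cover the whole space:
$$\F_q^n = \bigcup_{\bc \in \mC}\left(\bc + B_H(d-1,n,q)\right).$$
Each translate $\bc + B_H(d-1,n,q)$ has cardinality $V_H(d-1,n,q)$, independent of $\bc$, so taking cardinalities and using subadditivity of the union gives
$$q^n = \mid \F_q^n \mid \;\leq\; \sum_{\bc \in \mC} \mid \bc + B_H(d-1,n,q)\mid \;=\; M \cdot V_H(d-1,n,q).$$
Dividing through by $V_H(d-1,n,q)$ yields $M \geq q^n / V_H(d-1,n,q)$, and combining with $AL(n,d,q) \geq M$ finishes the proof.

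I do not expect a genuine obstacle here; the only point requiring a little care is the convention for $AL(n,d,q)$ (minimum distance exactly $d$ versus at least $d$) — the greedy construction produces a code of minimum distance $\geq d$, which is the standard reading. One could alternatively try to derive the bound from the linear Gilbert--Varshamov bound stated earlier by choosing the largest admissible $k$, but that route only produces linear codes and works with the slightly different parameters $V_H(d_H-2,n-1,q)$, so the direct greedy argument is the cleaner match for the claimed inequality.
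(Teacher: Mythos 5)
Your greedy maximal-code argument is correct: maximality gives the covering of $\F_q^n$ by Hamming balls of radius $d-1$ centered at codewords, and counting yields $M \cdot V_H(d-1,n,q) \geq q^n$, which is exactly the claimed bound; your handling of the convention for $AL(n,d,q)$ (minimum distance at least $d$) is the right reading. The paper itself states this theorem without proof, citing \cite{berlekampbook}, and the covering argument you give is precisely the standard proof behind that citation, so there is nothing to fix.
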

It turns out that random codes attain the asymptotic Gilbert-Varshamov bound with high probability. This will be an important result for the asymptotic analysis of some algorithms. Let us first give some notation: let $0\leq \delta\leq 1$ denote the relative minimum distance, i.e., $\delta= d/n$ and let us denote by   $$\overline{R}(\delta) = \limsup\limits_{n \to \infty} \frac{1}{n} \log_{q} A_H(n, \delta n,q)$$
the asymptotic information rate.
\begin{definition}[Entropy Function]
 For a positive integer $q \geq 2$ the $q$-ary entropy function is defined as follows:
 \begin{align*}
     h_q: [0,1] &\to \mathbb{R}, \\
     x & \to x \log_q(q-1) - x \log_q(x) -(1-x)\log_q(1-x).
 \end{align*}
\end{definition}
\medskip

\begin{exercise}
Show that for $s \in [0,1-1/q]$ we have that 
\begin{enumerate}
    \item $V_H(sn,n,q) \leq q^{h_q(s)n},$
    \item $V_H(sn,n,q) \geq q^{h_q(s)n-o(n)},$
\end{enumerate}
using Stirling's formula.
\end{exercise}
\begin{theorem}[The Asymptotic Gilbert-Varshamov Bound]
For every prime power $q$ and $\delta \in [0, 1-1/q]$ there exists an infinite family $\mathcal{C}$ of codes with rate
$$\overline{R}(\delta) \geq 1-h_q(\delta). $$
\end{theorem}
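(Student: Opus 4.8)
The plan is to combine the finite Gilbert--Varshamov bound on $AL(n,d,q)$ with the entropy estimate for the volume of a Hamming ball, and then take a limit. First I would fix $\delta \in [0, 1-1/q]$ and, for each length $n$, set $d = \lceil \delta n \rceil$ (or simply $d = \delta n$ when it is an integer; the rounding contributes only an $o(n)$ term and can be absorbed at the end). By the finite Gilbert--Varshamov bound quoted above (e.g.\ \cite{berlekampbook}, Theorem 13.71), there exists a code in $\F_q^n$ with minimum distance $d$ and size at least $q^n / V_H(d-1,n,q)$. Doing this for every $n$ produces an infinite family $\mathcal{C}$ of codes with relative distance $\geq \delta$, which is the family the theorem asks for.

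Next I would estimate the rate of these codes from below. Taking logarithms,
$$\frac{1}{n}\log_q AL(n,d,q) \;\geq\; \frac{1}{n}\log_q\!\left(\frac{q^n}{V_H(d-1,n,q)}\right) \;=\; 1 - \frac{1}{n}\log_q V_H(d-1,n,q).$$
Now I invoke the first inequality of the entropy exercise above, namely $V_H(sn,n,q) \leq q^{h_q(s)n}$ for $s \in [0,1-1/q]$, applied with $s$ slightly larger than $\delta$ so that $sn \geq d-1$; this gives $\tfrac{1}{n}\log_q V_H(d-1,n,q) \leq h_q(s)$. Hence $\tfrac{1}{n}\log_q AL(n,d,q) \geq 1 - h_q(s)$. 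Taking $\limsup$ as $n \to \infty$ and then letting $s \downarrow \delta$, using continuity of the $q$-ary entropy function $h_q$ on $[0,1-1/q]$, yields $\overline{R}(\delta) \geq 1 - h_q(\delta)$, as claimed.

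The one point requiring a little care — the main (mild) obstacle — is the bookkeeping around $d-1$ versus $\delta n$: the exercise bounds $V_H(sn,n,q)$ for the exact argument $sn$, whereas here the ball radius is $d-1 = \lceil \delta n\rceil - 1$. The clean way to handle this is to pick any $s$ with $\delta < s \leq 1-1/q$, observe that $d-1 \leq sn$ for all sufficiently large $n$, use monotonicity of $V_H(\cdot,n,q)$ in its first argument to get $V_H(d-1,n,q) \leq V_H(sn,n,q) \leq q^{h_q(s)n}$, and only at the very end send $s \to \delta^+$. If one instead wants the sharper $o(n)$-style statement, one uses the second inequality of the exercise together with Stirling to show the entropy bound is tight, but for the stated inequality the one-sided estimate suffices. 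Everything else is routine: $h_q$ is continuous (indeed smooth) on the relevant interval, and the $\limsup$ defining $\overline{R}(\delta)$ only makes the bound easier to satisfy.
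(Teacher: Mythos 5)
Your argument is correct and is exactly the standard route the paper intends (the paper itself states this theorem with a citation to Roth rather than proving it): combine the finite bound $AL(n,d,q)\geq q^{n}/V_H(d-1,n,q)$ with the entropy estimate $V_H(sn,n,q)\leq q^{h_q(s)n}$ and pass to the limit in the definition of $\overline{R}(\delta)$. The only remark worth making is that the detour through $s>\delta$ is unnecessary: since $d-1=\lceil\delta n\rceil-1\leq \delta n$, monotonicity of $V_H(\cdot,n,q)$ lets you apply the volume bound directly with $s=\delta$, which also avoids the (harmless, since the claim is trivial there) edge case $\delta=1-1/q$ where no $s$ with $\delta<s\leq 1-1/q$ exists.
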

Recall that in complexity theory we write $f(n) = \Omega(g(n))$, if $$\limsup\limits_{n \to \infty} \left | \frac{f(n)}{g(n)} \right | >0.$$ %
For example, $f(n) = \Omega(n)$ means that $f(n)$ grows at least polynomially in $n$.
\begin{theorem}\label{randomGv}
For every prime power $q, \delta \in [0,1-1/q)$ and $0 < \varepsilon < 1-h_q(s)$ and sufficiently large positive integer $n$. The following holds for 
$$k = \left\lceil (1-h_q(\delta)-\varepsilon)n \right\rceil.$$ If $\bG \in \mathbb{F}_q^{k \times n}$ is chosen uniformly at random, the linear code $\mathcal{C}$ generated by $\bG$ has rate at least $1-h_q(\delta)-\varepsilon$ and relative minimum distance at least $\delta$ with probability at least $1- e^{-\Omega(n)}.$
\end{theorem}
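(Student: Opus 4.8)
The plan is to use a first-moment (union bound) argument over all nonzero codewords, which is the standard route to random-coding bounds of Gilbert--Varshamov type. Fix $\delta \in [0,1-1/q)$ and $\varepsilon>0$ with $\varepsilon < 1 - h_q(\delta)$, and set $k = \lceil (1-h_q(\delta)-\varepsilon)n\rceil$. First I would handle the rate: since $\bG$ is $k\times n$ with $k \le n$, the code has dimension at most $k$, and with overwhelming probability exactly $k$ (a uniformly random $k\times n$ matrix over $\F_q$ has rank $k$ with probability $\prod_{i=0}^{k-1}(1-q^{i-n}) \ge 1 - q^{k-n-1}/(q-1)$, which is $1-e^{-\Omega(n)}$ because $n-k = \Omega(n)$), so the rate is at least $(1-h_q(\delta)-\varepsilon)$ as claimed. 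The substantive part is the minimum-distance bound.

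For the distance, the key observation is that for a \emph{fixed} nonzero message $\bx \in \F_q^k$, the codeword $\bx\bG$ is uniformly distributed over $\F_q^n$ when $\bG$ is uniform (each coordinate is an independent uniform $\F_q$-linear functional evaluated at $\bx\neq\bz$). Hence for any target radius $r = \lfloor \delta n\rfloor - 1$,
\begin{equation*}
\Pr[\wtH(\bx\bG) \le r] = \frac{V_H(r,n,q)}{q^n}.
\end{equation*}
Taking a union bound over the at most $q^k - 1 < q^k$ nonzero messages gives
\begin{equation*}
\Pr[\exists\, \bx \neq \bz : \wtH(\bx\bG) \le r] \le q^k \cdot \frac{V_H(r,n,q)}{q^n} = q^{k-n}\, V_H(r,n,q).
\end{equation*}
Now I would plug in the entropy bound $V_H(r,n,q) \le V_H(\delta n, n, q) \le q^{h_q(\delta)n}$ from the preceding exercise (valid since $\delta \le 1-1/q$), obtaining an upper bound of $q^{k - n + h_q(\delta)n}$. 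By the choice of $k$ we have $k - n + h_q(\delta)n \le (1-h_q(\delta)-\varepsilon)n + 1 - n + h_q(\delta)n = -\varepsilon n + 1$, so the failure probability is at most $q^{1-\varepsilon n} = e^{-\Omega(n)}$. Intersecting with the full-rank event (also $1-e^{-\Omega(n)}$) and noting that any nonzero codeword of a dimension-$k$ code arises as $\bx\bG$ for some nonzero $\bx$, we conclude that with probability $1 - e^{-\Omega(n)}$ the code has dimension $k$ and every nonzero codeword has weight $> \delta n - 1$, i.e.\ relative minimum distance at least $\delta$.

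The only mild subtlety — and the step I would be most careful about — is the interplay between "dimension exactly $k$" and the union bound: when $\bG$ is not full rank, distinct messages $\bx$ can give the same codeword, but since we are \emph{upper} bounding the bad event this causes no problem (the union bound over all nonzero $\bx$ still dominates the probability that some nonzero codeword is light). One should also double-check the rounding in $k$ and $r$ so that the exponent $k-n+h_q(\delta)n$ stays bounded above by a negative linear term in $n$; the additive $+1$ from the ceiling and the $-1$ in $r$ are harmless for $n$ large. Finally, to get a genuine \emph{infinite family} rather than a single $n$, one just notes the bound holds for all sufficiently large $n$, so we may pick one code for each such $n$.
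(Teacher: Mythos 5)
Your proof is correct and follows essentially the same route the paper intends: it is exactly the four-step outline given in the accompanying exercise (full-rank probability of $\bG$, uniformity of $\bx\bG$ for fixed nonzero $\bx$, the entropy bound $V_H(\delta n,n,q)\leq q^{h_q(\delta)n}$, and a union bound over nonzero messages combined with the choice of $k$). Aside from negligible rounding details (e.g.\ taking $r=\lceil \delta n\rceil-1$ rather than $\lfloor \delta n\rfloor-1$), nothing needs to change.
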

\begin{exercise}
Prove  Theorem \ref{randomGv} following these steps:
\begin{enumerate}
    \item What is the probability for $\bG$ to have full rank?
    \item For each non-zero $\bx \in \mathbb{F}_q^k$  show that $\bx\bG$ is a uniformly random element.
    \item Show that the probability that $\text{wt}_H(\bx\bG) \leq \delta n$ is at most $q^{(h_q(\delta)-1)n}.$
    \item Use the union bound over all non-zero $\bx$ and the choice of $k$ to get the claim.
\end{enumerate}
\end{exercise}
This was first proven in \cite{gv,pierce} and shows that for a random code with large length, we know what minimum Hamming distance to expect. 

These results hold also more generally over any finite chain ring and for any additive weight, see \cite{free}.

We also want to introduce the following two methods to get a new code from an old code: puncturing and shortening. When we puncture a code we essentially delete all coordinates indexed by a certain set in all codewords, while shortening can be regarded as the puncturing of a special subcode.

\begin{definition}
Let $\mC$ be an $[n,k]$ linear code over $\mathbb{F}_q$ and let $S \subseteq \{1, \ldots, n\}$ be a set of size $s$. Then, we define the \emph{punctured code} $\mC^S$ in $S$ as follows
$$\mC^S = \{ (c_i)_{i \not\in S} \mid c \in \mC\}.$$ Let us define $\mC(S)$ to be the subcode containing all codewords which are 0 in $S$, that is 
$$\mC(S) = \{c \in \mathcal{C} \mid c_i = 0 \  \forall i \in S\}. $$ Then, we define the \emph{shortened code} $\mC_S$ in $S$ to be $$\mC_S = \mC(S)^S.$$
\end{definition}

Clearly, the punctured code $\mC^S$ has now length $n-s$. What happens to its dimension?
\begin{exercise}
Show that if $s< d$, the minimum distance of $\mC$, then $\mC^S$ has dimension $k$. 
\end{exercise}

Shortening and puncturing of a code are heavily connected through the dual code:
\begin{theorem}\label{shortpunc}
Let $\mC$ be a linear $[n,k]$ code over $\mathbb{F}_q$ with dual code $\mC^\perp.$ Let $S \subseteq \{1, \ldots, n\}$ be a set of size $s$. Then
\begin{enumerate}
    \item $(\mC^\perp)_S = (\mC^S)^\perp$, 
    \item$(\mC^\perp)^S = (\mC_S)^\perp.$ 
\end{enumerate}
\end{theorem}

\begin{example}
Let us consider the binary code generated by $$\bG = \begin{pmatrix}
1 & 0 & 0 & 1 & 1 & 0 \\ 0 & 1 & 0 & 0 & 1 & 1 \\ 0 & 0 & 1 & 1 & 1 & 1 
\end{pmatrix}, $$ and $S=\{4,5\}.$ Then, the punctured code $\mC^S$ has generator matrix
$$ \bG^S = \begin{pmatrix}
1 & 0 & 0 & 0   \\ 0 & 1 & 0 & 1 \\ 0 & 0 & 1 & 1  
\end{pmatrix}.$$ Note that $\mC(S) =\{(1,0,1,0,0,1), (0,0,0,0,0,0) \}$, thus the generator matrix of $\mC_S$ is given by 
$$\bG_S= \begin{pmatrix}   1 & 0 & 1 & 1
\end{pmatrix}.$$
\end{example}
\begin{exercise} 
Show that Theorem \ref{shortpunc} holds for this example. \end{exercise}

\subsubsection{Matrix Codes}
Let us denote by $\mathbb{F}_q^{n \times m}$ the $n \times m$ matrices over $\mathbb{F}_q.$

Instead of considering subspaces in $\mathbb{F}_q^n$, we can also consider subspaces in $\mathbb{F}_q^{m \times n}$, referred to as \emph{matrix codes}.
\begin{definition}[Matrix  Codes]
An $\mathbb{F}_q$-linear subspace of $\mathbb{F}_q^{n \times m}$ is called a \emph{matrix  code}.
\end{definition}
Thus, instead of a $k \times n$ generator matrix $\bG \in \mathbb{F}_q^{k \times n}$, we generate the code with $k$ generating matrices $\bG_1, \ldots, \bG_k \in \mathbb{F}_q^{m \times n}$, then every codeword is of the form 
$$\bC= \lambda_1 \bG_1 + \cdots + \lambda_k \bG_k,$$ for some $\lambda_i \in \mathbb{F}_q$. Since these codes are only linear over $\mathbb{F}_q$, they are also called $\mathbb{F}_q$-linear codes.

One can define the Hamming metric on such matrices, by either considering the number of non-zero columns or the number of non-zero entries.

For a matrix $\bA \in \mathbb{F}_q^{m \times n}$ let us denote by $\bc_i \in \mathbb{F}_q^m$ its columns for $i \in \{1, \ldots, n\}$, by $\br_j \in \mathbb{F}_q^n$ its rows for $j \in \{1, \ldots, m\}$ and finally by $a_{i,j}$ its entries for $(i,j) \in \{1, \ldots, n\} \times \{1, \ldots, m\}.$ Given $\bA \in \mathbb{F}_q^{m \times n}$ we define
\begin{align*}
    \text{wt}_{H,c}(\bA) & =|\{i \in \{1, \ldots, n\} \mid \bc_i \neq \bz \}|, \\ 
    \text{wt}_{H,v}(\bA) &= | \{ (i,j) \in \{1, \ldots, n\} \times \{1, \ldots, m\} \mid a_{i,j} \neq 0\}|.
\end{align*}
We will specify which notion of Hamming metric we are using, whenever we use matrix codes. 

\begin{definition}
    Given a matrix $\bA \in \mathbb{F}_q^{m \times n}$ with rows $\ba_1, \ldots, \ba_m \in \mathbb{F}_q^n$ we define the \emph{vectorization} of $\bA$ to be $\text{vec}(\bA)= (\ba_1, \ldots, \ba_m) \in \mathbb{F}_q^{mn}.$
\end{definition}
The Hamming weight of $\text{vec}(\bA)$ coincides with the second notion of Hamming metric of matrices, i.e., 
$$\text{wt}_H(\text{vec}(\bA))=\text{wt}_{H,v}(\bA).$$

  Let $\Gamma=\{\gamma_1, \ldots, \gamma_m\}$ be a basis of $\mathbb{F}_{q^m}$ over $\mathbb{F}_q.$
  That is, we can write every element $a \in \mathbb{F}_{q^m}$ as 
  $$a=\sum_{i=1}^m a_i \gamma_i,$$ with $a_i \in \mathbb{F}_q.$
\begin{definition}
    Let $\Gamma=\{\gamma_1, \ldots, \gamma_m\}$ be a basis of $\mathbb{F}_{q^m}$ over $\mathbb{F}_q.$ Then, we can define the \emph{extension map}
    \begin{align*}
        \Gamma: \mathbb{F}_{q^m} &\to \mathbb{F}_q^m \\
        a=\sum_{i=1}^m a_i  \gamma_i &\mapsto (a_1, \ldots, a_m).
    \end{align*}
    By abuse of notation we will also use $\Gamma$ to denote the extension map 
    $\Gamma: \mathbb{F}_{q^m}^n \to \mathbb{F}_q^{m \times n},$ where each entry is extended to a column. 
\end{definition}
The Hamming weight of the vector $\Gamma^{-1}(\bA)=\ba \in \mathbb{F}_{q^m}^n$ coincides with the first notion of Hamming weight for matrices, i.e., 
\begin{equation}\label{extham} \text{wt}_H(\Gamma^{-1}(\bA)) = \text{wt}_{H,c}(\bA).
\end{equation}

\begin{exercise}
    Show that  Equation \eqref{extham} is independent of the choice of basis $\Gamma.$
\end{exercise}

The extension map can also be applied to a code itself, that is:
\begin{definition}
Let $\mathcal{C} \subseteq \mathbb{F}_{q^m}^n$ be a linear code and let $\Gamma$ be a basis of $\mathbb{F}_{q^m}$ over $\mathbb{F}_q$. The \emph{code associated with $\Gamma$} is given by
$$\Gamma(\mathcal{C})= \{ \Gamma(\bc) \mid \bc \in \mathcal{C}\}.$$
\end{definition}
Note that since $\mathcal{C}\subseteq \mathbb{F}_{q^m}^n$ was $\mathbb{F}_{q^m}$-linear, we get that $\Gamma(\mathcal{C}) \subseteq \mathbb{F}_q^{m \times n}$ is $\mathbb{F}_q$-linear.

The dual code of a matrix code, requires a new inner product, which extends the previous standard inner product. For this, recall that the \emph{trace} of a matrix is the sum of the entries on its diagonal.
\begin{definition}
    Let $\bA, \bB \in \mathbb{F}_q^{m \times n}$, then we define their trace product as
    $$\text{Tr}(\bA\bB^\top).$$
\end{definition}

\begin{definition}
    Let $\mathcal{C} \subseteq \mathbb{F}_q^{m \times n}$ be a linear matrix code, then its \emph{dual code} is given by
    $$\mathcal{C}^\perp = \{\bA \in \mathbb{F}_q^{m \times n} \mid \text{Tr}(\bA\bB^\top)=\bz \text{ for all } \bB \in \mathcal{C}\}.$$
\end{definition}
This product is compatible with the standard inner product on $\mathbb{F}_{q^m}^n$. 
For this we need the following definition.
\begin{definition}
  Let $\Gamma=\{\gamma_1, \ldots, \gamma_m\}, \Gamma'=\{\gamma_1', \ldots, \gamma_m'\}$ be bases of $\mathbb{F}_{q^m}$ over $\mathbb{F}_q$. We say that  $\Gamma$ and 
  $\Gamma'$ are orthogonal if  $$\text{Tr}_{\mathbb{F}_q}(\gamma_i \gamma_j') = \delta_{i,j},$$ where $\delta_{i,j}$ denotes the Kronecker delta function, i.e., it outputs 0 if $i \neq j$ and 1 if $i=j$, and $\text{Tr}_{\mathbb{F}_q}$ denotes the   field trace, i.e.,
  \begin{align*} \text{Tr}: \mathbb{F}_{q^m} &\to \mathbb{F}_q \\ 
  a & \mapsto \sum_{i=0}^{m-1} a^{q^i}.
  \end{align*}
\end{definition}
\begin{proposition}\label{prop:dualmatrix}
Let $\mathcal{C} \subseteq \mathbb{F}_{q^m}^n$ be a linear code.
    Let $\Gamma, \Gamma'$ be orthogonal bases of $\mathbb{F}_{q^m}$ over $\mathbb{F}_q$, then 
    $$\Gamma(\mathcal{C})^\perp= \Gamma'(\mathcal{C}^\perp).$$
\end{proposition}

\begin{exercise}
Show that Proposition \ref{prop:dualmatrix} holds for the example $$\mathcal{C}=\langle 1, \alpha \rangle \subseteq \mathbb{F}_8^2, $$
where $\mathbb{F}_8=\mathbb{F}_2[\alpha]$ and $\alpha^3=\alpha+1$, $\Gamma =\{1, \alpha, \alpha^2\}, \Gamma'=\{1, \alpha^2,\alpha\}.$
\end{exercise}

The new inner product is in fact also compatible with the vectorization:
\begin{proposition}
    Let $\bA,\bB \in \mathbb{F}_q^{m \times n}$, then $$\text{Tr}(\bA^\top\bB) = \langle \text{vec}(\bA) ,\text{vec}(\bB)\rangle.$$
\end{proposition}

\subsubsection{Generalized Reed-Solomon Codes}
In order to give a self-contained chapter, we also want to introduce some of the most prominent codes that are used in code-based cryptography. For this we  start with Generalized Reed-Solomon codes (GRS), \cite{rs}. 
\begin{definition}[Generalized Reed-Solomon Code]\label{def:reedsolomon}
Let  $k \leq n \leq q$ be positive integers. Let $\alpha \in \F_q^n$ be an $n$-tuple of distinct elements, i.e., $\alpha =(\alpha_1, \ldots, \alpha_n)$ with $\alpha_i \neq \alpha_j,$ for all $i \neq j \in \{1, \ldots, n\}$. Let $\beta \in \F_q^n$ be an $n$-tuple of nonzero elements, i.e., $\beta =(\beta_1, \ldots, \beta_n),$ with $ \beta_i \neq 0$ for all $i \in \{1, \ldots, n\}.$  The \emph{Generalized Reed-Solomon code} of length $n$ and dimension $k$, denoted by $\text{GRS}_{n,k}(\alpha, \beta)$ is defined as
\begin{equation*}
\text{GRS}_{n,k}(\alpha,\beta)= \left\{ (\beta_1 f(\alpha_1), \ldots, \beta_n f(\alpha_n)) \bigm| f \in \F_q[x], \ \text{deg}(f) <  k \right\}.
\end{equation*}
\end{definition}
In the case where $\beta= (1, \ldots,1 )$, we call the code $ \text{GRS}_{n,k}(\alpha, \beta)$ a \emph{Reed-Solomon} (RS) code and denote it by $\text{RS}_{n,k}(\alpha).$
\begin{exercise}
Show that the Vandermonde matrix 
$$\begin{pmatrix} 1 & \cdots & 1 \\
\alpha_1 & \cdots & \alpha_n \\
\vdots & & \vdots \\
\alpha_1^{k-1} & \cdots & \alpha_n^{k-1} \end{pmatrix}$$ is a generator matrix of a RS code.  Similarly, build a generator matrix of the $\text{GRS}_{n,k}(\alpha, \beta)$ code.
\end{exercise}
\begin{exercise}
Show that GRS codes are MDS codes, i.e., $$d_H(\text{GRS}_{n,k}(\alpha,\beta)) = n-k+1.$$ 
\end{exercise}
Observe that the dual code of a GRS code is again a GRS code.
\begin{proposition}\label{prop:dualGRS}
Let $k \leq n\leq q$ be positive integers. Then \begin{equation*}
\text{GRS}_{n,k}(\alpha,\beta)^{\perp} = \text{GRS}_{n,n-k}(\alpha,\gamma),
\end{equation*}
where 
\begin{equation*}
\gamma_i =  \beta_i^{-1}  \prod_{\substack{ j= 1 \\ j \neq i}}^n(\alpha_i-\alpha_j)^{-1}.
\end{equation*}
\end{proposition}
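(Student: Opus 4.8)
The plan is to prove that $\text{GRS}_{n,k}(\alpha,\beta)^\perp = \text{GRS}_{n,n-k}(\alpha,\gamma)$ by a dimension-counting argument combined with an explicit orthogonality check. Both sides are linear codes over $\F_q$: the left-hand side has dimension $n-k$ by the definition of the dual code, and the right-hand side has dimension $n-k$ as well, since it is a GRS code with parameter $n-k$ (and $n-k \leq n \leq q$, so the definition applies). Therefore it suffices to show one inclusion, say $\text{GRS}_{n,n-k}(\alpha,\gamma) \subseteq \text{GRS}_{n,k}(\alpha,\beta)^\perp$; equality then follows immediately from the equality of dimensions.

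To establish this inclusion, I would take an arbitrary codeword $\bu = (\beta_1 f(\alpha_1), \ldots, \beta_n f(\alpha_n))$ with $\deg f < k$ from $\text{GRS}_{n,k}(\alpha,\beta)$ and an arbitrary codeword $\bv = (\gamma_1 g(\alpha_1), \ldots, \gamma_n g(\alpha_n))$ with $\deg g < n-k$ from $\text{GRS}_{n,n-k}(\alpha,\gamma)$, and compute the inner product
\begin{equation*}
\langle \bu, \bv \rangle = \sum_{i=1}^n \beta_i \gamma_i f(\alpha_i) g(\alpha_i) = \sum_{i=1}^n f(\alpha_i)g(\alpha_i) \prod_{\substack{j=1 \\ j \neq i}}^n (\alpha_i - \alpha_j)^{-1},
\end{equation*}
using the definition of $\gamma_i$, which conveniently cancels the $\beta_i$. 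Now $fg$ is a polynomial of degree at most $(k-1) + (n-k-1) = n-2$, so the key point is to show that $\sum_{i=1}^n h(\alpha_i) \prod_{j \neq i}(\alpha_i - \alpha_j)^{-1} = 0$ for any polynomial $h$ with $\deg h \leq n-2$. This is a classical Lagrange-interpolation identity: the expression is precisely the leading coefficient (coefficient of $x^{n-1}$) of the unique polynomial of degree $< n$ interpolating $h$ at the $n$ distinct points $\alpha_1,\ldots,\alpha_n$, and since $\deg h \leq n-2$ that interpolant is $h$ itself, whose $x^{n-1}$-coefficient is zero. Alternatively one can prove it directly: expand $\prod_{j \neq i}(\alpha_i - \alpha_j)^{-1}$ in terms of the Vandermonde determinant and recognize the sum as a ratio of two determinants, the numerator having a repeated row when $\deg h \leq n-2$.

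The main obstacle is this interpolation identity; everything else (dimension bookkeeping, the cancellation of $\beta_i$) is routine. I would present the identity cleanly by invoking Lagrange interpolation: for distinct $\alpha_1,\ldots,\alpha_n$ and any polynomial $h$, one has $h(x) = \sum_{i=1}^n h(\alpha_i) \prod_{j \neq i} \frac{x - \alpha_j}{\alpha_i - \alpha_j}$ whenever $\deg h < n$, and comparing coefficients of $x^{n-1}$ on both sides gives $\big[\deg h = n-1\big]\cdot(\text{lead coeff of }h) = \sum_{i=1}^n h(\alpha_i)\prod_{j\neq i}(\alpha_i-\alpha_j)^{-1}$; applying this with $h = fg$ of degree $\leq n-2$ forces the right-hand side to vanish. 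Hence $\langle \bu,\bv\rangle = 0$ for all such $\bu,\bv$, giving $\text{GRS}_{n,n-k}(\alpha,\gamma) \subseteq \text{GRS}_{n,k}(\alpha,\beta)^\perp$, and the dimension count closes the proof. As a sanity check one can note that the dual of an MDS code is MDS (an earlier exercise), consistent with the right-hand side being a GRS, hence MDS, code.
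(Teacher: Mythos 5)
The paper states this proposition without proof (it is quoted from Roth), so there is no in-paper argument to compare against; judged on its own, your proof is correct and is the standard one: the $\beta_i$ cancel against $\gamma_i$, the product $fg$ has degree at most $n-2$, the Lagrange-interpolation identity $\sum_{i=1}^n h(\alpha_i)\prod_{j\neq i}(\alpha_i-\alpha_j)^{-1}=0$ for $\deg h\leq n-2$ makes every inner product vanish, and the dimension count turns the inclusion into equality. The only ingredient you use implicitly is that $\dim \text{GRS}_{n,k}(\alpha,\beta)=k$ (so that the dual has dimension $n-k$), which holds because a nonzero polynomial of degree less than $k\leq n$ cannot vanish at all $n$ distinct points $\alpha_1,\ldots,\alpha_n$; stating that one line would make the dimension bookkeeping airtight.
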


\subsubsection{Goppa Codes}
Another important family of codes in code-based cryptography  is the family of classical $q$-ary Goppa codes \cite{goppa1,goppa2,goppa3}. 

Let $m$ be a positive integer, $n=q^m$ and $\F_{q^m}$ be a finite field. Let $G \in \F_{q^m}[x]$. Then define the quotient ring
\begin{equation*}
S_m =\bigslant{ \F_{q^m}[x]}{\langle G \rangle}.
\end{equation*}
\begin{lemma}
Let $\alpha \in \F_q$ be such that $G(\alpha)\neq 0$. Then $(x-\alpha)$ is invertible in $S_m$ and 
\begin{equation*}
(x-\alpha)^{-1} = -\frac{1}{G(\alpha)}\frac{G(x)-G(\alpha)}{x-\alpha}.
\end{equation*}
\end{lemma}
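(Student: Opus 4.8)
The plan is to treat the claimed formula as a verification: first show $(x-\alpha)$ is a unit, then exhibit the given element as its inverse by a direct computation in $\F_{q^m}[x]$ before passing to the quotient.

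First I would argue invertibility. Since $x-\alpha$ has degree one it is irreducible in $\F_{q^m}[x]$, so it is coprime to $G$ unless $(x-\alpha)\mid G$; but $(x-\alpha)\mid G$ is equivalent, by the factor theorem, to $G(\alpha)=0$, which is excluded by hypothesis. Hence $\gcd(x-\alpha,G)=1$, so there exist $a,b\in\F_{q^m}[x]$ with $a(x)(x-\alpha)+b(x)G(x)=1$, and reducing modulo $\langle G\rangle$ shows $x-\alpha$ is invertible in $S_m$. (Implicitly $\deg G\geq 1$ so that $S_m$ is not the zero ring; I would note this.)

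Next I would identify the inverse explicitly. Because $\alpha$ is a root of the polynomial $G(x)-G(\alpha)$, the factor theorem gives that $x-\alpha$ divides $G(x)-G(\alpha)$ in $\F_{q^m}[x]$, so
$$
q(x) := \frac{G(x)-G(\alpha)}{x-\alpha}
$$
is an honest polynomial, and by construction $(x-\alpha)\,q(x) = G(x)-G(\alpha)$ in $\F_{q^m}[x]$. Reducing this identity modulo $\langle G\rangle$ kills the $G(x)$ term, leaving $(x-\alpha)\,q(x) \equiv -G(\alpha) \pmod{\langle G\rangle}$. Since $G(\alpha)\in\F_{q^m}^\times$ we may divide by the scalar $-G(\alpha)$, obtaining
$$
(x-\alpha)\cdot\left(-\tfrac{1}{G(\alpha)}\,q(x)\right)\equiv 1 \pmod{\langle G\rangle},
$$
which is precisely the asserted formula for $(x-\alpha)^{-1}$.

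There is essentially no serious obstacle here; the only points requiring a word of care are (i) observing that $\frac{G(x)-G(\alpha)}{x-\alpha}$ is genuinely a polynomial rather than a rational function, which is exactly the factor theorem, and (ii) that $G(\alpha)$ is invertible as a field element so the scalar division is legitimate. Everything else is a one-line reduction modulo $G$.
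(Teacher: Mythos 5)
Your proof is correct: the paper itself omits the argument (deferring to the cited reference), and your direct verification—factor theorem to see that $\frac{G(x)-G(\alpha)}{x-\alpha}$ is a polynomial, then reduction modulo $\langle G \rangle$ to turn $(x-\alpha)q(x)=G(x)-G(\alpha)$ into $(x-\alpha)q(x)\equiv -G(\alpha)$—is exactly the standard one. Note that exhibiting the explicit inverse already proves invertibility, so your opening coprimality/B\'ezout paragraph is harmless but redundant.
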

\begin{definition}[Classical Goppa Code]\label{def:goppa} 
Let $\alpha = (\alpha_1, \ldots, \alpha_n) \in \F_{q^m}^n$, be such that  $\alpha_i \neq \alpha_j$ for all $i \neq j \in \{1, \ldots, n\}$, and $G(\alpha_i)\neq 0$ for all $i \in \{1, \ldots, n\}$. Then we can define the \emph{classical $q$-ary Goppa code} as
 \begin{equation*}
 \Gamma(\alpha,G)= \left\{c \in \F_q^n \ \bigg\vert \ \sum_{i=1}^n \frac{c_i}{x-\alpha_i}=0 \ \text{in} \ S_m\right\}.
 \end{equation*}
\end{definition}
\begin{proposition}
The Goppa code $\Gamma(\alpha,G)$ has minimum Hamming distance $d_H(\Gamma(\alpha,G)) \geq \deg(G) +1$ and dimension $k \geq n-m\deg(G)$.
\end{proposition}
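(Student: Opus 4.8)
The plan is to establish the two bounds separately, both by exploiting the description of $\Gamma(\alpha,G)$ as a subfield subcode of a generalized Reed–Solomon-type code over $\F_{q^m}$. Write $r = \deg(G)$.

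\emph{Dimension bound.} First I would rewrite the defining condition $\sum_{i=1}^n \frac{a_i}{x-\alpha_i} = 0$ in $S_m$ more explicitly. Using the preceding lemma, $(x-\alpha_i)^{-1} = -\frac{1}{G(\alpha_i)}\cdot\frac{G(x)-G(\alpha_i)}{x-\alpha_i}$, and the polynomial $\frac{G(x)-G(\alpha_i)}{x-\alpha_i}$ has degree $r-1$. Hence $\sum_i a_i (x-\alpha_i)^{-1}$ is represented in $S_m$ by a polynomial of degree at most $r-1$, and the condition that it vanishes amounts to $r$ linear equations over $\F_{q^m}$ on the coefficients $a_i \in \F_q$. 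Equivalently, there is a parity-check matrix $\bH \in \F_{q^m}^{r \times n}$ for $\Gamma(\alpha,G)$ over $\F_{q^m}$ (one can take the columns built from $G(\alpha_i)^{-1}(1,\alpha_i,\ldots,\alpha_i^{r-1})^\top$ after a suitable triangular change of basis, i.e.\ essentially a scaled Vandermonde matrix). Viewing each entry of $\F_{q^m}$ as $m$ coordinates over $\F_q$, this gives at most $mr$ linear constraints over $\F_q$ on the codeword $a \in \F_q^n$, so $k = \dim_{\F_q}\Gamma(\alpha,G) \geq n - mr$.

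\emph{Minimum distance bound.} Here I would argue that $\Gamma(\alpha,G)$, as a code over $\F_{q^m}$ before taking the subfield subcode, is contained in a $\mathrm{GRS}_{n,n-r}$ code and therefore has minimum distance at least $r+1$; taking the subfield subcode can only increase the minimum distance. Concretely: a nonzero $a \in \Gamma(\alpha,G)$ with support $S$, $|S| = w$, satisfies $\sum_{i\in S} a_i (x-\alpha_i)^{-1} = 0$ in $S_m$. Clearing denominators via the polynomial $\pi_S(x) = \prod_{i\in S}(x-\alpha_i)$, this says $G(x)$ divides $\sum_{i\in S} a_i \prod_{j\in S, j\neq i}(x-\alpha_j)$, a polynomial of degree at most $w-1$. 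If $w \leq r$, this forces that polynomial to be zero; but evaluating it at $\alpha_\ell$ for $\ell \in S$ gives $a_\ell \prod_{j\in S, j\neq \ell}(\alpha_\ell - \alpha_j) = 0$, and since the $\alpha_i$ are distinct, $a_\ell = 0$ for all $\ell \in S$, contradicting $a \neq 0$. Hence $w \geq r+1$, i.e.\ $d_H(\Gamma(\alpha,G)) \geq \deg(G)+1$.

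The main obstacle — really the only nontrivial bookkeeping — is making the passage from "$r$ equations over $\F_{q^m}$" to "a genuine parity-check matrix" clean, i.e.\ identifying the correct Vandermonde/Goppa parity-check matrix and checking that the degree-$(r-1)$ reduction of $(x-\alpha_i)^{-1}$ in $S_m$ behaves as claimed; the Euclidean-division identity from the lemma handles this, but one should be careful that $\gcd(G, x-\alpha_i) = 1$ (guaranteed by $G(\alpha_i)\neq 0$) so the inverses genuinely exist. Everything else is linear algebra (dimension) and the distinctness of the $\alpha_i$ together with the degree count on $\pi_S$ (distance).
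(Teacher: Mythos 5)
Your proof is correct, and both halves are sound: the reduction of $\sum_i a_i(x-\alpha_i)^{-1}$ to a representative of degree at most $r-1$ via the lemma does give exactly $r$ parity-check equations over $\F_{q^m}$, hence at most $mr$ over $\F_q$, and the clearing-of-denominators argument (using that $G(\alpha_i)\neq 0$ makes each $x-\alpha_i$, hence $\pi_S$, invertible in $S_m$, and that $G$ cannot divide a nonzero polynomial of degree at most $w-1\leq r-1$) correctly forces $w\geq r+1$. The paper itself does not spell out a proof; it states the result with a reference and instead supplies the ingredients around it: the weighted Vandermonde parity-check matrix $\bH$ over $\F_{q^m}$ with $\beta_i=G(\alpha_i)^{-1}$ (which is exactly the matrix your "triangular change of basis" produces), and the general facts on subfield subcodes and alternant codes, from which the implied route is "$\Gamma(\alpha,G)$ is the subfield subcode of a $\mathrm{GRS}_{n,n-r}$ code, so apply the alternant-code bounds." Your dimension argument is the same as that route; for the distance you replace the appeal to the MDS property of GRS codes plus the subfield-subcode inequality by a direct degree/evaluation argument, which is slightly more self-contained (it reproves the needed instance of the GRS distance bound) at the cost of not reusing the paper's Propositions on subfield subcodes and alternant codes. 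Either way the statement is fully established.
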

In order to construct a parity-check matrix of a classical Goppa code, 
let us define $\beta = (G(\alpha_1)^{-1}, \ldots, G(\alpha_n)^{-1})$. The parity-check matrix of $\Gamma(\alpha, G)$ is then given by the weighted Vandermonde matrix
$$ \bH= \begin{pmatrix} 	\beta_1 & \cdots & \beta_n \\
	\beta_1 \alpha_1 & \cdots & \beta_n \alpha_n \\
	\vdots & & \vdots \\
	\beta_1 \alpha_1^{r-1} & \cdots & \beta_n \alpha_n^{r-1}
\end{pmatrix}.$$
Note that  $\bH \in \mathbb{F}_{q^m}^{(n-k) \times n}$, but the code $\Gamma(\alpha,G)$ is the  $\mathbb{F}_q$-kernel of $\bH$.

From this construction, we can already see that strong connection between classical Goppa codes and GRS codes. For this  we define subfield subcodes and alternant codes  in the following. 
\begin{definition}[Subfield Subcode] 
Let $\mathcal{C}$ be an $[n,k]$ linear code over $\mathbb{F}_{q^m}$. The \emph{subfield subcode} of $\mathcal{C}$  over $\mathbb{F}_q$ is then defined as 
$$\mathcal{C}_{\mathbb{F}_q}=\mathcal{C} \cap \mathbb{F}_q^n.$$
\end{definition}
\begin{proposition}\label{prop:subfield} 
Let  $\mathcal{C}$ be an $[n,k]$ linear code over $\mathbb{F}_{q^m}$ with minimum distance $d$. Then $\mathcal{C}_{\mathbb{F}_q}$ has dimension $\geq n-m(n-k)$ and minimum distance $\geq d$. 
\end{proposition}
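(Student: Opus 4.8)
The plan is to handle the two assertions separately: the minimum-distance bound is essentially immediate from the set inclusion $\mathcal{C}_{\mathbb{F}_q}\subseteq\mathcal{C}$, whereas the dimension bound requires expanding a parity-check matrix of $\mathcal{C}$ over an $\mathbb{F}_q$-basis of $\mathbb{F}_{q^m}$.

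First I would record that $\mathcal{C}_{\mathbb{F}_q}=\mathcal{C}\cap\mathbb{F}_q^n$ really is a linear code over $\mathbb{F}_q$: as an $\mathbb{F}_{q^m}$-subspace, $\mathcal{C}$ is in particular an $\mathbb{F}_q$-subspace of $\mathbb{F}_{q^m}^n$, and intersecting it with the $\mathbb{F}_q$-subspace $\mathbb{F}_q^n$ yields an $\mathbb{F}_q$-subspace. By the earlier exercise its minimum distance equals the least Hamming weight of a nonzero codeword. Since every nonzero $\mathbf{c}\in\mathcal{C}_{\mathbb{F}_q}$ is a nonzero codeword of $\mathcal{C}$, it has $\wtH(\mathbf{c})\geq d$, and taking the minimum gives $d_H(\mathcal{C}_{\mathbb{F}_q})\geq d$.

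For the dimension, let $\mathbf{H}\in\mathbb{F}_{q^m}^{(n-k)\times n}$ be a parity-check matrix of $\mathcal{C}$, so $\mathcal{C}_{\mathbb{F}_q}=\{\mathbf{x}\in\mathbb{F}_q^n : \mathbf{H}\mathbf{x}^\top=\mathbf{0}\}$. Fix an $\mathbb{F}_q$-basis $\gamma_1,\ldots,\gamma_m$ of $\mathbb{F}_{q^m}$ and write each entry as $h_{ij}=\sum_{l=1}^m h_{ij}^{(l)}\gamma_l$ with $h_{ij}^{(l)}\in\mathbb{F}_q$; the coefficients assemble into matrices $\mathbf{H}^{(1)},\ldots,\mathbf{H}^{(m)}\in\mathbb{F}_q^{(n-k)\times n}$, which I stack vertically to form $\mathbf{H}'\in\mathbb{F}_q^{m(n-k)\times n}$. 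The crux is that for $\mathbf{x}\in\mathbb{F}_q^n$ the $i$-th coordinate of $\mathbf{H}\mathbf{x}^\top$ equals $\sum_{l=1}^m \gamma_l\bigl(\sum_j h_{ij}^{(l)}x_j\bigr)$, an $\mathbb{F}_q$-linear combination of $\gamma_1,\ldots,\gamma_m$; by their $\mathbb{F}_q$-linear independence it vanishes if and only if $\sum_j h_{ij}^{(l)}x_j=0$ for every $l$. Hence $\mathcal{C}_{\mathbb{F}_q}=\{\mathbf{x}\in\mathbb{F}_q^n : \mathbf{H}'\mathbf{x}^\top=\mathbf{0}\}$, whose $\mathbb{F}_q$-dimension is $n-\mathrm{rank}_{\mathbb{F}_q}(\mathbf{H}')\geq n-m(n-k)$ since $\mathbf{H}'$ has only $m(n-k)$ rows.

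The calculations are routine; the single point that genuinely needs care — and the reason one cannot just count the $n-k$ rows of $\mathbf{H}$ over $\mathbb{F}_{q^m}$ — is the equivalence in the last step, namely that replacing $\mathbf{H}$ by its $\mathbb{F}_q$-expansion $\mathbf{H}'$ neither discards nor introduces solutions among vectors with entries in $\mathbb{F}_q$. This is precisely where $\mathbb{F}_q$-linear independence of the chosen basis is invoked, and I would make sure to state it explicitly rather than gloss over it.
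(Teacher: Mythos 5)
Your proof is correct, but it follows a different route from the one the paper intends: the paper leaves this proposition as an exercise whose hint is the coordinatewise map $\phi(x_1,\ldots,x_n)=(x_1^q-x_1,\ldots,x_n^q-x_n)$. That argument runs as follows: $\phi$ is $\mathbb{F}_q$-linear on $\mathbb{F}_{q^m}^n$ with kernel exactly $\mathbb{F}_q^n$, so $\mathcal{C}_{\mathbb{F}_q}=\ker(\phi|_{\mathcal{C}})$; since $\dim_{\mathbb{F}_q}\mathcal{C}=mk$ and $\phi(\mathcal{C})$ lies inside $\phi(\mathbb{F}_{q^m}^n)$, whose $\mathbb{F}_q$-dimension is $n(m-1)$ (each coordinate map $x\mapsto x^q-x$ has kernel $\mathbb{F}_q$ of dimension $1$), rank--nullity gives $\dim_{\mathbb{F}_q}\mathcal{C}_{\mathbb{F}_q}\geq mk-n(m-1)=n-m(n-k)$, and the distance bound again comes from $\mathcal{C}_{\mathbb{F}_q}\subseteq\mathcal{C}$. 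Your alternative — expanding a parity-check matrix $\mathbf{H}\in\mathbb{F}_{q^m}^{(n-k)\times n}$ over an $\mathbb{F}_q$-basis into $\mathbf{H}'\in\mathbb{F}_q^{m(n-k)\times n}$ and counting rows — is equally valid, and you correctly isolate the one delicate point, namely that for vectors with entries in $\mathbb{F}_q$ the conditions $\mathbf{H}\mathbf{x}^\top=\mathbf{0}$ and $\mathbf{H}'\mathbf{x}^\top=\mathbf{0}$ are equivalent by $\mathbb{F}_q$-linear independence of the basis elements. What your approach buys is constructiveness: it exhibits an explicit parity-check matrix of $\mathcal{C}_{\mathbb{F}_q}$ over $\mathbb{F}_q$, which is precisely the device the paper uses later when it says that the weighted Vandermonde matrix over $\mathbb{F}_{q^m}$ defines the Goppa code as an $\mathbb{F}_q$-kernel; what the paper's hinted route buys is a basis-free, two-line dimension count once one knows the image dimension of $x\mapsto x^q-x$. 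The minimum-distance part is handled identically in both.
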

\begin{exercise}
Prove Proposition \ref{prop:subfield} using the map 
\begin{align*}
    \phi: \mathbb{F}_{q^m}^n & \to \mathbb{F}_{q^m}^n, \\
    (x_1, \ldots, x_n) & \mapsto (x_1^q-x_1, \ldots, x_n^q-x_n).
\end{align*}
\end{exercise}
A special case of subfield subcodes are the alternant codes, where one takes subfield subcodes of GRS codes.
\begin{definition}[Alternant Code] 
Let $\alpha \in \mathbb{F}_{q^m}^n$ be pairwise distinct and $\beta \in (\mathbb{F}_{q^m}^\star)^n$. Then the \emph{alternant code} $\mathcal{A}_{m,n,k}(\alpha, \beta)$ is defined as 
$$\mathcal{A}_{m,n,k}(\alpha, \beta)=\text{GRS}_{m,n,k}(\alpha, \beta) \cap \mathbb{F}_q^n.$$
\end{definition}
\begin{proposition}\label{prop:alt} 
The alternant code $\mathcal{A}_{m,n,k}(\alpha, \beta)$ has dimension $\geq n-m(n-k)$ and minimum distance $\geq n-k+1$. 
\end{proposition}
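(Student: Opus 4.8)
The plan is to observe that the alternant code is, by its very definition, the subfield subcode over $\mathbb{F}_q$ of the Generalized Reed--Solomon code $\text{GRS}_{n,k}(\alpha,\beta)$ viewed as a code over $\mathbb{F}_{q^m}$, so that both assertions drop out of Proposition \ref{prop:subfield} once we feed in the (well known) parameters of GRS codes.

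First I would record that $\text{GRS}_{n,k}(\alpha,\beta)$ over $\mathbb{F}_{q^m}$ is an $[n,k]$ linear code: the pairwise distinctness of the $\alpha_i$ and the nonvanishing of the $\beta_i$ guarantee, via the (weighted) Vandermonde generator matrix from the earlier exercise, that the evaluation map $f \mapsto (\beta_1 f(\alpha_1),\ldots,\beta_n f(\alpha_n))$ on polynomials of degree $<k$ is injective, so the dimension is exactly $k$ and the length is $n$. Next, by the exercise showing that GRS codes are MDS, its minimum Hamming distance equals $n-k+1$ (equivalently: a nonzero polynomial of degree $<k$ has at most $k-1$ roots among the $\alpha_i$, so any nonzero codeword has weight $\geq n-(k-1)$, and Singleton gives equality).

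Now I would simply invoke Proposition \ref{prop:subfield} with ambient field $\mathbb{F}_{q^m}$ and $\mathcal{C}=\text{GRS}_{n,k}(\alpha,\beta)$, whose minimum distance is $d=n-k+1$. Since $\mathcal{A}_{m,n,k}(\alpha,\beta)=\mathcal{C}\cap\mathbb{F}_q^n=\mathcal{C}_{\mathbb{F}_q}$, the proposition yields at once that its $\mathbb{F}_q$-dimension is at least $n-m(n-k)$ and its minimum distance is at least $d=n-k+1$, which is precisely the claim.

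There is essentially no real obstacle here. The only points worth stating carefully are bookkeeping: the minimum distance bound is inherited verbatim because passing to a subset of codewords cannot decrease the minimum distance, while the dimension bound $n-m(n-k)$ is only a generic lower bound and may be loose or even vacuous for small $n$ (consistently with Singleton, the alternant code is in general not MDS once its dimension drops below $k$). If one wished to avoid citing Proposition \ref{prop:subfield}, the argument would migrate there: taking the $\mathbb{F}_{q^m}$-parity-check matrix $\bH\in\mathbb{F}_{q^m}^{(n-k)\times n}$ of the GRS code and expanding each entry over an $\mathbb{F}_q$-basis of $\mathbb{F}_{q^m}$ produces $\bH'\in\mathbb{F}_q^{m(n-k)\times n}$ with $\mathcal{A}_{m,n,k}(\alpha,\beta)=\{\bx\in\mathbb{F}_q^n\mid \bH'\bx^\top=\bz\}$, whence $\dim_{\mathbb{F}_q}\mathcal{A}_{m,n,k}(\alpha,\beta)\geq n-m(n-k)$; but since Proposition \ref{prop:subfield} is already established, this expansion need not be repeated.
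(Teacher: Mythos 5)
Your proof is correct and follows exactly the route the paper intends (it leaves this as an exercise immediately after presenting Proposition \ref{prop:subfield}): identify $\mathcal{A}_{m,n,k}(\alpha,\beta)$ as the subfield subcode of the $[n,k]$ MDS code $\text{GRS}_{n,k}(\alpha,\beta)$ over $\mathbb{F}_{q^m}$ and apply Proposition \ref{prop:subfield} with $d=n-k+1$. Your remarks on the injectivity of the evaluation map and the alternative parity-check-expansion argument are accurate bookkeeping and require no changes.
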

\begin{exercise}
Prove Proposition \ref{prop:alt}.
\end{exercise}
Thus, classical Goppa codes are alternant codes, i.e., subfield subcodes of particular GRS codes, where the weights $\beta_i$ are the inverses of the evaluations $g(\alpha_i)$, for a polynomial $g$. 

\subsubsection{Cyclic Codes}
Another important family of codes is that of cyclic codes. They can be represented through only one vector.
Let $c=(c_1, \ldots, c_n) \in \mathbb{F}_q^n$, then we denote by $\sigma(c)$ its \emph{cyclic shift}, i.e.,
$$\sigma(c_1, \ldots, c_n) = (c_n, c_1, \ldots, c_{n-1}).$$
We call a code cyclic, if the cyclic shift of any codeword is also a codeword.
\begin{definition}[Cyclic Code] %
Let $\mathcal{C}$ be an $[n,k]$ linear code over $\mathbb{F}_{q}$. We say that $\mathcal{C}$ is \emph{cyclic} if $\sigma(\mathcal{C})= \mathcal{C}.$
\end{definition}
\begin{proposition}\label{prop:RScyc} %
Let $k \leq n =q-1$ be positive integers and let $\alpha \in \mathbb{F}_q^n$ be such that $\alpha_i=\gamma^{i-1}$, for $i \in\{1, \ldots, n\}$ and $\gamma$ a primitive element in $\mathbb{F}_q$. Then $\text{RS}_{n,k}(\alpha)$ is a cyclic code.
\end{proposition}
\begin{exercise}
Prove Proposition \ref{prop:RScyc}.
\end{exercise}
Note that any polynomial  $c(x) =\sum_{i=0}^{n-1} c_i x^i \in \mathbb{F}_q[x]$ of degree (at most) $n-1$ corresponds naturally to a vector $c =(c_0, \ldots, c_{n-1}) \in \mathbb{F}_q^n$. 
\begin{proposition}\label{prop:cyc} %
Cyclic codes over $\mathbb{F}_q$ of length $n$ correspond to ideals of $\mathbb{F}_q[x]/(x^n-1)$.
\end{proposition}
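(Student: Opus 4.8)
The plan is to transport the problem into the quotient ring $R := \mathbb{F}_q[x]/(x^n-1)$ via the obvious $\mathbb{F}_q$-linear identification and then observe that the cyclic shift is nothing but multiplication by $x$. Concretely, first I would set up the $\mathbb{F}_q$-vector space isomorphism
$$\psi : \mathbb{F}_q^n \longrightarrow R, \qquad (c_0,\ldots,c_{n-1}) \longmapsto \sum_{i=0}^{n-1} c_i x^i,$$
which is well defined and bijective because the monomials $1,x,\ldots,x^{n-1}$ form an $\mathbb{F}_q$-basis of $R$. (Here I re-index the coordinates from $(c_1,\ldots,c_n)$ to $(c_0,\ldots,c_{n-1})$ as in the paragraph preceding the statement; this is a harmless relabelling and does not change what the cyclic shift $\sigma$ does.)

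The key computation is that $\psi$ intertwines $\sigma$ with multiplication by $x$: if $c(x)=\psi(c)=\sum_{i=0}^{n-1}c_i x^i$, then
$$x\cdot c(x) = \sum_{i=0}^{n-1} c_i x^{i+1} = c_{n-1}x^n + \sum_{i=0}^{n-2} c_i x^{i+1} \equiv c_{n-1} + \sum_{i=0}^{n-2} c_i x^{i+1} \pmod{x^n-1},$$
and the right-hand side is exactly $\psi(\sigma(c))$. Hence $\psi(\sigma(\mathcal{C})) = x\cdot\psi(\mathcal{C})$ for any subset $\mathcal{C}\subseteq\mathbb{F}_q^n$.

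Now I would prove the two implications. If $\mathcal{C}$ is cyclic, then $\psi(\mathcal{C})$ is an $\mathbb{F}_q$-subspace of $R$ (because $\mathcal{C}$ is linear and $\psi$ is $\mathbb{F}_q$-linear) that is closed under multiplication by $x$ (by the intertwining identity and $\sigma(\mathcal{C})=\mathcal{C}$), hence closed under multiplication by $x^j$ for every $j\ge 0$ by iteration, hence closed under multiplication by every polynomial $\sum_j a_j x^j$ by $\mathbb{F}_q$-linearity; since every element of $R$ is represented by such a polynomial, $\psi(\mathcal{C})$ is an ideal of $R$. Conversely, if $I\subseteq R$ is an ideal, then $I$ is closed under addition and under multiplication by the scalars $\mathbb{F}_q\subseteq R$, so $\psi^{-1}(I)$ is a linear code, and $I$ is closed under multiplication by $x\in R$, so by the intertwining identity $\psi^{-1}(I)$ is cyclic. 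Thus $\psi$ restricts to a bijection between cyclic codes of length $n$ over $\mathbb{F}_q$ and ideals of $R$, which is the claim.

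There is no genuine obstacle here; the only point requiring care is the bookkeeping around the indexing convention for $\sigma$ and the verification that the "wrap-around" term $c_{n-1}x^n$ reduces correctly modulo $x^n-1$, i.e. that the isomorphism of additive groups $\mathbb{F}_q^n\cong R$ really does carry one cyclic shift to multiplication by $x$. Everything else is immediate from the ring structure of $R$ and $\mathbb{F}_q$-linearity.
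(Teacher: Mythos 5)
Your proof is correct and follows exactly the route the paper intends: the statement is left as an exercise there, with the hint to use the identification map $\varphi: \mathbb{F}_q[x]/(x^n-1) \to \mathbb{F}_q^n$, $c(x) \mapsto (c_0,\ldots,c_{n-1})$, which is precisely the inverse of your $\psi$, and the key observation in both cases is that multiplication by $x$ corresponds to the cyclic shift $\sigma$. Your write-up is complete (indeed you prove the full bijection between cyclic codes and ideals, slightly more than the one direction literally stated), and the indexing/wrap-around bookkeeping is handled correctly.
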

\begin{exercise}
Prove Proposition \ref{prop:cyc} using the map 
\begin{align*} \varphi: \mathbb{F}_{q}[x]/(x^n-1) & \to \mathbb{F}_q^n, \\ 
c(x) & \mapsto (c_0, \ldots, c_{n-1}). \end{align*} In particular, what is $\varphi(x \cdot c(x))?$
\end{exercise}
Since we can see cyclic codes as ideals in $\mathbb{F}_q[x]/(x^n-1)$, we can also consider the generator polynomial of a cyclic code. 
\begin{definition}[Generator Polynomial] The \emph{generator polynomial} of a cyclic code $\mathcal{C} \subset \mathbb{F}_q^n$ is the  unique monic generator of minimal degree of the corresponding ideal in $\mathbb{F}_q[x]/(x^n-1)$. 
\end{definition}
\begin{proposition}\label{prop:genpoly}
Let $\mathcal{C}$ be a cyclic code over $\mathbb{F}_q$ of length $n$ with generator polynomial $g(x) = \sum_{i=0}^{r} g_ix^i$, where $r$ is the degree of $g$. Then \begin{enumerate}
    \item $g(x) \mid x^n-1$.
    \item $\mathcal{C}$ has dimension $n-r.$
    \item A generator matrix $G \in \mathbb{F}_q^{(n-r) \times n}$ of $\mathcal{C}$ is given by 
    $$G= \begin{pmatrix}g_0 & \cdots & g_r & & \\
    & \ddots & & \ddots & \\
    & & g_0 & \cdots & g_r \end{pmatrix}.$$
    \item Let $h(x)$ be such that $g(x)h(x)= x^n-1$, then $\langle g(x) \rangle^{\perp} = \langle h(x) \rangle.$
\end{enumerate}
\end{proposition}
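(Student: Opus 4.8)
The plan is to work throughout in the quotient ring $R=\F_q[x]/(x^n-1)$, using Proposition~\ref{prop:cyc} to view $\mathcal{C}$ as an ideal of $R$; since $R$ is a principal ideal ring, this ideal has a unique monic generator of least degree, which is exactly $g(x)$, so that $r=\deg g$ is well defined. I will then take the four assertions in turn. The first three reduce to division with remainder in $\F_q[x]$ together with the observation that a polynomial of degree $<n$ ``is'' its own coordinate vector; the fourth is the only place where genuine care is needed.

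\emph{Part 1.} Divide $x^n-1$ by $g$: write $x^n-1=\ell(x)g(x)+s(x)$ with $\deg s<r$. Reducing modulo $x^n-1$ gives $s(x)\equiv-\ell(x)g(x)$ in $R$, so $s(x)$ lies in the ideal $\mathcal{C}$; but $\deg s<r<n$, so it is a genuine codeword of degree less than $r$, and minimality of $r$ forces $s=0$. Hence $g(x)\mid x^n-1$, and we may fix $h(x)$ with $g(x)h(x)=x^n-1$; note $\deg h=n-r$.

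\emph{Parts 2 and 3.} I claim $\mathcal{C}=\{\rho(x)g(x) : \rho\in\F_q[x],\ \deg\rho<n-r\}$, the right-hand side consisting of honest polynomials of degree $<n$ so that no reduction modulo $x^n-1$ is needed. The inclusion ``$\supseteq$'' is immediate. For ``$\subseteq$'', take $c(x)\in\mathcal{C}$, write $c(x)\equiv a(x)g(x)$ in $R$, and divide $a$ by $h$: $a=Qh+\rho$ with $\deg\rho<n-r$. Then $ag=Q(x^n-1)+\rho g\equiv\rho g\pmod{x^n-1}$, and since $\deg(\rho g)<(n-r)+r=n$ we in fact have $c(x)=\rho(x)g(x)$ on the nose. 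Thus $\mathcal{C}$ is spanned by $g(x),xg(x),\dots,x^{n-r-1}g(x)$, and these are linearly independent because $\sum_i\lambda_i x^i g(x)=\big(\sum_i\lambda_i x^i\big)g(x)$ vanishes only when all $\lambda_i=0$ and the coefficient-vector map is a bijection on polynomials of degree $<n$. Hence $\dim\mathcal{C}=n-r$, and listing the coordinate vectors of $g(x),xg(x),\dots,x^{n-r-1}g(x)$ as rows yields precisely the staircase matrix $G$ of the statement.

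\emph{Part 4.} For $a(x)\in\langle g\rangle$ and $b(x)\in\langle h\rangle$, say $a=ug$ and $b=vh$ in $R$, the product is $a(x)b(x)=u(x)v(x)g(x)h(x)=u(x)v(x)(x^n-1)\equiv 0$ in $R$. Reading off the coefficient of each power of $x$ in $a(x)b(x)\bmod(x^n-1)$ yields $\sum_k a_k\,b_{(j-k)\bmod n}=0$ for every $j$; in terms of the reciprocal polynomial $\tilde b(x):=x^{n-1}b(1/x)$ (equivalently, the reversed coordinate vector of $b$) this says exactly that the coordinate vector of $a$ is orthogonal, under the standard inner product $\langle\,\cdot\,,\,\cdot\,\rangle$, to every cyclic shift of that of $\tilde b$. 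Hence the cyclic code $\langle\tilde h\rangle$ is contained in $\mathcal{C}^\perp$; since reciprocation is a linear isomorphism, $\dim\langle\tilde h\rangle=\dim\langle h\rangle=n-\deg h=r=n-\dim\mathcal{C}=\dim\mathcal{C}^\perp$, and the inclusion is an equality $\mathcal{C}^\perp=\langle\tilde h\rangle$. I expect this reciprocal bookkeeping to be the only delicate point of the proof: the clean identity $\mathcal{C}^\perp=\langle h\rangle$ as stated is this same fact read under the usual convention that identifies a cyclic code with its image under the coordinate automorphism $x\mapsto x^{-1}$ (equivalently, after replacing $h$ by its monic reciprocal, again a divisor of $x^n-1$), so the real content is the orthogonality relation above together with the dimension count.
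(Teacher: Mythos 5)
Your proof is correct. Note that the paper itself gives no proof of Proposition \ref{prop:genpoly} (it is left as an exercise immediately after the statement), so there is nothing to compare against line by line; your argument is the standard one the exercise intends: view $\mathcal{C}$ as an ideal of $\mathbb{F}_q[x]/(x^n-1)$ via Proposition \ref{prop:cyc}, use division with remainder against $g$ (for part 1) and against $h$ (for parts 2 and 3), identify polynomials of degree less than $n$ with their coordinate vectors, and finish with a dimension count. Parts 1--3 are complete as written.

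Your caution in part 4 is not merely bookkeeping: the orthogonality relation $\sum_k a_k b_{(j-k)\bmod n}=0$ really gives $\mathcal{C}^\perp=\langle \tilde h\rangle$ with $\tilde h$ the reciprocal of $h$, and the literal equality $\langle g\rangle^{\perp}=\langle h\rangle$ printed in the proposition can fail. For instance, over $\mathbb{F}_2$ with $n=7$, $g(x)=x^3+x+1$ and $h(x)=x^4+x^2+x+1$, the coordinate vector $(1,1,1,0,1,0,0)$ of $h$ has inner product $1$ with the vector $(0,1,1,0,1,0,0)$ of $xg$, so $h\notin\mathcal{C}^\perp$, while the reciprocal $x^4h(1/x)=x^4+x^3+x^2+1$ does generate $\mathcal{C}^\perp$. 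So the statement should be read with $h$ replaced by its (monic) reciprocal, which is again a divisor of $x^n-1$, or, as you say, up to the coordinate-reversal equivalence; your proof establishes exactly this corrected form, and your dimension count $\dim\langle\tilde h\rangle=r=\dim\mathcal{C}^\perp$ closes the inclusion correctly. One tiny point worth making explicit in part 1: a nonzero element of the ideal of degree smaller than $r$ would, after normalization, itself be a monic generator of smaller degree (any nonzero element of minimal degree in an ideal of $\mathbb{F}_q[x]/(x^n-1)$ generates it), which is the precise contradiction with the definition of $g$; this is implicit in your appeal to minimality and is easily added.
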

\begin{exercise}
Prove Proposition \ref{prop:genpoly}. 
\end{exercise}
\begin{exercise}
How many cyclic codes over $\F_3$ of length $4$ exist?
\end{exercise}
Note that the generator matrix in Proposition \ref{prop:genpoly} is in a special form, such a matrix is called a \emph{circulant matrix}. 
\begin{exercise}
Give the generator polynomial of $\text{RS}_{n,k}(\alpha).$
\end{exercise}
\begin{exercise}
Let us consider the code $\mC$ over $\F_3$ generated by
$$\bG= \begin{pmatrix} 1 & 0 & 1 & 0 \\ 0 & 1 & 0 & 1
\end{pmatrix}.$$
\begin{enumerate}
    \item Show that $\mC$ is cyclic.
    \item Find the generator polynomial of $\mC$.
    \item Find the generator polynomial of $\mC^\perp.$
\end{enumerate}
\end{exercise}

Finally, since we know how to compute the polynomial product $u(x) \cdot v(x) \in \mathbb{F}_q[x]/(x^n-1)$, we can define a new vector multiplication in $\mathbb{F}_q^n$. 
\begin{definition}[Rotation Matrix]
    Let $\bu,\bv \in \mathbb{F}_q^n$ and define the \emph{rotation matrix} as $$\text{rot}(\bu)= \begin{pmatrix} \bu \\ \sigma(\bu) \\ \vdots \\ \sigma^{n-1}(\bu) \end{pmatrix}.$$
Let us denote by $\bu \bv = \bu \text{rot}(\bv).$    
\end{definition}
~
\medskip

\begin{exercise} 
\begin{enumerate}
    \item Show that $\varphi(\bu \bv)= u(x)v(x)$.
    \item Show that $\bu\bv=\bv\bu.$
\end{enumerate}
   \end{exercise}

Finally, we introduce quasi-cyclic codes. For $\bx  =(x_1, \ldots, x_n)\in \F_q^n$ and some $\ell \in \{1, \ldots, n\}$ we denote by $\sigma_\ell(x)$ its \emph{$\ell$-cyclic shift}, i.e.,
$$\sigma_\ell(\bx) = (x_{1 + \ell}, \ldots, x_{n + \ell}),$$
where the indices $i +\ell$ should be considered modulo $n$. 
\begin{definition} 
An $[n,k]$ linear code $\mC$ is a \emph{quasi-cyclic} (QC) code, if there exists $\ell \in \mathbb{N}$, such that $\sigma_\ell(\mC)=\mC.$
\end{definition}
In addition, if $n=\ell a$, for some $a \in \mathbb{N},$ then it is convenient to write the generator matrix composed into $a \times a$ circulant matrices. 

   \subsubsection{LDPC Codes}
Another interesting family of codes for cryptography are the \emph{low-density parity-check} (LDPC) codes introduced by Gallager \cite{gallager}. The idea of LDPC codes is to have a parity-check matrix that is sparse. These codes are usually defined over the binary, although they can be generalized to arbitrary finite fields \cite{davey}, for the applications in cryptography the binary LDPC codes suffice.
In order to define LDPC codes we introduce the notation of \emph{row-weight}, respectively \emph{column-weight} of a matrix, which refers to the  Hamming weight of each row, respectively of each column. Thus, a matrix having row-weight $w$, asks for each row to have Hamming weight $w$.
Classically LDPC codes are defined as follows.
\begin{definition}
Let $\lambda, \rho \in \mathbb{N}.$ An $[n,k]$ linear code $\mC$
 over $\F_2$ is called a  \emph{$(\lambda, \rho)$-regular LDPC code}, if there exists a parity-check matrix $\bH \in \F_2^{(n-k) \times n}$ of $\mathcal{C}$ which has column-weight $\lambda$ and row-weight $\rho.$
 \end{definition}
A more common definition for cryptographic applications reads as follows.
\begin{definition}
Let $w \in \mathbb{N}$ be a constant. An $[n,k]$ linear code $\mC$ over $\F_2$ is called a  \emph{$w$-low-density parity-check code}, if there exists a parity-check matrix $\bH \in \F_2^{(n-k) \times n}$  of $\mathcal{C}$ having row-weight $w$. 
\end{definition}
\begin{exercise}
Show that the rate of an  $(\lambda, \rho)$-regular LDPC code is given by $1- \lambda/\rho.$ 
\end{exercise}
For a parity-check matrix $\bH \in \F_2^{(n-k) \times n}$ and a received vector $\bx \in \F_2^n$  we call the $(n-k)$ equations derived  from $\bH \bx^\top$  \emph{parity-checks}, i.e.,  $$\sum_{j=1}^n h_{ij} x_j $$ for all $i \in \{1, \ldots, n-k\}$. We say that a parity-check  is \emph{satisfied} if 
$$\sum_{j=1}^n h_{ij} x_j =0,$$ and else call it \emph{unsatisfied}. 
\medskip

LDPC codes are interesting from a coding-theoretic point of view, as they (essentially) achieve Shannon capacity in a practical way. From a cryptographic stand point, these codes are interesting as they have no algebraic structure, which might be detected by an attacker, but nevertheless have an efficient decoding algorithm.

One decoding algorithm  dates back to Gallager \cite{gallager} and is called Bit-Flipping algorithm. There have been many improvements (e.g. \cite{bf1, bf2,bf3})). The algorithm is iterative and its error correction capability increases with the code length. 
 The idea of the Bit-Flipping algorithm is that at each iteration the  number of unsatisfied parity-check equations associated to each bit of the received vector is computed. Each bit which has more than $b \in \mathbb{N}$
 (some threshold parameter) unsatisfied parity-check equations is flipped and the syndrome is updated accordingly. This process is repeated until either the syndrome becomes $\bz$, or until a maximal number of iteration $M \in \mathbb{N}$ is reached. In the later case we have a \emph{decoding failure}.
 The complexity of this algorithm is thus given by $\mathcal{O}(nwN)$, where $w$  is the row-weight of the parity-check matrix and $N$ is the average number of iterations.
 \medskip
 
 One can also relax the condition on the row-weight of LDPC codes, to get moderate-density parity-check (MDPC) codes \cite{mdpcp}.
 \begin{definition}\label{def:mdpc}
An $[n,k]$ linear code $\mC$ over $\F_2$ is called a  \emph{moderate-density parity-check code}, if there exists a parity-check matrix $\bH \in \F_2^{(n-k) \times n}$  having row-weight $\mathcal{O}(\sqrt{n \log(n)})$. 
 \end{definition}
 Thus, the only difference to LDPC codes is that we allow a larger row-weight in the parity-check matrix (for LDPC codes $w$ was chosen constant in $n$). This might however lead to an increase of iterations within the Bit-Flipping algorithm
 and decoding failures become increasingly likely.

\subsubsection{Reed-Muller Codes}
 
Next, we introduce a class of codes, the Reed-Muller codes, introduced in \cite{muller} in 1954. They are, similarly to Reed-Solomon codes, constructed as the evaluation of polynomials. While Reed-Solomon codes only consider polynomials in one variable, Reed-Muller codes use multivariate polynomials. For this part we follow \cite{guruswamiRM}.

Let $p$ be a prime, $q = p^n$ and $m,r$ be positive integers. Denote with $\mathbb{F}_q [x_1, \ldots, x_m]_{\leq r}$ the $\mathbb{F}_q$-vector space of polynomials in $m$ variables of degree at most $r$ and fix an order $\{ \alpha_1, \alpha_2, \ldots, \alpha_{q^m} \}$ of $\mathbb{F}_q^m$. 
\begin{definition}\label{def:reedmuller}
The \emph{Reed-Muller} code $\text{RM}_q(m,r)$ over $\mathbb{F}_q$ is defined as the image of the evaluation map
\begin{align*}
ev: \mathbb{F}_q [x_1, \ldots, x_m]_{\leq r}  &\to \mathbb{F}_q^{q^m}, \\
f &\mapsto (f(\alpha_1), f(\alpha_2), \ldots, f(\alpha_{q^m})).
\end{align*}
\end{definition}

We will note that there exist efficient decoding algorithms for Reed-Muller codes, the first efficient decoding algorithm was published in \cite{reeddecoding}.
\medskip

 For the case $q=2$, we can compute dimension and minimum distance of $\text{RM}_q(m,r)$.

 \begin{proposition}
 \label{RMdim}
 Let $r \leq m$. Then $\dim_{\mathbb{F}_2}(\text{RM}_2(m,r)) = \sum_{i=0}^r {m \choose i}$. 
 \end{proposition}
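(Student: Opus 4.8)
The plan is to use that over $\mathbb{F}_2$ the squaring map is the identity on field elements, so that $\mathrm{RM}_2(m,r)$ is already spanned by the evaluations of \emph{squarefree} (multilinear) monomials of degree at most $r$; the upper bound on the dimension is then a monomial count, and the matching lower bound reduces to showing that the evaluations of \emph{all} $2^m$ squarefree monomials are linearly independent in $\mathbb{F}_2^{2^m}$.

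First I would record the reduction step. Since $a^2 = a$ for every $a \in \mathbb{F}_2$, we have $a^e = a^{\min(e,1)}$ for all $e \geq 1$, so replacing each power $x_i^{e_i}$ in a polynomial $f \in \mathbb{F}_2[x_1,\dots,x_m]$ by $x_i^{\min(e_i,1)}$ leaves $ev(f)$ unchanged and does not increase the degree. Writing $x_S := \prod_{i \in S} x_i$ for $S \subseteq \{1,\dots,m\}$, it follows that $\mathrm{RM}_2(m,r)$ equals the $\mathbb{F}_2$-span of $\{ ev(x_S) : |S| \leq r \}$, and since there are exactly $\sum_{i=0}^r \binom{m}{i}$ such sets $S$ this gives $\dim_{\mathbb{F}_2}(\mathrm{RM}_2(m,r)) \leq \sum_{i=0}^r \binom{m}{i}$.

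For the reverse inequality it suffices to prove that the $2^m$ vectors $\{ ev(x_S) : S \subseteq \{1,\dots,m\} \}$ are linearly independent, since any subset of a linearly independent set is again linearly independent and yields the bound for each $r$. I would prove this by induction on $m$: order $\mathbb{F}_2^m$ as the points $(\beta,0)$ followed by the points $(\beta,1)$ with $\beta$ ranging over $\mathbb{F}_2^{m-1}$, and split the squarefree monomials into those $x_T$ with $T \subseteq \{1,\dots,m-1\}$ and those of the form $x_m x_T$. Since $x_T(\beta,b) = x_T(\beta)$ and $(x_m x_T)(\beta,b) = b\,x_T(\beta)$, the $2^m \times 2^m$ evaluation matrix $M_m$ has block form $\left(\begin{smallmatrix} M_{m-1} & M_{m-1} \\ 0 & M_{m-1} \end{smallmatrix}\right)$, i.e.\ it is the Kronecker product $\left(\begin{smallmatrix} 1 & 1 \\ 0 & 1 \end{smallmatrix}\right) \otimes M_{m-1}$; being block upper triangular with invertible diagonal blocks (the base case $M_0 = (1)$ is trivial), $M_m$ is invertible over $\mathbb{F}_2$. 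Hence $\dim_{\mathbb{F}_2}(\mathrm{RM}_2(m,r)) \geq \sum_{i=0}^r \binom{m}{i}$, and combining the two inequalities gives equality.

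The only genuinely non-routine point is the linear independence of the squarefree-monomial evaluations; everything else is bookkeeping. An alternative to the Kronecker-power argument is to note that every function $\mathbb{F}_2^m \to \mathbb{F}_2$ is a polynomial function, since the indicator of a point $a \in \mathbb{F}_2^m$ is $\prod_{i=1}^m(1 + x_i + a_i)$, so $ev$ is surjective onto $\mathbb{F}_2^{2^m}$; as it factors through the $2^m$-dimensional space of squarefree polynomials, it must be injective there, which is exactly the required independence.
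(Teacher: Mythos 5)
Your proof is correct, and it takes a genuinely different route from the one the paper sketches (in Exercise \ref{RMexercise}) for the key step. The spanning/upper-bound part is identical: use $x^2=x$ to reduce to squarefree monomials of degree at most $r$ and count them. For linear independence, however, the paper's argument isolates a maximal-degree term $x_1\cdots x_s$ of a nonzero multilinear $f$, specializes the remaining variables, and exhibits a point where $f$ does not vanish; you instead prove at one stroke that \emph{all} $2^m$ squarefree-monomial evaluations are independent, either by the triangular Kronecker-product structure $M_m=\left(\begin{smallmatrix}1&1\\0&1\end{smallmatrix}\right)\otimes M_{m-1}$ of the evaluation matrix, or by observing that the point indicators $\prod_{i=1}^m(1+x_i+a_i)$ make $ev$ surjective from the $2^m$-dimensional multilinear space onto $\mathbb{F}_2^{2^m}$, forcing injectivity by dimension count. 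Both of your arguments are sound (the block matrix is block upper triangular with invertible diagonal blocks, and the indicator product is indeed multilinear and equals the characteristic function of $a$), and they are arguably cleaner and more structural, showing in passing that multilinear polynomials are in bijection with Boolean functions. What the paper's specialization argument buys, and yours does not, is quantitative information: the same reasoning shows a nonzero $f$ with maximal degree $s$ has at least $2^{m-s}$ nonvanishing points, which is exactly what the subsequent proposition uses to get the minimum distance $2^{m-r}$ of $\text{RM}_2(m,r)$; with your route that weight bound would still need a separate argument.
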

  \begin{proposition}
 Let $r \leq m$. The minimum distance of $RM_2 (m,r)$ is $2^{m-r}$.
 \end{proposition}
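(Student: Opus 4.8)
The plan is to prove $d_H(\text{RM}_2(m,r)) \leq 2^{m-r}$ and $d_H(\text{RM}_2(m,r)) \geq 2^{m-r}$ separately. For the upper bound, I would exhibit an explicit codeword of weight exactly $2^{m-r}$: take $f = x_1 x_2 \cdots x_r$, a polynomial of degree $r \leq m$. Over $\F_2$ we have $f(\alpha) = 1$ precisely when $\alpha_1 = \cdots = \alpha_r = 1$, with $\alpha_{r+1}, \ldots, \alpha_m$ arbitrary, so $\supp(ev(f))$ has size $2^{m-r}$ and hence $d_H(\text{RM}_2(m,r)) \leq \wtH(ev(f)) = 2^{m-r}$.

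For the lower bound I would induct on $m$, using the Plotkin $(\bu \mid \bu + \bv)$ decomposition of $\text{RM}_2(m,r)$. Since permuting coordinates changes neither the code's weights nor the claim, order $\F_2^m$ so that its first $2^{m-1}$ points are those with last coordinate $0$ and its last $2^{m-1}$ those with last coordinate $1$, the leading $m-1$ coordinates running through a fixed ordering of $\F_2^{m-1}$ in each half. Any $f \in \F_2[x_1,\ldots,x_m]_{\leq r}$ may be written $f = g(x_1,\ldots,x_{m-1}) + x_m\, h(x_1,\ldots,x_{m-1})$ with $\deg g \leq r$, $\deg h \leq r-1$ (collecting monomials and using $x_m^2 = x_m$ on $\F_2$, as in Exercise \ref{RMexercise}). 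Then $ev(f) = (\bu \mid \bu + \bv)$ where $\bu = ev(g) \in \text{RM}_2(m-1,r)$ and $\bv = ev(h) \in \text{RM}_2(m-1,r-1)$. I would treat $r = 0$ and $r = m$ as base cases: $\text{RM}_2(m,0)$ is the repetition code of length $2^m$, of minimum distance $2^m$; and $\text{RM}_2(m,m) = \F_2^{2^m}$ by Proposition \ref{RMdim}, of minimum distance $1$. In particular this covers all $m \leq 1$.

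For the inductive step, assume the statement for $m-1$ and fix $1 \leq r \leq m-1$; let $\bc = ev(f) = (\bu \mid \bu + \bv) \in \text{RM}_2(m,r)$ be nonzero. If $\bv = \bz$, then $\bc = (\bu \mid \bu)$ with $\bu \neq \bz$, so $\wtH(\bc) = 2\wtH(\bu) \geq 2 \cdot 2^{(m-1)-r} = 2^{m-r}$ by the induction hypothesis for $\text{RM}_2(m-1,r)$. If $\bv \neq \bz$, then, since $\bv = \bu + (\bu + \bv)$ over $\F_2$, the triangle inequality for the Hamming weight gives $\wtH(\bv) \leq \wtH(\bu) + \wtH(\bu+\bv)$, whence $\wtH(\bc) = \wtH(\bu) + \wtH(\bu + \bv) \geq \wtH(\bv) \geq 2^{(m-1)-(r-1)} = 2^{m-r}$ by the induction hypothesis for $\text{RM}_2(m-1,r-1)$. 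Either way $\wtH(\bc) \geq 2^{m-r}$, so together with the upper bound we conclude $d_H(\text{RM}_2(m,r)) = 2^{m-r}$.

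The only point requiring care is the bookkeeping of the induction: isolating $r = 0$ and $r = m$ as base cases guarantees that in the inductive step both $\text{RM}_2(m-1,r)$ and $\text{RM}_2(m-1,r-1)$ are genuine Reed--Muller codes to which the hypothesis applies, and the surjectivity of the two projections of the Plotkin decomposition is immediate from the monomial splitting of $f$. The weight estimates themselves reduce to one application of the triangle inequality, so I do not expect any real obstacle there. An alternative is a direct induction on $\deg f$, restricting $f$ to the $2^{m - \deg f}$ cosets of a coordinate subspace chosen from a top-degree monomial, but the $(\bu \mid \bu + \bv)$ route keeps the accounting cleanest.
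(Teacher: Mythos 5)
Your proof is correct, but it takes a genuinely different route from the one in the paper. For the upper bound you do exactly what the paper does (evaluate $x_1\cdots x_r$ and count its $2^{m-r}$ nonvanishing points), but for the lower bound the paper argues directly, reusing the machinery of Exercise \ref{RMexercise}: writing a nonzero $f$ with maximal-degree term $x_1\cdots x_s$, it observes that for each of the $2^{m-s}$ assignments of $\alpha_{s+1},\ldots,\alpha_m$ the restriction $h(x_1,\ldots,x_s)=f(x_1,\ldots,x_s,\alpha_{s+1},\ldots,\alpha_m)$ is a nonzero polynomial and hence nonvanishing at some point, so $\wtH(ev(f))\geq 2^{m-s}\geq 2^{m-r}$. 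You instead run an induction on $m$ through the Plotkin $(\bu \mid \bu+\bv)$ decomposition $f=g+x_mh$, with the two cases $\bv=\bz$ and $\bv\neq\bz$ handled by doubling respectively by the triangle inequality $\wtH(\bv)\leq \wtH(\bu)+\wtH(\bu+\bv)$; your bookkeeping (base cases $r=0$ and $r=m$, the latter via Proposition \ref{RMdim}, and the containments $ev(g)\in \text{RM}_2(m-1,r)$, $ev(h)\in \text{RM}_2(m-1,r-1)$ after multilinear reduction) is sound, and note you only need these containments, not surjectivity of the projections. The paper's argument is shorter in context because steps 2 and 3 of Exercise \ref{RMexercise} already supply the key facts about restrictions; your recursive argument is self-contained apart from the dimension count, uses nothing beyond the triangle inequality, and has the added benefit of exhibiting the general principle that a $(\bu\mid\bu+\bv)$ construction from codes of minimum distances $d_1,d_2$ has minimum distance $\min\{2d_1,d_2\}$.
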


\subsubsection{Concatenated Codes}
Concatenated codes were first introduced by Forney \cite{forney}, and use the basic idea of a double encoding process through two codes.
\begin{definition}\label{def:concat}
Let $\mC_1$ be an $[n_1,k_1]$ linear code of minimum distance $d_1$ over $\mathbb{F}_q$, called \emph{inner code} and $\mC_2$ be an $[n_2,k_2]$ linear code of minimum distance $d_2$ over $\mathbb{F}_{q^{k_1}}$, called \emph{outer code}. Then, the \emph{concatenated} code $\mC= \mC_2 \circ \mC_1$ is an $[n_1n_2, k_1k_2]$ linear code over $\mathbb{F}_q$ of minimum distance at least $d_1d_2$.
\end{definition}
 The codewords of $\mC$ are built as follows: for any $\bu \in \mathbb{F}_{q^{k_1}}^{k_2}$, encode $\bu$ using a generator matrix $\bG_2$ of $\mC_2$, receiving the codeword $((\bu\bG_2)_1, \ldots, (\bu\bG_2)_{n_2})$. Let us denote for $a \in \mathbb{F}_{q^{k_1}}$ by $\overline{a}$ the corresponding vector in $\mathbb{F}_q^{k_1}$ having fixed a basis. As a next step we represent the entries of each codeword as a vector in $\mathbb{F}_q^{k_1}$ and encode them using a generator matrix $\bG_1$ of $\mC_1$. Then, the codewords of $\mC$ are of the form
$$(\overline{(\bu\bG_2)_1} \bG_1, \ldots, \overline{(\bu\bG_2)_{n_2}} \bG_1).$$

\subsubsection{$(U,U+V)$-Codes}

Given two codes $\mathcal{C}_1$ and $\mathcal{C}_2 \subseteq \mathbb{F}_q^n$, we can also construct new codes, for example using the \emph{$(U,U+V)$-construction}.
\begin{definition}
    Let $\mathcal{C}_1,\mathcal{C}_2 \subseteq \mathbb{F}_q^n$ with dimension $k_1,$ respectively $k_2.$ Then, the \emph{$(U,U+V)$-code} of $\mathcal{C}_1, \mathcal{C}_2$ is given by
    $$\mathcal{C}=\{(\bu,\bu+\bv) \mid \bu \in \mathcal{C}_1, \bv \in \mathcal{C}_2\}.$$
\end{definition}

\begin{proposition}\label{prop:uu+v}
  Let $\mathcal{C}_1,\mathcal{C}_2 \subseteq \mathbb{F}_q^n$ with dimension $k_1,$ respectively $k_2$ and minimum Hamming distance $d_1$, respectively $d_2.$ Then, the $(U,U+V)$-code $\mathcal{C} \subseteq \mathbb{F}_q^{2n}$ has dimension $k=k_1+k_2$ and minimum Hamming distance $d=\min\{2d_1,d_2\}.$ 
\end{proposition}
\begin{exercise}
Prove Proposition \ref{prop:uu+v}.  \emph{Hint:} Show first that if $\bG_1,\bG_2$ are generator matrices of $\mathcal{C}_1,$ respectively $\mathcal{C}_2$, then $\bG= \begin{pmatrix} \bG_1 & \bG_1 \\ \bz & \bG_2 \end{pmatrix}$ is a generator matrix of $\mathcal{C}.$ 
\end{exercise}

The encoding of a message $(\bm_1,\bm_2)$ gives then the codeword $(\bm_1\bG_1, \bm_1\bG_1+\bm_2\bG_2)$ and a received word can be assumed of the form $(\br_1, \br_2)=(\bm_1\bG_1+\be_1, \bm_1\bG_1+\bm_2\bG_2+\be_2)$ for some error vector $(\be_1, \be_2).$
Note that a decoder for $\mathcal{C}$ would first  decode $\br_1$ using the decoder of $\mathcal{C}_1$ to get $\bm_1$. One can then take $\bm_1\bG_1$ away from $\br_2$ and then  use the decoder of $\mathcal{C}_2,$ to recover $\bm_2.$ 
\begin{exercise}
Show that the Reed-Muller code $\text{RM}_2(m,r)$ is a $(U,U+V)$-code for the code $\mathcal{C}_1$ being a $\text{RM}_2(m-1,r)$ and $\mathcal{C}_2$ a $\text{RM}_2(m-1,r-1)$ code. 
    
\end{exercise}

\subsubsection{Product Codes}

Similar to concatenation of codes and the $(U,U+V)$-construction, we can also build the tensor product of two codes $\mathcal{C}_1, \mathcal{C}_2$. For a matrix $\bC \in \mathbb{F}_q^{k \times n}$ let us denote by $\bc_i \in \mathbb{F}_q^k$ for $i \in \{1, \ldots, n\}$ the columns of $\bC$, and similarly by $\br_i\in \mathbb{F}_q^n$ for $i \in \{1, \ldots, k\}$ the rows of $\bC.$
\begin{definition}
    Let $\mathcal{C}_1 \subseteq \mathbb{F}_q^{n_1}$ and $\mathcal{C}_2 \subseteq \mathbb{F}_q^{n_2}.$ Then, the \emph{product code} of $\mathcal{C}_1,\mathcal{C}_2$ is defined as
    $$\mathcal{C}=\mathcal{C}_1 \otimes \mathcal{C}_2 = \{\bC \in \mathbb{F}_q^{n_1 \times n_2} \mid \bc_i \in \mathcal{C}_1, \br_j \in \mathcal{C}_2, i \in \{1, \ldots, n_2\}, j \in \{1, \ldots, n_1\}\}.$$
\end{definition}

Let us define the Hamming weight of a matrix $\bA$ to be the number of non-zero entries in $\bA.$
\begin{proposition}\label{prop:prod}
 Let $\mathcal{C}_1 \subseteq \mathbb{F}_q^{n_1}$ and $\mathcal{C}_2 \subseteq \mathbb{F}_q^{n_2}$ of dimension $k_1$, respectively $k_2$ and minimum Hamming distance $d_1$, respectively $d_2$.
    Then, the tensor product code $\mathcal{C}_1 \otimes \mathcal{C}_2 \subseteq \mathbb{F}_q^{n_1 \times n_2}$ has dimension $k_1k_2$ and minimum Hamming distance $d_1d_2.$
\end{proposition}

\begin{exercise}
Show that every codeword of $\mathcal{C}_1 \otimes \mathcal{C}_2$ is given by 
$$\bG_1^\top \bA \bG_2,$$ for $\bG_1 \in \mathbb{F}_q^{k_1 \times n_1}$ a generator matrix of $\mathcal{C}_1,$ $\bG_2 \in \mathbb{F}_q^{k_2 \times n_2}$ a generator matrix of $\mathcal{C}_2$ and a matrix $\bA \in \mathbb{F}_q^{k_1 \times k_2}.$
\end{exercise}

\begin{exercise}
    Prove Proposition \ref{prop:prod}.
\end{exercise}

Note that this is very similar to the definition of concatenated codes, where the resulting code also had length $n_1n_2$ and dimension $k_1k_2$. However, for concatenated codes we only know that $d \geq d_1d_2,$ while for tensor product codes, we know that their minimum distance is exactly $d_1d_2.$

\subsubsection{Rank-Metric Codes}

Until now, we have considered classical coding theory, where the finite field is endowed with the Hamming metric. However, there exist many more metrics, for example the rank metric (introduced in \cite{delsarte,cover,gabidulin}).
In the following we 
introduce \emph{rank-metric codes}, for which we follow the notation of \cite{elisa}.

\begin{definition}[Rank Metric]
Let $\bx,\by \in \mathbb{F}_{q^m}^n$. The \emph{rank weight} of $\bx$ is defined as the dimension of the $\mathbb{F}_q$-vector space generated by its entries, i.e.,
$$\wt_R(\bx)= \dim_{\mathbb{F}_q}\left( \langle x_1, \ldots, x_n \rangle_{\mathbb{F}_q}\right)$$
and the \emph{rank distance} between $\bx$ and $\by$ is given by 
$$d_R(\bx,\by)= \wt_R(\bx-\by).$$
Let $\mathcal{C} \subseteq \mathbb{F}_{q^m}^n$ be a linear code, then its \emph{minimum rank distance} is given by 
$$d_R(\mathcal{C})=\min\{ \text{wt}_R(\bc) \mid \bc \neq \bz, \bc \in \mathcal{C}\}.$$
\end{definition}

The rank support of a vector $\bx \in \mathbb{F}_{q^m}^n$ is often given by 
$$\text{supp}(\bx)=\langle x_1, \ldots, x_n \rangle_{\mathbb{F}_q} \subset \mathbb{F}_{q^m}.$$
We will later see also two different notions of rank support. 

Let $\Gamma=\{\gamma_1, \ldots, \gamma_m\}$ be a basis of $\mathbb{F}_{q^m}$ over $\mathbb{F}_q$.
Using the extension map, i.e.,
\begin{align*} 
\Gamma: \mathbb{F}_{q^m} &\to \mathbb{F}_q^{m \times n}\\
\ba &\mapsto \Gamma(\ba),
\end{align*}
we can see that 
\begin{equation}\label{eq:indp}
    \text{wt}_R(\ba)=\text{rk}(\Gamma(\ba)).
\end{equation}
\begin{exercise}
    Show that Equation \eqref{eq:indp} is independent of the choice of basis $\Gamma.$
\end{exercise}
Thus, the extension map is a $\mathbb{F}_q$-linear isometry.

In fact, we can also endow $\mathbb{F}_q^{m \times n}$ with the rank metric.

\begin{definition}[Rank Metric]
Let $\bA,\bB \in \mathbb{F}_q^{n \times m}.$ The \emph{rank weight} of $\bA$ is given by the rank of $\bA$, denoted by $\rk(\bA)$ and the rank distance between $\bA$ and $\bB$ is given by 
$$d_R(\bA,\bB) =\rk(\bA-\bB).$$
Let $\mathcal{C} \subseteq \mathbb{F}_q^{m \times n}$ be a linear matrix code, then its \emph{minimum rank distance} is given by
$$d_R(\mathcal{C})=\min\{\text{rk}(\bC) \mid \bC \neq \bz, \bC \in \mathcal{C}\}.$$
\end{definition}

Recall, that for $\mathcal{C} \subseteq \mathbb{F}_{q^m}^n$ we defined the matrix code associated to $\Gamma$ as 
$$\Gamma(\mathcal{C})= \{ \Gamma(\bc) \mid \bc \in \mathcal{C}\} \subseteq \mathbb{F}_q^{m \times n}.$$

\begin{proposition}
Let $\Gamma$ be a basis of $\mathbb{F}_{q^m}$ over $\mathbb{F}_q.$
    Let $\mathcal{C} \subseteq \mathbb{F}_{q^m}^n$ be a linear code of dimension $k$ and minimum rank distance $d_R$, then the associated matrix code $\Gamma(\mathcal{C}) \subseteq \mathbb{F}_q^{m \times n}$ is a matrix code of dimension $km$ and minimum rank distance $d_R.$
\end{proposition}
Thus, using the extension map any $\mathbb{F}_{q^m}$-linear code can also be seen as $\mathbb{F}_q$-linear code, however, the opposite is not true. 

\begin{example}
    Let us consider 
    $\mathbb{F}_4=\mathbb{F}_2[\alpha]$ and $\alpha^2=\alpha+1$, and $\Gamma=\{1,\alpha\}$. 
    The code $\mathcal{C}=\langle(1,\alpha)\rangle \subseteq \mathbb{F}_4^2$ has dimension 1 and minimum rank distance 2.  
    Then $$\mathcal{C}=\left\{(0,0),(1,\alpha),(\alpha,\alpha+1)\right\}$$ and $$\Gamma(\mathcal{C})=\left\{\begin{pmatrix} 0 & 0 \\ 0 & 0 \end{pmatrix}, \begin{pmatrix} 1 & 0 \\ 0 & 1 \end{pmatrix}, \begin{pmatrix} 0 & 1 \\ 1 & 1 \end{pmatrix}, \begin{pmatrix}
        1 & 1 \\ 1 & 0
    \end{pmatrix}\right\}.$$
The code $\Gamma(\mathcal{C}) \subseteq \mathbb{F}_2^{2 \times 2}$ has dimension $2$ and minimum rank distance 2.
However, consider $\mathcal{C}' = \left\langle \begin{pmatrix} 1 & 0 \\ 1 & 1 \end{pmatrix}, \begin{pmatrix}
    0 & 1 \\ 1 & 1 
\end{pmatrix} \right\rangle \subseteq \mathbb{F}_2^{2 \times 2}$ has dimension 2 and minimum rank distance 1.
We have $$\mathcal{C}'=\left\{ \begin{pmatrix} 0 & 0 \\ 0 & 0 \end{pmatrix}, \begin{pmatrix} 1 & 0 \\ 1 & 1 \end{pmatrix}, \begin{pmatrix}
    0 & 1 \\ 1 & 1 
\end{pmatrix}, \begin{pmatrix}
    1 & 1 \\ 0 & 0
\end{pmatrix}\right\}$$ and $$\Gamma^{-1}(\mathcal{C}') = \{ (0,0),(1+\alpha, \alpha), (\alpha,\alpha+1), (1, 1)\} \subseteq \mathbb{F}_4^2.$$ This subset of vectors is not a $\mathbb{F}_4$-linear code as for example $\alpha (1,1) = (\alpha,\alpha) \not\in \Gamma^{-1}(\mathcal{C}').$
\end{example}

\begin{definition}
    The \emph{rank-metric ball} of radius $r$ is defined as $$B_R(r,n,m,q)=\{\bx \in \mathbb{F}_{q^m}^n \mid \text{wt}_R(\bx) \leq r\}.$$
\end{definition}

\begin{proposition}
    The size of the rank-metric ball is approximately
    $$|B_R(r,n,m,q)| \sim q^{r(n+m-r+1)},$$ for large $n,m$.
\end{proposition}

Given a vector $\bx \in \mathbb{F}_{q^m}^n$ of rank weight $t$, we can split the vector into
$$\bx= \bc \bR,$$ for $\bc \in \mathbb{F}_{q^m}^t$ and the entries $c_i$ are $\mathbb{F}_q$-linearly independent, and $\bR \in \mathbb{F}_q^{t \times n}$ of rank $t.$

\begin{definition}
    The \emph{column support} of a vector $\bx \in \mathbb{F}_{q^m}^n$ of rank weight $t$, with splitting $\bc\bR$, is given by 
    $$\text{supp}_C(\bx) = \langle \Gamma(\bc)^\top \rangle \subseteq \mathbb{F}_q^m$$ and has dimension $t$.
    
    The \emph{row support} of a vector $\bx \in \mathbb{F}_{q^m}^n$ of rank weight $t$ and splitting $\bc\bR$ is given by 
    $$\text{supp}_R(\bx)=\langle \bR \rangle \subseteq \mathbb{F}_q^n$$ and has dimension $t.$
\end{definition}
\begin{exercise}
    Show that the definition of row and column support are independent of the choice of splitting.
\end{exercise}

Recall that in the Hamming metric the support of $x \in \mathbb{F}_{q^m}^n$ is defined as the indices of non-zero entries of $\bx$, i.e.,
$$\text{supp}_H(\bx) = \{ i \in \{1, \ldots, n\} \mid x_i \neq 0\},$$ and the Hamming weight coincides with its size, i.e., $$\text{wt}_H(\bx) = |\text{supp}_H(\bx)|.$$
For the rank metric, whether we choose the row or column support, the rank weight of $\bx$ coincides with the dimension of the support, i.e.,
$$\text{wt}_R(\bx) = \dim(\text{supp}_R(\bx))=\dim(\text{supp}_C(\bx)).$$

For a vector $\bx \in \mathbb{F}_{q^m}^n$ of Hamming weight $t$ there are $\binom{n}{t}$ many possible Hamming supports of $\bx,$ whereas if the rank weight is $t$, there are $\gb{n}{t}_q$, respectively $\gb{m}{t}_q$ many possible row supports, respectively column supports. 

 \medskip

With the minimum rank distance we can also state a Singleton bound \cite{delsarte}:
\begin{theorem}[$\mathbb{F}_{q}$-linear Rank-Metric Singleton Bound]
Let $\mathcal{C} \subset \mathbb{F}_q^{n \times m}$ be a matrix code of dimension $k$ with minimum rank distance $d_R(\mathcal{C}).$ Then $$ k \leq \max\{n,m\} (\min\{n,m\} -d_R(\mathcal{C})+1).$$
\end{theorem}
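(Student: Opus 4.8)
The plan is to mimic the proof of the classical Singleton bound, replacing ``delete $d_H-1$ coordinates'' by ``delete $d_R-1$ rows (or columns)''. Write $d:=d_R(\mathcal{C})$ and assume without loss of generality that $n\leq m$; otherwise replace every matrix in $\mathcal{C}$ by its transpose, which is an $\mathbb{F}_q$-linear bijection preserving the rank (hence an isometry for $d_R$), so it changes neither $k$ nor $d$ and merely swaps the roles of $n$ and $m$. Thus $\min\{n,m\}=n$, $\max\{n,m\}=m$, and the goal is to show $k\leq m(n-d+1)$.

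First I would introduce the $\mathbb{F}_q$-linear map $\pi:\mathbb{F}_q^{n\times m}\to\mathbb{F}_q^{(n-d+1)\times m}$ that forgets the last $d-1$ rows of a matrix, and claim that its restriction $\pi|_{\mathcal{C}}$ is injective. To see this, suppose $A,B\in\mathcal{C}$ with $\pi(A)=\pi(B)$. Then the first $n-d+1$ rows of $A-B$ vanish, so the row space of $A-B$ is spanned by at most $d-1$ vectors; hence $\wt_R(A-B)=\text{rk}(A-B)\leq d-1<d$. Since $d$ is the minimum rank distance of $\mathcal{C}$ and $A-B\in\mathcal{C}$, this forces $A-B=0$, i.e.\ $A=B$.

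From injectivity of $\pi|_{\mathcal{C}}$ one concludes $\dim_{\mathbb{F}_q}(\mathcal{C})\leq\dim_{\mathbb{F}_q}\bigl(\mathbb{F}_q^{(n-d+1)\times m}\bigr)=(n-d+1)m=\max\{n,m\}\bigl(\min\{n,m\}-d+1\bigr)$, which is exactly the stated bound. (Here $d\leq\min\{n,m\}$ for any nonzero code, so $n-d+1\geq 1$ and $\pi$ is well defined; in the transposed case one deletes $d-1$ columns and argues identically.) The argument presents no genuine obstacle: the only points requiring care are that deleting $\ell$ rows or columns lowers the rank by at most $\ell$ — which is precisely what powers the injectivity step — and the $\min/\max$ bookkeeping, where one must delete along the \emph{shorter} of the two axes to land on the tighter inequality claimed in the statement.
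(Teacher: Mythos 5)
Your argument is correct: transposing to reduce to the case $n\leq m$, projecting away the last $d_R-1$ rows, and using the minimum-distance hypothesis to get injectivity of the projection on $\mathcal{C}$ is exactly the standard puncturing proof of the rank-metric Singleton bound, and the bookkeeping (deleting along the shorter axis, well-definedness from $d_R\leq\min\{n,m\}$) is handled properly. The paper itself states this theorem without proof, citing Delsarte, so there is no in-text argument to compare against; your proof fills that gap in the expected way.
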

 
\begin{theorem}[$\mathbb{F}_{q^m}$-linear Rank-Metric Singleton Bound]
Let $\mathcal{C} \subset \mathbb{F}_{q^m}^{n}$ be a linear code of dimension $k$ with minimum rank distance $d_R(\mathcal{C}).$ Then $$ k \leq n-d_R(\mathcal{C})+1.$$
\end{theorem}
Codes achieving these bounds are called Maximum Rank Distance (MRD) codes. 

Note that MDS codes have density 1 for $q$ going to infinity, and  density 0 for $n$ going to infinity. Similar results hold also for the rank metric: $\mathbb{F}_{q^m}$-linear MRD codes are dense for $q$ going to infinity by \cite{ale} and since $d_R(\mC)$ is bounded by $m$, have density 0 for $n$ going to infinity. It was shown in \cite{anina} that $\mathbb{F}_q$-linear MRD codes 
are sparse for all parameter sets as the field grows, with only very few exceptions. 
Unlike in the Hamming metric, we know that $\mathbb{F}_{q^m}$-linear MRD codes exist for any set of parameters (with $n \leq m)$, by the seminal work of Delsarte \cite{delsarte} and Gabidulin \cite{gabidulin}.

We also have a rank-analogue of the Gilbert-Varshamov bound, \cite{gad}. Let us denote by $A_R(n,d,m,q)$ the maximal size of a code in $\mathbb{F}_{q^m}^n$ having minimum rank distance $d.$
\begin{theorem}[Gilbert-Varshamov Bound in the Rank Metric]
Let $q$ be a prime power and $m,n,d$ be positive integers. Then,
$$A_R(n,d,m,q) \geq  \frac{q^{mn}}{|B_R(d-1,n,m,q)|}.$$
\end{theorem}
We can also give the asymptotic version of this bound, for which we first define the \emph{relative minimum rank distance} to be $\delta = d_\textnormal{R}(\mathcal{C})/n$ and when considering the extension degree $m$ as function in $n$, we can define $M=\lim_{n \to \infty} m(n)/n.$ Then, the rank-metric Gilbert-Varshamov bound states, that 
 $$\overline{R}(\delta) = \limsup\limits_{n \to \infty} \frac{1}{n} \log_{q^m} A_R(n, \delta n,m,q) \geq (1-\delta)(1-M).$$
As in the Hamming metric, we know by \cite{loidreau} that random codes attain the Gilbert-Varshamov bound with high probability. 

\begin{proposition}
Let $\mathcal{C} \subseteq \mathbb{F}_{q^m}^n$ be a random linear code of dimension $k$. For $n$ large enough, we have that $\mathcal{C}$ has the relative minimum distance $$\delta = d_\textnormal{R}/n = M/2+1/2 -\sqrt{RM +(M-1)^2/4}$$ with high probability.
\end{proposition}
Interestingly, this bound does not depend on the field size $q$, which is in contrast to its Hamming-metric counterpart. In particular, if $M=1$, which will often be the case for applications, we get $\delta= 1-\sqrt{R}.$

\subsubsection{Gabidulin Code}
  
In order to introduce the classical Gabidulin codes let us first recall the basics of $q$-polynomials.

A $q$-polynomial or linearized polynomial $f$ of $q$-degree $d$ over $\mathbb{F}_{q^m}$ is a polynomial of the form 
$$f(x)= \sum\limits_{i=0}^d f_i x^{q^i}.$$
Let us denote by $P_{\ell}$ the $q$-polynomials of $q$-degree up to $\ell$ over $\mathbb{F}_{q^m}.$

The classical Gabidulin code can now be defined in a similar fashion as the Reed-Solomon code, i.e., as evaluation code.
\begin{definition}[Classical Gabidulin Code]
Let $g_1, \ldots, g_n \in \mathbb{F}_{q^m}$ be linearly independent over $\mathbb{F}_q$  and let $k \leq n \leq m$. The classical Gabidulin code $\mathcal{C} \subset \mathbb{F}_{q^m}^n$ of dimension $k$ is defined as 
$$\mathcal{C} = \{ (f(g_1), \ldots, f(g_n) ) \mid f \in P_{k-1} \}.$$
\end{definition}

\begin{exercise}
Show that classical Gabidulin codes are $\mathbb{F}_{q^m}$-linear MRD codes, by taking a non-zero codeword $c=(f(g_1), \ldots, f(g_n))$ and considering the $\mathbb{F}_q$-dimension of the kernel of the $q$-polynomial $f$.
\end{exercise}
In order to introduce the generalized Gabidulin codes, we first have to define the rank analog of the Vandermonde matrix, i.e., the Moore matrix \cite{moore}.
\begin{definition}[Moore Matrix]
Let $(v_1, \ldots, v_n) \in \mathbb{F}_{q^m}^n$ and $v_i$ are $\mathbb{F}_q$-linearly independent. We denote  by $$M_{s,k}(v_1, \ldots, v_n) \in \mathbb{F}_{q^m}^{k \times n}$$ the $s$-Moore matrix:
$$M_{s,k}(v_1, \ldots, v_n) = \begin{pmatrix} v_1 & \cdots & v_n \\ v_1^{[s]} & \cdots & v_n^{[s]} \\ 
\vdots & & \vdots \\ v_1^{[s(k-1)]} & \cdots & v_n^{[s(k-1)]} \end{pmatrix},$$
where $[i] = q^i.$
\end{definition}

The definition of Gabidulin codes can also be generalized, e.g. \cite{gab2}:
\begin{definition}[Generalized Gabidulin Code]
Let $g_1, \ldots, g_n \in \mathbb{F}_{q^m}$ be linearly independent over $\mathbb{F}_q$ and let $s$ be coprime to $m$. The generalized Gabidulin code $\mathcal{C} \subset \mathbb{F}_{q^m}^n$ of dimension $k$ is defined as the rowspan of $M_{s,k}(g_1, \ldots, g_s).$
\end{definition}
For $s=1$, we can see that this coincides with the  classical Gabidulin codes, which have the generator matrix
$$M_{1,k}(g_1, \ldots, g_n) = \begin{pmatrix} g_1 & \cdots & g_n \\ g_1^{q} & \cdots & g_n^{q} \\ 
\vdots & & \vdots \\ g_1^{q^{k-1}} & \cdots & g_n^{q^{k-1}} \end{pmatrix}.$$

Since the Moore matrix can be seen as a rank analog of a Vandermonde matrix, a generalized Gabidulin code can be seen as a rank analog of a generalized Reed-Solomon code.

\begin{theorem}
The generalized Gabidulin code $\mathcal{C} \subset \mathbb{F}_{q^m}^n$ of dimension $k$ is a $\mathbb{F}_{q^m}$-linear MRD code.
\end{theorem}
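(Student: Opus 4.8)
The goal is to show $d_R(\mC)=n-k+1$. Since $\mC$ has $\F_{q^m}$-dimension $k$, the inequality $d_R(\mC)\le n-k+1$ is exactly the Vector Rank-Metric Singleton Bound above, so the whole task is the matching lower bound $d_R(\mC)\ge n-k+1$; and as $\mC$ is $\F_{q^m}$-linear, it is enough to show that every nonzero codeword has rank weight at least $n-k+1$.

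First I would rewrite codewords as evaluations. An $\F_{q^m}$-linear combination of the rows of $M_{s,k}(g_1,\ldots,g_n)$ is precisely a vector $\bc=(f(g_1),\ldots,f(g_n))$ where $f(x)=\sum_{i=0}^{k-1}f_i x^{q^{si}}$ is a $q^s$-linearized polynomial of $q^s$-degree at most $k-1$. Such an $f$ acts as an $\F_q$-linear endomorphism of $\F_{q^m}$, so writing $V:=\langle g_1,\ldots,g_n\rangle_{\F_q}$ (an $n$-dimensional $\F_q$-space, since the $g_i$ are $\F_q$-linearly independent), one has $\wt_R(\bc)=\dim_{\F_q} f(V)$, and rank--nullity for $f|_V$ gives
\[
\wt_R(\bc)=n-\dim_{\F_q}\!\big(V\cap\ker f\big)\ \ge\ n-\dim_{\F_q}\ker f .
\]
The proof then reduces to the single estimate $\dim_{\F_q}\ker f\le k-1$ for every nonzero $f$ of $q^s$-degree at most $k-1$: substituting it gives $\wt_R(\bc)\ge n-k+1$, and (since $k-1<n$) it also shows that such an $f$ cannot vanish on all of $V$, whence the evaluation map is injective and $\dim_{\F_{q^m}}\mC=k$, as asserted in the statement.

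The core step --- and the one I expect to be the real obstacle --- is this kernel bound, where the hypothesis $\gcd(s,m)=1$ is indispensable. Let $d\le k-1$ be the exact $q^s$-degree of $f$. As an ordinary polynomial $f$ has degree $q^{sd}=(q^s)^d$; its root set $W$ in $\overline{\F_q}$ is closed under addition and under scaling by $\F_{q^s}$ (because $\lambda^{q^s}=\lambda$ for $\lambda\in\F_{q^s}$ gives $f(\lambda x)=\lambda f(x)$), hence $W$ is an $\F_{q^s}$-vector space with $\dim_{\F_{q^s}}W\le d$. Since $\ker f=W\cap\F_{q^m}$, choose an $\F_q$-basis $x_1,\ldots,x_e$ of it; because $\gcd(s,m)=1$ forces $\F_{q^s}\cap\F_{q^m}=\F_q$, the fields $\F_{q^s}$ and $\F_{q^m}$ are linearly disjoint over $\F_q$, so the $x_j$ stay $\F_{q^s}$-linearly independent, and since they lie in the $\F_{q^s}$-space $W$ we get $e\le\dim_{\F_{q^s}}W\le d\le k-1$. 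For $s=1$ this is just the classical fact that a nonzero $q$-linearized polynomial of $q$-degree $d$ has $\F_q$-root space of dimension $\le d$ (as in the argument for classical Gabidulin codes); equivalently, one may package the step as: the $s$-Moore matrix of any $\F_q$-linearly independent tuple of length $\le m$ is invertible when $\gcd(s,m)=1$ --- the rank-metric analogue of invertibility of a Vandermonde matrix at distinct nodes.

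Finally I would assemble the pieces: the displayed inequality and $\dim_{\F_q}\ker f\le k-1$ give $\wt_R(\bc)\ge n-k+1$ for every nonzero $\bc\in\mC$, hence $d_R(\mC)\ge n-k+1$, which together with the Singleton upper bound yields equality; therefore $\mC$ is a vector MRD code. Everything beyond the coprimality/linear-disjointness step is routine --- the Singleton bound is quoted, and the evaluation description and rank--nullity are formal --- so it is precisely the use of $\gcd(s,m)=1$ that deserves care, the statement being false without it.
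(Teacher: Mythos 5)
Your proof is correct, but note that the paper does not prove this theorem at all: it is quoted from \cite{gab2}, and only the classical case $s=1$ appears earlier as an exercise, with exactly the hint you invoke (bound the $\mathbb{F}_q$-dimension of the kernel of the evaluated $q$-polynomial). Your argument is the natural generalization of that exercise, and its one genuinely new ingredient is handled correctly: writing codewords as evaluations of $q^s$-polynomials $f$ of $q^s$-degree $d\le k-1$, observing that the root set of $f$ in $\overline{\mathbb{F}_q}$ is an $\mathbb{F}_{q^s}$-space of dimension at most $d$ (it is additively closed and stable under $\mathbb{F}_{q^s}$-scaling because $\lambda^{q^{si}}=\lambda$, and has at most $q^{sd}$ elements), and then using that $\gcd(s,m)=1$ makes $\mathbb{F}_{q^s}$ and $\mathbb{F}_{q^m}$ linearly disjoint over $\mathbb{F}_q$, so that $\dim_{\mathbb{F}_q}\big(\ker f\cap\mathbb{F}_{q^m}\big)\le d\le k-1$; rank--nullity on $f$ restricted to $\langle g_1,\ldots,g_n\rangle_{\mathbb{F}_q}$ then gives $\mathrm{wt}_R(\mathbf{c})\ge n-k+1$, and the Singleton bound closes the argument. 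You also correctly extract, as a byproduct, that the evaluation map is injective, which justifies the dimension-$k$ claim (i.e., the rows of the $s$-Moore matrix are $\mathbb{F}_{q^m}$-independent) rather than assuming it. Your closing remark that coprimality is essential is consistent with the paper's definition, which imposes $\gcd(s,m)=1$; indeed without it $x^{q^s}-x$ already has an $\mathbb{F}_q$-kernel in $\mathbb{F}_{q^m}$ of dimension $\gcd(s,m)>1$, which can destroy the MRD property for suitable parameters.
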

In addition, as in the Hamming metric we have nice duality results.
\begin{proposition} 
Let $ \mathcal{C} \subset \mathbb{F}_{q^m}^n$ be a $k$ dimensional generalized  Gabidulin code, then $\mathcal{C}^\perp \subset \mathbb{F}_{q^m}^n$ is a $n-k$ dimensional generalized Gabidulin code. 
\end{proposition}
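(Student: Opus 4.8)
The plan is to exhibit a parity-check matrix of $\mathcal{C}$ that is itself an $s$-Moore matrix. Write $\mathcal{C}=\langle M_{s,k}(g_1,\ldots,g_n)\rangle$ (row space) with $g_1,\ldots,g_n\in\mathbb{F}_{q^m}$ linearly independent over $\mathbb{F}_q$ and $\gcd(s,m)=1$, so that $\dim\mathcal{C}=k$ and $\dim\mathcal{C}^\perp=n-k$. It suffices to find $h_1,\ldots,h_n\in\mathbb{F}_{q^m}$, linearly independent over $\mathbb{F}_q$, such that every row of $M_{s,n-k}(h_1,\ldots,h_n)\in\mathbb{F}_{q^m}^{(n-k)\times n}$ is orthogonal, for the standard inner product, to every row of $M_{s,k}(g_1,\ldots,g_n)$. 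Indeed, this gives $\langle M_{s,n-k}(h_1,\ldots,h_n)\rangle\subseteq\mathcal{C}^\perp$, and since a wide $s$-Moore matrix built from $\mathbb{F}_q$-linearly independent elements has full row rank once it has at most $m$ rows (the same nonsingularity fact that makes generalized Gabidulin codes vector MRD), the left-hand side has dimension $n-k=\dim\mathcal{C}^\perp$; the inclusion is then an equality and $\mathcal{C}^\perp$ is a generalized Gabidulin code with evaluation points $h_1,\ldots,h_n$.

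Next I would translate the orthogonality condition. Row $a$ of $M_{s,k}(g)$ against row $b$ of $M_{s,n-k}(h)$ reads $\sum_{l=1}^n g_l^{[sa]}h_l^{[sb]}=0$ for $0\le a\le k-1$ and $0\le b\le n-k-1$. Applying the inverse of the Frobenius power $x\mapsto x^{[sb]}$ — a well-defined automorphism of $\mathbb{F}_{q^m}$ because $\gcd(s,m)=1$ makes $x\mapsto x^{[s]}$ a generator of $\Gal(\mathbb{F}_{q^m}/\mathbb{F}_q)$ — turns this into $\sum_{l=1}^n g_l^{[s(a-b)]}h_l=0$, which depends only on $d:=a-b$. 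As $(a,b)$ runs over the box, $d$ runs through the $n-1$ consecutive integers $k-n+1,\ldots,k-1$. Crucially, these are now \emph{$\mathbb{F}_{q^m}$-linear} equations in $(h_1,\ldots,h_n)$, and there are only $n-1$ of them; their coefficient matrix $[g_l^{[sd]}]$ is, up to applying one fixed Frobenius power to each $g_l$ (which preserves $\mathbb{F}_q$-linear independence), a shifted $s$-Moore matrix $M_{s,n-1}(\cdot)$ and hence has full rank $n-1$. So the solution set is a $1$-dimensional $\mathbb{F}_{q^m}$-subspace of $\mathbb{F}_{q^m}^n$; fix a nonzero $h=(h_1,\ldots,h_n)$ in it.

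The crux is to show that this (essentially unique) $h$ has $\mathbb{F}_q$-linearly independent components — i.e., that no nonzero $c\in\mathbb{F}_q^n$ satisfies $\sum_l c_lh_l=\langle h,c\rangle=0$. Since $h$ spans the orthogonal complement of $V:=\langle (g_1^{[sd]},\ldots,g_n^{[sd]})\mid k-n+1\le d\le k-1\rangle_{\mathbb{F}_{q^m}}$ and orthogonal complements are involutive over a field, this is exactly the assertion $V\cap\mathbb{F}_q^n=\{0\}$. Suppose $0\neq c\in V\cap\mathbb{F}_q^n$. Then $c_l=P(g_l)$ for every $l$, where $P$ is a nonzero linearized $q^s$-polynomial whose exponents lie in a window of length $n-2$, hence, after a Frobenius normalization, $P$ has $q^s$-degree at most $n-2$. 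Such a $P$ has kernel of $\mathbb{F}_q$-dimension at most $n-2$ — the same bound on the number of roots of a linearized $q^s$-polynomial (valid for $\gcd(s,m)=1$) that underlies the MRD property — so $P$ maps $\langle g_1,\ldots,g_n\rangle_{\mathbb{F}_q}$, an $n$-dimensional $\mathbb{F}_q$-space, onto a subspace of $\mathbb{F}_q$-dimension at least $2$. But this image is contained in $\langle c_1,\ldots,c_n\rangle_{\mathbb{F}_q}\subseteq\mathbb{F}_q$, of dimension at most $1$ — a contradiction. Hence $h_1,\ldots,h_n$ are $\mathbb{F}_q$-linearly independent and the proof is complete.

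I expect this last step to be the real obstacle: pinning down that the dual generating vector consists of $\mathbb{F}_q$-linearly independent elements forces one to invoke the root bound for linearized $q^s$-polynomials and to track the hypothesis $\gcd(s,m)=1$ carefully. As a shortcut in the special case $n=m$, one could instead take $h_1,\ldots,h_m$ to be a Frobenius twist of the trace-dual basis of $g_1,\ldots,g_m$ and verify orthogonality directly from $\mathrm{Tr}_{\mathbb{F}_{q^m}/\mathbb{F}_q}(g_ih_j)=\delta_{ij}$, reducing the general case $n<m$ by restricting to a suitable subcode; the linear-algebra argument above, however, handles all $n\le m$ at once.
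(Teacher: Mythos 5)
Your proof is correct. Note that the paper itself contains no proof of this proposition --- it is stated with a citation to the Kshevetskiy--Gabidulin paper --- so there is nothing in-paper to compare against; your argument is a complete, self-contained version of the standard duality proof. The reduction of the $k(n-k)$ orthogonality conditions to the $n-1$ conditions $\sum_{l=1}^n g_l^{[sd]}h_l=0$ for $k-n+1\le d\le k-1$ is valid because $x\mapsto x^{[s]}$ is a field automorphism fixing $\mathbb{F}_q$; the coefficient matrix is indeed $M_{s,n-1}$ of a Frobenius twist of the $g_l$ and hence has rank $n-1$, so the solution space is one-dimensional and spanned by your $h$. The genuinely delicate step --- showing $V\cap\mathbb{F}_q^n=\{0\}$, equivalently the $\mathbb{F}_q$-linear independence of $h_1,\ldots,h_n$ --- is handled correctly: a nonzero element of $V\cap\mathbb{F}_q^n$ yields a nonzero $q^s$-linearized polynomial $Q$ of $q^s$-degree at most $n-2$ with $Q(g_l)\in\mathbb{F}_q$ for all $l$, and the kernel bound $\dim_{\mathbb{F}_q}\ker Q\le n-2$ (valid since $\gcd(s,m)=1$ makes $\mathbb{F}_q$ the fixed field of $x\mapsto x^{[s]}$) together with rank--nullity on the $n$-dimensional space $\langle g_1,\ldots,g_n\rangle_{\mathbb{F}_q}$ gives the contradiction. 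The closing remark about the trace-dual basis for $n=m$ and a reduction for $n<m$ is left vague, but your main argument does not depend on it.
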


This duality result holds (as in the Hamming metric) also more in general; for all $\mathbb{F}_{q^m}$-linear MRD codes.

\begin{proposition}
Let $ \mathcal{C} \subset \mathbb{F}_{q^m}^n$ be a $k$-dimensional $\mathbb{F}_{q^m}$-linear MRD code, then $\mathcal{C}^\perp \subset \mathbb{F}_{q^m}^n$ is a $(n-k)$-dimensional $\mathbb{F}_{q^m}$-linear MRD code. 
\end{proposition}

The classical Gabidulin code has been the first rank-metric code introduced into code-based cryptography  in \cite{gpt}, which is known as the GPT system.

\subsubsection{LRPC Codes}
Other classes of rank-metric codes that are used in code-based cryptography are the rank analogues  of LDPC and MDPC codes, first defined in \cite{lrpc}. Instead of asking for a low (respectively moderate) number of non-zero entries within each row of the parity-check matrix, one now has to consider the $\mathbb{F}_q$-subspace  generated by the coefficients of the parity-check matrix.

\begin{definition}[Low Rank Parity-Check Code (LRPC)]
Let $\bH \in \mathbb{F}_{q^m}^{(n-k) \times n}$ be a full rank matrix, such that its coefficients $h_{i,j}$ generate an $\mathbb{F}_q$-subspace $F$ of small dimension $d$,
$$F= \langle (h_{i,j})_{i,j} \rangle_{\mathbb{F}_q}.$$
The code $\mathcal{C} \subset \mathbb{F}_{q^m}^n$ having parity-check matrix $\bH$ is called a \emph{Low Rank Parity-Check} (LRPC) code of dual weight $d$ and support $F$.
\end{definition}

\subsubsection{Code Equivalence}

For the newer problems used in code-based cryptography, we will also need the notion of code equivalence. 

\begin{definition}[Isometry]
Let us consider the space $V$ endowed with the distance $d$.
    A linear map $\varphi: V \to V$ is called \emph{isometry} if it keeps the distance invariant. That is, for all $\bx,\by \in V$ we have $d(\bx,\by)=d(\varphi(\bx), \varphi(\by)).$
\end{definition}

Let us denote the set of all isometries for a fixed distance $d$ by $I_d.$

\begin{proposition}
The linear isometries of the Hamming metric in $V= \mathbb{F}_{q}^n$ consist of monomial transformations and automorphisms on $\mathbb{F}_{q}.$
\end{proposition}

For cryptography, we mainly focus on a subset of the Hamming-metric isometries, namely the monomial transformations $M_{n,q}= (\mathbb{F}_{q}^\star)^n \rtimes S_n$. Any map $\varphi \in M_{n,q}$ can be seen as a matrix $\bM=\bP\bD$, where $\bP$ is a $n \times n$ permutation matrix and $\mathbf D= \text{diag}(\bv)$ for $\bv \in (\mathbb{F}_{q}^\star)^n$ is a diagonal matrix. 

\begin{proposition}
    The linear isometries of the rank metric in $V= \mathbb{F}_q^{m\times n}$ for $m \leq n$, are given by  $\text{GL}_m(q) \rtimes \text{GL}_n(q)$ and automorphisms of $\mathbb{F}_q$.
\end{proposition}
For applications in cryptography, we again only focus on $\varphi \in \text{GL}_m(q) \rtimes \text{GL}_n(q).$

\begin{definition}[Code Equivalence]
Let us consider $V$ endowed with the distance $d$.
    Let $\mathcal{C}_1, \mathcal{C}_2 \subseteq V$ be linear codes. We say $\mathcal{C}_1$ is equivalent to $\mathcal{C}_2$, if there exists $\varphi \in I_d$ such that $\varphi(\mathcal{C}_1)=\varphi(\mathcal{C}_2).$
\end{definition}

Since $I_H=  (\mathbb{F}_{q}^\star)^n \rtimes (\text{Aut}(\mathbb{F}_q) \times S_n)$, we get two subclasses of code equivalence in the Hamming metric.

In the lightest version, we have the \emph{permutation equivalence}.
\begin{definition}[Permutation Equivalence]
  We say that two codes $\mC_1,\mC_2 \subseteq \mathbb{F}_q^n$ are \emph{permutation equivalent}, if there exists a permutation of indices, which transforms $\mC_1$ into $\mC_2$, that is there exists $\sigma \in S_n$, such that $\sigma(\mC_1)=\mC_2$. 
\end{definition}

When considering any monomial transformation, we get the \emph{linear equivalence.}
\begin{definition}[Linear Equivalence]
    We say that two codes $\mC_1,\mC_2 \subseteq \mathbb{F}_q^n$ are \emph{linear equivalent}, if there exists a map  $\varphi \in (\mathbb{F}_q^\star)^n \rtimes S_n$, such that $\varphi(\mC_1)=\mC_2$. 
\end{definition}
Clearly, permutation equivalent codes are also linear equivalent codes. 

\begin{exercise}
    Consider the code $\mathcal{C}_1\subseteq \mathbb{F}_3^3$ generated by $\bG_1=\begin{pmatrix}
        1 & 0 & 2 \\ 0 & 1 & 1
    \end{pmatrix}$ and the code  $\mathcal{C}_2 \subseteq \mathbb{F}_3^3$ generated by $\bG_2=\begin{pmatrix}
        1 & 0 & 1 \\ 0 & 1 & 0
    \end{pmatrix}$.
    Are the two codes linear equivalent, permutation equivalent or not equivalent?
\end{exercise}

\begin{proposition}
    If $\mathcal{C}_1, \mathcal{C}_2 \subseteq \mathbb{F}_q^n$ are permutation equivalent codes, then for any generator matrix $\bG_1$ of $\mathcal{C}_1$ and $\bG_2$ of $\mathcal{C}_2$, there exists a $n\times n$ permutation matrix $\bP$  and a matrix $\bS \in \GL_k(q)$ such that $$\bS \bG_1 \bP = \bG_2.$$
    If $\mathcal{C}_1, \mathcal{C}_2$ are linear equivalent codes, then for any generator matrix $\bG_1$ of $\mathcal{C}_1$ and $\bG_2$ of $\mathcal{C}_2$, there exists a $n\times n$ permutation matrix $\bP$, a diagonal matrix $\text{diag}(\bv)$ for $\bv \in (\mathbb{F}_q^\star)^n$ and a matrix $\bS \in \GL_k(q)$  such that $$ \bS \bG_1 \bP \text{diag}(\bv)= \bG_2.$$
\end{proposition}

\begin{exercise}
Let $\mathcal{C}_1, \mathcal{C}_2$ be linear equivalent codes. Show that $\mathcal{C}_1^\perp$ is linear equivalent to $\mathcal{C}_2^\perp.$
    \emph{Hint:} Use the fact that $\bG_1 \bH_1^\top= \bz$ and $\bS \bG_1 \bP \text{diag}(\bv) = \bG_2$.
\end{exercise}

Note that linear equivalent codes have the same minimum distance. Even more is true.

\begin{definition}[Weight Enumerator]
Let $\mathcal{C} \subseteq \mathbb{F}_q^n$ be a linear code.
    For any $w \in \{1, \ldots, n\}$, let us denote by $A_w(\mathcal{C}) = |\{ \bc \in \mathcal{C} \mid \text{wt}_H(\bc) = w\}|$ the \emph{weight enumerator} of $\mathcal{C}.$
\end{definition}

\begin{proposition}
    Let $\mathcal{C}_1,\mathcal{C}_2 \subseteq \mathbb{F}_q^n$ be linear equivalent codes, then for all $w \in \{1, \ldots, n\}$ we have that $$A_w(\mathcal{C}_1)=A_w(\mathcal{C}_2).$$
\end{proposition}

\begin{definition}[Hull]
Let $\mathcal{C} \subseteq \mathbb{F}_q^n$ be a linear code. Then the (Euclidean) \emph{hull} of $\mathcal{C}$ is given by $$\mathcal{H}(\mathcal{C})= \mathcal{C} \cap \mathcal{C}^\perp.$$
\end{definition}
In \cite{sendrier} it was shown, that the hull of a random code is with high probability trivial, i.e., $\mathcal{C} \cap \mathcal{C}^\perp =\{\bz\}.$
\begin{proposition}
    Let $\mathcal{C} \subseteq \mathbb{F}_q^n$ be chosen uniform at random. Then, w.h.p. $\mathcal{H}(\mathcal{C})=\{\mathbf{0}\}.$
\end{proposition}
\begin{proof}
    For any $\bc \in \mathcal{H}(\mathcal{C}),$ we have that $\bc \in \mathcal{C},$ thus there exists $\bm \in \mathbb{F}_q^k$ such that $\bm \bG=\bc$. We also have that $\bc \in \mathcal{C}^\perp,$ hence $\bc\bG^\top=\mathbf{0}.$ 
    Thus, $\bm (\bG\bG^\top)=\mathbf{0}$ and counting the number of $\bc \in \mathcal{H}(\mathcal{C})$ is equivalent to counting $\bm \in \mathbb{F}_q^k$ with  $\bm (\bG\bG^\top)=\mathbf{0}$. Since $\bG \in \mathbb{F}_q^{k \times n}$ is a random matrix, also $\bG\bG^\top \in \mathbb{F}_q^{k \times k}$ is random and has with probability 
    $$\prod_{i=1}^k (1-q^{-i})$$ full rank. 
    Due to the rank nullity theorem, we have 
    $$\text{dim}(\text{ker}(\bG\bG^\top))=k-\rk(\bG\bG^\top)=0,$$ w.h.p.
\end{proof}

Another way to prove this, is to note that any $\bc \in \mathcal{H}(\mathcal{C})$ must satisfy 
$$\begin{pmatrix}
    \bG \\ \bH \end{pmatrix} \bc^\top = \mathbf{0}.
 $$
 Note that $\langle \begin{pmatrix}
    \bG \\ \bH \end{pmatrix} \rangle = \mathcal{C} + \mathcal{C}^\perp,$ which is the smallest code containing $\mathcal{C}(\mathcal{C}).$
    
 Again, we are interested in the dimension of the kernel of this matrix, and due to the rank-nullity theorem in its rank. 
 We can assume that $\bG, \bH$ are in systematic form and perform row operations to get
$$ \begin{pmatrix} \bG' \\ \bH' \end{pmatrix} = \begin{pmatrix} \text{Id}_k & \bA \\ \mathbf{0} & \bA\bA^\top+ \text{Id}_{n-k} \end{pmatrix}.$$
Hence its rank is given by $k+ \rk(\bA\bA^\top+ \text{Id}_{n-k}).$ Assuming $\bG$ was a random matrix, we also have that $\bA$ and $\bA\bA^\top+ \text{Id}_{n-k}$ are random matrices, which have with high probability full rank.

\begin{definition}[Automorphism Group]
Let $\mathcal{C} \subseteq \mathbb{F}_q^n$ be  a linear code. The \emph{automorphism group} of $\mathcal{C}$ is given by the linear isometries that map $\mathcal{C}$ to $\mathcal{C}.$
\end{definition}

Note that just like the hull, the automorphism group of a random linear code is w.h.p. trivial \cite{rigid}, i.e., $\text{Aut}(\mathcal{C})=\{\text{id}\}.$

\begin{exercise}
    Give the automorphism group of $\mathcal{C}=\langle (1,0,0), (0,1,1) \subseteq \mathbb{F}_2^3$.
\end{exercise}

\begin{exercise}
    Let $\varphi \in \text{Aut}(\mathcal{C}).$ Show that $\varphi \in \text{Aut}(\mathcal{C}^\perp).$
\end{exercise}

\begin{exercise}
    Let $\varphi \in \text{Aut}(\mathcal{C}).$ Show that $\varphi \in \text{Aut}(\mathcal{C} \cap \mathcal{C}^\perp).$
\end{exercise}

Another invariant of isometries is the support of a code. 

\begin{definition}[Support of a Code]
    Let $\mathcal{C} \subset \mathbb{F}_q^n$ be a linear code. Then we define its support to be 
    $$\text{Supp}(\mathcal{C})= \{i \in \{1, \ldots, n\} \mid \exists \bc \in \mathcal{C}: \bc_i \neq 0\}.$$
\end{definition}
Clearly, for a non-degenerate code, the support will be $\{1, \ldots, n\},$ however, as soon as we go to subcodes of $\mathcal{C},$ this will change. 

\begin{proposition}
Let $\mathcal{C}_1, \mathcal{C}_2 \subset \mathbb{F}_q^n$ be linear equivalent. 
    For any subcode $\mathcal{D}_1 < \mathcal{C}_1$ of dimension $r<k$ and support size $s$ there exists $\mathcal{D}_2 < \mathcal{C}_2$ of dimension $r$ with support size $s$.
\end{proposition}

\begin{definition}[Rank-metric Equivalence]
    Let $\mathcal{C}_1, \mathcal{C}_2 \subseteq \mathbb{F}_q^{m\times n}$. We say that $\mathcal{C}_1$ is equivalent to $\mathcal{C}_2$ if there exists $\varphi \in \text{GL}_m(q) \rtimes \text{GL}_n(q)$ such that $\varphi(\mathcal{C}_1)=\mathcal{C}_2$.
\end{definition}

\subsubsection{Lee Metric Codes}

Let us consider $\mathbb{F}_p$, for $p>3$ a prime. Then we can define a different metric, called \emph{Lee metric.}

\begin{definition}[Lee Metric]
    Let $x \in \mathbb{F}_p$, and represent $x \in \{0, \ldots, p-1\}$. The \emph{Lee weight} of $x$ is given by $$ \text{wt}_L(x)= \min\{x,|p-x|\}.$$
    The largest possible Lee weight is thus $M= (p-1)/2.$
    Let $\bx \in \mathbb{F}_p^n$. The Lee weight is then extended additively on the entries, that is
    $$\text{wt}_L(\bx)= \sum_{i=1}^n \text{wt}_L(x_i).$$
    Let $\bx,\by \in \mathbb{F}_p^n$. Their \emph{Lee distance} is induced by the Lee weight, that is 
    $$d_L(\bx,\by)= \text{wt}(\bx-\by).$$
    Let $\mathcal{C} \subseteq \mathbb{F}_p^n$ be a linear code. The \emph{minimum Lee distance} of $\mathcal{C}$ is given by
    $$d_L(\mathcal{C})= \min \{\text{wt}_L(\bc) \mid \bc \in \mathcal{C}, \bc \neq 0\}.$$
\end{definition}
Note that the Lee metric can be defined over any integer residue ring $\mathbb{Z}/m\mathbb{Z}$, for any integer $m$. However, for the cryptographic purposes it is enough to consider prime fields. Since the Lee metric coincides with the Hamming metric in $\mathbb{F}_2$ and $\mathbb{F}_3$, we only focus on primes $p>3.$

\medskip

Note that,  $\text{wt}_H(\bv)\leq \wt_L(\bv) \leq M \text{wt}_H(\bv)$ and the average Lee weight of the vectors in $\mathbb{F}_p^n$ is given by $(M/2)n.$
We, thus, also get that linear code $\mathcal{C} \subseteq \mathbb{F}_p^n$ can correct more errors in the Lee metric as in the Hamming metric, i.e.,
$$d_H(\mathcal{C}) \leq d_L(\mathcal{C}).$$

Using the other bound, i.e., $d_L(\mathcal{C}) \leq M d_H(\mathcal{C}),$ we can easily adapt the Singleton bound \cite{shiromoto}.
\begin{theorem}
    Let $\mathcal{C} \subseteq \mathbb{F}_p^n $ be a linear code of dimension $k.$ Then,
    $$d_L(\mathcal{C})\leq M(n-k+1).$$
\end{theorem}

Unfortunately, this bound in only tight in $p=5,n=2$, as shown in \cite{bounds}.

\begin{exercise}
    Consider the symmetric representation $\{-(p-1)/2, \ldots, (p-1)/2\}$. Show that $\text{wt}_L(x)= |x|.$
\end{exercise}

We denote by $\delta$ the \emph{relative minimum Lee distance}, that is $$\delta= \frac{d_L(\mathcal{C})}{nM}.$$
Let us denote by $V_L(p,n,r) $ the \textit{Lee sphere} of radius $t$
$$V_L(p,n,t) := \{ \bm{x} \in \mathbb{F}_p^n \mid \wt_L(\bm{x}) = t\},$$ 
and by $$F_L(p,T) = \lim\limits_{n \to \infty} \frac{1}{n} \log_p( \lvert V_L(p,n,Tn M) \rvert)$$ its asymptotic size. 
The exact formulas for the size of $V_L(p,n,t)$ and $F_L(p,T)$ can be found in \cite{leeNP,gardysole}.
 
Let us denote by $A_L(n,d,p)$ the maximal size of a code in $\mathbb{F}_p^n$ of minimum Lee distance $d$ and by $$R(\delta)= \limsup\limits_{n \to \infty} \frac{1}{n} \log_p(A(n,d/(Mn),p)).$$
We can then state the Gilbert-Varshamov bound in the Lee-metric \cite{astola}.

\begin{theorem} 
Let $p$ be a prime and $n,d$ positive integers. Then, $$ R(\delta) \geq 1- F_L(p,\delta).$$
\end{theorem}

In \cite{free}, it was shown that random Lee-metric codes attain with high probability the Lee-metric GV bound, i.e., a random code has with high probability a relative minimum Lee distance $\delta$ such that $R(\delta)= 1-F_L(p,\delta)$. 
\medskip
 
We define a function $\text{sgn}(x)$, that gives us the sign of an element in $\mathbb{F}_p$.
\begin{definition}[Signum]
For $x \in \mathbb{F}_p=\left\{- \frac{p-1}{2}, \ldots, 0, \ldots, \frac{p-1}{2}\right\}$ let 
$$ \text{sgn}(x)= \begin{cases} 0 & \text{if } x=0, \\
1 & \text{if } x > 0, \\  
-1 & \text{if } x < 0. \end{cases}$$ 
\end{definition}
For the symmetric representation of $\mathbb{F}_p$, this corresponds to the common signum function.

Let us also define a matching function $\text{mt}(\bx,\by)$ that compares $\bx$ and $\by$ and counts the number of symbols that hold the same sign.
\begin{definition}[Sign Matches]
Let $\bx,\by \in \mathbb{F}_p^n$ and consider the number of matches in their  sign such that
$$\text{mt}(\bx,\by) =\; \lvert \{ i \in \{1, \ldots, n\} \mid \text{sgn}(x_i)=\text{sgn}(y_i), x_i \neq 0, y_i \neq 0\} \rvert.$$
\end{definition}

Finally, we introduce a function calculating the probability that a vector and a uniformly random hash digest (in $\{\pm 1\}^n$) have $\mu$ sign matches. 
 
\begin{definition}[Logarithmic Matching Probability (LMP)]
    For a fixed $\bv \in \mathbb{F}_p^n$ and a randomly chosen $\by \in \{\pm 1\}^n$, the probability of $\by$ to have $\mu$ sign matches with $\bv$ is 
    \[
    B(\mu,\text{wt}_H(\bv),1/2),
    \]
    where $B(k,n,q)$ is the binomial distribution defined as
    \[B(k,n,q) = \binom{n}{k}q^k (1-q)^{n-k} \enspace .\]
    To ease notation, we write 
    $\text{LMP}(\bv,\by) = -\log_2(B(\mu,\text{wt}_H(\bv),1/2))$.
    \end{definition}

In \cite{bariffi}, the authors computed the marginal distribution of entries where vectors are uniformly distributed in $V_L(p,n,w).$
Let $E$ denote a random variable corresponding to the realization of an entry of $\bx \in \mathbb{F}_p^n$. As $n$ tends to infinity, we have the following result on the distribution of the elements in $\bx \in \mathbb{F}_p^n$.
\begin{lemma}[{\cite[Lemma 1]{bariffi}}]\label{lem:marginal}
    For any $x \in \mathbb{F}_p$, the probability that one entry of $\bx$ is equal to $x$ is given by
    \begin{align*}
       p_w(x) = \frac{1}{Z(\beta)} \exp(-\beta \text{wt}_L (x)),
    \end{align*}
    where $Z(\beta)=\sum_{i = 0}^{p-1}\exp(-\beta \text{wt}_L(x))$ denotes the normalization constant and $\beta$ is the unique solution to $w = \sum_{i = 0}^{p-1} \text{wt}_L(i) p_w(x)$.
\end{lemma}

\begin{definition}[Typical Lee Set] \label{def:typicalset}
For a fixed weight $w$, let $p_w(x)$ be the probability from Lemma \ref{lem:marginal} of the element $x \in \mathbb{F}_p$. 
Then, we define the typical Lee set as
    $$T(p,n,w)=\left\{ \bx \in \mathbb{F}_p^n \mid \bx_i= x \text{ for }  p_w(x)n  \text{ coordinates } i\in \{1,\ldots,n \}\right\} $$  
 That is  
       the set of vectors, for which the element $x$ occurs $p_w(x)n$ times.
\end{definition}

\subsubsection{Restricted Errors}

Instead of considering a different metric on the vectors in $\mathbb{F}_p^n$, we can also restrict their entries. 

\begin{definition}[Restriction]
Let us consider $g \in \mathbb{F}_p^*$ of prime order $z$ and the subgroup $\mathbb{E}=\{g^i \mid i \in \{1, \ldots, z \} \} \subset \mathbb{F}_p^*.$ We say $E$ is a \emph{restriction}.
\end{definition}

Let us denote by $\star$ the component-wise multiplication of vectors.

\begin{proposition}
 $(\mathbb{E}^n, \star) $ is a commutative, transitive group isomorphic to $(\mathbb{F}_z^n,+)$.
 \end{proposition}
 
The isomorphism is given by
\begin{align*} 
\ell: \mathbb{E}^n &\to \mathbb{F}_z^n, \\
\bx= (g^{\ell_1}, \ldots, g^{\ell_n}) &\mapsto \ell(\bx)= (\ell_1, \ldots, \ell_n).
\end{align*}

This representation of vectors in $\mathbb{E}^n$ as vectors in $\mathbb{F}_z^n$ is helpful to shorten the sizes of objects. For the opposite direction of the isomorphism,  we use the following abuse of  notation
 $$\ba=g^{\ell(\ba)}=(g^{\ell(\ba)_1}, \ldots, g^{\ell(\ba)_n}),$$
 for some $\ell(\ba) =(\ell(\ba)_1, \ldots, \ell(\ba)_n) \in \mathbb{F}_z^n$.

\medskip

\begin{proposition}\label{prop:trans}
Any linear map  $\varphi: \mathbb{E}^n \to \mathbb{E}^n$  which acts transitively on $\mathbb{E}^n$ is simply given by component-wise multiplication, i.e.,
$\varphi(\mathbf{b})=\mathbf{a} \star \mathbf{b}$, for some $\mathbf{a} \in \mathbb{E}^n$. 
\end{proposition}

\begin{exercise} Prove Proposition \ref{prop:trans} \end{exercise}

Let the map $\varphi$ be the component-wise multiplication with $\mathbf{a} \in \mathbb{E}^n$.
Then we can compactly represent $\varphi$ through the vector $\ell(\ba)\in \mathbb{F}_z^n.$
Additionally, the computation $\varphi(\mathbf{b})= \mathbf{a} \star \mathbf{b}$ is given by an addition in $\mathbb{F}_z^n$; namely 
$\ell(\ba) + \ell(\bb).$

Instead of the restriction $\mathbb{E}$, we can also consider a \emph{restricted subgroup}.

\begin{definition}[Restricted Subgroup]
Let $(G, \star) \leq (\mathbb{E}^n, \star)$ with
$$G= \langle \mathbf{a}_1, \ldots, \mathbf{a}_m\rangle = \left\{\star_{i=1}^m \mathbf{a}_i^{u_i} \mid u_i \in \{1, \ldots z\}\right\},$$ for some $m <n.$
Then, we call $G$ a \emph{restricted subgroup} of $\mathbb{E}.$
\end{definition}

To construct elements $\mathbf{e} \in G$, we can collect all the exponents of the generators $\mathbf{a}_i$ into a matrix. That is, we define the matrix $\mathbf{M}_G \in \mathbb{F}_z^{ m \times n}$ as 
$$\mathbf{M}_G = \begin{pmatrix} \ell(\ba_1)_1 & \cdots &  \ell(\ba_1)_n \\ 
\vdots & & \vdots \\ 
\ell(\ba_m)_1 & \cdots & \ell(\ba_m)_n \end{pmatrix} = \begin{pmatrix} \ell(\ba_1) \\ \vdots \\ \ell(\ba_m)
\end{pmatrix}.
$$
To check whether $| G | = z^m$, it is enough to verify $\text{rank}(\textbf{M}_G)= m.$ For the remainder, we assume that this is the case. Hence, we can think of $\textbf{M}_G \in \mathbb{F}_z^{m \times n}$ as a generator matrix of a $m$-dimensional code in 
$\mathbb{F}_z^n.$ Thus, each codeword $\mathbf{c} \in \langle \mathbf{M}_G \rangle$ can be represented using an information vector $\mathbf{u} \in \mathbf{F}_z^{m},$ that is 
$$\mathbf{c}= \mathbf{u}\mathbf{M}_G.$$
The corresponding $\mathbf{e} \in G$ has then the exponents $\ell(\mathbf{c}).$

\begin{proposition} Let $G$ be a restricted subgroup, where $\mathbf{M}_G$ has full rank $m$. Then,  $\ell_G$ is a group homomorphism, where
\begin{align*} 
\ell_G: G  &\to \mathbb{F}_z^m, \\ 
\be= \ba_1^{u_1} \star \cdots \star \ba_m^{u_m} &\mapsto \ell_G(\be)= (u_1, \ldots, u_m).
\end{align*}
\end{proposition}

 \begin{proposition}
The linear maps $\varphi: G \to G$, which act transitively on $G$, are still given by component-wise multiplication with another element in $G$, i.e., 
for $\mathbf{e} \in G$, $\varphi(\mathbf{e})= \mathbf{e}' \star \mathbf{e}$. 
\end{proposition}

\subsection{Cryptography}\label{sec:crypto}

        As coding theory is the art of \emph{reliable} communication, this goes hand in hand with cryptography, the art of \emph{secure} communication. In cryptography we differ between two main branches,   symmetric cryptography and asymmetric cryptography. 
        \medskip
        
        In \emph{symmetric} cryptography there are the two parties that want to communicate with each other and prior to communication have exchanged some key, that will enable them a secure communication. Such secret key exchange might be performed using protocols such as the Diffie-Hellman key exchange \cite{dh}, which itself lies in the realm of asymmetric cryptography.
        
        More mathematically involved is the branch of \emph{asymmetric} cryptography, where the two parties do not share the same key. In this survey we will focus on two main subjects of asymmetric cryptography, that were also promoted by the NIST standardization call \cite{nist}, namely public-key encryption (PKE) schemes  and digital signature schemes. 
        
Many of these cryptographic schemes seem very abstract  when discussed in generality.
To get a grasp of the many definitions and concepts, we will also provide some easy examples.
First of all, let us recall the definition of a hash function. A \emph{hash function} is a function that compresses the input value to a fixed length. In addition, we want that it is computationally hard to reverse a hash function and also to find a different input giving the same hash value. In this chapter, we denote a publicly known hash function by $\mathsf{Hash}.$
        
        \subsubsection{Public-Key Encryption}
        Let us start with  public-key encryption (PKE) schemes. 
        A PKE consists of three steps: 
        \begin{enumerate} 
        \item key generation, 
        \item encryption, 
        \item decryption.
        \end{enumerate}
        The main idea is that one party,  usually called Alice, constructs a \emph{secret key} $\mathcal{S}$ and a connected \emph{public key} $\mathcal{P}$. The public key, as the name suggests, is made publicly known, while the secret key is kept private.
        \medskip
        
        This allows another party, usually called Bob, to use the public key to encrypt a \emph{message} $m$ by applying the public key, gaining the so called \emph{cipher} $c$. 
        
        The cipher is now sent through the insecure channel to Alice, who can use her secret key $\mathcal{S}$ to decrypt the cipher and recover the message $m$. 
        \medskip
        
        An adversary, usually called Eve, can only see the cipher $c$ and the public key $\mathcal{P}.$ In order for a public-key encryption scheme to be considered secure, it should be infeasible for Eve to recover from $c$ and $\mathcal{P}$ the message $m$. This also implies that the public key should not reveal the secret key. 
        
        \renewcommand{\arraystretch}{1.5}
\begin{table}[h!]\small
\caption{Public-Key Encryption}
\centering
\begin{tabular}{p{6cm}p{1cm}p{4.8cm}}
\hline
\textsf{ALICE} & & \multicolumn{1}{r}{\textsf{BOB}}\\
\hline
\hline
\multicolumn{1}{l}{KEY GENERATION} & & \\
\hline
Construct a secret key $\mathcal{S}$&  & \\
Construct a connected public key $\mathcal{P}$  &  &\\
& $\xlongrightarrow{\mathcal{P}}$  &\\
\hline
 & & \multicolumn{1}{r}{ENCRYPTION} \\
 \hline 
 &&\multicolumn{1}{r}{Choose a message $m$}\\
 && \multicolumn{1}{r}{Encrypt the message $c= \mathcal{P}(m)$} \\
&$\xlongleftarrow{c}$&\\
\hline
\multicolumn{1}{l}{DECRYPTION} & & \\
\hline
Decrypt the cipher $m=\mathcal{S}(c)$ & & \\
\hline
\end{tabular}
\label{table:PKE}
\end{table}

What exactly does infeasible mean, however? This is the topic of \emph{security}. For a cryptographic scheme, we define its \emph{security level} to be the average number of binary operations needed for an adversary to break the cryptosystem, that means either to recover the message (called \emph{message recovery}) or the secret key (called \emph{key recovery}). 

\medskip

Usual security levels are $2^{80}, 2^{128}, 2^{256}$ or even $2^{512},$ meaning for example that an adversary is expected to need at least $2^{80}$ binary operations in order to reveal the message. These are referred to as 80 bit, 128 bit, 256 bit, or 512 bit security levels.
\medskip

Apart from the security of a PKE, one is also interested in the performance, including how fast the PKE can be executed and how much storage the keys require.
Important parameters of a public-key encryption are 
\begin{itemize}
    \item the public key size,
    \item the secret key size, 
    \item the ciphertext size, 
    \item the decryption time. 
\end{itemize} 
These values are considered to be the \emph{performance} of the public-key encryption.
With 'size' we intend the bits that have to be sent or stored for this key, respectively for the cipher.
Clearly, one prefers small sizes and a fast decryption.

\medskip

As an example for a PKE, we can choose one of the most currently used schemes, namely RSA \cite{rsa}.
\medskip

\begin{example}[RSA]  

 \begin{enumerate}
     \item Key Generation: Alice chooses two distinct primes $p,q$ and computes $n=pq$ and $\varphi(n)=(p-1)(q-1).$ She chooses a natural number $e < \varphi(n)$, which is coprime to $\varphi(n).$ The public key is $\mathcal{P}=(n,e)$ and the secret key is $\mathcal{S}=(p,q)$.
 \item Encryption: Bob chooses a message $m$ and encrypts it by computing $$c=m^e \mod n.$$
 \item Decryption: Alice can decrypt the cipher by first computing $d$ and $b$ such that 
 $$de + b \varphi(n)=1.$$ Since $$c^d = \left(m^e \right)^d = m^{1-b \varphi(n)} = m\left(m^{\varphi(n)}\right)^{-b} = m1^{-b} =m,$$ she can recover the message $m$.
 \end{enumerate}
 \end{example}
 \medskip

Eve sees $n$ but there is no feasible algorithm to compute $p$ and $q$.

\begin{exercise}
Assume that Alice has chosen $p$ and $q$ to have $100$ digits. How large is the public key size?
\end{exercise}

\begin{exercise}
Assume that the fastest known algorithm to factor $n$ into $p$ and $q$ costs $\sqrt{n}$ binary operations. In order to reach a security level of $2^{80}$ binary operations, how large should Alice choose $p$ and $q$?
\end{exercise}

\begin{exercise}
To give you also a feeling for cryptanalysis; why should we always choose two distinct primes? Or in other words; how can you attack RSA if $p=q$?
\end{exercise}

\subsubsection{Key-Encapsulation Mechanisms}

A \emph{key-encapsulation mechanism} (KEM) is a way to transmit a key for symmetric cryptography using an asymmetric cryptosystem.
\medskip

 Public-key systems are often not optimal to transmit longer messages. Instead, the two parties  use a public-key system to share a random $m$, usually a number or vector. Then both parties use an agreed-on function, called \emph{key derivation function}, to calculate a key $M$ from $m$. 
 \medskip

The function is usually chosen to be a one-way function, meaning that computing back $m$ with only the knowledge of the function and $M$ is not computationally feasible. With this key, the parties can then encrypt their message. 
\medskip

Most KEM schemes are based on Shoup's idea  \cite{shoup}. In Table \ref{keyencaps} we give an outline, in which we assume that a public-key system is given. For this, let $\mathsf{Hash}$ denote a hash function.

 \renewcommand{\arraystretch}{1.5}
\begin{table}[h!]\small
\caption{Key-Encapsulation Scheme}
\label{keyencaps}
\centering
\begin{tabular}{p{5.4cm}p{1cm}p{5cm}}
\hline
\textsf{ALICE} & & \multicolumn{1}{r}{\textsf{BOB}}\\
\hline
\hline
\multicolumn{1}{l}{KEY GENERATION} & & \\
\hline
Generate a secret key $\mathcal{S}$&  & \\
Construct a connected public key $\mathcal{P}$&  &\\

& $\xlongrightarrow{\mathcal{P}}$  &\\
\hline
 & & \multicolumn{1}{r}{ENCRYPTION} \\
 \hline 
 &&\multicolumn{1}{r}{\parbox{5cm}{\raggedleft Choose a random message $m$}}\\
 &&\multicolumn{1}{r}{\parbox{5cm}{\raggedleft Generate a key $M = \mathsf{Hash}(m)$}}\\
 &&\multicolumn{1}{r}{\parbox[t]{5cm}{\raggedleft Use the public key $\mathcal{P}$ to encrypt $m$ as cipher $c$}}\\
&$\xlongleftarrow{c}$&\\
\hline
\multicolumn{1}{l}{DECRYPTION} & & \\
\hline
Using the secret key $\mathcal{S}$, decrypt $c$ to get $m$& & \\
compute $\mathsf{Hash}(m)=M$ & & \\
\hline
\multicolumn{1}{l}{COMMUNICATION} & &\\
\hline
The parties may now communicate with each other since they both possess a key to encrypt and decrypt messages \\
\hline
\end{tabular}
\label{table:KEM}
\end{table}
~
\medskip

As mentioned before, it is often the case that instead of directly encrypting the key $M$, a random $m$ is encrypted. From this $m$, both parties can generate a key using the agreed-on key derivation function.

\begin{example} For an example of a KEM we again consider RSA.
\begin{enumerate}
\item Key generation: Alice choose two distinct primes $p,q$ and computes $n=pq$ and $\varphi(n)=(p-1)(q-1)$. Alice also chooses a positive integer $e < \varphi(n),$ which is coprime to $\varphi(n).$ The public key is given by $\mathcal{P}= (n,e)$ and the private key is given by $(p,q).$
\item Encryption: Bob chooses a random message $m$ and computes its hash $M=\mathsf{Hash}(m).$ He then performs the usual steps of RSA, that is: he encrypts $c= m^e \mod n$ and sends this to Alice.
\item Decryption: Alice can compute $d= e^{-1} \mod \varphi(n)$ and computes $c^d = m \mod n.$ Also Alice can now compute the shared key $M= \mathsf{Hash}(m).$
\end{enumerate}
\end{example}

\subsubsection{Digital Signature Schemes}

Digital Signature schemes aim at giving a guarantee of the legitimate origin of an object, such as a digital message, exactly as signing a letter to prove that the sender of this letter is really you. 

In this process we  speak of \emph{authentication}, meaning that a receiver of the message can (with some probability) be sure that the sender is legit, and of \emph{integrity}, meaning that the message has not been altered.

A digital signature scheme again consists of three steps: 
\begin{enumerate}
    \item key generation, 
    \item signing, 
    \item verification.
\end{enumerate}
In digital signature schemes we consider two parties, one is the \emph{prover}, that has to prove his identity to the second party called \emph{verifier}, that in turn,  verifies the identity of the prover.  

As a first step, the prover constructs a secret key $\mathcal{S}$, which he keeps private and a public key $\mathcal{P}$, which is made public. The prover then chooses a message $m$,  and creates a signature $s$ using his secret key $\mathcal{S}$ and the message $m$, getting a signed message $(m,s).$ 

The verifier can easily read the message $m$, but wants to be sure that the sender really is the prover. Thus, he uses the public key $\mathcal{P}$  and the knowledge of the message $m$ on the signature $s$ to get authentication.

        \renewcommand{\arraystretch}{1.5}
\begin{table}[h!]\small
\caption{Digital Signature Scheme}
\centering
\begin{tabular}{p{5.4cm}p{1cm}p{5cm}}
\hline
\textsf{PROVER} & & \multicolumn{1}{r}{\textsf{VERIFIER}}\\
\hline
\hline
\multicolumn{1}{l}{KEY GENERATION} & & \\
\hline
Construct a secret key $\mathcal{S}$&  & \\
Construct a connected public key $\mathcal{P}$  &  &\\
& $\xlongrightarrow{\mathcal{P}}$  &\\
\hline
\multicolumn{1}{l}{SIGNING} \\
 \hline 
Choose a message $m$ && \\
Construct a signature $s$ from $\mathcal{S}$ and $m$ & &  \\
&$\xlongrightarrow{m,s}$&\\
\hline
& & \multicolumn{1}{r}{VERIFICATION}  \\
\hline
& & \multicolumn{1}{r}{Verify the signature  $s$ using $\mathcal{P}$ and $m$} \\
\hline
\end{tabular}
\label{table:DSS}
\end{table}

The security of a digital signature scheme introduces a new person, the \emph{impersonator}. 
An impersonator, tries to cheat the verifier and acts as a prover, however without the knowledge of the secret key $\mathcal{S}.$ An impersonator wins if a verifier has verified a forged signature. This comes with a certain probability, called \emph{cheating probability} or \emph{soundness error}. In order to ensure integrity a digital signature should always involve a secret key as well as the message itself.

Clearly, the secret key should still be infeasible to recover from the publicly known private key, thus one still has the usual adversary, called Eve,  and a security level, as in a public-key encryption scheme. 

The performance of a digital signature scheme consists of
\begin{itemize} 
\item the \emph{communication cost}, that is the total number of bits, that have been exchanged within the process, 
\item the \emph{signature size}, 
\item the public key size, \item  the secret key size, \item the verification time.  \end{itemize}

An easy example for a signature scheme  is given by turning the RSA public-key encryption protocol into a signature scheme.
\begin{example}[RSA Signature Scheme]

 \begin{enumerate}
     \item Key Generation: Alice chooses two distinct primes $p,q$ and computes $n=pq$ and $\varphi(n)=(p-1)(q-1).$ She chooses a natural number $e < \varphi(n)$, which is coprime to $\varphi(n).$ She computes $d$ and $b$ such that $$de + b \varphi(n)=1.$$  The public key is $\mathcal{P}=(n,e)$ and the secret key is $\mathcal{S}=(p,q,d)$.
 \item Signing: Alice chooses a message $m$ and signs it by computing $$s=m^d \mod n.$$ She then sends $m,s$ to Bob.
 \item Verification: Bob can verify the signature $s$ by   checking if  $$s^e =m \mod n.$$
 \end{enumerate}
\end{example}

\begin{exercise}
How would an impersonator forge a signature provided that the impersonator does not care about the content of the message $m$?
\end{exercise}

\subsubsection{Zero-Knowledge Protocols}

Since digital signature schemes can be constructed using the Fiat-Shamir transform \cite{fiatshamirtransform} on \emph{Zero-Knowledge} (ZK) protocols, we will also introduce the concept of ZK protocols and then of the transform itself.
\medskip

The process and notation for a ZK protocols are similar to that of a digital signature scheme. We have two parties, a prover and a verifier. Different to a digital signature scheme, the prover does not want to prove his identity to the verifier, but rather convince the verifier of his knowledge of a secret object, without revealing said object. 
\medskip

A ZK protocol consists of two stages: key generation and verification.
The verification process can consist of several communication steps between the verifier and the prover, in particular, we are interested in the following scheme:
\begin{enumerate} 
\item The  prover  prepares  two  \emph{commitments} $c_0,c_1,$ and sends them to the verifier.
\item The verifier randomly picks a \emph{challenge} $b \in \{0,1\},$  and sends it to the prover.
\item The prover provides a \emph{response} $r_b$ that only allows to verify $c_b$.
\item The verifier checks the validity of $c_b$, usually by recovering $c_b$ using $r_b$ and the public key.
\end{enumerate}

        \renewcommand{\arraystretch}{1.5}
\begin{table}[h!]\small
\caption{ZK Protocol}
\centering
\begin{tabular}{p{4.9cm}p{2.5cm}p{4cm}}
\hline
\textsf{PROVER} & & \multicolumn{1}{r}{\textsf{VERIFIER}}\\
\hline
\hline
\multicolumn{1}{l}{KEY GENERATION} & & \\
\hline
Construct a secret key $\mathcal{S}$&  & \\
Construct a connected public key $\mathcal{P}$  &  &\\
& $\xlongrightarrow{\mathcal{P}}$  &\\
\hline
& VERIFICATION & \\
 \hline 
Construct commitments $c_0,c_1$  && \\
&$\xlongrightarrow{c_0,c_1}$&\\
& & \multicolumn{1}{r}{Choose $b \in \{0,1\}$} \\
& $\xlongleftarrow{b}$ & \\
Construct response $r_b$ & & \\
&$\xlongrightarrow{r_b}$&\\
& & \multicolumn{1}{r}{Verify $c_b$ using $r_b$}\\
\hline 
\end{tabular}
\label{table:ZKID}
\end{table}

A ZK protocol has three important attributes:
\begin{enumerate}
    \item \emph{Zero-knowledge:} this means that no information about the secret is revealed during the process.
    \item \emph{Completeness:} meaning that an honest prover will always get accepted.
    \item \emph{Soundness:} for this, we want that an impersonator has only a small cheating probability to get accepted.
\end{enumerate}

Again, for the performance of the protocol, we have 
\begin{itemize}
    \item the communication cost, 
    \item the secret key, 
    \item the public key size, \item the verification time.
\end{itemize}

In order to achieve an acceptable cheating probability, the protocols are often repeated several times (called \emph{rounds}) and only if each instance was verified will the prover be accepted. Thus, if the ZK protocol previously had cheating probability $\alpha$, after $N$ such rounds we have a cheating probability of $\alpha^N.$

There exist several techniques in order to compress the communication cost within $N$ rounds, for example the \emph{compression technique}, first introduced in \cite{ags}. 
Let us explain this method in detail. 

Before the first round, the prover generates the commitments  for all the $N$ rounds, that is $c_b^{i}$ for $i \in \{1, \ldots, N\}$ and $b \in \{0,1\}$.  The prover then sends the hash value $$c = \mathsf{Hash}\big( c_0^1, c_1^1,\ldots, c_0^{N}, c_1^{N} \big)$$ to the verifier. 

In the $i$-th round, after receiving the challenge $b$, the prover sets their response $r_b$ such that the verifier can compute $c^i_b$, and additionally includes $c_{1-b}^i$.  

At the end of each round, the verifier uses $r_b$ to compute $c^i_b$, and stores it together with $c_{1-b}^i$. 

After the final round $N$, the verifier is able to check validity of the initial commitment $c$, by computing the hash of all the stored $c^i_b$. 

This way, one hash is sent at the beginning of the protocol, and only one hash (instead of two) is transmitted in each round and thus, the number of exchanged hash values reduces from $2N$ to $N + 1$.

\renewcommand{\arraystretch}{1.5}
\begin{figure}[ht!]\small
\centering
\caption{Compression Technique for $N$ Rounds}\label{compression}
\begin{tabular}{p{6cm}p{1cm}p{4.8cm}}
\hline
\textsf{PROVER} & & \multicolumn{1}{r}{\textsf{VERIFIER}}\\
\hline
Generate $c_b^i$,   for $i\in \{1,\ldots,N\}$ and $b \in \{0,1\}$ &  & \\
Set $c = \mathsf{Hash}\big( c_0^1, c_1^1,\ldots, c_0^{N}, c_1^{N} \big)$ &   & \\
 & $\xlongrightarrow{c}$  &\\
& \makebox[\dimexpr(\width-10mm)][l]{\hspace{-21mm}$\xlongrightarrow{\xlongleftarrow[\text{\footnotesize Repeat single round for $N$ times}]{}}$} &\\
&&  \multicolumn{1}{r}{Check validity of $c$}\\
\hline
\hline
 & \mbox{GENERIC $i$-th ROUND} & \\
 \hline
 &&\\
& \makebox[\dimexpr(\width-10mm)][l]{\hspace{-18mm}$\xlongrightarrow{\xlongleftarrow[\text{\footnotesize Exchange additional messages}]{}}$} &\\
 &&\multicolumn{1}{r}{Choose $b \in \{0 , 1\}$}\\
&$\xlongleftarrow{b}$&\\
Construct response $r_b$ &&\\ 
& \makebox[\dimexpr(\width-10mm)][l]{\hspace{-4mm}$\xlongrightarrow{r_b, \hspace{1mm} c_{1-b}^i}$} & \\
&&\multicolumn{1}{r}{Store $c^i_{1-b}$, compute and store $c^i_b$}\\
\hline
\end{tabular}
\end{figure}

 An easy example is again provided using the hardness of integer factorization, namely the Feige-Fiat-Shamir protocol \cite{feige}. 

\begin{example}[Feige-Fiat-Shamir]
\begin{enumerate}
    \item Key generation: The prover chooses two distinct primes $p,q$ and computes $n=pq$ and some positive integer $k$. The prover chooses $s_1,\ldots, s_k$ coprime to $n$. The prover now computes 
    $$v_i \equiv s_i^{-2} \mod n.$$ The public key is given by $\mathcal{P}=(n, v_1, \ldots, v_k)$. The secret key is given by $\mathcal{S} =(p,q, s_1, \ldots, s_k).$
    \item Verification: The prover chooses a random integer $c $ and a random sign $\sigma \in \{-1,1\}$ and computes $$x \equiv \sigma c^2 \mod n$$ and sends this to the verifier. 
    The verifier chooses the challenge $b=(b_1, \ldots, b_k) \in \mathbb{F}_2^k$ and sends $b$ to the prover.
    The prover then computes the response $$r \equiv c \prod\limits_{b_j =1} s_j \mod n$$ and sends $r$ to the verifier. The verifier can now check whether 
    $$x \equiv \pm r^2\prod\limits_{b_j =1} v_j \mod n.$$
\end{enumerate}
\end{example}
Eve, the impersonator, can see the public $v_i$ but she does not know the $s_i$. She can pick a random $r$ and  $b=(b_1, \ldots, b_k) \in \mathbb{F}_2^k.$ She then computes $$x \equiv r^2 \prod\limits_{b_j =1} v_j \mod n$$ and sends $x$ to the verifier. The verifier will then challenge her with his $b'$, but Eve simply returns her $r$. If Eve has correctly chosen $b=b'$, she will be verified.

\begin{exercise}
        What is the cheating probability of this scheme? If you repeat this process $t$ times before accepting the prover, what is now your cheating probability?
\end{exercise}

\begin{exercise}
        Let us assume that $k=10.$ How many times should you repeat this process in order to reach a cheating probability of at least $2^{128}$?
\end{exercise}

\subsubsection{Fiat-Shamir Transform}\label{sec:fiat}
The \emph{Fiat-Shamir transform} allows us to build a signature scheme from a ZK protocol. To avoid the communication with the verifier  that randomly picks a challenge, the challenge is replaced with the seemingly random hash of the commitment and message.

The following table follows the general description of the Fiat-Shamir transform from \cite{fiatshamirtransform}.
We assume that we are given a zero-knowledge identification scheme  and a public hash function $\mathsf{Hash}$.  
\begin{table}[h!]\small
\caption{Fiat-Shamir Transform}
\centering
\begin{tabular}{p{5.4cm}p{1cm}p{5cm}}
\hline
\textsf{PROVER} & & \multicolumn{1}{r}{\textsf{VERIFIER}}\\
\hline
\hline
\multicolumn{1}{l}{KEY GENERATION} & & \\
\hline
Given the public key $\mathcal{P}$ and the secret key $\mathcal{S}$ of some ZK protocol and  a message $m$ \\
Choose a commitment $c$\\
Compute $a = \mathsf{Hash}(m,c)$ \\
Compute a response $r$ to the challenge $a$\\
The signature is the pair $s=(a, r)$ \\
&$\xlongrightarrow{m,s}$&\\
\hline
 & & \multicolumn{1}{r}{VERIFICATION} \\
 \hline 
 &&\multicolumn{1}{r}{\parbox[t]{5cm}{\raggedleft Use the response $r$ and the public key $\mathcal{P}$ to construct the commitment $c$ \strut}}\\
 &&\multicolumn{1}{r}{Check if $\mathsf{Hash}(m,c)=a$}\\
\hline
\end{tabular}
\label{table:FST}
\end{table}

Using the Fiat-Shamir transform we can turn the Feige-Fiat-Shamir ZK protocol into a signature scheme.
\begin{example}[Fiat-Shamir digital signature scheme]
\begin{enumerate}
    \item Key Generation: Let $\mathsf{Hash}$ be a publicly known hash function. The prover chooses a positive integer $k$ and two distinct primes $p,q$ and computes $n=pq$. The prover chooses $s_1, \ldots, s_k$ integers coprime to $n$ and computes $v_i \equiv s_i^{-2} \mod n$ for all $i \in \{1, \ldots, k\}$. The secret key is given by $\mathcal{S} = (p,q, s_1, \ldots, s_k)$ and the public key is given by $(n, v_1, \ldots, v_k).$
    \item Verification: the prover chooses randomly $c_1, \ldots, c_t <n$ and computes $x_i \equiv c_i^2 \mod n$ for all $i \in \{1, \ldots, t\}$. In order to bypass the communication with the verifier from before, the prover computes the first $kt$ bits of  $$\mathsf{Hash}(m, x_1, \ldots, x_t)=(a_{1,1}, \ldots, a_{t,k})=a.$$ The prover now computes $r_i \equiv c_i \prod\limits_{a_{ij}=1}s_j \mod n$ for all $i \in \{1, \ldots, t\}$  and sends $(m,a,r_1, \ldots, r_t)$ to the verifier.
 The verifier computes $$z_i \equiv r_i^2 \prod\limits_{a_{i,j}=1}v_j \mod n$$ for all $i \in \{1, \ldots, t\}$ and checks if $$\mathsf{Hash}(m, z_1, \ldots, z_t)=a.$$
\end{enumerate}
\end{example}

\subsubsection{Multi-Party-Computations-in-the-Head}\label{sec:mpc}

Recall that any ZK protocol can be turned into a signature scheme via the Fiat-Shamir transform.
Assume that the used ZK protocol has a cheating probability of $\alpha$ and recall that this probability might be quite large. 
In order to get a resulting signature scheme attaining the security level $2^\lambda$, we require $N$ rounds of the ZK protocol, such that $\alpha^N < 2^{-\lambda}.$

Since the final signature is given by the communication cost within all $N$ rounds, such signature schemes usually suffer from large signature sizes. 

One very prominent technique in order to reduce the signature size was introduced in \cite{ikos}
and uses the idea of Multi-Party-Computations (MPC). 

In an MPC we have $N$ parties, called $p_1, \ldots, p_N$, each party is secretly provided a share $s_i$. The parties wish to collectively  compute a certain function of their shares, say $f(s_1, \ldots, s_N)$, in such a way that the shares $s_i$ remain only known to the party $p_i$ and an such that all shares are required. 

We say that an MPC protocol is 
\begin{itemize}
    \item correct, if the parties can correctly compute $f(s_1, \ldots, s_N)$,
    \item $t$-private, if any $t$ shares (or less) do not reveal any information on $f(s_1, \ldots, s_N)$,
    \item secure, if $f(s_1,\ldots, s_N)$ does not reveal any information on $s_i.$
\end{itemize}
An easy way to achieve an MPC protocol is to use Secret Sharing (SS) schemes. The whole theory of MPC and SS schemes is highly involved and we refer the interested reader to \cite{ss}.  

In a SS scheme, we have a \emph{dealer}, who wants to share a secret message with the parties, $p_1, \ldots, p_N$ and again each party $p_i$ is provided with a share $s_i.$
We introduce two parameters; $k\leq n$ the decoding threshold and $z<k$ the confidentiality threshold. These parameters take care of the following two constraint:
\begin{enumerate}
    \item A group of $k\leq n$ parties can decode the secret message using their shares. 
    \item A group of $z<k$ parties do not gain any information about the secret from their shares.  
\end{enumerate}

The security goal for such a scheme is thus \emph{confidentiality,} that is: no information about the secret should be leaked from any $z$ shares.

Important for this survey, will be additive sharing schemes. Let us, thus, start with a toy example.

\begin{example}\label{ex:SS}
    Let us consider $n=4, k=2,z=k-1=1$ and $q=5$. The secret message is some $m \in \mathbb{F}_5$, we choose a random value $r \in \mathbb{F}_q$ and we use an \emph{encoding polynomial}  $p(x)=m+rx$. The secret shares are then given by $s_i = p(i)= m +ir.$
\end{example}
\begin{exercise}
Consider the SS scheme in Example \ref{ex:SS}.
    \begin{enumerate}
        \item Show that the SS scheme attains privacy, i.e., an individual party with share $s_i$ does not gain information about $m.$
        \item Show that the SS scheme is decodable, i.e., any two parties can recover $m.$
    \end{enumerate}
\end{exercise}

The more general construction, is called \emph{Shamir's secret sharing scheme} \cite{shamir}. Given the integers $z=k-1, k \leq n < q$ and a polynomial $p(x) \in \mathbb{F}_q[x]$ of degree $z$, given by
$$p(x)= m + \sum_{i=1}^z r_i x^i,$$ where $r_i$ are chosen uniform at random from $\mathbb{F}_q.$
The secret shares are then given by $s_i=p(i).$

\begin{exercise}
    Show that Shamir's SS scheme is attains privacy and is decodable. 
\end{exercise}

One can also construct a SS scheme with $z<k-1,$ e.g. using McEliece-Sarwate's construction \cite{sarwate}. 
 Given the integers $z< k \leq n < q$ and a polynomial $p(x) \in \mathbb{F}_q[x]$ of degree $k-1$, given by
$$p(x)= \sum_{i=1}^z r_i x^i+ \sum_{i=1}^{k-z} m_ix^{z-1+i},$$ where $r_i$ are chosen uniform at random from $\mathbb{F}_q.$
The secret shares are then given by $s_i=p(i).$

\begin{exercise}
    Show that McEliece-Sarwate's SS scheme is attains privacy and is decodable. \\
    \emph{Hint:} Use the property of a $k \times n$ Vandermonde matrix, that each $k \times k$ submatrix is invertible. 
\end{exercise}

For a more sophisticated SS scheme, we assume that the secret is given by $\bs \in \mathbb{F}_q^n$ and all $N$ parties are provided with random $\bs_i \in \mathbb{F}_q^n$, such that $\sum_{i=1}^N \bs_i=\bs.$
Clearly, only if all $N$ parties open their shares, they can compute collectively $$f(\bs_1,\ldots, \bs_N)=\sum_{i=1}^N \bs_i=\bs,$$ while any $k <N$ parties cannot compute $\bs$.

For a secret $s$, we will use the notation $[[s]]$ to denote a possible splitting into $N$ additive shares $$[[s]]=(s_1, \ldots, s_N).$$

The idea of MPC-in-the-head (MPCitH), introduced in  \cite{ikos} is to use MPC protocols to build ZK protocols. 

For this, assume we are given an MPC protocol
in which $N$ parties $P_1, \ldots, P_N$ securely and correctly evaluate a function $f$ on a secret input $s$. Additionally, we require 
\begin{itemize} 
\item the secret $s$ has a sharing $[[s]]=(s_1, \ldots, s_N)$ and each party $P_i$ gets the input $s_i$, 
\item for some functions $\varphi_i$, the party $P_i$ computes the broadcast $\alpha_i=\varphi(s_i),$
\item  the function $f$, such that $f(\alpha_1, \ldots, \alpha_N)=1$, and anything else evaluates to 0,
\item  if $N-1$ parties reveal their shares $s_i$, or their broadcasts $\alpha_i$, they do not reveal anything on the secret $s$.
\end{itemize}
The resulting ZK protocol, requires a trapdoor function $F$, which is easy to compute and hard to invert. 
In code-based cryptography, this is usually the syndrome decoding problem. Namely, 
\begin{align*}
    F: B_H(t,n,q) &\to \mathbb{F}_q^{n-k}, \\ 
    \be &\mapsto \be\bH^\top.
\end{align*}
That is, we send vectors of Hamming weight at most $t$, to their syndromes for a fixed parity-check matrix $\bH.$ While this is easy to compute given $\bH$ and $\be$, it is hard to invert, that is: given $\bH,\bs$ find $\be.$
We say that the trapdoor function $F$ has \emph{target} $y$ if, for the sought solution $x$ we have $F(x)=y.$

In the previous example of syndrome decoding, the target would be the syndrome of the sought-after solution $\be$ and $F$ is completely determined by $\bH.$
 
Assuming such a trapdoor function $F$ and an MPC protocol, the resulting ZK protocol works as follows. 
        \renewcommand{\arraystretch}{1.5}
\begin{table}[h!]\small
\caption{ZK Protocol from MPC}
\centering
\begin{tabular}{p{4.9cm}p{2.5cm}p{4cm}}
\hline
\textsf{PROVER} & & \multicolumn{1}{r}{\textsf{VERIFIER}}\\
\hline
\hline
\multicolumn{1}{l}{KEY GENERATION} & & \\
\hline
Given MPC with secret $s$ and function $f$ and $\varphi_i$& & \\
Given trapdoor function $F$ with target $y$.&& \\
Secret key $\mathcal{S}=s$&  & \\
Public key $\mathcal{P}=\{f,F,y\}$  &  &\\
& $\xlongrightarrow{\mathcal{P}}$  &\\
\hline
& VERIFICATION & \\
 \hline 
For $i \in \{1, \ldots, N\}$:   && \\
\quad Compute $\alpha_i =\varphi_i(s_i)$ & & \\ 
\quad Compute  $c_i=\mathsf{Hash}(s_i,\rho_i)$  & & \\ \quad for some random $\rho_i$. && \\
&$\xlongrightarrow{c_1, \ldots, c_N}$&\\
Check if $f(\alpha_1, \ldots, \alpha_N)=1$
&$\xlongrightarrow{\alpha_1, \ldots, \alpha_N}$&\\
& & \multicolumn{1}{r}{Choose $b \in \{1, \ldots, N\}$} \\
& $\xlongleftarrow{b}$ & \\
Response $r_b=\{(s_i,\rho_i) \mid i \neq b\}$ & & \\
&$\xlongrightarrow{r_b}$&\\
& & \multicolumn{1}{r}{For all $i\neq b$:}\\
& & \multicolumn{1}{r}{\quad Check $c_i=\mathsf{Hash}(s_i,\rho_i)$}\\
& & \multicolumn{1}{r}{\quad Check $\alpha_i=\varphi_i(s_i)$}\\
& & \multicolumn{1}{r}{Check $F(\alpha_1,\ldots, \alpha_N)=y.$}\\
\hline 
\end{tabular}
\label{table:ZKMPC}
\end{table}
The main idea of the ZK protocol using MPCitH, is to run a MPC protocol in the prover's head, i.e., the prover simulates locally  all the parties of the MPC protocol and sends commitments to each party’s   share. To check that the MPC protocol runs correctly, the prover also sends the broadcasts $\alpha_i.$

The main benefit of MPCitH lies in the cheating probability. Since the MPC protocol is $N-1$-private, an impersonator not knowing the secret $s$, can guess any $N-1$ many shares and compute broadcasts and commitments to those. However, the last share $s_f$ is chosen at random, and is not such that $\sum_{i=1}^N s_i =s.$ In fact, finding the last $s_f$ would require the impersonator to invert the trapdoor function.

The verifier accepts the impersonator, only  if the verifier challenges exactly the random $s_f$, i.e., $b=f$. 
In any other case, the impersonator is required so send $s_f$ in the response and the verifier can check that the target $y$ is not reached. 

Thus, the new cheating probability is $\frac{1}{N}.$ This allows us to reduce the number of rounds required to achieve a certain security level and thus, in turn, the signature size. However, the broadcast computation has to be performed in each such round $N$ times, by the prover and the verifier.

\section{Code-Based Public-Key Encryption Frameworks}\label{sec:pkeframework}

Code-based cryptography and in particular code-based PKEs first came up with the seminal work of Robert J. McEliece in 1978 \cite{mceliece}. The main idea of code-based cryptography is to base the security of the cryptosystem on the hardness of decoding a random linear code. 
Since this problem is NP-hard, code-based cryptography is considered to be one of the most promising candidates for post-quantum cryptography. 
\medskip

In a nutshell, McEliece's idea as follows: the private key is given by  a linear code $\mathcal{C}$, which can efficiently correct $t$ errors. The public key is $\mathcal{C}'$ a disguised version of the linear code, which should not reveal the secret code, in fact, should behave randomly.
\medskip

While anyone with $\mathcal{C}'$, the publicly known code, can encode their message and possibly add some intentional errors, an attacker would only see a random code and in order to recover the message would need to decode it. 
\medskip

The constructor of the secret code however,  can transform the encoded message to a codeword of $\mathcal{C}$, which is efficiently decodable. 
\medskip

The first code-based cryptosystem by McEliece uses the generator matrix $\bG$ as a representation of the secret code and in order to disguise the generator matrix, one computes $\bG'=\bS \bG \bP$, where $\bS$ is an invertible matrix, thus only changing the basis of the code, and $\bP$ is a permutation matrix, thus giving a permutation equivalent code. In the encryption step the message is then encoded through $\bG'$ and an intentional error vector $\be$ is added.
\medskip

An equivalent \cite{equivalent} cryptosystem was proposed by Niederreiter in \cite{niederreiter}, where one uses the parity-check matrix $\bH$, and the disguised parity-check matrix $\bH'= \bS\bH\bP$, instead of the generator matrix and the cipher is given by the syndrome of an error vector, i.e., $\bs=\bH'\be^\top$. 
\medskip
 
The code-based system proposed by Alekhnovich uses the initial idea of McEliece, but twists the disguising of the code, by adding a row to the parity-check matrix, which is a erroneous codeword, thus making the error vector the main part of the secret key.  
The idea of Alekhnovich, which is not considered practical has been the starting point of a new framework, the quasi-cyclic scheme. 
\medskip

A different idea has been proposed by Augot and Finiasz in \cite{af}. Here the secret is given by the  support of an error vector, which allows to insert an error of weight beyond the error correction capacity. Thus, again the code can be made completely public.
\medskip

Finally, the McEliece framework has also been introduced for the rank metric by Gabidulin, Paramonov and Tretjakov and is usually denoted by the GPT system.
\medskip

Clearly, all of these cryptosystems (except for Alekhnovich's, which uses a random code) have been originally proposed for a specific code. In the following we will introduce the idea behind the systems as frameworks, thus without considering a specific code.  
        \newpage
\subsection{McEliece Framework}\label{sec:mcframework}

Although McEliece originally proposed in \cite{mceliece} to use a binary Goppa code as secret code, one usually denotes by the McEliece framework the following. 
Alice, the constructor of the system, chooses an $[n,k]$ linear code $\mathcal{C}$ over $\mathbb{F}_q$, which can efficiently decode $t$ errors through the decoding algorithm $\mathcal{D}.$ Instead of publishing a generator matrix $\bG$ of this code, which would then reveal to everyone the algebraic structure of $\mathcal{C}$ and especially how to decode, one hides $\bG$ through some scrambling: we compute $\bG' = \bS\bG\bP$, for some invertible matrix $\bS \in \text{GL}_k(q)$ and an $n \times n$ permutation matrix $\bP$. Hoping that the new matrix $\bG'$ and the code it generates $\mathcal{C}'$ seem random (although $\mathcal{C}'$ is permutation equivalent to $\mathcal{C})$,  Alice then publishes this disguised matrix $\bG'$ and the error correction capacity $t$ of $\mathcal{C}$. 
\medskip

Bob who wants to send a message $\bm \in \mathbb{F}_q^k$ to Alice can then use the public generator matrix $\bG'$ to encode his message, i.e., $\bm\bG'$, and then adds a random error vector $\be \in \mathbb{F}_q^n$ of Hamming weight up to $t$ to it, i.e., the cipher is given by $\bc= \bm \bG' +\be.$

An eavesdropper, Eve, only knows $\bG',t$ and the cipher $\bc$. In order to break the cryptosystem and to reveal the message $\bm$, she would need to decode $\mathcal{C}'$, which seems random to her. Thus, she is facing an NP-complete problem and the best known solvers have an exponential cost. 

\medskip

However, Alice can reverse the disguising by computing $\bc\bP^{-1}$, which results in a codeword of $\mathcal{C}$ added to some error vector of weight up to $t$. That is
$$\bc\bP^{-1} = \bm\bS\bG + \be\bP^{-1}.$$
Through the decoding algorithm $\mathcal{D}$ Alice gets $\bm\bS$ and thus by multiplying with $\bS^{-1}$, she recovers the message $\bm.$

       \renewcommand{\arraystretch}{1.5}
\begin{table}[h!]\small
\caption{McEliece Framework}
\centering
\begin{tabular}{p{5.4cm}p{1cm}p{5cm}}
\hline
\textsf{ALICE} & & \multicolumn{1}{r}{\textsf{BOB}}\\
\hline
\hline
\multicolumn{1}{l}{KEY GENERATION} & & \\
\hline
Choose a linear code $\mathcal{C} \subseteq \mathbb{F}_q^n$ of dimension $k$ and error correction capacity $t$. Let $\bG$ be a $k \times n$ generator matrix of $\mC$. &  & \\
Choose randomly $\bS \in \text{GL}_k(q)$ and an $n \times n$ permutation matrix $\bP$. Compute $\bG'=\bS\bG\bP$. & & \\
The public key is given by $\mathcal{P}= (t, \bG')$  and $\mathcal{S}=(\bG,\bS,\bP)$ &  &\\
& $\xlongrightarrow{\mathcal{P}}$  &\\
\hline
 & & \multicolumn{1}{r}{ENCRYPTION} \\
 \hline 
 &&\multicolumn{1}{r}{\parbox[t]{5cm}{\raggedleft Choose a message $\bm  \in \mathbb{F}_q^k$ and a random error vector $\be \in \mathbb{F}_q^n$ of weight at most $t$}} \\
 && \multicolumn{1}{r}{Encrypt the message $\bc= \bm\bG'+\be$} \\
&$\xlongleftarrow{\bc}$&\\
\hline
\multicolumn{1}{l}{DECRYPTION} & & \\
\hline
Decrypt the cipher, by decoding $\bc\bP^{-1}=\bm\bS\bG+\be\bP^{-1}$ to get $\bm\bS$, and finally recover the message as $\bm = (\bm \bS) \bS^{-1}$  & & \\
\hline
\end{tabular}
\label{table:mc}
\end{table}

\begin{exercise}
    Consider re-encryption: Given the public generator matrix $\bG$. Bob encrypts the message $\bm$ getting $\bc_1= \bm\bG+\be_1$ and later with the same $\bG$ the same message $\bm$ again getting $\bc_2=\bm\bG+\be_2$. Is this safe? 
\end{exercise}
Since this is the key part of this survey, we will provide a toy example explained in full detail. 

\begin{example}\label{hamming}
Let $\mathcal{C}$ be the $[7,4]$ binary Hamming code, which can efficiently correct 1 error. We take as generator matrix 
$$\bG = \begin{pmatrix} 
1 & 0 &0 & 0 & 1 & 1 & 0 \\
0 & 1 & 0 & 0 & 1 & 0 &  1 \\
0 & 0 & 1 & 0 & 0 & 1 & 1 \\
0 & 0 & 0 & 1 & 1 & 1 & 1 
\end{pmatrix}.$$
We choose $\bS \in \text{GL}_4(2)$ to be 
$$\bS= \begin{pmatrix} 
0 & 1 & 1 & 1  \\
1 & 0 & 1 & 1 \\
 1 & 0 & 1 & 0 \\
 0 & 0 & 1 & 1
\end{pmatrix}$$
 and the permutation matrix $\bP$ to be 
 $$\bP = \begin{pmatrix}
 0 & 1 & 0 &0 & 0 & 0 & 0 \\
 0 & 0 & 0 &1 & 0 & 0 & 0 \\
  0 & 0 & 0 &0 & 0 & 1 & 0 \\
   1 & 0 & 0 &0 & 0 & 0 & 0 \\
    0 & 0 & 0 &0 & 1 & 0 & 0 \\
     0 & 0 & 1 &0 & 0 & 0 & 0 \\
      0 & 0 & 0 &0 & 0 & 0 & 1 
 \end{pmatrix}.$$
 We thus compute 
 $$ \bG' = \begin{pmatrix}
 1 & 0 & 0 & 1 & 0 & 1 & 1 \\
 1 & 1 & 1 & 0 & 0 & 1 & 0 \\
 0 & 1 & 0 & 0 & 1 & 1 & 1 \\
 1 & 0 & 0 & 1 & 1 & 0 & 0 
 \end{pmatrix} $$
 and publish $(\bG', 1)$, since $t=1$. 
 The message we want to send is $\bm= (1,0,1,1) \in \mathbb{F}_2^4$ and thus we compute
 $$\bm\bG' = (0,1,0,1,0,1,0).$$
 Now, we choose an error vector $\be \in \mathbb{F}_2^7$ of Hamming weight 1, e.g.,
 $$\be= (1,0,0,0,0,0,0).$$
 Thus, the cipher is given by 
 $$\bc= (1,1,0,1,0,1,0).$$
 The constructor, who possesses $\bP$ can compute 
 $$\bc\bP^{-1}=\bc\bP^\top=(1,1,1,1,0,0,0).$$
 We can now use the decoding algorithm of Hamming codes to recover $\bm\bS=(1,1,1,0)$ and by multiplying with $$\bS^{-1}= \begin{pmatrix} 0 & 1 & 0 & 1 \\ 1 & 0 & 0 & 1 \\ 0 & 1 & 1 & 1 \\ 0 &1 &1 & 0 \end{pmatrix}$$
 we recover the message $\bm= ( 1,0,1,1).$
 
 In this toy example, an attacker which sees $\bG', t,\bc$ has two possibilities: \begin{enumerate}
     \item recover the message directly,
     \item recover the secret key.
 \end{enumerate}
 The first type of attack could work as follows:
 \begin{enumerate}
     \item We bring $\bG'$ into a row-reduced form, that is for $\bG' = [ \bA \ \mid \ \bB ]$ we compute $\bA^{-1} \bG',$ giving
     $$\overline{\bG}= \begin{pmatrix} 
     1 & 0 & 0 & 0 & 1 & 1 & 0 \\
     0 & 1 & 0 & 0 & 1 & 1 & 1 \\
     0 & 0 & 1 & 0 & 0 & 1 & 1 \\
     0 & 0 & 0 & 1 & 1 & 0 & 1
     \end{pmatrix}.$$
     With $\overline{\bG}= [ \Id_4 \ \mid \ \bC]$ we can also compute the parity-check matrix as $\overline{\bH}= [ \bC^\top \ \mid \ \Id_3],$ that is 
     $$\overline{\bH} = \begin{pmatrix}
     1 & 1 & 0 & 1 & 1 & 0 & 0 \\
     1 & 1 & 1 & 0 & 0 & 1 &  0 \\
     0 & 1 & 1 & 1 & 0 & 0 & 1 
     \end{pmatrix}.$$
     \item We can now compute the syndrome of $\bc$ through $\overline{\bH}$, i.e.,
     $$\bs = \bc\overline{\bH}^\top = (1,1,0).$$
     Note that this is also the syndrome of the error vector $\be$, i.e., 
     $\be \overline{\bH}^\top= \bs .$ Since there is only one entry of $\be$ that is non-zero, we must have that the syndrome $\bs$ is equal to the column $\bh_i$ where $\be_i \neq 0$. And in fact, $\bs= \bh_1$, thus we have found 
     $$\be = (1,0,0,0,0,0,0)$$ and $$\bc-\be=\bm\bG' = (0,1,0,1,0,1,0).$$
     Note that the moment we know the error vector, we can use linear algebra to recover the message. Since this is a toy example, we will also execute this step. 
     \item Denote by $\overline{\bm}= \bm \bA$, then 
     $$(0,1,0,1,0,1,0) = \bm \bG' = \bm \bA \bA^{-1} \bG' = \overline{\bm} \overline{\bG}.$$ 
     Since $\overline{\bG}= [ \Id_4 \ \mid \bC ]$, we have that $$\overline{\bm} \overline{\bG}= (\overline{\bm}, \overline{\bm} \bC).$$
     Hence, we can directly read off that $\overline{\bm}=(0,1,0,1)$ and by multiplying with $\bA^{-1}$, we recover $\bm=(1,0,1,1).$
 \end{enumerate}
The second type of attack, namely a key-recovery attack, is in nature more algebraic. Knowing that the secret code is a $[7,4]$ binary code that can correct one error, the suspicion that the secret code is a Hamming code is natural. If not, one could proceed as follows.  
 \begin{enumerate}
     \item We choose a set $I \subset \{1, \ldots, n\}$ of size $k$, which is a possible information set. Let us denote by $\bG'_I$ the matrix consisting of the columns of $\bG'$ indexed by $I$.
     \item We compute $(\bG'_I)^{-1}\bG'$ to get an identity matrix in the columns indexed by $I$.
     \item Choose the permutation matrix $\bP'$ which brings the identity matrix in the columns indexed by $I$ to the first $k$ positions. 
 \end{enumerate}
 With this, if one chose $I = \{4,2,6,3\}$ (the order matters here only for the permutation matrix) we  recover $\bG$ and will now finally be able to read off the secret code and thus also know its decoding algorithm. With this, we can compute from $\bG$ and $\bG'$ the matrices $\bS$ and $\bP.$ 
\end{example}

Although this example for the McEliece framework is clearly using a code that should not be used in practice,   it shows in a few easy steps the main ideas of the attacks. For example, the minimum distance of a code should be large enough, since else an easy search for the error vector will reveal the message, and also the public code parameters should not reveal anything on the structure of the secret code, meaning that there should be many codes having such parameters.

These two different kind of attacks aim at solving two different problems the security of the McEliece system is based upon: 
\begin{enumerate}
    \item decoding the erroneous codeword, assuming that the code is random, should be hard,
    \item the public code, which is permutation equivalent to the secret code, should not reveal the algebraic structure of the secret code.
\end{enumerate}
Only if both of these points are fulfilled is the security of the cryptosystem guaranteed. 

We will see more on this in Section \ref{sec:security}.

\subsection{Niederreiter Framework}\label{sec:niedframework}

The Niederreiter framework \cite{niederreiter} uses the parity-check matrix instead of the generator matrix, resulting  in an equivalently secure system \cite{equivalent}. Niederreiter originally proposed to use GRS codes as secret codes, however, we will consider with the Niederreiter framework the more general scheme.

Alice again chooses an $[n,k]$ linear code $\mathcal{C}$ over $\mathbb{F}_q$ which can efficiently decode up to $t$ errors. She then scrambles a parity-check matrix $\bH$ of $\mathcal{C}$ by computing $\bH' = \bS\bH\bP$, for some invertible matrix $\bS \in \GL_{n-k}(\mathbb{F}_q)$ and an $n \times n$ permutation matrix $\bP.$ She publishes the seemingly random parity-check matrix $\bH'$ together with the error correction capacity $t$. 

Bob can then encrypt a message $\bm \in \mathbb{F}_q^n$ of Hamming weight up to $t$, simply by computing the syndrome of $\bm$ through the parity-check matrix $\bH'$, i.e., the cipher is given by $\bc=\bm\bH'^\top.$

While Eve would only have access to $\bH'$, which looks random to her, $t$ and $\bc$, she faces an NP-hard problem and can only apply exponential time algorithms in order to recover $\bm.$ 

Alice, on the other hand, can recover the message by computing $\bS^{-1}\bc$, which results in a syndrome of her code $\mathcal{C},$ which she knows how to decode. That is 

$$\bS^{-1}\bc^\top= \bH\bP\bm^\top,$$ where $\bP\bm^\top$ still has Hamming weight up to $t$. Thus, she recovers $\bP\bm^\top$ and by multiplication with $\bP^{-1}$, she recovers the message $\bm.$

       \renewcommand{\arraystretch}{1.5}
\begin{table}[h!]\small
\caption{Niederreiter Framework}
\centering
\begin{tabular}{p{5.4cm}p{1cm}p{5cm}}
\hline
\textsf{ALICE} & & \multicolumn{1}{r}{\textsf{BOB}}\\
\hline
\hline
\multicolumn{1}{l}{KEY GENERATION} & & \\
\hline
Choose a linear code $\mathcal{C} \subseteq \mathbb{F}_q^n$ of dimension $k$ that can efficiently correct $t$ errors. Let $\bH$ be a $(n-k) \times n$ parity-check matrix of $\mC$ &  & \\
Choose randomly $\bS \in \text{GL}_{n-k}(q)$ and an $n \times n$ permutation matrix $\bP$. Compute $\bH'=\bS\bH\bP$ & & \\
The public key is given by $\mathcal{P}= (t, \bH')$  &  &\\
& $\xlongrightarrow{\mathcal{P}}$  &\\
\hline
 & & \multicolumn{1}{r}{ENCRYPTION} \\
 \hline 
 &&\multicolumn{1}{r}{\parbox[t]{5cm}{\raggedleft Choose a message $\bm  \in \mathbb{F}_q^n$  of weight at most $t$ \strut}} \\
 && \multicolumn{1}{r}{\parbox[t]{5cm}{\raggedleft Encrypt the message $\bc^\top= \bH'\bm^\top$}} \\
&$\xlongleftarrow{\bc}$&\\
\hline
\multicolumn{1}{l}{DECRYPTION} & & \\
\hline
Decrypt the cipher  by decoding $\bS^{-1}\bc^\top=\bH\bP\bm^\top$ to get $\bP\bm^\top$, and finally recover the message as $\bm^\top = \bP^{-1} (\bP \bm^\top)$  & & \\
\hline
\end{tabular}
\label{table:nied}
\end{table}

We provide the same toy example for the Niederreiter framework.
\begin{example}
This time, we start with a parity-check matrix $\bH$ of the $[7,4]$ binary Hamming code, given by
$$\bH= \begin{pmatrix} 
1 & 1 & 0 & 1 & 1 & 0 & 0 \\
1 & 0 & 1 & 1 & 0 & 1 & 0 \\
0 & 1 & 1 & 1 & 0 & 0 & 1 \\
\end{pmatrix}.$$
We choose as invertible matrix $\bS \in \text{G}_3(\mathbb{F}_2)$ the following
$$\bS= \begin{pmatrix}
1 & 1 & 0 \\ 0 & 1 & 1 \\ 0 & 0 & 1
\end{pmatrix}$$ and as permutation matrix we choose
$$\bP=\begin{pmatrix}
0 & 0 & 1 & 0 & 0 & 0 & 0 \\
0 & 0 & 0 & 0 & 1 & 0 & 0 \\
1 & 0 & 0 & 0 & 0 & 0 & 0 \\
0 & 0 & 0 & 0 & 0 & 0 & 1 \\
0 & 1 & 0 & 0 & 0 & 0 & 0 \\
0 & 0 & 0 & 1 & 0 & 0 & 0 \\
0 & 0 & 0 & 0 & 0 & 1 & 0 \\
\end{pmatrix}.$$
With this, we compute 
$$\bH' = \bS \bH \bP = \begin{pmatrix}
0 & 0 & 1 & 1 & 1 & 1 & 0 \\
0 & 1 & 1 & 1 & 0 & 0 & 1 \\
1 & 0 & 0 & 1 & 0 & 1 & 1 \\
\end{pmatrix}.$$
The public key is given by $\bH'$ and $t=1.$
Assume that we want to send the message $\bm= (0,0,1,0,0,0,0) \in \mathbb{F}_2^7$. For this, we compute the cipher as the syndrome of $\bm$ through $\bH'$, i.e., 
$$\bc = \bm (\bH')^\top = (1,1,0)$$ and send it to the constructor. 
The constructor which knows $\bS$ and $\bP$ first computes
$$\bS^{-1}\bc^\top =\bH\bP\bm^\top = (0,1,0)^\top,$$ and then uses the decoding algorithm of the Hamming code to  get $$\bm\bP^\top=(0,0,0,0,0,1,0).$$ Finally multiplying this with $\bP^{-1}$ we get the message $\bm=(0,0,1,0,0,0,0).$
\end{example}

The security is clearly equivalent to that of Example \ref{hamming}, due to the duality of $\bG$ and $\bH$ and the attacks form Example \ref{hamming} work here as well.

\subsection{Alekhnovich's Cryptosystems}\label{alekframework}

Alekhnovich's cryptosystem \cite{alekhnovich} marks the first code-based cryptosystem with a security proof, i.e., it relies solely on the decoding problem. 
This seminal work lays the foundations of modern code-based cryptography, where researchers try to construct code-based cryptosystems with a provable reduction to the problem of
decoding a random linear code.

There are two variants to this cryptosystem, both are relying on the following hard problem:
\begin{problem} 
Given a code $\mathcal{C}$, distinguish a random vector from an erroneous codeword of $\mathcal{C}.$
\end{problem}
Note that variations of these cryptosystem are used in \cite{NISTHQC, NISTBike}.
For the following description of the two variants we rely on the survey \cite{zemor} and for more details we also refer to \cite{zemor}.

\subsubsection{The First Variant}

The idea is not to keep the parity-check matrix or generator matrix of the code hidden, but a random error vector. Thus, a random matrix $\bA$ is chosen and to this one adds the row $\bx\bA +\be $, thus an erroneous codeword of the code generated by $\bA$ is added resulting in the augmented matrix $\bH$. Let us consider $\mathcal{C}$ to be $\text{Ker}(\bH)$, that is the code having $\bH$ as parity-check matrix. One then publishes $\bG$, a generator matrix of $\mathcal{C}.$
\medskip

In this variant one only encrypts a single bit. One either sends as cipher an erroneous codeword of $\mathcal{C}^\perp$ or a random vector, depending if 0 or 1 was encrypted. Finally, using the secret error vector $\be$,  one can compute the standard inner product of the cipher and $\be$ and will recover the message, with some decryption failure.  
\medskip

More in detail, if the cipher was given by $\ba\bG + \be'$, for a random $\ba \in \mathbb{F}_2^{n-k}$ and a random error vector $\be' \in \mathbb{F}_2^n$ of weight $t$, then 
$$ \langle \be, \ba\bG+\be'\rangle = \langle \be, \ba\bG\rangle + \langle \be, \be' \rangle.$$
Note that $\langle \be, \ba\bG \rangle=0$, since $\be \in \mathcal{C}^\perp$ by construction. In addition, since $\text{wt}_H(\be) = \text{wt}_H(\be')= t=o(\sqrt{n})$, we have that 
$\langle \be, \be' \rangle=0$ with high probability. If the cipher was given by a random vector $\bc \in \mathbb{F}_2^n$ instead, then with probability 1/2 we get $\langle \be, \bc \rangle=1.$ 
\medskip

Thus, there is a decryption failure in the case $m=1$ of probability 1/2. In order to get a reliable system one can encrypt the message multiple times. A systematic description of Alekhnovich's First Variant can be found in Table \ref{alekhno1tab}.

       \renewcommand{\arraystretch}{1.5}
\begin{table}[h!]\small
\caption{Alekhnovich First Variant}
\label{alekhno1tab}
\centering
\begin{tabular}{p{5.4cm}p{1cm}p{5cm}}
\hline
\textsf{ALICE} & & \multicolumn{1}{r}{\textsf{BOB}}\\
\hline
\hline
\multicolumn{1}{l}{KEY GENERATION} & & \\
\hline
Let $t \in o(\sqrt{n})$ and choose a random matrix $\bA \in \mathbb{F}_2^{k \times n}$ &  & \\
Let $\be \in \mathbb{F}_2^n$ be a random vector of weight $t$ and let $\bx \in \mathbb{F}_2^k$ be a random vector & & \\
Compute $\by= \bx\bA + \be$ and $\bH^\top = (\bA^\top, \by^\top)$ & & \\
Let $\mathcal{C} = \text{ker}(\bH)$ and choose a generator matrix $\bG \in  \mathbb{F}_2^{(n-k-1) \times n} of \mathcal{C}$ & & \\
The public key is given by $\mathcal{P}=  (\bG,t)$ and the secret key is $\mathcal{S}= \be$ &  &\\
& $\xlongrightarrow{\mathcal{P}}$   &\\
\hline
 & & \multicolumn{1}{r}{ENCRYPTION} \\
 \hline 
 &&\multicolumn{1}{r}{Choose a message $\bm  \in \mathbb{F}_2$ \strut} \\
 & & \multicolumn{1}{r}{\parbox[t]{5cm}{\raggedleft If $\bm=0$: choose $\ba \in \mathbb{F}_2^{n-k-1}$ and $\be' \in \mathbb{F}_2^n$ of weight $t$ at random, send $\bc = \ba\bG + \be'$ \strut}}\\
 && \multicolumn{1}{r}{\strut If $\bm=1$: choose a random vector $\bc \in \mathbb{F}_2^n$} \\
&$\xlongleftarrow{\bc}$&\\
\hline
\multicolumn{1}{l}{DECRYPTION} & & \\
\hline
Decrypt the cipher, by computing $\bb=\langle \be, \bc\rangle$. \\
If $\bm=0$:  $\bb=0$ with high probability & & \\
  If $\bm=1$:  $\bb=1$ with  probability $1/2$ & & \\
\hline
\end{tabular}
\label{table:A1}
\end{table}
~
\medskip

We give an example of the first variant. 
\begin{example}
Let 
$$\bA = \begin{pmatrix}
1 & 1 & 0 & 0 & 0 & 0 \\
1 & 0 & 1 & 0 & 0 & 0 \\
0 & 0 & 0 & 1 & 1 & 0 \\
0 & 0 & 0 & 1 & 0 & 1
\end{pmatrix}.$$

We choose $\bm = ( 0 , 1 , 0 , 1)$, $\be = ( 1 , 0 , 0 , 0 , 0 , 0 )$ and compute 
$$ \bm \bA + \be = ( 0 , 0 , 1 , 1 , 0 , 1 ).$$
If we append this to the matrix $\bA$, we get the matrix 
$$\bH = \begin{pmatrix}
1 & 1 & 0 & 0 & 0 & 0 \\
1 & 0 & 1 & 0 & 0 & 0 \\
0 & 0 & 0 & 1 & 1 & 0 \\
0 & 0 & 0 & 1 & 0 & 1 \\
0 & 0 & 1 & 1 & 0 & 1 \\
\end{pmatrix}.
$$
The dual code ${C}$ of $\bH$ has a generator matrix 
$$\bG = \begin{pmatrix} 0 & 0 & 0 & 1 & 1 & 1
\end{pmatrix}.$$

We encrypt $0$ as
$$\bc_{0} = ( 0 , 0 , 0 , 1 , 1 , 1 ) + ( 0 , 1 , 0 , 0 , 0 , 0 ) = ( 0, 1 , 0 , 1 , 1 , 1 ),$$
and $1$ as random vector 
$$\bc_{1} = ( 1 , 0 , 1 , 0 , 0 , 1 ).$$

To decrypt the cipher $\bc$, we compute $\langle \be,\bc \rangle.$
If we receive $\bc_0$, we compute that $\langle \be, \bc_0 \rangle = 0$.
If we receive $\bc_1$, we see that $\langle \be, \bc_1 \rangle = 1$.

\end{example}

\subsubsection{The Second Variant}

In this variant one generalizes the idea of the first variant and construct directly a matrix $\bM$ in which every row is an erroneous codeword. 
\medskip

This is achieved by choosing at random $\bA \in \mathbb{F}_2^{n/2 \times n}, \bX \in \mathbb{F}_2^{n \times n/2} $ and $\bE \in \mathbb{F}_2^{n \times n}$ having row weight $t$. Then one computes the matrix $\bM =\bX\bA+\bE.$
\medskip

Let $\mathcal{C}_0$ be a binary code of length $n$, that can correct codewords transmitted through a binary symmetric channel (BSC) with transition probability $t^2/n$. Let us consider \begin{align*}
    \varphi: \mathbb{F}_2^n & \to \mathbb{F}_2^n, \\  \bx &\mapsto \bM\bx.
\end{align*}
Define $$\mathcal{C}_1= \varphi^{-1}(\mathcal{C}_0)=\{ \bx \in \mathbb{F}_2^n \mid \varphi(\bx) \in \mathcal{C}_0\},$$
$\mathcal{C}_2= \text{Ker}(\bA)$ and finally $\mathcal{C}= \mathcal{C}_1 \cap \mathcal{C}_2.$
Let $\bG \in \mathbb{F}_2^{k \times n}$ be a generator matrix of $\mathcal{C}$. This generator matrix is made public, while the error vectors in $\bE$ are kept secret. 
\medskip

To encrypt a message $\bm \in \mathbb{F}_2^{k/2}$ we first append a random vector $\br \in \mathbb{F}_2^{k/2}$ to get $\bx =(\bm, \br) \in \mathbb{F}_2^k$ and then compute $$\bc = \bx\bG + \be,$$ for some random error vector $\be \in \mathbb{F}_2^n$ of weight $t$. 

To decrypt we now compute \begin{align*}
    \by^\top & = \bE \bc^\top = \bE (\bx\bG + \be)^\top\\
    &= \bE(\bx\bG)^\top + \bE\be^\top \\
    &= \bX\bA(\bx\bG)^\top + \bM(\bx\bG)^\top + \bE\be^\top \\
    &= \bM(\bx\bG)^\top + \bE\be^\top,
\end{align*}
where we have used that $\bA \ba^\top =0$ for all $\ba\in \mathcal{C}$, in particular also for $\bx\bG$. Note that $\mathbf{z}^\top=\bM(\bx\bG)^\top \in \mathcal{C}_0,$ since $\mathcal{C} \subseteq \mathcal{C}_1$ and $\varphi(\mathcal{C}_1)=\mathcal{C}_0.$ Finally, every row $\be_i$ of $\bE$ has weight $t$ and thus, $\langle \be_i, \be \rangle =1$ with probability at most $t^2/n.$
Thus, the decoding algorithm of $\mathcal{C}_0$ on $\by$ gives $\mathbf{z}$ with high probability. Finally, we can solve the linear system $$\bx\bG = \varphi^{-1}(\mathbf{z})$$ to get $\bx$ and the first $k/2$ bits reveal the message $\bm.$

       \renewcommand{\arraystretch}{1.5}
\begin{table}[h!]\small
\caption{Alekhnovich Second Variant}
\centering
\begin{tabular}{p{5.4cm}p{1cm}p{5cm}}
\hline
\textsf{ALICE} & & \multicolumn{1}{r}{\textsf{BOB}}\\
\hline
\hline
\multicolumn{1}{l}{KEY GENERATION} & & \\
\hline
Choose  random matrices $\bA \in \mathbb{F}_2^{n/2 \times n}, \bX \in \mathbb{F}_2^{n \times n/2}$ and $\bE \in \mathbb{F}_2^{n \times n}$ of row weight $t$ &  & \\
 Set $\bM=\bX\bA+\bE \in \text{GL}_n(2)$ &  & \\
Let $\mathcal{C}_0$ be a binary code of length $n$ that can efficiently correct codewords transmitted through a BSC of transition probability $t^2/n$ & & \\
Let $\varphi$ be the map $\bx \mapsto \bM\bx$ & & \\
Let $\mathcal{C} = \varphi^{-1}(\mathcal{C}_0) \cap \text{Ker}(\bA)$  & & \\
Let $\bG \in \mathbb{F}_2^{k \times n}$ be a generator matrix of $\mathcal{C}$ & & \\
The public key is given by $\mathcal{P}=  (\bG,t)$  and $\mathcal{S}=\bE$ &  &\\
& $\xlongrightarrow{\mathcal{P}}$  &\\
\hline
 & & \multicolumn{1}{r}{ENCRYPTION} \\
 \hline 
 &&\multicolumn{1}{r}{\parbox[t]{5cm}{\raggedleft Choose a message $\bm  \in \mathbb{F}_2^{k/2}$ and choose randomly $\br \in \mathbb{F}_2^{k/2}$ and $\be \in \mathbb{F}_2^n$ of weight $t$}} \\
 && \multicolumn{1}{r}{Compute $\bx= (\bm, \br) \in \mathbb{F}_2^k$ and $\bc=\bx\bG+\be$} \\
&$\xlongleftarrow{\bc}$&\\
\hline
\multicolumn{1}{l}{DECRYPTION} & & \\
\hline
Decrypt the cipher, by computing $\by^\top= \bE\bc^\top = \mathbf{z}^\top + \bE\be^\top$ & & \\
and use the decoding algorithm of $\mathcal{C}_0$ on $\by$ to get  $\mathbf{z}$ & & \\
Recover $\bx$ from the linear system $\bx\bG =\varphi^{-1}(\mathbf{z})$ and thus $\bm$& & \\
\hline
\end{tabular}
\label{table:A2}
\end{table}

\subsection{Quasi-Cyclic Scheme}\label{sec:quasicyclic}

The quasi-cyclic scheme is inspired by the  scheme of Alekhnovich, introduced in \cite{quasicyclicscheme} and used in \cite{NISTHQC}. Similarly to Alekhnovich's schemes, it is a probabilistic approach to encryption schemes and does not hide the initial code, which needs to be efficiently decodable. The message gets encrypted as codeword to which an error, too large to decode, gets added. With the knowledge of the private key parts of this error can be cancelled out resulting (with high probability) in a vector which can be decoded to recover the message.

We present the scheme in the  Hamming metric, but note that the scheme can also be adapted to the rank metric.
\medskip

Let $n$ be a positive integer, $q$ be a prime power and $\mathcal{R} = \mathbb{F}_q[x]/(x^n-1)$. Recall from Section \ref{sec:prelim} that we identify vector $\ba=(a_0,a_1 ,\ldots, a_{n-1}) \in \mathbb{F}_q^n$  with the polynomial $a(x)= \sum_{i=0}^{n-1} a_i x^i \in \mathcal{R}$ and vice versa.  

The quasi-cyclic framework uses two types of codes:
\begin{enumerate}
    \item An $[n,k]$ linear code $\mathcal{C}$ over $\mathbb{F}_q$, which can efficiently decode $\delta$ errors. A generator matrix $\bG \in \mathbb{F}_q^{k \times n}$ is made public.
    \item A random  quasi-cyclic $[2n,n]$ code presented through a parity-check matrix $$\bH= \begin{pmatrix} \text{Id}_n & \mid  \text{rot}(\bh) \end{pmatrix},$$ which does not require to be efficiently decodable and is also made public. 
\end{enumerate}
Recall that vector multiplication of any vector $\bv$ and $\bh$ is given by $\bv \text{rot}(\bh),$ as this corresponds to the polynomial multiplication $v(x)h(x) \in \mathcal{R}.$
 
Let $w$, $w_r$ and $w_e$ be positive integers  all in the range of $\sqrt{n}/2$. These are publicly known parameters.

The cryptosystem then proceeds as follows. Alice chooses a random $h \in \mathbb{F}_q^n$   and an $[n,k]$ linear code  $\mathcal{C}$ over $\mathbb{F}_q$, that can efficiently correct $t$ errors and  chooses a generator matrix $\bG$ of $\mathcal{C}$.

\begin{table}[h!]\small
\caption{Quasi-Cyclic Scheme}
\begin{tabular}{p{5.4cm}p{1cm}p{5cm}}
\hline
\textsf{ALICE} & & \multicolumn{1}{r}{\textsf{BOB}}\\
\hline
\hline
\multicolumn{1}{l}{KEY GENERATION} & & \\
\hline
Choose an $[n,k]$ linear code $\mathcal{C}$ over $\mathbb{F}_q$, which can efficiently decode $t$ errors with  \\
generator matrix $\bG \in \mathbb{F}_q^{k \times n}$  and choose $\bh \in \mathbb{F}_q^n$\\ Choose  $\by, \mathbf{z} \in \mathbb{F}_q^n$ of weight $\wtH(\by)=\wtH(\mathbf{z})=w$, compute $\bs=\by+\bh\mathbf{z}$ \\
The public key is  $\mathcal{P}= (\bG, \bh, \bs,w_e, w_r)$ and the secret key is $\mathcal{S}= (\by, \mathbf{z})$\\
& $\xlongrightarrow{\mathcal{P}}$  &\\
\hline
 & & \multicolumn{1}{r}{ENCRYPTION} \\
 \hline 
 &&\multicolumn{1}{r}{Choose a message $\bm \in \mathbb{F}_q^k$  } \\
 &&\multicolumn{1}{r}{Choose $\be \in \mathbb{F}_q^n$ such that $\wtH(\be) = w_e$ } \\
 &&\multicolumn{1}{r}{\parbox[t]{5cm}{\raggedleft Choose $\br_1, \br_2 \in \mathbb{F}_q^n$ such that $\wtH(\br_1) = \wtH(\br_2) = w_r$ \strut}} \\
 &&\multicolumn{1}{r}{\parbox[t]{5cm}{\raggedleft Compute $\bu=\br_1 + \bh \br_2$}}\\
  &&\multicolumn{1}{r}{Compute $\bv=\bm \bG +\bs\br_2 +\be$}\\
 &&\multicolumn{1}{r}{The cipher is $\bc=(\bu,\bv)$}\\
&$\xlongleftarrow{\bc}$&\\
\hline
\multicolumn{1}{l}{DECRYPTION} & & \\
\hline
Compute $\bc'=\bv-\bu\mathbf{z}$ and use the decoding algorithm of ${\mC}$ to recover $\bm$\\
\hline
\end{tabular}
\label{table:QCS}
\end{table}

Alice then also chooses two elements $\by,\mathbf{z} \in \mathbb{F}_q^n$, corresponding to the vector $\by, \mathbf{z}$ both of Hamming weight $w$. 

She publishes the generator matrix $\bG$, the random element $\bh$ and $\bs=\by+ \bh\mathbf{z},$ while $\by$ and $\mathbf{z}$ are kept secret and can clearly not be recovered from $\bs$ and $\bh$. In fact, we can write 
$$\bs= (\by,\mathbf{z}) \begin{pmatrix} \text{Id}_n \\ \text{rot}(\bh)\end{pmatrix},$$
thus $\bH= (\text{Id}_n, \text{rot}(\bh)^\top)$ acts as quasi-cyclic parity-check matrix and $(\by,\mathbf{z})$ as unknown error vector.

Bob, who wants to send a message $\bm \in \mathbb{F}_p^k$ to Alice, can choose $\be \in \mathbb{F}_q^n$    of Hamming weight $w_e$  and two elements $\br_1, \br_2 \in \mathbb{F}_q^n$,   both of Hamming weight $w_r$. He then computes $\bu=\br_1+\bh\br_2$  and $$\bv=\bm\bG+\bs\br_2 +\be.$$  The cipher is then given by $\bc=(\bu,\bv).$

The message $\bm$ is thus encoded through the public $\bG$ and an error vector $\bs\br_2+\be$ is added, where both $\br_2$ and $\be$ were randomly chosen by Bob. The only control Alice has on the error vector is in $\bs$. This knowledge and also the additional information of Bob on $\br_2$ provided through the vector $\bu$ will allow Alice to decrypt the cipher.

In fact, Alice can use the decoding algorithm of $\mathcal{C}$ on $\bv-\bu\mathbf{z}$, since 

\begin{align*}
        \bv-\bu\mathbf{z} &= \bm \bG + \bs \br_2 + \be - (\br_1 +\bh \br_2)\mathbf{z} \\
        &= \bm\bG + (\by+\bh \mathbf{z})\br_2 + \be - \br_1\mathbf{z} - \bh \br_2\mathbf{z} \\
        &= \bm \bG + (\by \br_2 - \br_1\mathbf{z} +\be).
\end{align*}
It follows that the decryption succeeds if $\wtH(\by\br_2 - \br_1\mathbf{z} +\be) \leq t$. Note that parameter sets should be chosen such that this happens with high probability, but clearly the framework does have a \emph{decoding failure rate} (DFR). 
\medskip

\begin{remark}
The reason why we can make the generator matrix of the efficiently decodable code public, lies in the random choice of $h$, which determines the parity-check matrix $\bH$ and in the fact that the error added to the codeword has a weight larger than the error correction capacity of the public code.  

In fact, $\bu$ and $\bs$ are two syndromes through $\bH$ of a vector with given weight, as $$\bu = (\br_1, \br_2)\bH^\top$$ and $\bs= (\by,\mathbf{z})\bH^\top$. In order to recover $(\br_1,\br_2)$ or $(\by,\mathbf{z})$, an attacker would need to solve the NP-hard syndrome decoding problem. In addition, since $\text{wt}_H(\bs\br_2+\be)>t$ even with the knowledge of $\bG$ and $\bv$ an attacker can not uniquely determine the message $\bm$. 
\end{remark}

\medskip

Since the algebraic code, which is efficiently decodable, is publicly known, the security of this framework is different to that of the McEliece framework and the Niederreiter framework, as it does not rely on the indistinguishability of the code. 
\medskip

\begin{remark}
However, we want to stress the fact, that the SDP is NP-hard for a completely random code. The code with the double circulant parity-check matrix $\bH$ is in fact not completely random, and thus the question arises, if also this new problem lies in the complexity class of NP-hard problems. 
\end{remark}
 
\begin{example}
We choose $R = \mathbb{F}_2 [x]/(x^7 +1)$ and as code the binary repetition code of length $7$, which can correct up to $3$ errors.  The generator matrix $\bG$ is given by
$$ \bG = \begin{pmatrix}
1 & 1 & 1 & 1 & 1 & 1 & 1
\end{pmatrix},$$
and codewords with more ones than zeroes are decoded to $(1,1,1,1,1,1,1)$, everything else to $(0,0,0,0,0,0,0)$.
Further, we choose
$$ h(x) = 1 + x + x^2 \in \mathcal{R},$$ or equivalently 
$\bh=(1,1,1,0,0,0,0)$
and $w = w_r = w_e = 1$. We pick $y(x) = 1$, $z(x) = x^3$, both in $\mathcal{R}$  of weight $w = 1$, or equivalently $\by=(1,0,0,0,0,0,0), \mathbf{z}=(0,0,0,1,0,0,0)$ and compute
$$ s(x) = y(x) + h(x) z(x) = 1 + x^3 + x^4 + x^5.$$ Equivalently one can compute 
$$\bs=\by+\mathbf{z}\text{rot}(\bh)=(1,0,0,1,1,1,0).$$
The public key is then given by
$$ \mathcal{P} = ( \bG, \bh, \bs, w_e, w_r ),$$
the secret key is the pair
$$ \mathcal{S} = (\by, \mathbf{z}).$$

For this example, the message is  $\bm = (1) \in \mathbb{F}_2^1$. We also pick $e(x) = x \in \mathcal{R}$, that is $\be=(0,1,0,0,0,0,0)$ of weight $w_e=1$ and $r_1(x) = r_2(x)=x^2$ in $\mathcal{R}$, that is $\br_1=\br_2=(0,0,1,0,0,0,0)$ of weight $w_r = 1$. We can then compute
$$ u(x) = r_1(x) + h(x) r_2(x) =  x^3 + x^4 ,$$ or equivalently
$$\bu=\br_1+\br_2\text{rot}(\bh),$$

hence $\bu = (0,0,0,1,1,0,0),$
and since $s(x)r_2(x)= 1+x^2+x^5+x^6$ of weight $5>t$ we get
\begin{align*}
\bv &= \bm \bG + \bs \br_2 + \be   = (1,1,1,1,1,1,1) + (1,0,1,0,0,1,1) + (0,1,0,0,0,0,0) \\
& = (0,0,0,1,1,0,0).
\end{align*} We can then send the cipher
$$ \bc = (\bu, \bv) =  ( (0,0,0,1,1,0,0) , (0,0,0,1,1,0,0) ).$$
To decrypt the cipher, we compute with the knowledge of the secret key $ \mathcal{S} = ( y,z ) = (1, x^3)$ that  $u(x)z(x)=1+x^6$ and compute 
\begin{align*} \bv - \bu \mathbf{z} &=  (0,0,0,1,1,0,0) - (1,0,0,0,0,0,1) \\
& = (1,0,0,1,1,0,1),
\end{align*}
which gets decoded to to the codeword $(1,1,1,1,1,1,1)$, from which we recover the message $\bm=(1)$.
\end{example}

\begin{exercise}
Repeat this example with the fixed public parameters $\bG=(1,1,1,1,1,1,1)$, $h(x)=1+x+x^2$, $s(x)=1+x^3+x^4+x^5$, $w_e=w_r=1$ and the secret key $\mathcal{S}=(1,x^3)$, but now Bob chooses $e(x)=x^4, r_1(x)=1,r_2(x)=x$. Is the decryption successful in this case? 
\end{exercise}

\subsection{Augot-Finiasz Cryptosystem}

In its original version the Augot-Finiasz (AF) cryptosystem uses polynomial reconstructions, for this survey, however, we translate it into an easier formulation.

Similar to the quasi-cyclic framework, one can choose a code $\mathcal{C}$ which can efficiently decode $t$ errors and can make it public. The system does not rely on any  hiding of the structured code. 
 The idea of the AF and the FL system is publish a structured code $\mathcal{C}= \langle \bG \rangle$ which can correct $w$ erasures and $t$ errors, usually this means that $d>2t+w$. One then also publishes a corrupted codeword $\mathbf{y}= \mathbf{m}' \mathbf{G} + \mathbf{e}'$, where the error vector $\mathbf{e}'$ has weight $w$, but keeps the support of $\mathbf{e}'$ secret. 
Without the knowledge of the support, and as long as $w > \lfloor (d-1)/2\rfloor,$ an attacker cannot recover $\mathbf{m}'$ or equivalently $\mathbf{e}'.$

To encrypt a message $\mathbf{m}$, one chooses at random a vector $\mathbf{e}$ of  weight $t$, a random $\alpha \in \mathbb{F}_{q}$, such that $$\text{supp}(\alpha \mathbf{e}')=\text{supp}(\mathbf{e}')$$ and computes the cipher as 
$$ \mathbf{c}= \mathbf{m}\bG + \alpha \mathbf{y} + \be.$$
Clearly, the cipher is still a corrupted codeword of $\mathcal{C},$ where the error vector is 
$$\tilde{\mathbf{e}}= \alpha \mathbf{e}' + \mathbf{e}.$$ 
If $\mathbf{e}'$ and $\mathbf{e}$ are chosen at random then 
$\text{wt}_H(\tilde{\mathbf{e}})\geq w-t.$
Thus, as long as $w-t>\frac{d-1}{2}$ an attacker can still not decode the cipher without knowing the secret error support. 

On the other hand, the constructor of the scheme knows $\text{supp}(\be)$ and can use an erasure decoder to get rid off $\text{supp}(\mathbf{e}').$
Being left with at most $t$ errors, the constructor of the system can use the error-decoder of the public code and compute the $\mathbf{m}'+\alpha \mathbf{m}$. Finally, knowing $\mathbf{m}'$ and ensuring that $\alpha$ is visible in the vector $\alpha \mathbf{m}$, one recovers the message $\mathbf{m}.$

\begin{table}[h!]\small
\caption{AF Cryptosystem}
\label{AF}
\begin{tabular}{p{5.4cm}p{1cm}p{5cm}}
\hline
\textsf{ALICE} & & \multicolumn{1}{r}{\textsf{BOB}}\\
\hline
\hline
\multicolumn{1}{l}{KEY GENERATION} & & \\
\hline
Choose a generator matrix $\bG \in \mathbb{F}_{q}^{k \times n}$ which can correct $t$ errors and $w$ erasures\\
Choose $\be' \in \mathbb{F}_q^n$ of weight $w$ having support in $S$ \\ 
Choose $(1,\bm') \in  \mathbb{F}_{q}^k$\\
Compute $\by=(1,\bm')\bG+\be'$\\
The public key is  $\mathcal{P}= (\bG, \by, t)$ and the secret key is $\mathcal{S}= (\be')$\\
& $\xlongrightarrow{\mathcal{P}}$  &\\
\hline
 & & \multicolumn{1}{r}{ENCRYPTION} \\
 \hline 
 &&\multicolumn{1}{r}{Choose $\be \in \mathbb{F}_{q}^{n}$ with $wt_H(\be) \leq t$} \\
  &&\multicolumn{1}{r}{Choose $\alpha \in \mathbb{F}_q$} \\ 
 &&\multicolumn{1}{r}{Encrypt $\bm  \in \mathbb{F}_{q}^{k}$ as $\bc=\bm\bG+\alpha\by +\be$}\\
&$\xlongleftarrow{\bc}$&\\
\hline
\multicolumn{1}{l}{DECRYPTION} & & \\
\hline
Puncture $\bc$ in the positions indexed by $S$ & & \\ 
Decode $\bc_{S^C}$ and recover $\alpha(1,\bm')+\bm$ and thus $\alpha$ as well as $\bm$.  & & \\
\hline
\end{tabular}
\label{table:AF}
\end{table}

The decryption works, as 
$$\bc=(\bm+\alpha(1, \bm'))\bG+ \alpha\be' + \be$$ and $\alpha\be'$ has support in $S$. Thus,
$$\bc_{S^C}=(\bm+\alpha(1,\bm'))\bG+\be,$$ and since $\text{wt}_H(\be)\leq t$, we can decode the public code $\langle \bG\rangle$ and recover the message $\bm+\alpha(1,\bm')$. Although, we do not know $\alpha$, we have chosen the message of $\by$ such that we can read $\alpha$ of the first entry, namely $(1,\bm').$
Thus, we can remove $\alpha(1,\bm')$ from the recovered message and recover $\bm.$

\begin{example}
Let us give a toy example also for the AF system.
Let us consider $\mathbb{F}_{16}=\mathbb{F}_2[\alpha],$ where $\alpha^4=\alpha+1$
and the Reed-Solomon code generated by 
$$\bG= \begin{pmatrix}  1 & \alpha & \alpha+1 & \alpha^2 & \alpha^2+1 & \alpha^3 & \alpha^3+\alpha \\ 1 & \alpha^2 & \alpha^2+1 & \alpha+1 & \alpha & \alpha^3+\alpha^2  & \alpha^3\end{pmatrix}.$$ This code has minimum distance $d=n-k+1=6$ and can thus correct 1 error and 3 erasures. 
We choose the secret error support $S=\{1,2,4\}$ and the error vector $\be'=(1,\alpha,0,\alpha^2,0,0,0).$
For the message $\bm'=(1,1)$ we get 
$$\by= (1,1) \bG+\be'=(1, \alpha^2, \alpha^2+\alpha, \alpha+1,\alpha^2+\alpha+1,\alpha^2,\alpha).$$

Both $\bG$ and $\by$ are made public. 
Bob wants to send the message $(0,\alpha^2)$ to Alice and  chooses the scrambling $\alpha+1$ and the error vector $\be=(0,0,\alpha,0,0,0,0).$
The cipher is then given by 
\begin{align*}
\bc & =(0,\alpha^2)\bG + (\alpha+1)\by+\be \\ 
&= (\alpha^2+\alpha+1, \alpha^3+\alpha^2+\alpha+1,  \alpha^3+\alpha^2+\alpha+1,\alpha^3+1, 1,\alpha^3+1,0).
\end{align*}
To decrypt, Alice first punctures in the secret positions $\{1,2,4\}$, thus only considering 
$$\bc_{S^C}=(\alpha^3+\alpha^2+\alpha+1, 1,\alpha^3+1,0)$$
and decodes using the punctured Reed-Solomon code 
$$\bG_{S^C}=\begin{pmatrix}
    \alpha+1 & \alpha^2+1 & \alpha^3& \alpha^3+\alpha \\
    \alpha^2+1 & \alpha & \alpha^3+\alpha^2 & \alpha^3
\end{pmatrix},$$
getting the message $(\alpha+1,\alpha^2+\alpha+1)$ and the error vector $(\alpha,0,0,0).$
Due to the construction of the two messages, namely the first position of $\bm$ is zero and the first position of $\bm'$ is one, Alice can read of the first position the scrambling being $\alpha+1$ and thus recovers the message $$\bm=(0,\alpha^2)= (1+\alpha,\alpha^2+\alpha+1)-(\alpha+1,\alpha+1).$$

\end{example}

\begin{exercise}
    \begin{enumerate}
        \item An attacker can guess $\alpha \in \mathbb{F}_q$ and attack the AF system. What is the security level of the above example?
        \item Can we also choose different scramblings for $\by$?
        \item Repeat the example for Gabidulin codes and the rank metric. 
    \end{enumerate}
\end{exercise}

The only requirement for the code $\mathcal{C}$ is thus, that the punctured code can still efficiently decode. 

The original system uses GRS codes, as a punctured GRS code is still a GRS code, and has been attacked in \cite{augotattack}. 

Clearly, this framework is independent of the metric and hence, one could also employ the rank metric. 
In fact, the rank-metric analog of the AF system has been proposed by Faure and Loidreau \cite{FL}, relying the security on the
hardness of reconstructing $p$–polynomials.
Their original system proposes the use of Gabidulin codes and  has been subject to algebraic attacks \cite{FLattack}.

Many repair  attempts \cite{repair, LIGA, IntG, ramesses} have been made, unfortunately all  have been broken in \cite{bombar}. The idea of the attacks is to use list decoding of GRS codes, respectively of Gabidulin codes.

\subsection{GPT Cryptosystem}

The \emph{Gabidulin-Paramonov-Tretjakov} (GPT) cryptosystem was introduced in \cite{gpt} and is based on rank-metric codes. As usual, we pick an $\mathbb{F}_q$-basis of $\mathbb{F}_{q^m}$ and use this to identify elements of $\mathbb{F}_{q^m}$ with vectors in $\mathbb{F}_q^m$.  The system we present is not following the original proposal, which was broken \cite{overbeck}, but an adapted formulation, and as before we present the system as a framework, i.e., without choosing a family of codes for the secret code. 

The GPT system proceeds as follows. 
Alice chooses an $[n,k]$ linear rank-metric code $\mathcal{C}$  over $\mathbb{F}_{q^m}$ with error correction capacity $t$ and generator matrix $\bG.$  For some positive integer $\lambda$, she then chooses $\bS \in \text{GL}_k({q^m}), \bP \in \text{GL}_{n+\lambda}(q)$ and $\bX \in \mathbb{F}_{q^m}^{k \times \lambda}$ of rank $s \leq \lambda$.
She publishes the scrambled matrix $\bG'=\bS [ \bX \mid  \bG ] \bP$ and the target weight $t$.

Bob can then encrypt his message $\bm \in \mathbb{F}_{q^m}^k$, by computing
$$\bc= \bm\bG' + \be,$$
for some randomly chosen error vector $\be \in \mathbb{F}_{q^m}^{n+\lambda}$ with $\text{wt}_R(\be) =t.$
 
To decrypt, Alice can compute 
$$\bc\bP^{-1} = \bm \bS [ \bX \mid \bG ] +\be \bP^{-1}.$$
Since $\text{wt}_R(\be\bP^{-1})=t$, she can apply the decoding algorithm of the code  $\mathcal{C}$ to the last $n$ positions of $\bc\bP^{-1}$ to recover $\bm \bS $ and thus also $\bm.$
 
A systematic description of the GPT system can be found in Table \ref{gpttabular}.

\begin{table}[h!]\small
\caption{GPT Cryptosystem}
\label{gpttabular}
\begin{tabular}{p{5.4cm}p{1cm}p{5cm}}
\hline
\textsf{ALICE} & & \multicolumn{1}{r}{\textsf{BOB}}\\
\hline
\hline
\multicolumn{1}{l}{KEY GENERATION} & & \\
\hline
Choose a generator matrix $\bG \in \mathbb{F}_{q^m}^{k \times n}$ of a rank-metric code of rank distance $d=2t+1$ and a positive integer $\lambda$\\
Choose $\bS \in \text{GL}_k({q^m})$, $\bP \in \text{GL}_{n+\lambda}(q)$ \\
Choose a matrix $\bX \in \mathbb{F}_{q^m}^{k \times \lambda}$ of rank $s \leq \lambda$ and compute $\bG'= \bS[\bX \mid \bG]\bP.$\\ 
The public key is  $\mathcal{P}= (\bG', t)$ and the secret key is $\mathcal{S}= (\bG, \bS, \bX,\bP)$\\
& $\xlongrightarrow{\mathcal{P}}$  &\\
\hline
 & & \multicolumn{1}{r}{ENCRYPTION} \\
 \hline 
 &&\multicolumn{1}{r}{Choose $\be \in \mathbb{F}_{q^m}^{n+\lambda}$ with $wt_R(\be) \leq t$} \\
 &&\multicolumn{1}{r}{Encrypt $\bm  \in \mathbb{F}_{q^m}^k$ as $\bc=\bm\bG'+ \be$}\\
&$\xlongleftarrow{\bc}$&\\
\hline
\multicolumn{1}{l}{DECRYPTION} & & \\
\hline
Compute $\bc' = \bc\bP^{-1}$ and apply the decoding algorithm to the last $n$ positions to recover $\bm'=\bm\bS$ \\
Compute $\bm= \bm' \bS^{-1}$\\
\hline
\end{tabular}
\label{table:GPT}
\end{table}

This framework is closely related to the McEliece framework, as the algebraic code which can be efficiently decoded has to be kept secret and the matrix $\bP$ acts as an isometry. In fact, while for the Hamming metric $\bP$ is chosen  a permutation matrix, which fixes the Hamming weight of a vector, in the rank metric we choose $\bP$ to be a full rank matrix over $\mathbb{F}_q$, which thus fixes the rank weight of a vector over $\mathbb{F}_{q^m}.$

\begin{exercise}
Establish the Niederreiter version of the GPT system using the parity-check matrix. 
\end{exercise}

\begin{example}
We give an example for $n=4$, $m=5$, $k=2$ and $s=\lambda=1$. 
We identify $\mathbb{F}_{32} = \mathbb{F}_2[\alpha]$ with $\alpha^5=\alpha^2+1$ and consider the Gabidulin code with generator matrix
$$ \bG = \begin{pmatrix} 1 & \alpha & \alpha^2 & \alpha^3 \\
1 & \alpha^2 & \alpha^4 & \alpha^3+\alpha\end{pmatrix},$$
which can correct up to $1$ error.
We further need a $\bS \in \text{GL}_2({32})$ and a $\bP \in \text{GL}_5 (2)$ and $\bX$ of rank $s \leq \lambda = 1$, so we take
$$ \bS = \begin{pmatrix} 1 & \alpha \\
0 & 1
\end{pmatrix},$$
and for simplicity
$$ \bP = \begin{pmatrix}
0 & 0 & 1 & 0 & 0 \\
1 & 0 & 0 & 0 & 0 \\
0 & 1 & 0 & 0 & 0 \\
0 & 0 & 0 & 0 & 1 \\
0 & 0 & 0 & 1 & 0
\end{pmatrix},$$
and
$$ \bX = \begin{pmatrix}
1 \\
\alpha^2 +1 
\end{pmatrix}$$
We compute that
$$ \bG' = \bS [\bX \mid \bG] \bP =  \begin{pmatrix} \alpha+1 & \alpha^3+\alpha & \alpha^3 + \alpha +1 & \alpha^4 + \alpha^3 + \alpha^2 & 1 \\
1 & \alpha^2 & \alpha^2+1 & \alpha^3+\alpha & \alpha^4
\end{pmatrix}.$$
The public key is the pair
$$ \mathcal{P} = ( \bG', 1),$$
the secret key is
$$ \mathcal{P} = (\bG, \bS, \bX, \bP).$$
We want to encrypt the message
$$\bm = (\alpha+1, \alpha^2+1).$$
We choose the error vector
$$ \be = ( \alpha^3+1, 0, \alpha^3+1, \alpha^3+1, 0 ), $$
and compute
$$ \bc = \bm \bG' + \be = (\alpha^3+1,\alpha^3+\alpha,\alpha^2+1,\alpha^3+\alpha^2+\alpha+1, \alpha^4+\alpha^3+1) .$$
To decrypt $\bc$, we compute 
$$\bc' = \bc \bP^{-1} =(\alpha^2+1, \alpha^3+1, \alpha^3+\alpha,\alpha^4+\alpha^3+1,\alpha^3+\alpha^2+\alpha+1),$$
and use the decoding algorithm of Gabidulin codes to get 
$$\bm\bS=(\alpha+1,\alpha+1), $$
and by multiplying with 
$$\bS^{-1}= \begin{pmatrix}
1 & \alpha \\ 0 & 1
\end{pmatrix}$$ 
we recover $\bm$.
\end{example}

\section{Code-based Signature Schemes}\label{sec:sign}
We give two approaches of building a code-based signature, one is following the hash-and-sign approach \cite{gpv} of the CFS scheme \cite{cfs}, which can also be adapted to the rank metric and the second one is through  code-based ZK protocols, which can be turned into signature schemes via the Fiat-Shamir transform. 
\medskip

We later discuss their benefits and limitations, but in summary, hash-and-sign schemes often suffer from large public keys and distinguishing attacks, while signature schemes from ZK protocols suffer from large signature sizes. In Section \ref{sec:new} we will then present the novel submission to the additional standardization process of NIST and the respective solutions to these drawbacks.  
\subsection{Hash-and-Sign}\label{sec:hash}

 Hash-and-sign schemes follow directly the usual approach of transforming a public-key encryption scheme into a signature scheme.

In fact, a public key encryption scheme relies on a trapdoor function $f$, which is easy to compute and hard to invert. For the public key encryption scheme one applies $f$ on a message $m$ and gets the cipher $c=f(m)$. In order  to recover the message, an attacker has to invert $f$, which is mathematically a hard problem. However, the constructor with the secret key has access to $f^{-1}.$
\medskip

Similarly, in a signature scheme, one can use the same trapdoor function $f$, or equivalently the hard problem of computing $f^{-1}$. However, only the signer should have access to the secret key and be able to sign in her name, thus,  upon a message $m$ the signer computes  the signature $\sigma=f^{-1}(m)$ and everyone can verify the signature as $f(\sigma)=m$. For an impersonator, however, to find a valid signature for a message is difficult.
\medskip

We present the first such code-based hash-and-sign scheme, CFS  \cite{cfs},   and its rank-metric counterpart RankSign \cite{NISTRankSign}. 

\subsubsection{CFS Scheme}
We present the CFS scheme as framework in  Table \ref{cfstable}.

In the CFS scheme, one starts with a message $\bm$ to sign, and hopes that the hash of this message is the syndrome of a low weight vector, i.e., 
$\mathsf{Hash}(\bm)=\be\bH^\top$ for $\text{wt}_H(\be) \leq t.$

However, not many vectors are syndromes of low weight vectors.

 \begin{exercise}
    Show that in order for any vector to be a syndrome of a vector of weight up to $(d-1)/2$, we require a perfect code. 
\end{exercise}

Since $\mathsf{Hash}(\bm)$ is very likely not a syndrome of a vector of weight up to $t$, one introduces a counter $i.$ That is, one checks whether $\mathsf{Hash}(\bm,i) = \be\bH^\top$ for some $\be$ of weight up to $t$, and if this is not the case one chooses a different $i.$

For certain codes, this requires many iterations, which makes  the signing process slow.

Thus, the authors of \cite{cfs} propose the use of the only family of codes, which is suitable for such an approach, namely high rate Goppa codes. In fact, high rate Goppa codes provide the existence of such error vectors for a non-negligible proportion of syndromes. 

Unfortunately, the use of high rate Goppa codes is not safe, due to the distinguisher in \cite{high}. 
Note that this distinguisher does not break the CFS scheme in general, as it only proves that one of the two problems to which the security of the CFS scheme reduces can be solved in polynomial time. 
\medskip

In the key generation process, one chooses a parity-check matrix $\bH \in \mathbb{F}_2^{(n-k) \times n}$ of a binary code that can efficiently correct $t$ errors. One then hides the parity-check matrix as in the Niederreiter framework, by choosing an $n \times n$ permutation matrix $\bP$  and computing $\bH' = \bH\bP.$ The public key is then given by $\mathcal{P}= (\bH',t)$ and the secret key by $\mathcal{S}= (\bH,\bP)$.
\medskip

In the signing process, given a message $\bm$, one first chooses randomly $i$ and uses the decoding algorithm of $\mathcal{C}$ to find $\be$, such that $\text{wt}_H(\be) \leq t$ and $$\be\bH^\top = \mathsf{Hash}(\bm,i),$$ if possible. The signature is then given by $\sigma= (i, \be\bP).$

In the verification, the verifier checks that $\text{wt}_H(\be\bP) \leq t$ and if $$\be \bP \bH'^\top = \mathsf{Hash}(\bm, i).$$
\medskip

Recall that $\mathsf{Hash}$ is a publicly known hash function. 

        \renewcommand{\arraystretch}{1.5}
\begin{table}[h!]\small
\caption{CFS}
\label{cfstable}
\centering
\begin{tabular}{p{5.4cm}p{1cm}p{5cm}}
\hline
\textsf{PROVER} & & \multicolumn{1}{r}{\textsf{VERIFIER}}\\
\hline
\hline
\multicolumn{1}{l}{KEY GENERATION} & & \\
\hline
Choose a parity-check matrix $\bH \in \mathbb{F}_2^{(n-k) \times n}$ of $\mathcal{C}$, with error correction capacity $t$  & & \\
Choose an $n \times n$ permutation matrix $\bP$   & & \\ Compute  $\bH' = \bH\bP.$ The public key is then given by $\mathcal{P}= (\bH',t)$  & & \\ and the secret key by $\mathcal{S}= (\bH,\bP)$   & & \\
& $\xlongrightarrow{\mathcal{P}}$  &\\
\hline
\multicolumn{1}{l}{SIGNING} \\
 \hline 
Given a message $\bm$, choose random $i$  & & \\ Use the decoding algorithm of $\mathcal{C}$ to find $\be$, with  $\text{wt}_H(\be) \leq t$ and  $\be\bH^\top =  \mathsf{Hash}(\bm, i)$   & & \\ Sign as  $\sigma= (i, \be\bP)$ & & \\
&$\xlongrightarrow{m,s}$&\\
\hline
& & \multicolumn{1}{r}{VERIFICATION}  \\
\hline
& & \multicolumn{1}{r}{\strut \parbox[t]{5cm}{\raggedleft Check if $\text{wt}_H(\be\bP) \leq t$ and if $\be\bP \bH'^\top = \mathsf{Hash}(\bm, i).$\strut}} \\
\hline
\end{tabular}
\label{table:CFS}
\end{table}

\begin{exercise}
Show that $\be\bP \bH'^\top = \mathsf{Hash}(\bm,i).$
\end{exercise}

\begin{remark}

The signing time is inversely related to the proportion of vectors, which are syndromes of error vectors of weight $t\leq \frac{d-1}{2}$ and this proportion scales
badly with the error correction capacity of the code.

\end{remark}

The benefits of the hash-and-sign approach is that the signature is a single vector and thus quite small. 
\medskip

The public key on the other hand, is, as in the McEliece framework, a scrambled secret parity-check matrix, and thus of size $(n-k)k$ bits. 
\medskip

Additionally, the schemes can be vulnerable to distinguishers, i.e., an attacker might retrieve the secret code, as seen in \cite{cfs}.  
\medskip

\begin{example}
    Let us consider also here a small toy example. Let $\mathbb{F}_{8}=\mathbb{F}_2[\alpha]$ and $\alpha^3=\alpha+1$. 
    
    Let us consider the Goppa polynomial $$g(x)=x^2+x+1$$ and the evaluation points $$1,\alpha, \alpha+1,\alpha^2,\alpha^2+\alpha,\alpha^2+1,\alpha^2+\alpha+1.$$ We can compute 
    \begin{align*}
        g(1)^{-1} & =1, \\ 
        g(\alpha)^{-1} &= g(\alpha+1)^{-1}=\alpha^2, \\ 
        g(\alpha^2)^{-1} & =g(\alpha^2+1)^{-1}=\alpha^2+\alpha, \\ 
        g(\alpha^2+\alpha)^{-1} &= g(\alpha^2+\alpha+1)^{-1}=\alpha.
    \end{align*}
    Then, 
    \begin{align*} \tilde{\bH} & = \begin{pmatrix} 1 & 1 & 1 & 1  & 1 &1&1 \\ 1 & \alpha & \alpha+1 & \alpha^2 & \alpha^2+\alpha & \alpha^2+1 & \alpha^2+\alpha+1\end{pmatrix} \text{diag}(1, \alpha^2,\alpha^2, \alpha^2+\alpha, \alpha, \alpha^2+\alpha,\alpha )  \\ & = \begin{pmatrix}
       1 &  \alpha^2 & \alpha^2 &  \alpha^2+\alpha &  \alpha &  \alpha^2+\alpha & \alpha \\
       1 & \alpha+1 & \alpha^2+\alpha+1 & \alpha^2+1 & \alpha^2+\alpha+1 & \alpha+1 & \alpha^2+1
    \end{pmatrix}.
    \end{align*}
    Using the basis $\Gamma= \{1,\alpha,\alpha^2\}$, the parity-check matrix of the Goppa code is then 
    $$\bH= \begin{pmatrix}
     1 & 0 & 0 & 0 & 0  & 0 & 0 \\
    0 & 0 & 0 & 1 & 1  & 1 & 1 \\
     0 & 1 & 1 & 1 & 0  & 1 & 0 \\
      1 & 1 & 1 & 1 & 1 & 1 & 1 \\
       0 & 1 & 1 & 0 & 1  & 1 & 0 \\
        0 & 0 & 1 & 1 & 1  & 0 & 1 \\
    \end{pmatrix}.$$
    The Goppa code $\langle \bH \rangle^\perp$ has minimum distance at least $3$, and can thus correct at least $t=1$ error.
\medskip
    
    The prover chooses the permutation matrix $\bP$, permuting the first two columns and  publishes
      $$\bH'= \begin{pmatrix}
     0 & 1 & 0 & 0 & 0  & 0 & 0 \\
    0 & 0 & 0 & 1 & 1  & 1 & 1 \\
     1 & 0 & 1 & 1 & 0  & 1 & 0 \\
      1 & 1 & 1 & 1 & 1 & 1 & 1 \\
       1 & 0 & 1 & 0 & 1  & 1 & 0 \\
        0 & 0 & 1 & 1 & 1  & 0 & 1 \\
    \end{pmatrix}.$$
    Note that any syndrome of a weight 1 vector is simply given by one column of $\bH$. Thus, there exist 7 possible syndromes.
\medskip
    
    Given a message $\bm$ and a random $i=(1,0,1,1)$, the prover computes the hash of $(\bm,i)$. We assume that the hash function outputs $(1,0,1,0,0,1,0)$. 
    \medskip
    
    Unfortunately, this is not a syndrome of a weight one vector. The prover chooses a different $i$ and gets the hash $(1,0,0,1,0,0)$. Using the syndrome decoder of the Goppa code, the prover finds $$\be=(0,1,0,0,0,0,0)$$ and computes the signature  
    $$\sigma=(i, (1,0,0,0,0,0,0)).$$
\medskip
    
    The verifier checks that $\be\bP$ has indeed weight 1 and computes 
    $$\be\bP\bH'^\top =(1,0,0,0,0,0,0)\begin{pmatrix}
     0 & 1 & 0 & 0 & 0  & 0 & 0 \\
    0 & 0 & 0 & 1 & 1  & 1 & 1 \\
     1 & 0 & 1 & 1 & 0  & 1 & 0 \\
      1 & 1 & 1 & 1 & 1 & 1 & 1 \\
       1 & 0 & 1 & 0 & 1  & 1 & 0 \\
        0 & 0 & 1 & 1 & 1  & 0 & 1 \\
    \end{pmatrix}^\top = (1,0,1,0,0,1,0).$$
    The verifier accepts the signature as 
    $$\mathsf{Hash}(\bm,i)= (1,0,1,0,0,1,0).$$
\end{example}

The random $i$, is usually chosen as a seed, denoted by $\text{seed} \in \{0,1\}^\ell$.

\subsubsection{RankSign}

RankSign \cite{NISTRankSign}, as a framework, is the rank-metric analog of CFS. 
The authors  propose to use augmented LRPC codes over an extension field $\mathbb{F}_{q^m}$ and introduce a mixture of erasures and errors, which can be efficiently decoded. 
\medskip

In the key generation process,  instead of hiding the parity-check matrix $\bH$ of the LRPC code over $\mathbb{F}_{q^m}$ as usual, i.e., using $\bS\bH\bP$, where $\bS \in \text{GL}_{n-k}({q^m})$ and $\bP \in \text{GL}_n(q)$, we first add some random columns to $\bH$. This is similar to the scrambling used in the GPT system.

        \renewcommand{\arraystretch}{1.5}
\begin{table}[h!]\small
\caption{RankSign}
\centering
\begin{tabular}{p{5.4cm}p{1cm}p{5cm}}
\hline
\textsf{PROVER} & & \multicolumn{1}{r}{\textsf{VERIFIER}}\\
\hline
\hline
\multicolumn{1}{l}{KEY GENERATION} & & \\
\hline
Choose $\bS \in \text{GL}_{n-k}({q^m}), \bP \in \text{GL}_{n+t}(q),$  & & \\ Choose $ r, \ell \in \mathbb{N}, \bX \in \mathbb{F}_{q^m}^{(n-k) \times t'}$ & & \\ Choose $ \bH \in \mathbb{F}_{q^m}^{(n-k)\times n}$ a parity-check matrix of a LRPC code & & \\
Compute  $\bH' = \bS (\bX \mid \bH ) \bP$&  & \\
The keys are given by $\mathcal{S}= (\bS, \bP, \bX, \bH),$ & & \\ and $ \mathcal{P}= (\bH', \ell,r)$&  &\\
& $\xlongrightarrow{\mathcal{P}}$  &\\
\hline
\multicolumn{1}{l}{SIGNING} \\
 \hline 
 Choose $\tilde{\be} \in \mathbb{F}_{q^m}^t$ and a message $\bm$ && \\
Choose $ \text{seed} \in \{0,1\}^\ell$ & &  \\  Compute $\bm' =\mathsf{Hash}(\bm \mid \text{seed})$ && \\
Set $\bs' = \bm' (\bS^{-1})^\top - \tilde{\be}\bX^\top$  && \\
Find $\be'$, such that $\text{wt}_R(\be')=r$ and $\be'\bH^\top = \bs' $ & & \\
Set $\be=(\tilde{\be} \mid \be')(\bP^\top)^{-1}$ and $\sigma= (\be, \text{seed})$
&$\xlongrightarrow{\bm,\sigma}$&\\
\hline
& & \multicolumn{1}{r}{VERIFICATION}  \\
\hline
& & \multicolumn{1}{r}{\parbox[t]{5cm}{\raggedleft Check if $\text{wt}_R(\be)= r$ and if $\be\bH'^\top = \mathsf{Hash}(\bm, \text{seed})$\strut}} \\
\hline
\end{tabular}
\label{table:RS}
\end{table}

Let $\bS \in \text{GL}_{n-k}({q^m}), \bP \in \text{GL}_{n+t}(q)$ and $\bX \in \mathbb{F}_{q^m}^{(n-k) \times t'}.$ Typically one sets $t'=t$, but one could also use other choices.

Then, one hides $\bH$ by computing $\bH' = \bS (\bX \mid \bH ) \bP.$
\medskip

While $\bH'$ and some integer $\ell$ are publicly known, the secret key is given by $\bX, \bH, \bS, \bP.$

In the signing process, one first chooses randomly $\tilde{\be} \in \mathbb{F}_{q^m}^t$ and hashes a message $\bm$ and a seed, denoted by $\text{seed} \in \{0,1\}^\ell$   to get $\bm' = \mathsf{Hash}(\bm \mid \text{seed}) \in \mathbb{F}_{q^m}^{n-k}.$

Then one sets a syndrome $$\bs' = \bm' (\bS^{-1})^\top - \tilde{\be}\bX^\top$$ and  tries to syndrome decode this syndrome $\bs'$ using $\bH$.

If one succeeds, that is, there exists a $\be' \in \mathbb{F}_{q^m}^n$ of rank weight $r= t+r'$ and  such that $$\be'\bH^\top = \bs',$$ then one defines $$ \be= (\tilde{\be} \mid \be') (\bP^\top)^{-1}$$ and sets the signature 
$$\sigma= (\be, \text{seed}).$$

If not, this process needs to be repeated until one succeeds.
\medskip

In the verification, the verifier checks that $\text{wt}_R(\be)= r= t+r',$ and if $$\be\bH'^\top = \bm' = \mathsf{Hash}(\bm \mid \text{seed}).$$

\begin{exercise}
Show that $\be\bH'^\top = \bm'.$
\end{exercise}

We want to note here that this signature scheme was later attacked in \cite{attackranksign}. 

\subsection{Code-Based ZK Protocols}\label{sec:ZKID}

As described in Section \ref{sec:fiat}, digital signature schemes can be constructed from a ZK protocol using the Fiat-Shamir transform \cite{fiatshamirtransform}.  In this section, we present two famous ZK protocols for this purpose, namely the scheme by Cayrel, V\'eron and El Yousfi Alaoui (CVE) \cite{cve} and scheme by Aguilar, Gaborit and Schrek (AGS) \cite{ags}. 
\medskip

The CVE scheme \cite{cve} is an improvement of Stern's \cite{SternZK} and V\'eron's \cite{veron} protocols, which are both based on the hardness of decoding a random binary code \cite{berlekamp}. 
The CVE scheme relies on codes over a large finite field. With this choice, the cheating probability for a single round is reduced from $2/3$ of Stern's 3-pass scheme to $\frac{q}{2(q-1)}$. 
\medskip

The idea of the scheme is the following: the secret key is given by a random error vector of weight $t$ and the public key is a parity-check matrix together with the syndrome of this error vector. The challenges are requesting either a response that shows that the error vector has indeed weight $t$ or a response that shows that the error vector solves the parity-check equations. 
\medskip

The scheme is of large interest, as it uses an actual random linear code,  which is possible since no decoding process is required. The security of this scheme, thus, fully relies on the hardness of decoding a random linear code and not on the indistinguishability of a secret code. 
\medskip

Let $\sigma$ be a permutation of $\{1, \ldots, n\}$ and for $\bv \in \left(\mathbb{F}_q^\star\right)^n$ and $\ba \in \mathbb{F}_q^n$ we denote by $$ \sigma_{\bv}(\ba) =\sigma(\bv) \star \sigma(a), $$ where $\star$ denotes the component-wise product.  

\begin{table}[h!]\small
\caption{CVE Scheme}
\centering
\begin{tabular}{p{5.4cm}p{1cm}p{5cm}}
\hline
\textsf{PROVER} & & \multicolumn{1}{r}{\textsf{VERIFIER}}\\
\hline
\hline
\multicolumn{1}{l}{KEY GENERATION} & & \\
\hline
Choose the parameters $q,n,k,t$ and a hash function $\mathsf{Hash}$ \\
Choose $\be \in B_H(t,n,q)$ and a parity-check matrix \\
$\bH \in \mathbb{F}_q^{(n-k) \times n}$. Compute the syndrome \\
$\bs = \be \bH^\top \in \mathbb{F}_q^{n-k}.$ \\
The public key is given by $\mathcal{P}= (\bH, \bs,t)$
&$\xlongrightarrow{\mathcal{P}}$&\\
\hline
 & & \multicolumn{1}{r}{VERIFICATION} \\
 \hline 
Choose $\bu\in \mathbb F_q^n$, a permutation $\sigma$, $\bv\in{(\mathbb{F}_q^{\times})}^n$  &  & \\
Set $c_0 = \mathsf{Hash}\big(\sigma,\bv,\bu\bH^\top\big)$  &  &\\
Set $c_1 = \mathsf{Hash}\big(\sigma_{\bv}(\bu), \sigma_{\bv}(\be)\big)$  &  &\\
 & $\xlongrightarrow{c_0,c_1}$  &\\
 &&\multicolumn{1}{r}{Choose $z\in \mathbb F_q^\star$}\\
&$\xlongleftarrow{z}$&\\
Set $\by =  \sigma_{\bv}(\bu+z\be)$ &&\\
&$\xlongrightarrow{\by}$&\\
& & \multicolumn{1}{r}{Choose $b\in \{0,1\}$}\\
& $\xlongleftarrow{b}$ &\\
If $b=0$, set $r=(\sigma,\bv)$ & & \\
If $b=1$, set $r=\sigma_{\bv}(\be)$ & & \\
&$\xlongrightarrow{r}$&\\
&&\multicolumn{1}{r}{If $b=0$, accept if}\\
&&\multicolumn{1}{r}{$c_0 = \mathsf{Hash}\big(\sigma,\bv,\sigma_{\bv}^{-1}(\by)\bH^\top-z \bs\big)$}\\
&&\multicolumn{1}{c}{or}\\
&&\multicolumn{1}{r}{If $b=1$, accept if $\mathrm{wt}_{\mathrm H}(\sigma_\bv(\be)) = t$ and}\\
&&\multicolumn{1}{r}{$ c_1 = \mathsf{Hash}\big(\by - z\sigma_{\bv}(\be),\sigma_{\bv}(\be)\big)$}\\
\hline
\end{tabular}
\label{table:CVE}
\end{table}

We now  show how the communication cost of this scheme is derived, following the reasoning of \cite{rest}.

In order to represent a vector of length $n$  and Hamming weight $t$ over $\mathbb F_q$, we can either use the full vector, which  requires $n\left\lceil \log_2(q)\right\rceil$ bits, or just consider its support, together with the ordered non-zero entries, resulting in $$t\big(\left\lceil\log_2(n)\right\rceil+\left\lceil\log_2(q-1)\right\rceil\big)$$ bits.
Thus the most convenient choice for a given set of parameters $n$, $t$ and $q$ is  $$\psi(n,q,t) = \min\{n\left\lceil \log_2(q)\right\rceil,t\big(\left\lceil\log_2(n)\right\rceil+\left\lceil\log_2(q-1)\right\rceil\big)\}.$$
Since random objects, such as the monomial transformation, are completely determined by the seed for the pseudo-random generator, they can also be compactly represented as such, whose length is denoted by $l_{\textsf{Seed}}$. Also the length of the hash values will be denoted by $l_\mathsf{Hash}$. 
Using the compression technique for $N$ rounds of the protocol we get the following average communication cost: 
$$l_{\mathsf{Hash}}+N \bigg(\left\lceil\log_2(q-1)\right\rceil+n\left\lceil\log_2(q)\right\rceil+1+l_\mathsf{Hash}+\frac{\psi(n,q,t)+l_{\textsf{Seed}}}{2}\bigg).$$
For the maximal communication cost, we take the maximum size of the response, and thus we obtain
$$l_\mathsf{Hash}+N \bigg(\left\lceil\log_2(q-1)\right\rceil+n\left\lceil\log_2(q)\right\rceil+1+l_\mathsf{Hash}+\max\{\psi(n,q,t)\hspace{1mm},\hspace{1mm}l_{\textsf{Seed}}\}\bigg).$$

Let us fix $t= \left\lfloor (d_H-1)/2\right\rfloor$, for $d_H$ denoting the minimum distance of the Gilbert-Varshamov bound.  
The authors of \cite{cve} have used the analysis due to Peters \cite{peters} to estimate the information set decoding complexity, and have proposed two parameters sets:
\begin{enumerate}
    \item[-] $q=256$, $n=128$, $k=64$, $t=49$, for 87-bits security, having a communication cost of 3.472 kB;
    \item[-] $q=256$, $n=208$, $k=104$, $t=78$, for 128-bits security, having a communication cost of 43.263 kB.
\end{enumerate}

\begin{exercise}
Show  the zero-knowledge property and the completeness property for the CVE scheme.
\end{exercise}

An easy attempt for an impersonator would be to guess the challenge $b$ before sending the commitments.

Thus, the strategy if we guess $b=0$, would be to choose an error vector $\be'$, which satisfies the parity-check equations, that is $$\bs=\be'\bH^\top,$$
and to forget about the weight condition. This can easily be achieved using linear algebra. 
We denote by $s_0$ the strategy for $b=0$, which in detail requires to choose randomly $\bu', \sigma'$ and $\bv'$ according to the scheme and to send the commitments $c_0' = \mathsf{Hash}(\sigma', \bv', \bu'\bH^\top)$ and a random $c_1'$.  When the impersonator received a $z \in \mathbb{F}_q^\star$, the impersonator now computes $\by'$ according to the cheating error vector $\be'$, i.e.,
$$\by' = \sigma'_{\bv'}(\bu'+z\be').$$
The impersonator wins, if the verifier now asks for $b=0$, since the verifier will check 
\begin{align*} 
c_0' & = \mathsf{Hash}(\sigma', \bv', {\sigma'_{\bv'}}^{-1}(\by') \bH^\top - z\bs) \\
& = \mathsf{Hash}(\sigma', \bv', {\sigma'_{\bv'}}^{-1}( \sigma'_{\bv'}(\bu'+z\be') ) \bH^\top - z\bs)  \\ 
& = \mathsf{Hash}(\sigma', \bv', (\bu'+z\be')  \bH^\top - z\bs)  \\ 
& = \mathsf{Hash}(\sigma', \bv', \bu'\bH^\top +z\be' \bH^\top - z\bs)  \\ 
& = \mathsf{Hash}(\sigma', \bv', \bu'\bH^\top +z\bs - z\bs).
\end{align*}
If the verifier asks for $b=1$, the impersonator looses. 

Whereas the strategy if we guess $b=1$, would be to choose an error vector $\be'$, which has the correct weight, i.e., $\text{wt}_H(\be')=t$, but does not satisfy the parity-check equations.
We denote by $s_1$ the strategy for $b=1$, which in detail requires to choose randomly $\bu', \sigma'$ and $\bv'$ according to the scheme and to send the commitments: a random $c_0'$ and $c_1'= \mathsf{Hash}( \sigma'_{\bv'}(\bu'), \sigma'_{\bv'}(\be'))$.  When the impersonator received a $z \in \mathbb{F}_q^\star$, the impersonator now computes $\by'$ according to the cheating error vector $\be'$, i.e.,
$$\by' = \sigma'_{\bv'}(\bu'+z\be').$$
The impersonator wins, if the verifier now asks for $b=1$, since the verifier will check  if $\text{wt}_H(\sigma'_{\bv'}(\be'))=t$ and 
\begin{align*} 
c_1' & = \mathsf{Hash}( \by' - z \sigma'_{\bv'}(\be'), \sigma'_{\bv'}(\be')) \\
& = \mathsf{Hash}( \sigma'_{\bv'}(\bu'+z\be') - z \sigma'_{\bv'}(\be'), \sigma'_{\bv'}(\be')) \\
& = \mathsf{Hash}( \sigma'_{\bv'}(\bu') + \sigma'_{\bv'}(z\be') - z \sigma'_{\bv'}(\be'), \sigma'_{\bv'}(\be')).
\end{align*}
If the verifier asks for $b=0$, the impersonator looses.

With this easy strategy, one would get a cheating probability of $1/2$, which just corresponds to choosing the challenge $b$ correctly. 
However,  by also guessing $z$ correctly one can  improve the above strategy.
\begin{proposition}
The cheating probability of the CVE scheme is  $\frac{q}{2(q-1)}.$
\end{proposition}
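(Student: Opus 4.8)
The plan is to establish the two bounds that together pin down the soundness error: first exhibit a concrete impersonation strategy, using no knowledge of the secret $\be$, that makes the verifier accept with probability exactly $\tfrac{q}{2(q-1)}$; then show, by rewinding, that any strategy accepted with strictly larger probability yields a solution of the underlying instance, i.e.\ an $\be'$ with $\be'\bH^\top=\bs$ and $\wtH(\be')=t$, which is assumed hard to find.

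\emph{Lower bound.} The cheater fixes in advance a guess $z^\ast\in\F_q^\times$ for the first challenge. By linear algebra it computes some $\tilde{\be}$ with $\tilde{\be}\bH^\top=\bs$, picks some $\hat{\be}$ with $\wtH(\hat{\be})=t$ (e.g.\ a $\{0,1\}$-vector), and picks $\bu,\sigma,\bv$ as in the protocol. It sends $c_0=\mathcal{H}(\sigma,\bv,\bu\bH^\top)$ honestly, but $c_1=\mathcal{H}\bigl(\sigma_{\bv}(\bu+z^\ast(\tilde{\be}-\hat{\be})),\,\sigma_{\bv}(\hat{\be})\bigr)$. On receiving $z$ it answers $\by=\sigma_{\bv}(\bu+z\tilde{\be})$. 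Then $\sigma_{\bv}^{-1}(\by)\bH^\top-z\bs=\bu\bH^\top+z(\tilde{\be}\bH^\top-\bs)=\bu\bH^\top$, so the $b=0$ check always passes; and if $z=z^\ast$ the response $\sigma_{\bv}(\hat{\be})$ passes the $b=1$ check, since $\wtH(\sigma_{\bv}(\hat{\be}))=t$ ($\sigma_{\bv}$ is monomial) and $\by-z\,\sigma_{\bv}(\hat{\be})=\sigma_{\bv}(\bu+z^\ast(\tilde{\be}-\hat{\be}))$ is exactly the first input recorded in $c_1$. Hence the cheater wins with probability $1$ when $z=z^\ast$ and with probability $\tfrac12$ otherwise, for a total of $\tfrac{1}{q-1}\cdot 1+\tfrac{q-2}{q-1}\cdot\tfrac12=\tfrac{q}{2(q-1)}$.

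\emph{Upper bound.} Fix the cheater together with all its coins, so its first message $(c_0,c_1)$ is determined, and assume $\mathcal{H}$ is collision resistant. For $z\in\F_q^\times$ let $a(z)\in\{0,1,2\}$ be the number of final challenges $b$ for which an accepting response then exists, and set $D=|\{z:a(z)=2\}|$. Since $z$ and $b$ are chosen independently and uniformly, the acceptance probability is $\tfrac{1}{2(q-1)}\sum_z a(z)\le\tfrac{2D+(q-1-D)}{2(q-1)}=\tfrac{D+q-1}{2(q-1)}$, so exceeding $\tfrac{q}{2(q-1)}$ forces $D\ge 2$. Pick $z_1\neq z_2$ with $a(z_1)=a(z_2)=2$; for each $i$ let $\by_i$ be the second-round message after challenge $z_i$ (it precedes $b$, hence is the same in both branches), and rewind to obtain accepting responses for $b=0$ and $b=1$ at each $z_i$. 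The two $b=0$ responses open $c_0$, so collision resistance forces the same $(\sigma,\bv)$ and the equality $\sigma_{\bv}^{-1}(\by_1)\bH^\top-z_1\bs=\sigma_{\bv}^{-1}(\by_2)\bH^\top-z_2\bs$; the two $b=1$ responses open $c_1$, so collision resistance forces the same vector $\mathbf{w}$ with $\wtH(\mathbf{w})=t$ and $\by_1-z_1\mathbf{w}=\by_2-z_2\mathbf{w}$. Applying the monomial map $\sigma_{\bv}^{-1}$ to the latter identity and substituting into the former gives $(z_1-z_2)\,\be'\bH^\top=(z_1-z_2)\,\bs$ for $\be':=\sigma_{\bv}^{-1}(\mathbf{w})$; since $z_1\neq z_2$ this yields $\be'\bH^\top=\bs$ and $\wtH(\be')=\wtH(\mathbf{w})=t$, a valid secret key, contradicting the hardness of decoding. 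Together with the lower bound this gives the claim.

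The step I expect to be delicate is the extraction: one must keep careful track that $\by$ depends on $z$ but not on $b$ (so the rewinding really produces a common $\by_i$ at each $z_i$), that collision resistance genuinely pins down $(\sigma,\bv)$, the vector $\mathbf{w}$, and the two fixed hash-inputs across both forks, and that it is the \emph{monomial}---not merely linear---nature of $\sigma_{\bv}$ that transports both the weight condition $\wtH(\mathbf{w})=t$ to $\wtH(\be')=t$ and the parity-check relation correctly. The counting argument and the lower-bound strategy are then routine.
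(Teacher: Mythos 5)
Your proof is correct, and it does more than the paper's. For the lower bound your strategy is essentially the paper's improved strategy $s_0'$ (the paper also gives the symmetric $s_1'$, but one suffices): satisfy the parity-check equations with a weight-unrestricted $\tilde{\be}$ so the $b=0$ branch always verifies, guess $z^\ast$ and precommit in $c_1$ to a weight-$t$ decoy so the $b=1$ branch verifies exactly when $z=z^\ast$, giving $\tfrac12+\tfrac12\cdot\tfrac1{q-1}=\tfrac{q}{2(q-1)}$; your bookkeeping (applying $\sigma_{\bv}$ consistently and using that it is a weight-preserving monomial map) matches the paper's computation. Where you genuinely depart from the paper is the upper bound: the paper stops after exhibiting the strategy (implicitly deferring optimality to the original CVE paper), whereas you add the special-soundness/rewinding argument showing that any cheater exceeding $\tfrac{q}{2(q-1)}$ must answer both $b$-challenges for two distinct $z$'s, and that the four accepting transcripts, under collision resistance of $\mathcal{H}$, pin down $(\sigma,\bv)$ and a common weight-$t$ vector $\mathbf{w}$ from which $\be'=\sigma_{\bv}^{-1}(\mathbf{w})$ with $\be'\bH^\top=\bs$ and $\wtH(\be')=t$ is extracted — contradicting the hardness of syndrome decoding. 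This buys you the "is" in the proposition (the paper's proof alone only shows the cheating probability is at least $\tfrac{q}{2(q-1)}$), at the cost of making the computational assumptions (collision resistance, SDP hardness) explicit; the only point worth stating more carefully is the averaging step, i.e.\ that a cheater whose overall success probability exceeds the bound has, for some fixing of its coins, conditional success exceeding the bound, which is what licenses fixing $(c_0,c_1)$ and the fact that $\by_i$ depends on $z_i$ but not on $b$ before rewinding.
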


\begin{proof}
We modify the easy strategies $s_i$, following \cite{cve}:

Let us denote by $s_0'$ the improved strategy on $s_0$, which works as follows: recall that $\be'$ is chosen such that the parity-check equations are satisfied but not the weight condition. Instead of randomly choosing the commitment $c_1'$, we choose a $z' \in \mathbb{F}_q^\star$ and a second cheating error vector $\tilde{\be}$ of weight $t$, we compute a $\tilde{\by}= \sigma'_{\bv'}(\bu'+z'\be') $ with this guess and  compute $$c_1' = \mathsf{Hash}(\tilde{\by} - z'\tilde{\be}, \tilde{\be}).$$ 
When we receive a $z$ from the verifier, we check if we made the correct choice, that is: if $z=z'$, we send the pre-computed $\tilde{\by}$, and if $z\neq z'$ we compute $\by' =\sigma'_{\bv'}(\bu'+z\be') $. 
If the verifier asks for $b=0$, we use the usual strategy of $s_0$ and will get accepted, as before.
If the verifier asks for $b=1$, we send as answer $\tilde{\be}$. If we have guessed correctly and $z=z'$, we will get accepted also in this case as
\begin{align*} 
c_1' & = \mathsf{Hash}( \tilde{b}\by - z\tilde{\be}, \tilde{\be}) 
\end{align*}
by definition.

Let us denote by $s_1'$ the improved strategy on $s_1$, which works as follows: recall that $\be'$ is chosen having the correct weight. Instead of randomly choosing the commitment $c_0'$, we choose a $z' \in \mathbb{F}_q^\star$ and compute a $\tilde{\by}= \sigma'_{\bv'}(\bu'+z'\be') $ with this guess and  compute  $$c_0'= \mathsf{Hash}(\sigma', \bv', \bu'\bH^\top + z'(\be'\bH^\top - \bs) ).$$
When we receive a $z$ from the verifier, we check if we made the correct choice, that is: if $z=z'$, we send the pre-computed $\tilde{\by}$, and if $z\neq z'$ we compute $\by' =\sigma'_{\bv'}(\bu'+z\be') $. 
If the verifier asks for $b=1$, we use the usual strategy of $s_1$ and will get accepted.
If the verifier asks for $b=0$, we send as answer $(\sigma',\bv')$. If we have guessed correctly and $z=z'$, we will get accepted also in this case as
\begin{align*} 
c_0' & = \mathsf{Hash}(\sigma', \bv', {\sigma'_{\bv'}}^{-1}(\by') \bH^\top - z\bs) \\
& = \mathsf{Hash}(\sigma', \bv', {\sigma'_{\bv'}}^{-1}( \sigma'_{\bv'}(\bu'+z'\be') ) \bH^\top - z\bs)  \\ 
& = \mathsf{Hash}(\sigma', \bv', (\bu'+z'\be')  \bH^\top - z\bs)  \\ 
& = \mathsf{Hash}(\sigma', \bv', \bu'\bH^\top +z'\be' \bH^\top - z\bs)  \\ 
& = \mathsf{Hash}(\sigma', \bv', \bu'\bH^\top +z'\bs - z\bs).
\end{align*}

Thus,  the probability that an impersonator following the strategy $s_i'$ will get accepted is given by
$$P(b=i) + P(b=1-i)\cdot P(z=z') = \frac{1}{2} + \frac{1}{2} \cdot \frac{1}{q-1} = \frac{q}{2(q-1)}, $$
which concludes this proof.
\end{proof}

The second ZK protocol we want to present is the scheme by Aguilar, Gaborit and Schrek \cite{ags}, which we will denote by AGS.  This scheme is constructed upon quasi-cyclic codes over $\mathbb F_2$. 
Let us consider a vector $\ba\in\mathbb F_2^{jk}$ divided into $j$ blocks of $k$ entries each, that is,
\[
\ba=\left(a^{(1)}_1,\ldots,a^{(1)}_{k}, \ldots, a^{(j)}_1,\ldots,a^{(j)}_{k}\right).
\]
Let $\rho_i^{(k)}$ denote a function that performs a block-wise cyclic shift of $\ba$ by $i$ positions, i.e.,
$$\rho_i^{(k)}(\ba)=\left( a^{(1)}_{1-i \mod k},\ldots, a^{(1)}_{k-i \mod k}, \ldots, a^{(j)}_{1-i \mod k},\ldots, a^{(j)}_{k-i \mod k} \right).$$

The idea is similar to that of the CVE scheme, but working with the generator matrix instead. 

The secret key consists of a message and an error vector, while the public key consists of an erroneous codeword and the generator matrix. The challenges either require the proof of the error vector having the correct weight or  of the knowledge of the message.

When performing $N$ rounds, the average communication cost is
$$l_{\mathsf{Hash}} + N\bigg(\left\lceil\log_2(k)\right\rceil+1+2l_{\mathsf{Hash}}+\frac{l_{\textsf{Seed}}+k+n+\psi(n,t,2)}{2}\bigg),$$
while the maximum communication cost is
$$l_{\mathsf{Hash}} + N\bigg(\left\lceil\log_2(k)\right\rceil+1+2l_{\mathsf{Hash}}+\max\{l_{\textsf{Seed}}+k\hspace{1mm},\hspace{1mm}n+\psi(n,t,2)\}\bigg).$$
In \cite{ags}, three parameters sets are proposed:
\begin{enumerate}
    \item[-] $n=698$, $k=349$, $t=70$, for $81$-bits security, having a communication cost of $2.5$ kB; 
    \item[-] $n=1094$, $k=547$, $t=109$, for $128$-bits security, with communication cost of $28$ kB.
\end{enumerate}

\begin{exercise}
Show  the zero-knowledge property and completeness for the AGS scheme.
\end{exercise}
 
 We remark that in a code-based ZK protocol   one does not require a code with an efficient decoding algorithm. Which stands in contrast to the requirements for many of the  code-based public-key encryption schemes. Thus, choosing a random code the security of such schemes is much closer related to the actual NP-hard problem of decoding a random linear code.

\begin{table}[h!]\small
\caption{AGS Scheme}
\centering
\begin{tabular}{p{5.4cm}p{1cm}p{5cm}}
\hline
\textsf{PROVER} & & \multicolumn{1}{r}{\textsf{VERIFIER}}\\
\hline
\hline
\multicolumn{1}{l}{KEY GENERATION} & & \\
\hline
Choose the parameters $n,k, t$ and a hash function $\text{Hash}$ \\
Choose $\bm \in \mathbb{F}_2^k$ and $\be \in B_H(t,n,2)$ and   \\
generator matrix $\bG \in \mathbb{F}_2^{k \times n}$. \\  Compute the erroneous codeword
$\bc = \bm \bG+\be \in \mathbb{F}_2^{n}$ \\
The public key is given by $\mathcal{P}= (\bG,\bc,t)$
&$\xlongrightarrow{\mathcal{P}}$&\\
\hline
 & & \multicolumn{1}{r}{VERIFICATION} \\
 \hline 
Choose $\bu\in \mathbb F_2^k$, a permutation $\sigma$  &  & \\
Set $c_0 = \mathsf{Hash}\big(\sigma\big)$  &  &\\
Set $c_1 = \mathsf{Hash}\big(\sigma (\bu\bG)\big)$ &  &\\
 & $\xlongrightarrow{c_0,c_1}$  &\\
 &&\multicolumn{1}{r}{Choose $z\in \{1, \ldots, k\}$}\\
&$\xlongleftarrow{z}$&\\
Set $c_2 = \mathsf{Hash}\big(\sigma(\bu\bG+ \rho_{z}^{(k)}(\be))\big)$ &&\\
&$\xlongrightarrow{c_2}$&\\
& & \multicolumn{1}{r}{Choose $b\in \{0,1\}$}\\
& $\xlongleftarrow{b}$ &\\
If $b=0$, set $r=(\sigma,\bu+\rho_z^{(k)}(\bm))$ & & \\
If $b=1$, set $r=(\sigma(\bu\bG), \sigma( \rho_{z}^{(k)}(\be)))$ & & \\
&$\xlongrightarrow{r}$&\\
&&\multicolumn{1}{r}{If $b=0$, accept if $c_0 = \mathsf{Hash}\big(\sigma\big)$ and}\\
&&\multicolumn{1}{r}{$c_2= \mathsf{Hash}\big((\bu+\rho^{(k)}_{z}(\bm))\bG+\rho^{(k)}_{z}(\bc)  \big)$}\\
\multicolumn{3}{r}{If $b=1$, accept if $\mathrm{wt}_{\mathrm H}( \rho_{z}^{(k)}(\be)) = t$}\\
&&\multicolumn{1}{r}{and $ c_1=\mathsf{Hash}\big(\sigma(\bu\bG)\big)$ and  }\\
&&\multicolumn{1}{r}{$c_2= \mathsf{Hash}\big(\sigma(\bu\bG)+\sigma( \rho_{z}^{(k)}(\be))\big)$}\\
\hline
\end{tabular}
\label{table:ags}
\end{table}
\newpage

Clearly, using any of  the two code-based ZK protocols presented above and the Fiat-Shamir transform one  immediately gets  a signature scheme.
~

 \newpage
\section{Security Analysis}\label{sec:security}

In the security analysis of a cryptographic scheme we make a difference between two main attack approaches: 
\begin{enumerate}
    \item structural attacks,
    \item non-structural attacks.
\end{enumerate}
A structural attack aims at exploiting the algebraic structure of the cryptographic system.

Whereas a non-structural attack tries to combinatorically recover the message or the secret key without exploiting any algebraic structure.
\medskip

For example the security of the McEliece and Niederreiter type of cryptosystems rely on two assumptions. The first one being   
\begin{center} \emph{The public code is not distinguishable from a random code.} \end{center}
A structural attack would usually aim at exactly this assumption, and try to recover the secret code, if the scrambled public version of it does not behave randomly.

Clearly, structural or algebraic attacks heavily depend on the chosen secret codes for the cryptosystem, if the system depends on an algebraic code that is efficiently decodable, and is not attacking the presented frameworks in general.
\medskip

Assuming that this first assumption is met, however, the   security of most code-based cryptosystems relies also on this second assumption
\begin{center} \emph{Decoding a random linear code is hard/ infeasible.} \end{center}
A non-structural attack on the McEliece cryptosystem would, thus, assume that the public code is in fact random, and rather try to decode this random code. 

\medskip

In general we also speak of attacks in terms of: \emph{key-recovery attacks}, where an attacker tries to recover the secret key (usually structural attacks), and \emph{message-recovery attacks}, where an attacker directly tries to decrypt the cipher without first recovering the secret key.
\medskip

Code-based cryptography is rapidly advancing and new cryptosystems are basing their security on novel problems from algebraic coding theory.
\medskip

In the following we list  the main problems used in cryptography and discuss their hardness.

\subsection{Problems from Coding Theory}

The most prominent problem in algebraic coding theory is the decoding problem:

 \begin{problem}{\textbf{Decoding Problem (DP)}} \label{DP}
Let $\mathbb{F}_q$ be a finite field and $k \leq n$ be positive integers.   Given $\bG \in \mathbb{F}_q^{k\times n}$, $\br \in \mathbb{F}_q^{n}$ and  $t \in \mathbb N$, is there a vector $\bm  \in \mathbb{F}_q^k$ and $\be \in \mathbb{F}_q^n$ of weight less than or equal to $t$ such that $\br=\bm\bG + \be$?
\end{problem}
 Note that the DP formulated through the generator matrix is equivalent to the syndrome decoding problem, which is formulated through the parity-check matrix.

\begin{problem}{\textbf{Syndrome Decoding Problem (SDP)}} \label{prob:SDP}
Let $\mathbb{F}_q$ be a finite field and $k \leq n$ be positive integers.   Given $\bH \in \mathbb{F}_q^{(n-k) \times n}$, $\bs \in \mathbb{F}_q^{n-k}$ and $t \in \mathbb N$, is there a vector $\be \in \mathbb{F}_q^n$ such that $\text{wt}_H(\be)\leq t$ and $\be \bH^\top = \bs$?
\end{problem}

These two problems are also equivalent to the Given Weight Codeword Problem:

\begin{problem}{\textbf{Given   Weight Codeword Problem (GWCP)}}\label{prob:GWCP}
\\ Let $\mathbb{F}_q$ be a finite field and $k \leq n$ be positive integers. Let $k  \leq n$ be positive integers. Given $\bH\in \mathbb{F}_q^{(n-k)\times n}$ and $w\in\mathbb{N}$, is there a vector $\mathbf c \in \mathbb{F}_q^n$ such that  $\text{wt}_H(\mathbf c) =w$  and $ \mathbf c\mathbf H^\top = \mathbf 0_{n-k}$? 
\end{problem}

\begin{theorem}
    The DP, SDP and GWCP are equivalent.
\end{theorem}

\begin{proof}
    Let us start with showing that the DP and SDP are equivalent.  For this we start with an instance of DP, i.e., $\bG,\br,t$. We can then transform this instance to an instance of the SDP. In fact, we can bring $\bG$ into systematic form, that is 
    $$\bG'= \begin{pmatrix} \text{Id}_k & \bA \end{pmatrix}$$ and immediately get a parity-check matrix for the same code 
    $$\bH= \begin{pmatrix}
        -\bA^\top & \text{Id}_{n-k}
    \end{pmatrix}.$$
    We can then multiply $\bH$ to the received vector $\br= \bm\bG + \be$, getting the syndrome 
    $$\bs= \br\bH^\top = \be\bH^\top.$$
    Hence, if we can solve the SDP on the instance $\bH, \bs,t$, thus finding $\be$, we have also solved DP. 

    On the other hand, given an instance of SDP, i.e., $\bH,\bs,t$, we can find an instance of DP. In fact, we can bring $\bH$ into systematic form and read of a generator matrix $\bG$ for the same code. 
    We can now solve $\bx\bH^\top=\bs$ and since this is a linear system of $n-k$ equations in $n$ unknowns, we get $N=q^k$ possible solutions for $\bx_1, \ldots,\bx_N$. Note that for each of the $q^k$ codewords $\bc_1, \ldots, \bc_N$, we have that $\bc_i+\be$ is a possible solution. Thus, each of the $q^k$ solutions $\bx_i$ correspond to some $\bc_i+\be$. Hence, any of the solutions $\bx_i$ can be used as received vector $\br$ and we have recovered an instance of DP, as $\bG,\br,t$. Hence, solving DP, i.e., finding $\be$, also solves the SDP instance.

    Finally, it is enough to show that DP and SDP are also equivalent to  GWCP. 

    Given an instance of DP, i.e., $\bG,\br,t$ we can add $\br$ as a row to the generator matrix, getting $$\bG'=\begin{pmatrix} \bG \\ \br \end{pmatrix}.$$ Note that the code generated by $\bG'$ is also generated by 
    $$\begin{pmatrix}
         \bG \\ \be
    \end{pmatrix}, $$ as $\br= \bm\bG+\be$. The  new code of dimension $k+1$ has now as lowest weight codeword $\be$ of weight $t$. Hence, we can compute the corresponding parity-check matrix $\bH'$ and solving the GWCP  on the instance $\bH',t$ we recover the solution $\be$ to the DP instance. 
    
    On the other hand, given an instance $\bH,w$ of GWCP, we can define an instance of SDP, by taking the same parity-check matrix and setting the syndrome $\bs=\bz$. Thus, a solver for SDP, searching for a weight $w$ vector $\be$ with $\be\bH^\top=\bz$ also solves the GWCP instance.
\end{proof}

These three equivalent problems are the main problems used for code-based cryptography and will thus be the main focus of the survey.
In the next section, we show that the DP,SDP and GWCP are NP-complete \cite{berlekamp,barg}.
\medskip

There are, however, also other hard problems in coding theory. 
Recall from Section \ref{sec:prelim}, that there are several notions of code equivalence in the Hamming metric. In the lightest version, we ask for two codes to be permutation equivalent.

\begin{problem}[Permutation Equivalence Problem (PEP)]

Given $\bG,\bG'\in \mathbb{F}_{q}^{k\times n}$,  find $\varphi   S_n$, such that $\varphi(\langle \bG \rangle)= \langle \bG'\rangle.$
\end{problem}
This problem is clearly contained in the linear equivalence problem. 
\begin{problem}[Linear Equivalence Problem (LEP)]

Given $\bG,\bG'\in \mathbb{F}_{q}^{k\times n}$,  find $\varphi \in(\mathbb{F}_q^\star)^n \rtimes S_n$, such that $\varphi(\langle \bG \rangle)= \langle \bG'\rangle.$
\end{problem}

On the other hand, we can also ask for a subcode-equivalence. 
\begin{problem}[Permuted Kernel Problem (PKP)]
Given $\bG \in \mathbb{F}_q^{k\times n}, \bH' \in \mathbb{F}_q^{(n-k')\times n}$ find a permutation matrix $\bP$ such that $\bH'(\bG\bP)^\top=\bz$.
\end{problem}
 
 This problem has first been introduced by Shamir in \cite{pkp} and was formulated through parity-check matrices, thus the name \emph{permuted kernel}. In \cite{paolopkp} it has been observed, that the formulation of \cite{pkp} is indeed equivalent to the subcode-equivalence problem.

 \begin{problem}[Subcode Equivalence Problem (SEP)]
 Given $\bG \in \mathbb{F}_q^{k \times n}, \bG' \in \mathbb{F}_{q}^{k'\times n}$,  find permutation matrix  $\bP$ such that  $\langle \bG' \rangle \subset \langle \bG\bP \rangle.$
 \end{problem}
 \begin{exercise}
     Show that PKP is equivalent to SEP.
 \end{exercise}

In the following, we will thus only use the subcode equivalence formulation, also for PKP.
\medskip

 There also exists a relaxed version on PKP, which only asks to find a subcode of dimension 1. 
\begin{problem}[Relaxed PKP]
Given $\bG \in \mathbb{F}_{q}^{k\times n}$,  $\bG' \in \mathbb{F}_q^{k' \times n}$, find $\bx \in \mathbb{F}_q^{k}$ and a permutation matrix  $\bP$ such that $\bx\bG\bP \in \langle \bG' \rangle.$
\end{problem}

Since PKP only asks for permutation equivalence it contains PEP and clearly,  PKP contains the Relaxed PKP.
\medskip

The different code equivalence problems have a strong relation to the graph isomorphism problem and live in different complexity classes, which we will exploit in the next section.   
 
\medskip

Clearly, one can also consider the decoding problem or the code equivalence problem in a different metric.

\medskip

Let us start with the Rank-metric analogue of the SDP. 
\begin{problem}[Rank SDP]
Let $\mathbb{F}_{q^m}$ be a finite field and $k \leq n$ be positive integers.   Given $\bH \in \mathbb{F}_{q^m}^{(n-k) \times n}$, $\bs \in \mathbb{F}_{q^m}^{n-k}$ and $t \in \mathbb N$, is there a vector $\be \in \mathbb{F}_{q^m}^n$ such that $\text{wt}_R(\be)\leq t$ and $\be \bH^\top = \bs$?
\end{problem}
Again, Rank SDP is equivalent to Rank DP or Rank GWCP, as the equivalence is independent of the metric. 
In \cite{randomred} the authors provide a randomized reduction from the SDP to Rank SDP. While this gives great evidence of the hardness of  the Rank SDP, it remains one of the largest open problems in code-based cryptography whether Rank SDP is NP-complete or not. 
\newpage

We do get a different problem, however, when considering $\mathbb{F}_q$-linear codes, i.e., matrix codes. 

\begin{problem}[MinRank Problem]
Given $\bG_1, \ldots, \bG_k \in \mathbb{F}_q^{m \times n}$ $t \in \mathbb{N}$ and $\bR \in \mathbb{F}_q^{m \times n},$ find $\bE \in \mathbb{F}_q^{m \times n}$ of rank at most $t$, such that $$\bR= \lambda_1\bG_1+ \cdots + \lambda_k\bG_k+ \bE,$$ for some $\lambda_1, \ldots, \lambda_k \in \mathbb{F}_q.$
\end{problem}

The MinRank problem is simply the DP for $\mathbb{F}_q$-linear codes in the rank metric and clearly equivalent to the respective SDP and GWCP. Note that unlike the Rank SDP, dealing with $\mathbb{F}_{q^m}$-linear codes, the MinRank problem is known to be NP-complete. We will see the proof in the next section and first cover some more hard problems. 

\begin{problem}[Lee SDP]
Let $\mathbb{F}_{p}$ be a prime field and $k \leq n$ be positive integers.   Given $\bH \in \mathbb{F}_{p}^{(n-k) \times n}$, $\bs \in \mathbb{F}_{p}^{n-k}$ and $t \in \mathbb N$, is there a vector $\be \in \mathbb{F}_{p}^n$ such that $\text{wt}_L(\be)\leq t$ and $\be \bH^\top = \bs$?
\end{problem}
The Lee SDP (again equivalent to Lee DP and Lee GWCP) has been proven to be NP-complete in \cite{leeNP}. Thus, marking the Lee metric as a promising alternative for the Hamming metric.

\begin{problem}[Restricted SDP]
Let $\mathbb{F}_{p}$ be a prime field, $g \in \mathbb{F}_p$ have prime order $z$ and define $$\mathbb{E}= \{g^i \mid i \in \{0, \ldots, z-1\}\}.$$ Let $k \leq n$ be positive integers.   Given $\bH \in \mathbb{F}_{p}^{(n-k) \times n}$ and $\bs \in \mathbb{F}_{p}^{n-k}$, is there a vector $\be \in \mathbb{E}^n$ such that   $\be \bH^\top = \bs$?
\end{problem}

The Restricted SDP is not exactly the SDP with a different metric, but rather than asking for $\be$ to have a certain weight, the Restricted SDP asks for all entries of $\be$ to live in a restricted set $\mathbb{E}.$ Hence, we keep the linear condition $\be\bH^\top=\bs$ and exchanged the non-linear constraint $\text{wt}(\be) \leq t$ with $\be \in \mathbb{E}^n$. In the next section, we give a proof on the NP-hardness of the Restricted SDP.

\subsection{NP-completeness}\label{sec:NP}
 In this section, we give the definitions of several complexity classes and the techniques in order to show that a problem belongs to such complexity class. We then show that DP (and thus also SDP and GWCP) are NP-complete. 
We also provide the reduction of PEP to graph isomorphism.

Let us start with a small introduction to complexity theory.

Let $\mathcal{P}$ denote a problem. In order to estimate how hard it is to solve $\mathcal{P}$ we have two main complexity classes.  

\begin{definition}
 P denotes the class of problems that can be solved by a deterministic Turing machine in polynomial time.
\end{definition}
 The concept of deterministic and non-deterministic Turing machines will exceed the scope of this chapter, just note that "can be solved by a deterministic Turing machine in polynomial time" is the same as our usual "can be solved in polynomial time". 
 
 \begin{example}
Given a list  $S$ of $n$ integers and an integer $k$, determine whether there is an integer $s \in S$  such that $s>k$? Clearly, this  can be answered by going through the list and checking for each element whether it is greater than $k$, thus it has running time at most $n$ and this problem is in P.
 \end{example}
 
 \begin{definition}
NP denotes the class of problems that can be solved by a non-deterministic Turing machine in polynomial time.
 \end{definition}
 Thus, in contrary to the popular belief that NP stands for non-polynomial time, it actually stands for non-deterministic polynomial time. The difference is important: all problems in P live inside NP!
 
 To understand NP better, we might use the equivalent definition: 
A problem $\mathcal{P}$ is in NP if and only if one can check that a candidate is a solution to $\mathcal{P}$ in polynomial time. 

The example from before is thus also clearly in NP, since if given a candidate $a$, we can check in polynomial time whether $a \in S$ and whether $a>k.$

There are, however, interesting problems which are in NP, but we do not know whether they are in P.
Let us change the previous example a bit.

\begin{example}
 Given a list $S$  of $n$ integers and an integer $k$, is there a set of integers $T \subseteq S$, such that $\sum\limits_{t \in T} t = k$? 
 Since there are exponentially many subsets of $S$,  there is no known algorithm to solve this problem in polynomial time and thus, we do not know whether it lives in P. But, if given a candidate $T$, we can check in polynomial time if all $t\in T$ are also in $S$ and if $\sum\limits_{t \in T} t = k$, which clearly places this problem inside NP.
\end{example}

The most important complexity class, for us, will be that of NP-hard problems. In order to define this class, we first have to define polynomial-time reductions.

A polynomial-time reduction from $\mathcal{R}$ to $\mathcal{P}$  follows the following steps: \begin{enumerate}
    \item take any instance $I$ of $\mathcal{R}$,
    \item transform $I$ to an instance  $I'$  of $\mathcal{P}$ in polynomial time,
    \item assume that (using an oracle) you can solve $\mathcal{P}$ in the instance $I'$ in polynomial time, getting the solution $s'$,
    \item transform the solution $s'$ in polynomial time to get a solution $s$ of the problem $\mathcal{R}$ in the input $I$. 
\end{enumerate}
The existence of a polynomial-time reduction from $\mathcal{R}$ to $\mathcal{P}$, informally speaking,  means that if we can solve $\mathcal{P}$, we can also solve $\mathcal{R}$ and thus solving $\mathcal{P}$ is at least as hard as solving $\mathcal{R}$.

\begin{definition}
 $\mathcal{P}$ is NP-hard if for every problem $\mathcal{R}$ in NP, there exists a polynomial-time reduction from $\mathcal{R}$ to $\mathcal{P}.$
\end{definition}

Informally speaking this class contains all problems which are at least as hard as the hardest problems in NP.

\begin{example}
 One of the most famous examples for an NP-hard problem is the subset sum problem: given a set of integers $S$, is there a non-empty  subset $T\subseteq S$, such that $\sum\limits_{t\in T}t=0?$
\end{example}

We want to remark here, that NP-hardness is only defined for \emph{decisional} problems, that are problems of the form "decide whether there exists.." and not for \emph{computational/search} problems, that are problems of the form "find a solution..".
However, considering for example the SDP, in its decisional version, it asks whether there exists  error vector $\be$ with certain conditions. If one could solve the computational problem, that is to actually find such an error vector $\be$ in polynomial time, then one would also be able to answer the decisional problem in polynomial time. Thus, not being very rigorous, we call also the computational SDP NP-hard. \\

In order to prove that a problem $\mathcal{P}$ is NP-hard, fortunately we do not have to give a polynomial-time reduction to \emph{every} problem in NP: there are already problems which are known to be NP-hard, thus it is enough to give a polynomial-time reduction from an NP-hard problem to $\mathcal{P}$.

Finally, NP-completeness denotes the intersection of NP-hardness and NP.

\begin{definition}
 A problem $\mathcal{P}$ is NP-complete, if it is NP-hard and in NP.
\end{definition}


Another complexity class is given by AM, respectively MA. In this class live the problems that can be decided through an Arthur-Merlin protocol. (The only difference between AM and MA is whether Arthur or Merlin first sends a message). 

The protocol is similar to the ZK protocol we have seen before, with the prover Merlin and the verifier Arthur. The protocol is a 3 pass protocol and  does not need to have the ZK property. The main difference lies in the power of the two parties:  while Arthur still has polynomial computational power, Merlin has infinite computation power (indeed Merlin is a wizard).

We say that a problem $\mathcal{P}$ can be decided by the AM protocol if Merlin is able to convince Arthur that the answer upon the instance $I$ is ''no''. 
Merlin might be cheating, i.e., the answer to $I$ is actually ''yes'', and we allow for a soundness error of $\leq 1/3.$

\medskip

No NP-hard problem can live in AM, else we have  AM=PH (the polynomial hierarchy) and this implies a collapse of polynomial hierarchy.

\subsubsection{Decoding Problem}

Berlekamp, McEliece and van Tilborg famously proved in \cite{berlekamp}  the NP-completeness of the syndrome decoding problem for the case of binary linear codes equipped with the Hamming metric.
In \cite{barg}, Barg generalized this proof to an arbitrary finite field. Finally, the NP-hardness proof has been generalized to arbitrary finite rings endowed with an additive weight in \cite{leeNP}, thus including famous metrics such as the homogeneous and the Lee metric.

 In this section we provide the proof of NP-completeness for the SDP as in \cite{barg}. \\

\begin{problem}{\textbf{Syndrome Decoding Problem (SDP)}}  
Let $\mathbb{F}_q$ be a finite field and $k \leq n$ be positive integers.   Given $\bH \in \mathbb{F}_q^{(n-k) \times n}$, $\bs \in \mathbb{F}_q^{n-k}$ and $t \in \mathbb N$, is there a vector $\be \in \mathbb{F}_q^n$ such that $\text{wt}_H(\be)\leq t$ and $\be \bH^\top = \bs$?
\end{problem}

Note that the SDP is clearly in NP: given a candidate vector $\be$ we can check in polynomial time if $\text{wt}_H(\be)\leq t$ and if $\be\bH^\top=\bs.$
Thus, we are only left with showing the NP-hardness of the SDP through a polynomial-time reduction. For this, we choose the 3-dimensional matching (3DM) problem, which is a well-known NP-hard problem.
\begin{problem}{\textbf{3-Dimensional Matching (3DM) Problem}}
\\Let $T$ be a finite set and $U \subseteq T \times T \times T$. Given $U,T$, decide if there exists a set $W \subseteq U$ such that $\mid{W} \mid= \mid {T}\mid $ and no two elements of $W$ agree in any coordinate. \end{problem}

\begin{proposition}\label{prop:SDP}
The SDP is NP-complete.
\end{proposition}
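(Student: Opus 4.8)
The plan is to give a polynomial-time reduction from the 3-Dimensional Matching problem to the Syndrome Decoding Problem, following the classical argument of Berlekamp--McEliece--van Tilborg (as adapted by Barg to arbitrary $\mathbb{F}_q$, though the binary case already suffices here). Membership in NP has already been observed: given a candidate $\be$, one checks $\mathrm{wt}_H(\be) \leq t$ and $\be\bH^\top = \bs$ in polynomial time. So the work is entirely in establishing NP-hardness.

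First I would fix an instance of 3DM: a finite set $T = \{t_1,\ldots,t_m\}$ and a subset $U = \{u_1,\ldots,u_n\} \subseteq T\times T\times T$. The idea is to build an \emph{incidence matrix} $\bA \in \mathbb{F}_2^{3m \times n}$ whose columns are indexed by the triples in $U$ and whose rows are indexed by the $3m$ ``slots'' (the $m$ possible values in each of the three coordinates). The column corresponding to a triple $u_j = (t_a, t_b, t_c)$ has exactly three ones, in the rows for $t_a$ in the first block, $t_b$ in the second block, and $t_c$ in the third block. Then I set $\bH = \bA^\top$ (so $\bH \in \mathbb{F}_2^{n \times (3m)}$ wait --- I would instead directly take the parity-check matrix to be $\bA$ itself of size $(n-k)\times n$ with $n-k = 3m$, i.e. the code length is the number of triples $|U|$), take the target syndrome $\bs = (1,1,\ldots,1) \in \mathbb{F}_2^{3m}$ (the all-ones vector), and set the weight bound $t = m$. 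The key claim is then: there exists $\be \in \mathbb{F}_2^{n}$ with $\mathrm{wt}_H(\be) \leq m$ and $\be \bA^\top = \bs$ \emph{if and only if} the 3DM instance has a solution $W$. The forward direction: a vector $\be$ meeting the syndrome equation selects a set of columns summing to the all-ones vector; each of the $3m$ rows must be covered an odd (hence $\geq 1$) number of times, so at least $m$ columns are needed just to cover the $m$ rows of the first block (each column hits exactly one such row), forcing $\mathrm{wt}_H(\be) = m$ exactly and forcing the selected triples to be pairwise disjoint in every coordinate --- that is exactly a 3DM solution. The reverse direction is immediate: the characteristic vector of a matching $W$ has weight $m$ and sums columnwise to the all-ones vector.

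The main obstacle --- and the step requiring care --- is the ``only if'' direction of the equivalence: one must argue that the weight constraint $t = m$ together with the all-ones syndrome over $\mathbb{F}_2$ genuinely forces a \emph{partition}-type structure rather than merely an odd cover, and that cancellation (two columns sharing a row) cannot be exploited to cheat the weight bound. The counting argument above handles this: since every column has exactly one $1$ in each of the three blocks of $m$ rows, covering all $m$ rows of a single block requires $\geq m$ columns with no two sharing a block-row; combined with $t=m$ this pins everything down, and no parity cancellation can help because it would only \emph{increase} the number of columns needed. I would also note explicitly that the construction of $\bA$, of $\bs$, and of $t$ is clearly polynomial in $m + n = |T| + |U|$, and that recovering the 3DM solution $W$ from the SDP witness $\be$ (read off the support) is polynomial, completing the reduction. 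Since the SDP is both in NP and NP-hard, it is NP-complete.
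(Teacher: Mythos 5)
Your reduction is exactly the one the paper uses: the incidence matrix of the 3DM instance as parity-check matrix, the all-ones syndrome, weight bound $|T|$, and the counting argument that each column meets each of the three blocks in exactly one row, forcing a weight-exactly-$|T|$ solution with pairwise disjoint column supports, i.e.\ a matching. The only point to flag is that the paper states the SDP over an arbitrary $\mathbb{F}_q$, so its proof adds the (easy) observation that disjointness of the supports forces every nonzero entry of $\be$ to equal $1$, ruling out non-binary coefficients; your argument is phrased over $\mathbb{F}_2$, but the same no-cancellation counting extends verbatim to $\mathbb{F}_q$, so this is a presentational rather than a substantive gap.
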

For the proof of Proposition \ref{prop:SDP} we follow closely \cite{leeNP}.
\begin{proof}
We prove the NP-completeness by a polynomial-time reduction from the 3DM problem. For this, we start with a random instance of 3DM with $T$ of size $t$, and $U \subseteq T \times T \times T$ of size $u$. Let us denote the elements in $T= \{b_1, \ldots, b_t\}$ and in $U= \{\ba_1, \ldots, \ba_u\}$. From this we build the matrix  $\bH^\top \in \mathbb{F}_q^{u \times 3t}$, as follows: 
\begin{itemize}
    \item for $ j \in \{1, \ldots, t\}$, we set $h_{i,j} = 1$ if $\ba_i[1]= b_j$ and $h_{i,j}=0$ else,
    \item for $ j \in \{t+1, \ldots, 2t\}$, we set $h_{i,j} = 1$ if $\ba_i[2]= b_j$ and $h_{i,j}=0$ else,
    \item for $ j \in \{2t+1, \ldots, 3t\}$, we set $h_{i,j} = 1$ if $\ba_i[3]= b_j$ and $h_{i,j}=0$ else.
\end{itemize}
With this construction, we have that each row of $\bH^\top$ corresponds to an element in $U$, and has weight $3$. 
Let us set the syndrome $\bs$ as the all-one vector of length $3t$.
Assume that we can solve the SDP on the instances $\bH, \bs$ and $t$ in polynomial time. 
Let us consider two cases.

\underline{Case 1:} 
First, assume that the SDP solver returns as answer `yes', i.e., there exists an $\be \in \mathbb{F}_q^u$, of weight less than or equal to $t$ and such that $\be\bH^\top =\bs.$ 
\begin{itemize}
    \item 
We first observe that we must have  $\text{wt}_H(\be)=\mid{\text{supp}_H(\be)}\mid=t$. For this note that each row of $\bH^\top$ adds at most 3 non-zero entries to $\bs$. Therefore, we need to add at least $t$ rows to get $\bs$, i.e., $ \mid{\text{supp}_H(\be)}\mid \geq t$ and hence $\text{wt}_H(\be) \geq t$. As we also have $\text{wt}_H(\be)\leq t$ by hypothesis, this implies that $\text{wt}_H(\be)=\mid{\text{supp}_H(\be)}\mid=t$.  
\item Secondly, we observe that the weight $t$ solution must be a binary vector. For this we note that the matrix $\bH^\top$ has binary entries and has constant row weight three, and since $\mid{\text{supp}_H(\be)}\mid=t$, the supports of the $t$ rows of $\bH^\top$ that sum up to the all-one vector have to be disjoint.  Therefore, we get that the $j$-th equation from the system of equations $ \be \bH^\top =\bs$ is of the form $e_{i} h_{i,j} = 1$ for some $i \in \text{supp}_H(\be)$. Since $h_{i,j}=1$, we have $e_{i} =1$.
\end{itemize}
Recall from above that the rows of $\bH^\top$ correspond to the elements of $U$. The $t$ rows corresponding to the support of $\be$ are now a solution $W$ to the 3DM problem. This follows from the fact that the $t$ rows have disjoint supports and add up to the all-one vector, which implies that each element of $T$ appears exactly once in each coordinate of the elements of $W$.

\underline{Case 2:} 
Now assume that the SDP solver returns as answer `no', i.e., there exists no $\be \in \mathbb{F}_q^u$ of weight at most $t$ such that $\be\bH^\top=\bs.$
This response is now also the correct response for the 3DM problem. In fact, if there exists $W \subseteq U$ of size $t$ such that all coordinates of its elements are distinct, then $t$ rows of $\bH^\top$ should add up to the all one vector, which in turn means the existence of a vector $\be \in \{0,1\}^u$ of weight $t$ such that $\be\bH^\top=\bs.$

Thus, if such a polynomial time solver exists, we can also solve the 3DM problem in polynomial time.
\end{proof}

\begin{example}
Let us consider $T= \{A,B,C,D\}$ and 
$$ U= \{(D,A,B), (C,B,A), (D,A,B),(B,C,D),(C,D,A),(A,D,A),(A,B,C)\}.$$
Then the above construction would yield
$$\bH^\top= \left[\begin{array}{cccc| cccc | cccc}
0 & 0 & 0 & 1 & 1 & 0 & 0 & 0 & 0 & 1 & 0 & 0 \\
0 & 0 & 1 & 0 & 0 & 1 & 0 & 0 & 1 & 0 & 0 & 0 \\
0 & 0 & 0 & 1 & 1 & 0 & 0 & 0 & 0 & 1 & 0 & 0 \\
0 & 1 & 0 & 0 & 0 & 0 & 1 & 0 & 0 & 0 & 0 & 1 \\
0 & 0 & 1 & 0 & 0 & 0 & 0 & 1 & 1 & 0 & 0 & 0 \\
1 & 0 &0 & 0 & 0 & 0 & 0 & 1 & 1 & 0 & 0 & 0 \\
1 & 0 & 0 & 0 & 0 & 1 & 0 & 0 & 0 & 0 & 1 & 0 \\
\end{array}\right].$$
A solution to $\be\bH^\top=(1, \ldots, 1)$ would be $\be=(1,0,0,1,1,0,1)$ which corresponds to $$W= \{(D,A,B), (B,C,D), (C,D,A), (A,B,C)\}. $$
\end{example}

Notice that the very same construction is used also in the   problem of finding codewords with given weight.

\begin{proposition}\label{prop:GWCP}
The GWCP is NP-complete. 
\end{proposition}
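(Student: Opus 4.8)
The plan is to follow the template of Proposition~\ref{prop:SDP}: membership in NP is immediate, and NP-hardness is obtained by a polynomial-time reduction from the 3DM problem, reusing the very matrix constructed in that proof. For NP membership, note that given a candidate $\mathbf c\in\F_q^n$ one computes $\wtH(\mathbf c)$ and $\mathbf c\mathbf H^\top$ and checks $\wtH(\mathbf c)=w$ and $\mathbf c\mathbf H^\top=\mathbf 0_{n-k}$ in polynomial time.

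For the hardness part, start from a 3DM instance $T=\{b_1,\dots,b_t\}$, $U=\{\ba_1,\dots,\ba_u\}\subseteq T\times T\times T$, and build the same matrix $\bA\in\F_q^{u\times 3t}$ as in the proof of Proposition~\ref{prop:SDP}: the $i$-th row of $\bA$ is the weight-$3$ indicator of the triple $\ba_i$ inside the three disjoint copies of $T$. The counting argument used there yields the fact I will reuse: whenever $t$ rows of $\bA$ are combined with nonzero coefficients and the resulting vector has Hamming weight $3t$, those $t$ rows must have pairwise disjoint supports, and the corresponding $t$ triples form a solution of the 3DM instance; conversely, the $\{0,1\}$-indicator of a 3DM solution is such a combination (with all coefficients $1$), and the sum $\mathbf 1_{3t}$ it produces has weight $3t$. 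The main obstacle, compared with the SDP case, is that ``codeword of weight exactly $w$'' must be forced to arise from a matching: appending an all-ones row to $\bA$ and asking for a weight-$(t+1)$ codeword does \emph{not} work, because low-weight linear dependencies among the rows of $\bA$ create parasitic codewords of the same weight unrelated to any matching. I therefore add a padding gadget that pins down the support size.

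Concretely, set $N=3t+1$, let $n=Nu+3t$, and let $\mC$ be the $\F_q$-linear code of length $n$ with generator matrix
$$\bG=\bigl(\, \underbrace{\Id_u \mid \Id_u \mid \cdots \mid \Id_u}_{N\text{ copies}} \mid \bA \,\bigr),$$
which is in systematic form, so a parity-check matrix $\bH\in\F_q^{(n-u)\times n}$ of $\mC$ is computed from $\bG$ in polynomial time. The codewords of $\mC$ are exactly the vectors $\mathbf c_{\bx}=(\bx,\bx,\dots,\bx,\bx\bA)$ with $\bx\in\F_q^u$, and
$$\wtH(\mathbf c_{\bx})=N\,\wtH(\bx)+\wtH(\bx\bA).$$
Since $0\le\wtH(\bx\bA)\le 3t<N$, choosing the target weight $w=Nt+3t$ forces any codeword $\mathbf c_{\bx}$ of weight $w$ to satisfy $\wtH(\bx)=t$ \emph{and} $\wtH(\bx\bA)=3t$ (from $Nt\le Ns+w'<N(t+1)$ one gets $s=\wtH(\bx)=t$, hence $w'=\wtH(\bx\bA)=3t$); and by the fact recalled above such an $\bx$ exists if and only if the 3DM instance has a solution. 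Hence a GWCP oracle on input $(\bH,w)$ decides 3DM, and since $\bH$ and $w$ are produced in time polynomial in $u+t$, the reduction is complete. The construction and argument go through verbatim over any finite field; the only point needing care is the inequality $N>3t$, which makes the weight decomposition unique and thereby eliminates the parasitic codewords that defeat the naive reduction.
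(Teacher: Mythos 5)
Your proof is correct and takes essentially the same route as the paper: a reduction from 3DM reusing the matrix of Proposition~\ref{prop:SDP}, with a $(3t+1)$-fold repetition gadget forcing the weight decomposition $\wtH(\mathbf c)=(3t+1)\wtH(\bx)+\wtH(\bx\bA)$ and target weight $3t^2+4t$. The only cosmetic difference is that you describe the code via a systematic generator matrix and then derive $\bH$, whereas the paper writes down the parity-check matrix of the same code directly.
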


\begin{proof}
We again prove the NP-completeness by a reduction from the 3DM problem. To this end, we start with a random instance of 3DM, i.e., $T$ of size $t$, and $U \subseteq T \times T \times T$ of size $u$. Let us denote the elements in $T= \{b_1, \ldots, b_t\}$ and in $U= \{\ba_1, \ldots, \ba_u\}$. At this point, we build the matrix $\overline{\bH}^\top \in \mathbb{F}_q^{u \times 3t}$, like in the proof of Proposition \ref{prop:SDP}. 

Then we construct $\bH^\top \in \mathbb{F}_q^{(3tu+3t+u) \times(3tu+3t)}$ in the following way.
\begin{equation*}
    \bH^\top = \begin{pmatrix}
    \overline{\bH}^\top & \Id_u & \cdots & \Id_u \\
    -\Id_{3t} & \mathbf{0} & \cdots & \mathbf{0} \\
    \mathbf{0} & -\Id_u & & \mathbf{0} \\
    \vdots & & \ddots & \\
    \mathbf{0} & \mathbf{0} & & -\Id_u
    \end{pmatrix},
\end{equation*}
where we have repeated the size-$u$ identity matrix $3t$ times in the first row.
Let us set $w= 3t^2+4tM$ and assume that we can solve the GWCP on the instance given by $\bH, w$ in polynomial time. 
Let us again consider two cases. 

\underline{Case 1:}
In the first case the GWCP solver returns as answer `yes', since there exists a $\mathbf c \in \mathbb{F}_q^{3tu +3t+u}$, of weight equal to $w$, such that $\mathbf c\bH^\top =\mathbf{0}_{3tu+3t}$.
Let us write this $\mathbf c$ as
$$\mathbf c= (\overline{\mathbf c}, \mathbf c_0, \mathbf c_1, \ldots, \mathbf c_{3t}),$$
where $\overline{\mathbf c} \in \mathbb{F}_q^u, \mathbf c_0 \in \mathbb{F}_q^{3t}$ and $\mathbf c_i \in \mathbb{F}_q^u$ for all $i \in \{1, \ldots, 3t\}.$
Then, $\mathbf c\bH^\top = \mathbf{0}_{3tu+3t}$ gives the equations 
\begin{align*}
    \overline{\mathbf c}\overline{\bH}^\top - \mathbf c_0&= \mathbf{0}, \\
    \overline{\mathbf c} - \mathbf c_1 & = \bz, \\
    & \vdots \\
    \overline{\mathbf c} - \mathbf c_{3t} &= \bz.
\end{align*}
Hence, we have that $\mathrm{wt}_H(\overline{\mathbf c}\overline{\bH}^\top)= \mathrm{wt}_H(\mathbf c_0)$ and 
$$\mathrm{wt}_H(\overline{\mathbf c}) = \mathrm{wt}_H(\mathbf c_1) = \cdots = \mathrm{wt}_H(\mathbf c_{3t}).$$ 
Due to the coordinatewise additivity of the weight, we have that 
$$\mathrm{wt}_H(\mathbf c) = \mathrm{wt}_H(\overline{\mathbf c}\overline{\bH}^\top) + (3t+1)\mathrm{wt}_H(\overline{\mathbf c}).$$
Since $\mathrm{wt}_H(\overline{\mathbf c}\overline{\bH}^\top) \leq 3t$, we have that  $\mathrm{wt}_H(\overline{\mathbf c}\overline{\bH}^\top)$ and $\mathrm{wt}_H(\overline{\mathbf c})$ are uniquely determined as the remainder and the quotient, respectively, of the division of $\mathrm{wt}_H(\mathbf c)$ by $3t+1.$
In particular, if $\mathrm{wt}_H(\mathbf c) = 3t^2+4t,$ then we must have $\mathrm{wt}_H(\overline{\mathbf c})=t$ and $\mathrm{wt}_H(\overline{\mathbf c}\overline{\bH}^\top)=3t.$ 
Hence, the first $u$ parts 
of the found solution $\mathbf c$, i.e., $\overline{\mathbf c}$, give a matching for the 3DM in a similar way as in the proof of Proposition \ref{prop:SDP}. For this we first observe that $\overline{\mathbf c}\overline{\bH}^\top$ is a full support vector and it plays the role of the syndrome, i.e., $\overline{\mathbf c} \overline{\bH}^\top = (x_1,\ldots,x_{3t})$, where $x_i \in \mathbb{F}_q^\star$. Now, using the same argument as in the proof of Proposition \ref{prop:SDP}, we note that $\overline{\mathbf c}$ has exactly $t$ non-zero entries, which corresponds to a solution of 3DM. 

\underline{Case 2:}
If the solver returns as answer `no', this is also the correct answer for the 3DM problem. In fact, if there exists a $W \subseteq U$ of size $t$, such that all coordinates of its elements are distinct, then $t$ rows of $\overline\bH^\top$ should add up to the all one vector, which in turn means the existence of a $\mathbf e \in \{0,1\}^u$ of support size $t$ such that $x\be\overline{\bH}^\top=(x,\ldots,x)=: \mathbf c_0$ for any $x \in \mathbb{F}_q^\star$.
And thus, with $\overline{\bc} = x \be$ a solution $\mathbf c$ to the GWCP with the instances constructed as above should exist.

Thus, if such a polynomial time solver  for the GWCP exists, we can also solve the 3DM problem in polynomial time. 
\end{proof}

 We remark that the bounded version of this problem, i.e., deciding if a codeword $\bc$ with $\text{wt}_H(\bc) \leq w$ exists, can be solved by applying the solver of Problem \ref{prob:GWCP} at most $w$ many times.

 The computational versions of Problems \ref{prob:GWCP} and \ref{prob:SDP} are at least as hard as their decisional counterparts. 
Trivially, any operative procedure that returns a vector with the desired properties (when it exists) can be used as a direct solver for the above problems.

Note that the problem on which the McEliece system is based upon is not exactly equivalent to the SDP. In the McEliece system the parameter $t$ is usually bounded by the error correction capacity of the chosen code. Whereas in the SDP, the parameter $t$ can be chosen to be any positive integer. Thus, we are in a more restricted regime than in the SDP. 

\begin{problem}[Bounded SDP]
Let $\mathbb{F}_q$ be a finite field and $k \leq n$ be positive integers. Given $\bH\in \mathbb{F}_q^{(n-k)\times n}, \bs \in \mathbb{F}_q^{n-k}$ and $d\in\mathbb{N}$, such that every set of $d-1$ columns of $\bH$ is linearly independent and $w=\left\lfloor \frac{d-1}{2} \right\rfloor$, is there a vector $\mathbf e \in \mathbb{F}_q^n$ such that  $\text{wt}_H(\mathbf e) \leq w$  and $ \mathbf e\mathbf H^\top = \bs$? 
\end{problem}

 This problem is conjectured to be NP-hard \cite{barg} and in \cite{vardy} it is observed that this problem is not likely to be in NP, since  already verifying that any $d-1$ columns are linearly independent is not possible in polynomial time.

There have been attempts \cite{IKKR}   to transform the McEliece system in such a way that the underlying problem is closer or even exactly equivalent to the SDP, the actual NP-complete problem. 
This proposal has been attacked shortly after in \cite{terry}. However, using a different framework than the McEliece system, this is actually possible, for example by using the quasi-cyclic framework or the AF system.

We also want to remark here, that the following generalization of the GWCP, i.e., Problem \ref{prob:GWCP}, is also NP-complete \cite{vardy}:
\begin{problem}
Let $\mathbb{F}_q$ be a finite field and $k \leq n$ be positive integers. Given $\bH\in \mathbb{F}_q^{(n-k)\times n}$ and $w\in\mathbb{N}$, is there a vector $\mathbf c \in \mathbb{F}_q^n$ such that  $\text{wt}_H(\mathbf c) \leq w$  and $ \mathbf c\mathbf H^\top = \mathbf 0_{n-k}$? 
\end{problem} 
In \cite{vardy} this problem was called the minimum distance problem, since if one could solve the above problem, then by running such solver on $w \in \{1, \ldots, n\}$ until an affirmative answer is found, this would return the minimum distance of a code.

However, this does not mean that finding the minimum distance of a random code is NP-complete. In fact, with the above problem one can prove the NP-hardness of finding the minimum distance, but it is unlikely to be in NP, since in order to check whether a candidate solution $d$ really is the minimum distance of the code, one would need to go through (almost) all codewords. 

\subsubsection{Code Equivalence Problems}

Recall the different code equivalence problems, namely PEP, LEP, PKP and relaxed PKP:

\begin{problem}[Permutation Equivalence Problem (PEP)]

Given $\bG,\bG'\in \mathbb{F}_{q}^{k\times n}$,  find $\varphi \in  S_n$, such that $\varphi(\langle \bG \rangle)= \langle \bG'\rangle.$
\end{problem}

\begin{problem}[Linear Equivalence Problem (LEP)]

Given $\bG,\bG'\in \mathbb{F}_{q}^{k\times n}$,  find $\varphi \in(\mathbb{F}_q^\star)^n \rtimes S_n$, such that $\varphi(\langle \bG \rangle)= \langle \bG'\rangle.$
\end{problem}

\begin{exercise}
    Show that PEP $\subset$ LEP, by showing a reduction from PEP to LEP.
\end{exercise}

\begin{problem}{Permuted Kernel Problem (PKP)}
 Given $\bG \in \mathbb{F}_q^{k \times n}, \bG' \in \mathbb{F}_{q}^{k'\times n}$,  find permutation matrix  $\bP$ such that  $\langle \bG' \rangle \subset \langle \bG\bP \rangle.$
 \end{problem}

 \begin{exercise}
     Show that PEP $\subset$ PKP.
 \end{exercise}
 
\begin{problem}{Relaxed PKP} 
Given $\bG \in \mathbb{F}_{q}^{k\times n}$,  $\bG' \in \mathbb{F}_q^{k' \times n}$, find $\bx \in \mathbb{F}_q^{k}$ and a permutation matrix  $\bP$ such that $\bx\bG\bP \in \langle \bG' \rangle.$
\end{problem}

\begin{exercise}
    Show that Relaxed PKP $\subset$ PKP.
\end{exercise}

A graph $\mathcal{G}$ is usually denoted through its vertices $V$ and edges $E \subset V^2$, i.e., we write $\mathcal{G}=(V,E)$.
We say that $\mathcal{G}=(V,E)$ with $|V|=v, |E|=e$ has \emph{incidence matrix} $\bB \in \mathbb{F}_2^{e \times v}$, if $\bB$ has entries $b_{i,j}$ with 
$$b_{i,j} = \begin{cases} 1 & \text{ if } i=(\ell,j) \in E, \\ 0 & \text{ else.} \end{cases}$$
That is the rows correspond to the edges and the columns to the vertices. Considering the edge $(a,b)$, we set a 1 in the position $a$ and in the position $b.$

Since we consider undirected graphs, the condition $e=( \ell,j)\in E$ should be read as unordered tuple, i.e., also $e=(j,\ell) \in E.$
\begin{example}
    The graph $\mathcal{G}$ with vertex set $V=\{1,2,3,4\}$ and edge set $E=\{ (1,2),(2,3),(3,4)\}$  has incidence matrix
    $$\bB = \begin{pmatrix}
        1 & 1 & 0 & 0 \\ 
        0 & 1 & 1 & 0 \\
        0 & 0 & 1 & 1
    \end{pmatrix}.$$
    Clearly, there are different incidence matrices, depending on the ordering of the edges. 
\end{example}
As mentioned before, the code equivalence problems have a relation to the Graph Isomorphism problem, which states the following.
\begin{problem}[Graph Isomorphism (GI) problem]
    
 Given $\mathcal{G}=(V,E), \mathcal{G}'=(V,E')$, find $f: V \to V$, such that   
 $\{u,v\} \in E \leftrightarrow \{f(u), f(v)\} \in E'.$
\end{problem}

\begin{theorem}
    There exists a reduction from GI to PEP.
\end{theorem}
We follow the proof of \cite{rothGI}.
\begin{proof}
   Let $\mathcal{G}=(V,E)$ and $\mathcal{G}'=(V,E')$ be an instance of GI. 
   Let $\bD$ and $\bD'$ be two incidence matrices for $\mathcal{G},$ respectively $\mathcal{G}'$. We can transform this instance to an instance of PEP, by defining the two generator matrices in $\mathbb{F}_q^{e \times (3e+v)}$
   \begin{align*}
    \bG &= \begin{pmatrix} \text{Id}_e & \text{Id}_e & \text{Id}_e & \bD  \end{pmatrix},  \\ 
      \bG' &= \begin{pmatrix} \text{Id}_e & \text{Id}_e & \text{Id}_e & \bD'  \end{pmatrix}.
   \end{align*}
   Let us consider two cases.  In the first case, the answer to GI is ''yes``, as there exists a $f:V\to V$, such that $\{f(u),f(v)\} \in E'$ for all $ \{u, v\} \in E.$
   Thus, there exists a permutation of $V$ which maps one graph to the other and the two incidence matrices $\bD$ and $\bD'$ are such that 
   $$\bQ \bD \bP = \bD'$$ for some $e \times e$ permutation matrix  $\bQ$ and $v \times v$ permutation matrix $\bP.$
   Clearly, the codes generated by $\bG$ and $\bG'$ are then also permutation equivalent. 

   In the second case, we assume that the two graphs are not isomorphic, hence there exists no permutation on $V$, which maps $\mathcal{G}$ to $\mathcal{G}'.$
   Thus, no $v \times v$ permutation matrix $\bP$ and no $e \times e$ permutation matrix $\bQ$ exists for which $\bQ\bD\bP=\bD'.$

   The two codes generated by $\bG_1$ and $\bG_2$ are only permutation equivalent, if we can find $\bS \in \text{GL}_n(2)$ and $(3e+v) \times (3e+v)$ permutation matrix $\bP$ such that 
   $$\bS\bG \bP=\begin{pmatrix} \bS & \bS & \bS & \bS\bD \end{pmatrix}\bP =\bG'.$$
   Note that  the first $3e$ columns of $\bS\bG$ consist
of all unit vectors of length $e$, each appearing exactly three times.
Hence, the first $3e$ columns of $\bG_2$ are obtained by permuting the first
$3e$ columns of $\bS\bG$ and thus,  we also have
the permutation matrix 
$\bP = \text{diag}(\bS^{-1}, \bS^{-1}, \bS^{-1}, \bT)$, where $\bT$
is a $v \times v$ permutation matrix. Hence, if such $\bS,\bP$ exist, we must have $\bD' = \bS\bD\bT$, which is against the assumption that $\mathcal{G} $ and $\mathcal{G}'$ are not isomorphic. 

\end{proof}
Due to this result, we know that PEP (and thus also LEP) are at least as hard as GI.

  Since PKP is a subcode-equivalence problem it is equivalent to the subgraph isomorphism problem and hence NP-complete \cite{subgraph}.
  However, the hardness of the relaxed version is not known.

  \begin{problem}[Open Problem]
How hard is Relaxed PKP?    
\end{problem}




Using an Arthur-Merlin protocol, we can show that any code-equivalence problem (PEP, LEP or rank-metric) are not NP-hard (as we believe the polynomial hierarchy does not collapse). We follow the proof given in \cite{rothGI}.

\begin{theorem}
    Assuming PH $\neq AM$, code-equivalence is not NP-hard. 
\end{theorem}
\begin{proof}
To show this, we construct the 3-pass Arthur-Merlin protocol. 
Both parties see the instance $(\mathcal{C}_1, \mathcal{C}_2)$ and Merlin wants to convince Arthur, that the two codes are not equivalent. 

Arthur chooses one of the codes, $\mathcal{C}_i$, a random isometry $\varphi$ and computes a generator matrix $\bG'$ for $\varphi(\mathcal{C}_i)$ and sends $\bG'$ to Merlin.

Merlin, with the infinite computational power, can compute which code $\mathcal{C}_i$ Arthur has chosen  and reply with $i$.
If Merlin was honest, then only one of the codes $\mathcal{C}_1,\mathcal{C}_2$ will be equivalent to the sent $\mathcal{C}' =\langle \bG' \rangle.$

If Merlin was cheating and $\mathcal{C}_1$ is equivalent to $\mathcal{C}_2$, then Merlin has two choices and has a success probability of $1/2.$

By repeating this protocol for $t$ rounds, we get a soundness error of $2^{-t}$ that Arthur accepts a cheating Merlin. 
\end{proof}
 

Due to Babai's algorithm \cite{babai}, we know that Graph Isomorphism (GI) takes at most quasi-polynomial time to solve. Thus,   a reduction from PEP to GI, i.e., showing that if we can solve GI then we can also solve PEP,  implies that PEP is easier than GI. In particular, PEP should not be used for cryptography. 
\medskip

The reduction has been proposed in \cite{magali} and has a small drawback: it only works for codes with trivial hull. Since random codes have w.h.p. a trivial hull, we call this a "randomized" reduction, meaning that it will not work for any instance, but it works w.h.p.

\medskip
Before we can give the reduction, let us recall some graph theory. 

A graph $\mathcal{G}$ consists of vertices $V$ and edges $E$ between the vertices, i.e., $E\subset V\times V.$

We will focus on undirected graphs, thus whenever $\{u,v\} \in E$ also $\{v,u\} \in E$ and we label the edges with a weight $w(u,v)$.

\medskip

We say that two weighted graphs $\mathcal{G}=(V,E)$ and $\mathcal{G}'=(V',E')$ are isomorphic, if there exists a bijective map $f:V\to V'$ with 
\begin{enumerate} 
\item $ \{u,v\} \in E \leftrightarrow \{f(u), f(v)\} \in E'$, 
\item $w(u,v)= w(f(u),f(v))$.
\end{enumerate}

Thus, we can focus on $V=V'=\{1,\ldots,n\}$ and maps $f=\sigma \in \mathcal{S}_n$.

\begin{problem}[Weighted Graph Isomorphism Problem]
 Given $\mathcal{G}=(V,E), \mathcal{G}'=(V,E')$, find $\sigma \in \mathcal{S}_n$, such that $ \{u,v\} \in E \leftrightarrow \{\sigma(u), \sigma(v)\} \in E'$ and $w(u,v)= w(\sigma(u),\sigma(v))$.
    
\end{problem}

\begin{definition}
    The \emph{adjacency matrix} of a weighted graph $\mathcal{G}$ is defined as the $n \times n$ matrix $\bA$ with entries $$\bA_{i,j}=\begin{cases} w(i,j) & \text{ if } \{i,j\} \in E, \\ 0 & \text{ else}. \end{cases}$$
\end{definition}
Since we are only interested in undirected graphs, the adjacency matrices are symmetric.

\begin{proposition}
    Two graphs $\mathcal{G}, \mathcal{G}'$ are isomorphic if and only if 
    there exists a permutation matrix $\bP$ such that $\bP^\top \bA \bP= \bA'.$
\end{proposition}

This almost looks like what we need for PEP, except for the fact that in PEP (treating $\bA$ as generator matrix) we also accept $\bS \bA \bP= \bA'$ for any invertible matrix $\bS$, not necessarily of the form $\bP^\top.$

In fact, one can easily make an example of two graphs, where there exists $\bS \in \GL_n(q), \bP \in S_n$ with $\bS \bA \bP = \bA'$ but the two graphs are clearly not isomorphic.

\begin{example}
Let $\mathcal{G}=(V,E)$ with $V=\{1,2,3,4\}$ and $E=\{ w(1,2)=1, w(2,3)=1, w(2,4)=2\}$ and $\mathcal{G}' = (V,E')$ with $E'=\{ w(1,1)=3, w(1,2)=1, w(1,3)=1, w(1,4)=2\}.$
\begin{center} 
\includegraphics[width=7cm]{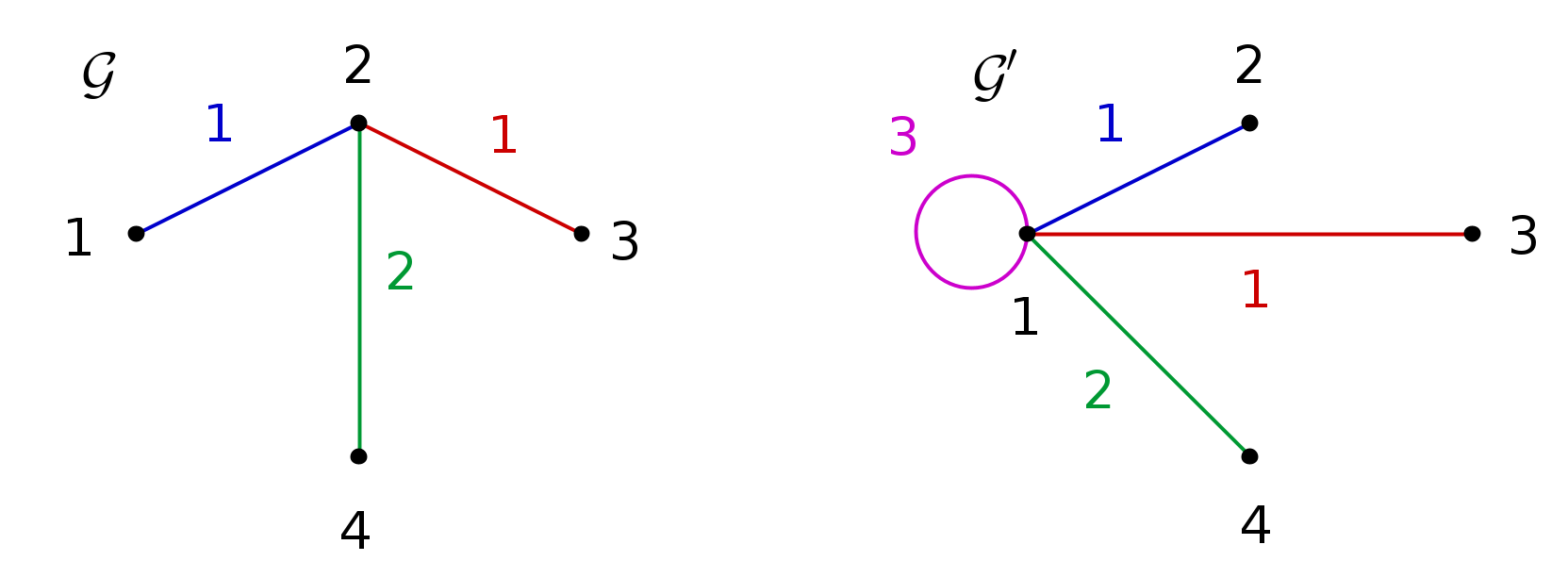}

\end{center}
These graphs can clearly not be isomorphic. However,  their adjacency matrices \begin{align*} \bA= \begin{pmatrix} 0 & 1 & 0 & 0 \\ 1 & 0 & 1 & 2 \\ 0 & 1 & 0 & 0 \\ 0 & 2 & 0  & 0 \end{pmatrix} \ \text{  and  } \ \bA'= \begin{pmatrix} 3 & 1 & 1 & 2 \\ 
          1& 0 & 0 & 0 \\ 
          1 &0 &0 &0 \\
          2 & 0 &0 &0 \end{pmatrix} \end{align*}  generate the codes $\mathcal{C}= \langle \bG \rangle$ with $\bG= \begin{pmatrix}  1 & 0 & 1 & 2 \\ 0&1&0&0\end{pmatrix}$ and $\mathcal{C}' = \langle \bG' \rangle$ with $\bG' = \begin{pmatrix} 1 & 0 &0 &0\\ 0 &1 & 1  &2 \end{pmatrix}$, which are clearly permutation equivalent through the permutation $\sigma =(2134).$
          And there exists $\bS \in \text{GL}_4(5), \bP \in S_4$ (the permutation matrix of $\sigma$) such that $\bS\bA\bP=\bA'$, namely

          $$\bS= \begin{pmatrix} 0& 1 & 0 & 0 \\ 1 & 0 & 0 &0 \\ 3 & 0 & 0 & 2 \\ 0 & 0&3 &0 \end{pmatrix}.$$
\end{example}

Luckily, in order to reduce PEP to GI, we do not have to start with an instance of GI and transform it to an instance of PEP. Instead, we start from codes and transform them to graphs. 
\medskip

Hence, the main question is: given $\bG \in \mathbb{F}_q^{k \times n}$, how to define a symmetric matrix in $\mathbb{F}_q^{n \times n}$, which can act as adjacency matrix?

\medskip

For this \cite{magali} introduced the following:
For $\mathcal{C}$ with trivial hull, we define 
$$\bA = \bG^\top (\bG \bG^\top)^{-1} \bG.$$
Clearly, this matrix can only exist if $\bG\bG^\top$ is invertible, i.e., if the hull of $\mathcal{C}$ is trivial. The matrix $\bA \in \mathbb{F}_q^{n \times n}$ is symmetric, $\langle \bA\rangle = \mathcal{C}$ and, moreover, independent of the choice of $\bG.$

In fact, taking any other generator matrix, $\bS\bG$, we get 
$$(\bS\bG)^\top ( \bS \bG (\bS \bG)^\top)^{-1} \bS \bG= \bG^\top (\bG \bG^\top) \bG.$$

\begin{theorem}
    PEP is easier than weighted GI, for codes with trivial hull.
\end{theorem}

\begin{proof}
    Assume that the codes in the instance of PEP $(\mathcal{C}, \mathcal{C}')$ have trivial hulls. 
    For arbitrary generator matrices $\bG,$ respectively $\bG'$ take
    \begin{align*}
        \bA &= \bG^\top (\bG \bG^\top)^{-1} \bG, \\
         \bA' &= \bG'^\top (\bG' \bG'^\top)^{-1} \bG'.
    \end{align*}
    And define $\mathcal{G},$ respectively $\mathcal{G}'$, to have adjacency matrices  $\bA,\bA'$. 

    We now show that the answer to the constructed weighted GI instance is also the answer to the PEP instance. In fact, $\sigma(\mathcal{G})=\mathcal{G}'$ if and only if $\sigma(\mathcal{C})= \mathcal{C}'.$
    \medskip

    For the first direction, note that for any choice of generator matrices, there exist $\bS \in \GL_k(q)$ with $\bS \bG \bP = \bG'$. However, since $\bA$ is independent of the choice of basis, we can ignore the $\bS$ and get
    $$\bA'= (\bG \bP)^\top ( \bG \bP (\bG \bP)^\top)^{-1} \bG\bP =\bP^\top \bA \bP.$$
    Thus, the two graphs are isomorphic.
    
    The other direction is straightforward, as the $\bP^\top \bA \bP=\bA'$ implies that the two codes $\langle \bA \rangle$ and $\langle \bA' \rangle$ are equivalent. 
\end{proof}


To reduce the LEP to PEP, one can use the \emph{closure} of the code, introduced in \cite{hull}. 

\begin{definition}[Closure of Code] Let $\mathcal{C} \subset \mathbb{F}_q^n$.  The closure of $\mathcal{C}$ is 
$$\widetilde{\mathcal{C}}=\{(\alpha c_i)_{(i,\alpha) \in [1,n] \times \mathbb{F}_q^\star} \mid (c_i)_{i \in [1,n]} \in \mathcal{C}\} \subset \mathbb{F}_q^{n(q-1)}.$$
\end{definition}
Thus, if $\mathcal{C} = \langle \bG \rangle$, with 
$$\bG= \begin{pmatrix} | & & | \\ \bg_1^\top & \cdots & \bg_n^\top \\ | & & | \end{pmatrix}$$ then the generator matrix of the closure is given by 
$$\widetilde{\bG} =\begin{pmatrix}
     | & | & & | & & |& | & & | \\ \bg_1^\top & \alpha \bg_1^\top & \cdots & \alpha^{q-2} \bg_1^\top & \cdots & \bg_n^\top & \alpha \bg_n^\top & \cdots & \alpha^{q-2} \bg_n^\top  \\     | & | & & | & & | & | & & | \\
\end{pmatrix}$$
for some primitive element $\alpha \in \mathbb{F}_{q}$. The closure $\widetilde{\mathcal{C}}$ has still dimension $k$, but now length $n(q-1).$

\begin{proposition}
    If there exists $\varphi \in M_{n,q}$ with $\varphi(\mathcal{C})=\mathcal{C}'$, then there exists $\sigma \in S_n$ with $\sigma(\widetilde{\mathcal{C}})=\widetilde{\mathcal{C}}'.$
\end{proposition}

Hence if $\bG, \bG'$ are such that there exists $\bP \in S_n$ and $\bv \in (\mathbb{F}_q^\star)^n$ with $\bS \bG \bP \text{diag}(\bv) = \bG',$ for some $\bS \in \GL_k(q),$ then $\widetilde{\bS} \widetilde{\bG} \widetilde{\bP} = \widetilde{\bG}',$ where 
$$\widetilde{\bP}= \begin{pmatrix} \bP_1 & & \\ 
& \ddots & \\ 
& & \bP_n \end{pmatrix} \bQ, $$ with $\bP_i \in S_{q-1},$ capture the permutation of $\mathbb{F}_q$, i.e., $\bv_i$ and $\bQ$ is a block permutation matrix, which keeps the blocks of $(q-1)$ columns together and captures $\bP.$
\medskip

This reduction can always be done, however, depending on $q$ it might have different outcomes.
\medskip

For this, let us consider the hull of the closure. Recall, that for most solvers and for the reduction to GI, we want a small or trivial hull.
\medskip

In fact, we can show that  for $q\geq 4$
the closure is weakly self dual, meaning $\widetilde{\mathcal{C}} \subset \widetilde{\mathcal{C}}^\perp.$ Thus, the hull has the largest possible dimension $k$ as $\mathcal{H}(\widetilde{\mathcal{C}})=\widetilde{\mathcal{C}}.$
\medskip

This makes the reduction only interesting for $q<4$. 


\begin{proposition}
    If $q<4,$ then $\widetilde{\mathcal{C}}$ has w.h.p. a trivial hull. If $q \geq 4,$ then $\widetilde{\mathcal{C}}$ is weakly self dual.
\end{proposition}

\begin{proof}
    To understand the hull of the closure, we have to compute 

    $$\bX= \widetilde{\bG}\widetilde{\bG}^\top = \begin{pmatrix}
     | & | & &   | \\ \bg_1^\top  & \alpha \bg_1^\top & \cdots   & \alpha^{q-2} \bg_n^\top  \\     | & | & &   | \\
\end{pmatrix} \begin{pmatrix} - & \bg_1&  - \\ 
- & \alpha \bg_1& - \\ 
& \vdots & \\ 
-&  \alpha^{q-2} \bg_n&  -
\end{pmatrix}.$$

   One can easily check that $$\bX_{i,j}= \sum_{\ell=1}^n \bg_{\ell,i} \bg_{\ell,j} \sum_{\beta \in \mathbb{F}_q^\star} \beta^2.$$

   Now the question becomes, how does the sum of squares behave in $\mathbb{F}_q?$
   If there exists a $\alpha \in \mathbb{F}_q^\star$ with $\alpha^2\neq 1$, then $\beta \mapsto \alpha \beta$ permutes $\mathbb{F}_q^\star,$ hence
   $$\sum_{\beta \in \mathbb{F}_q^\star} \beta^2=\sum_{\beta \in \mathbb{F}_q^\star} (\alpha \beta)^2 = \alpha^2 \sum_{\beta \in \mathbb{F}_q^\star} \beta^2,$$ which implies that $\sum_{\beta \in \mathbb{F}_q^\star} \beta^2=0.$

   To find such $\alpha,$ with $\alpha^2\neq 1$, we need $q\geq 4.$
   Thus, for $q \geq 4$ we have $\bX= \mathbf{0}$ and 
   $$\text{dim}(\mathcal{H}(\widetilde{\mathcal{C}}))= k-\rk(\widetilde{\bG} \widetilde{\bG}^\top)=k.$$
   Hence, for $q\geq 4$, $\widetilde{\mathcal{C}} \subset \widetilde{\mathcal{C}}^\perp, $ i.e., the closure is weakly self dual. 
 \end{proof}

In fact, we can also handle $q=4$, however, not with the classical dual defined through the standard inner product.

\begin{definition}[Hermitian Inner Product]
For $\bx, \by \in \F_q^n$ let us denote by $\langle \bx, \by \rangle_H$ the \emph{Hermitian} inner product, {i.e.}, 
$$\langle \bx, \by \rangle_H = \sum_{i=1}^n x_iy_i^p,$$ where $p=\text{char}(\mathbb{F}_q)$

\end{definition}

Note that the Hermitian inner product is not symmetric!
\medskip

Thus, to define  the  Hermitian dual, we have to fix on which side we place the codewords.

\begin{definition}[Hermitian Dual Code]
Let  $k \leq n$ be positive integers and let  $\mC$ be an $[n,k]$ linear  code over $\F_q$. The \emph{Hermitian dual code}  $\mC^\star$ is  an $[n,n-k]$ linear code over $\F_q$, defined as 
$$\mC^\star = \{ \bx \in \F_q^n \mid \langle \bx, \by \rangle_H = 0 \ \forall \ \by \in \mC \}.$$
\end{definition}

\begin{definition}[Hermitian Parity-Check Matrix]
Let $k \leq n$ be positive integers and let $\mC$ be an $[n,k]$ linear code over $\F_q$ with Hermitian dual code $\mC^\star$. Then, a matrix $\bH^\star \in \F_q^{(n-k) \times n}$ is called a \emph{Hermitian parity-check matrix} of $\mC$, if
 $\bH^\star$ is a generator matrix of $\mC^\star$ and $$\bG((\bH^\star)^p)^\top= \mathbf{0}.$$
\end{definition}

Note that if $\bH$ is the common parity-check matrix of $\mC,$ then $\bH^{1/p}$ is a Hermitian parity-check matrix. 
The Hermitian hull is then defined as $\mathcal{H}^\star(\mathcal{C})= \mathcal{C} \cap \mathcal{C}^\star.$
\medskip

The Hermitian dual and Hermitian hull are still invariants of isometries.

\begin{exercise}
    Let $\mathcal{C} \subset \mathbb{F}_q^n$ be equivalent to $\mathcal{C}'.$ Then $\mathcal{C}^\star$ is equivalent to $(\mathcal{C}')^\star$ (and thus the hulls are as well).
    \emph{Hint:} Use again that $\bS\bG((\bH^\star)^p)^\top=\mathbf{0}$ and $\bG\bP=\bG'.$
\end{exercise}

Having this new definition of hull, we can define 
$$\bA^\star = (\bG^p)^\top (\bG (\bG^p)^\top)\bG,$$ again $\bA^\star$ is independent of the choice of generator matrix, symmetric and exists if $\mathcal{C}$ has trivial Hermitian hull.

\medskip 
Similar to before, for random codes we assume that $\bG$ is chosen uniform at random and thus $\bG (\bG^p)^\top$ has full rank. Thus, random codes have w.h.p. trivial Hermitian hull.
\medskip

The only thing left to check is that the closure of a code in $\mathbb{F}_4$ has w.h.p. trivial Hermitian hull. For this let $\alpha$ be a primitive element in $\mathbb{F}_4$ and consider

    $$\bX= \widetilde{\bG}(\widetilde{\bG}^2)^\top = \begin{pmatrix}
     | & | & |  & & | & | &  | \\ \bg_1^\top  & \alpha \bg_1^\top & \alpha^2 \bg_1^\top & \cdots   & \bg_n^\top  & \alpha \bg_n^\top & \alpha^{2} \bg_n^\top  \\    | & | & |  & & | & | &  | \\
\end{pmatrix} \begin{pmatrix} - & \bg_1^2 &  - \\ 
- & \alpha^2 \bg_1^2  & - \\ - & \alpha \bg_1^2  & - \\ 
& \vdots & \\ 
- & \bg_n^2 &  - \\ 
- & \alpha^2 \bg_n^2  & - \\ - & \alpha \bg_n^2  & -
\end{pmatrix}.$$

One can easily check that
$$\bX_{i,j} = \sum_{\ell=1}^n \bg_{\ell,i} \bg_{\ell,j}^2 (1+\alpha \cdot \alpha^2 + \alpha \cdot \alpha^2) = \sum_{\ell=1}^n \bg_{\ell,i} \bg_{\ell,j}^2.$$ Thus, assuming $\bG$ is random, the matrix $\bX$ is random as well and has w.h.p. full rank.

Even though LEP is not NP-hard, it is considered to be quantum-secure, as only exponential cost solvers (classical and quantum) are known. Hence, it is a promising candidate for post-quantum cryptography.

We can also consider code equivalence for $\mathbb{F}_q$-linear codes endowed with the rank metric.
\begin{problem}[Matrix Code Equivalence (MCE) Problem]
    Given $\bG_1, \ldots, \bG_k \in \mathbb{F}_q^{m \times n}$ and $\bG_1', \ldots, \bG_k' \in \mathbb{F}_q^{m \times n}$. Find  $\bA \in \text{GL}_m(q), \bB\in \text{GL}_n(q)$, such that for all $\bC \in \langle \bG_1, \ldots, \bG_k \rangle$ we have $\bA\bC\bB= \bC'$ for some $\bC' \in \langle \bG_1', \ldots, \bG_k'\rangle.$
\end{problem}

Similar to LEP is not NP-hard but believed to be quantum-secure. In fact, there exists a polynomial time reduction from the Hamming code equivalence problem in \cite{debris}. A nice summary on MCE can be found in \cite{krijn}.

\subsubsection{Rank SDP}

In \cite{randomred}, the authors provide a randomized reduction from the SDP to the Rank SDP. A randomized reduction is a polynomial time reduction, which only works with high probability. 

\begin{proposition}
    There exists a randomized reduction from SDP to Rank SDP.
\end{proposition}

\begin{proof}
Instead of using the SDP, we use the equivalent GWCP, in both metrics. 
    We start with an instance of GWCP in the Hamming metric, namely $\bG \in \mathbb{F}_q^{k \times n}$ and $t.$
    Note that the Rank GWCP is only defined over extension fields, $\mathbb{F}_{q^m}.$ Thus, we consider $\alpha \in \mathbb{F}_{q^m}^n$ a vector with $\mathbb{F}_q$-linearly independent entries. 

    \begin{exercise}
        Show that for any $\bx \in \mathbb{F}_q^n$ of $\text{wt}_H(\bx)=t$ the componentwise product $\bx \star \alpha \in \mathbb{F}_{q^m}^n$ has rank weight $\text{wt}_R(\bx \star \alpha)=t.$
    \end{exercise}
    For the code $\mathcal{C} =\langle \bG \rangle \subset \mathbb{F}_q^n$, we define $\mathcal{C}' = \langle \{\alpha \star \bc \mid \bc \in \mathcal{C}\} \rangle \subset \mathbb{F}_{q^m}^n$.  Let $\bG'$ be a generator matrix of $\mathcal{C}'$. If the answer to the Hamming GWCP is yes, that is: there exists a  $\bc \in \mathcal{C}$ of Hamming weight $t$, then there also exists $\bc \star \alpha$ in $\mathcal{C}'$ of rank weight $t.$ However, if there was no $\bc \in \mathcal{C}$ of Hamming weight $t$, note that there might still be a codeword $\bc' \in \mathcal{C}'$ of rank weight $t$, which is not of the form $\bc \star \alpha$. In fact, since we are now over the extension field, we have generated many more codewords than simply those of the form $\bc \star \alpha$. 

    With high probability, (details can be found in \cite{randomred}), the only codewords of rank weight $t$ are of the form $\bc \star \alpha$ and thus the reduction works.
\end{proof}

\begin{example}
    Let us consider $$\bG= \begin{pmatrix} 1 & 0 & 1 \\ 0 & 1 & 1 \end{pmatrix} $$ which generates the code $\mathcal{C} \subset \mathbb{F}_2^3$. If we let $t=1$, then clearly there is no codeword in $\mathcal{C}$ of Hamming weight 1. However, for $\mathbb{F}_8=\mathbb{F}_2[\alpha]$ and $\alpha^3=\alpha+1$ and the code $$ \mathcal{C}'=\langle \bc \star (1, \alpha, \alpha^2+\alpha+1) \mid \bc \in \mathcal{C}\}\rangle$$ we do have a codeword of rank weight 1, for example
    $$\alpha (1, \alpha, \alpha^2+\alpha+1) \star (1,0,1) + (1, \alpha, \alpha^2+\alpha+1) \star (0,1,1) = (\alpha, \alpha, \alpha).$$
\end{example}

It remains one of the  largest open problems in code-based cryptography, whether there exists a polynomial time reduction, which always works. That is
\begin{problem}[Open Problem]
    Is the Rank SDP NP-hard?
\end{problem}
As opposed to the Rank SDP, considering $\mathbb{F}_{q^m}$-linear codes endowed with the rank metric, for $\mathbb{F}_q$-linear codes the SDP is known to be NP-hard. 
\begin{theorem}
    The MinRank Problem is NP-complete.
\end{theorem}
\begin{proof}
    We use a polynomial time reduction from the Hamming DP. 
    \begin{exercise}
        Let $\bx \in \mathbb{F}_q^n$ have Hamming weight $t$. Show that $\text{diag}(\bx)\in \mathbb{F}_q^{n \times n}$ has rank weight $t$.
    \end{exercise}
    We start with an instance $\bG =\begin{pmatrix} \bg_1 \\ \vdots \\ \bg_k \end{pmatrix} \in\mathbb{F}_q^{k\times n}, \br \in \mathbb{F}_q^n$ and $t \in \mathbb{N}.$
    We transform the instance to a MinRank instance as $$\bG_1=\text{diag}(\bg_1), \ldots, \bG_k=\text{diag}(\bg_k) \in \mathbb{F}_q^{n\times n}$$ and $$\bR=\text{diag}(\br) \in \mathbb{F}_q^{n \times n}.$$
    Let us first assume that the Hamming DP instance has ``yes'' as a solution, i.e., there exists a $\be \in \mathbb{F}_q^n$ of Hamming weight $t$ such that $\br-\be \in \langle \bG \rangle$. In other words, $$\br-\be = \lambda_1 \bg_1 + \cdots +\lambda_k \bg_k$$ for some $\lambda_i \in \mathbb{F}_q$. Then the MinRank instance also has a solution ``yes''. In fact, there exists $\bE= \text{diag}(\be)$ of rank weight $t$ such that
    $$\bR-\bE=\lambda_1 \bG_1 + \cdots  + \lambda_k \bG_k,$$ for the same $\lambda_i \in \mathbb{F}_q.$
    On the other hand, if the Hamming DP instance has ``no'' as a solution, i.e., there is no $\be \in \mathbb{F}_q$ of Hamming weight $t$, such that $\br -\be = \lambda_1 \bg_1 + \cdots + \lambda_k\bg_k$, then the MinRank instance also gives ``no'' as a solution. In fact, assume by contradiction, a $\bE  \in \mathbb{F}_q^{n \times n}$ exists of rank weight $t$, such that 
    $$\bR-\bE= \lambda_1 \bG_1 + \cdots + \lambda_k \bG_k$$ for some $\lambda_i \in \mathbb{F}_q.$ Thus, if we denote by $g_i^j$ the $i$th entry of $\bg_j,$ then
    $$\begin{pmatrix} r_1 &   \cdots & 0\\
    &  \ddots & \\ 0& \cdots &  r_n \end{pmatrix} -\bE = \lambda_1 \begin{pmatrix} g_1^1 & \cdots & 0  \\
    & \ddots & \\ 0& \cdots  &  g_n^1 \end{pmatrix} + \cdots + \lambda_k \begin{pmatrix} g_1^k &   \cdots & 0 \\  
    &  \ddots & \\ 0&\cdots  &  g_n^k \end{pmatrix}.   $$ Hence,
    $$\bE = \begin{pmatrix} r_1- \sum_{i=1}^k \lambda_i g_1^i &   \cdots & 0 \\ 
    
    &  \ddots & \\ 
     0& \cdots  &  r_n -\sum_{i=1}^k \lambda_i g_n^i \end{pmatrix}.$$
     Hence $\bE$ is again a diagonal matrix and we can denote its diagonal by $\be.$
     In order for $\bE$ to have rank weight $t$, we need that $t$ many entries of $\be$ are non-zero. Hence there exists a $\be \in \mathbb{F}_q^n$ of Hamming weight $t$, such that $e_j=r_j - \sum_{i=1}^k \lambda_i g_j^i$, i.e., $\be=\br-\sum_{i=1}^k \lambda_i \bg_i,$ which is a contradiction. 
\end{proof}

The natural question arises, why one cannot prove the NP-hardness of Rank SDP using the MinRank problem. 
In fact, starting with an instance of Rank SDP, i.e., an $\mathbb{F}_{q^m}$-linear code, one can always define the corresponding $\mathbb{F}_q$-linear code. However, for the polynomial time reduction from MinRank to Rank SDP, the other direction is needed. That is, starting with an instance of MinRank, transforming it to an instance of Rank SDP - and this already fails, as not all $\mathbb{F}_q$-linear codes can be lifted to an $\mathbb{F}_{q^m}$-linear code.

\subsubsection{Lee SDP}

\begin{problem}[Lee SDP]
Let $\mathbb{F}_{p}$ be a prime field and $k \leq n$ be positive integers.   Given $\bH \in \mathbb{F}_{p}^{(n-k) \times n}$, $\bs \in \mathbb{F}_{p}^{n-k}$ and $t \in \mathbb N$, is there a vector $\be \in \mathbb{F}_{p}^n$ such that $\text{wt}_L(\be)\leq t$ and $\be \bH^\top = \bs$?
\end{problem}
The Lee SDP (again equivalent to Lee DP and Lee GWCP) has been proven to be NP-complete in \cite{leeNP}. Since the proof follows exactly in the same manner as the reduction for Hamming SDP, we leave it as an exercise.
\begin{exercise}
    Show that Lee SDP is NP-complete using a reduction from 3DM.
\end{exercise}

\subsubsection{Restricted SDP}

The Restricted Syndrome Decoding Problem (R-SDP), first introduced in \cite{rest}, reads as follows.
\begin{problem}[Restricted SDP]
Given $g \in \mathbb{F}_p^*$ of prime order $z$, $\mathbf{H} \in \mathbb{F}_p^{(n-k)\times n}$, 
$\mathbf{s} \in \mathbb{F}_p^{n-k}$, and $\mathbb{E} = \{g^i \mid i \in \{1, \ldots, z\} \} \subset \mathbb{F}_p^*$, decide if there exists $\mathbf{e} \in \mathbb{E}^n$ such that $\mathbf{e} \mathbf{H}^\top = \mathbf{s}$.
\end{problem}
The Restricted SDP is strongly related to other well-known hard problems. For example, when $z=p-1$, the Restricted SDP is close to the classical SDP; if $z=1$, the Restricted SDP is similar to the Subset Sum Problem (SSP) over finite fields.
Consequently, it is unsurprising that the R-SDP is NP-complete for any choice of $\mathbb{E}$.

\begin{theorem}\label{thm:np}
The Restricted SDP is NP-complete.
\end{theorem}
The proof is again similar to the reduction provided for the SDP.  
\begin{proof}
    Recall the NP-hard 3-Dimensional Matching (3DM) problem, where one is given the instance $T=\{b_1, \ldots, b_t\},$ with $|T|=t, U \subset T \times T \times T$ and $|U|=u$ and asks whether there exists a $W \subset U$ with $|W|=t$ and no two words in $W$ coincide in any position.
    \medskip

    Recall that the original SDP has a reduction from 3DM, through the following construction: let $\bH \in \mathbb{F}_p^{ (3t) \times u} $  be the incidence matrix, i.e., each column of $\bH$ corresponds to a word in $U$ and the rows correspond to $T\times T \times T$, thus  the rows $\{1,\ldots, t\}$ correspond to the first position of the word $\bu$, the rows $\{t+1, \ldots, 2t\}$ correspond to the second position of $\bu$ and the rows $\{2t+1, \ldots, 3t\}$ correspond to the third position of $\bu$. More formally,  let $T=\{b_1, \ldots, b_t\}$, $U=\{\ba_1, \ldots, \ba_u\}$ and 
    \begin{itemize}
    \item for $ j \in \{1, \ldots, t\}$, we set $h_{i,j} = 1$ if $\ba_i[1]= b_j$ and $h_{i,j}=0$ else,
     \item for $ j \in \{t+1, \ldots, 2t\}$, we set $h_{i,j} = 1$ if $\ba_i[2]= b_j$ and $h_{i,j}=0$ else,
     \item for $ j \in \{2t+1, \ldots, 3t\}$, we set $h_{i,j} = 1$ if $\ba_i[3]= b_j$ and $h_{i,j}=0$ else.
 \end{itemize}
 We also set $\bs \in \mathbb{F}_p^{3t}$ be the all one vector.

 From the original reduction, we know that any solution $\be \in \mathbb{F}_p^u$ with $\bH\be^\top=\bs^\top$ has weight $t$ and its support corresponds to the solution $W$. That is the columns of $\bH$ indexed by the support of $\be$ are the $t$ words in $W$.
 \medskip

 The polynomial reduction from 3DM to R-SDP uses this construction as well.
  Let $T$ of size $t$ and $U\subset T \times T \times T$ of size $u$ be an instance of 3DM.
  Let $\bH \in \mathbb{F}_p^{(3t) \times u}$ be the incidence matrix and let
  $$\widetilde{\bH}= \begin{pmatrix} \bH & -g \star \bH \\ \text{Id}_u & \text{Id}_u \end{pmatrix} \in \mathbb{F}_p^{(3t+u) \times 2u}$$ be a parity-check matrix. Let us consider the syndrome $(\bs,\bs') \in \mathbb{F}_p^{3t+u}$ with $\bs=(1-g^2, \ldots, 1-g^2) \in \mathbb{F}_p^{3t}$ and $\bs'=(1+g, \ldots, 1+g) \in \mathbb{F}_p^u.$
  Thus, the instance of R-SDP given by $\widetilde{\bH}$ and $(\bs,\bs')$  is asking for $(\be,\be') \in \mathbb{E}^{2u}$ such that $$(\be,\be')\widetilde{\bH}^\top=(\bs,\bs'),$$ where $\mathbb{E}= \{g^i \mid i \in \{0, \ldots, z-1\}\}.$ By assumption of R-SDP, we use a $g$ of order $2<z<q-1.$

  We consider two cases. 
  \begin{enumerate}
      \item Assume that the R-SDP solver returns ``yes'', i.e., there exists $\be,\be' \in \mathbb{E}^u$ such that $(\be,\be')\widetilde{\bH}^\top=(\bs,\bs').$
      Hence, \begin{align*}
          \bH\be^\top -g \star \bH \be'^\top & = (1-g^2, \ldots, 1-g^2)^\top, \\ 
          \be + \be' &= (1+g, \ldots, 1+g).
      \end{align*}
      Hence, for each $i \in \{1, \ldots, u\}$ we have  $e_i + e_i'=1+g$. 
      Let us assume (we later show that this hypothesis is not needed, but it facilitates the proof) that the only elements in $\mathbb{E}$ that add to $1+g$ is $1$ and $g.$

      Hence, whenever $e_i=1,$ we must have $e_i'=g$ and whenever $e_i=g$, we must have $e_i'=1.$ Thus, we split $\be= \be_1 + \be_g$ and $\be'=\be_1' + \be_g'$ where $\be_1,\be_1' \in \{0,1\}^u, \be_g,\be_g' \in \{0,g\}^u $ and  $$\text{supp}(\be_1)= S = \text{supp}(\be_g')$$ and $$\text{supp}(\be_1')=S^C=\text{supp}(\be_g).$$ 
      From this also follows that $$\be_g= g\star \be_1'$$ and $$\be_g' = g \star \be_1.$$
      \medskip

      The first parity-check equation can now be reformulated as
      \begin{align*}
          & \bH\be^\top -g \star \bH \be'^\top  \\ = & \bH\be_1^\top -g \star \bH \be_g'^\top + \bH \be_g^\top -g \star \bH\be_1'^\top \\ 
          = & \bH \be_1^\top - g^2 \star \bH \be_1^\top + g \star \bH \be_1'^\top -g \star \bH \be_1'^\top \\ 
          = & (1-g^2) \star \bH \be_1^\top \\ 
          = & (1-g^2, \ldots, 1-g^2) = \bs',
      \end{align*}
      thus, $\bH\be_1^\top = (1, \ldots, 1)$ is such that $\text{supp}(\be_1)$ corresponds to a solution $W$ of 3DM, as in the classical reduction. 

      \item Assume that the R-SDP solver returns ``no'', i.e., there exists no $\be,\be' \in \mathbb{E}^u$ such that 
      $(\be,\be')\widetilde{\bH}^\top=(\bs,\bs').$
      Let us assume by contradiction, that the 3DM has a solution $W.$ We can then define $S$ to be the indices of words in $U$ belonging to the solution $W$. Let us define $\be_1,\be_1' \in \{0,1\}^u, \be_g,\be_g' \in \{0,g\}^u $ with  $\text{supp}(\be_1)= S = \text{supp}(\be_g')$ and $\text{supp}(\be_1')=S^C=\text{supp}(\be_g)$. 
      From this also follows that $\be_g= g\star \be_1'$ and $\be_g' = g \star \be_1.$
      Then the vector $(\be_1+\be_g, \be_1' + \be_g') \in \mathbb{E}^{2u}$ is a solution to the R-SDP, as in case 1, which gives the desired contradiction, to the R-SDP solver returning ``no''. 
  \end{enumerate}

  Note that the hypothesis, that only 1 and $g$ in $\mathbb{E}$ add up to $1+g$ is not necessary. For this assume that there exists $g^i,g^j \in \mathbb{E}$, with $0 \neq i <j<z$ such that $g^i+g^j=1+g.$ 
  Thus, the splitting of $\be$ and $\be'$ is a bit more complicated:
  \begin{align*}
      \be &= \be_1 + \be_g + \be_i + \be_j, \\ 
      \be' &= \be_1' + \be_g' + \be_i' + \be_j',
  \end{align*}
  where $\be_1, \be_1' \in \{0,1\}^u$,$ \be_g,\be_g' \in \{0,g\}^u$,$ \be_i,\be_i' \in \{0,g^i\}^u$,$ \be_j, \be_j' \in \{0,g^j\}^u$ with \begin{align*} 
  \text{supp}(\be_1) &=S_1 = \text{supp}(\be_g'), \\  \text{supp}(\be_g) &=S_1' = \text{supp}(\be_1'), \\ 
   \text{supp}(\be_i) &=S_i = \text{supp}(\be_j'), \\ 
    \text{supp}(\be_j) &=S_i' = \text{supp}(\be_i'),
  \end{align*} and the supports $S_1,S_1',S_i,S_i'$ are distinct and partition $\{1,\ldots,u\}.$
  Again it follows that 
  \begin{align*}
      \be_g &= g \star \be_1', \\ \be_g' &= g \star \be_1, \\ 
      \be_j &= g^{j-i} \star \be_i', \\ 
      \be_j' &= g^{j-i} \star \be_i. 
  \end{align*}
  Thus, rewriting the first parity-check equation, we get 
  \begin{align*}
      & \bH\be^\top -g \star \bH\be'^\top \\
      = & \bH\be_1^\top + \bH\be_g^\top + \bH\be_i^\top + \bH\be_j^\top \\ 
      & - g \star \bH \be_1'^\top -g \star \bH \be_g'^\top - g \star \bH \be_i'^\top - g \star \bH \be_j'^\top \\ 
      = &  \bH\be_1^\top +g \star  \bH\be_1'^\top + \bH\be_i^\top + g^{j-i} \star \bH\be_i'^\top \\ 
      & - g \star \bH \be_1'^\top -g^2 \star \bH \be_1^\top - g \star \bH \be_i'^\top - g^{j-i+1} \star \bH \be_i^\top \\ 
      = & (1-g^2) \star \bH \be_1^\top +(1-g^{j-i+1}) \star \bH \be_i^\top + (g^{j-i}-g) \star \bH \be_i'^\top \\ = & (1-g^2, \ldots, 1-g^2) =\bs'.
  \end{align*}
  Since $\be_1, \be_i, \be_i'$ all have different supports, the only way to get $1-g^2$ in each entry, is to have $\be_i=\be_i'=0$. In fact, any other sum leads to a contradiction:
  \begin{itemize}
      \item If $(1-g^2) + (1-g^{j-i+1})=1-g^2$ then $1=g^{j-i+1}$ and hence $j=i-1$ which contradicts $j>i$.
      \item If $(1-g^2)+(g^{j-i}-g)=1-g^2$ then $g^{j-i}=g$ and hence $j-i=1$. However, as then $g^j+g^i=g^i(1+g) = 1+g$, it follows that $g^i=1$, which contradicts $i \neq 0.$
      \item If $(1-g^2)+(1-g^{j-i+})+(g^{j-i}-g)=1-g^2$, then $1+g^{j-i}=g^{j-i+1}+g=g(1+g^{j-i})$ and thus $g=1$, which contradicts $\mathbb{E}\neq \mathbb{F}_q^\star.$
      \item If $(1-g^{j-i+1})+(g^{j-i}-g)=1-g^2$, then $g^{j-i}-g^{j-i+1}=g-g^2$ and hence $g^{j-i}(1-g)=g(1-g)$ and thus $j-i=1$, which is a contradiction again as in the second case. 
  \end{itemize}
\end{proof}
 
 \subsection{Information Set Decoding}\label{sec:ISD}

In this section we cover the main approach to solve the SDP, namely information set decoding (ISD) algorithms. For this we will follow closely \cite{weger}.

In the McEliece and the Niederreiter framework the secret code is usually endowed with a particular algebraic structure to guarantee the existence of an efficient decoding algorithm and is then hidden from the public to appear as a random code. In different frameworks, such as the quasi-cyclic framework, the secret is actually purely the error vector and the algebraic code is made public. In both cases an adversary has to solve the NP-complete problem of decoding a random linear code. \\

An adversary would hence use the best generic decoding algorithm for random linear codes.  
Two main methods are known until today for decoding random linear codes: ISD and the generalized
birthday algorithm (GBA). ISD algorithms are more efficient if the decoding problem has only a small number of solutions, whereas GBA is more efficient when there are many solutions. Also other ideas such as statistical decoding \cite{statistical}, gradient decoding \cite{gradient} and supercode decoding \cite{super} have been proposed but fail to outperform ISD algorithms.\\

 ISD algorithms are  an important aspect of code-based cryptography since  
they predict the key size achieving a given security level. ISD algorithms should not be considered as attacks in the classical sense, as they are not breaking a code-based cryptosystem, instead they  determine the choice of parameters for a given
security level.

Due to the duality of the decoding problem and the SDP also ISD algorithms can be formulated through the generator matrix or the parity-check matrix. Throughout this survey, we will stick to the parity-check matrix formulation.\\

The first ISD algorithm was proposed in 1962 by Prange \cite{prange} and interestingly,  all improvements display the same structure: choose an information set,   use Gaussian elimination to bring the parity-check matrix in a standard form, assuming a certain weight distribution on the error vector, we can go through smaller parts of the error vector and check if the parity-check equations  are satisfied. The assumed weight distribution of the error vector thus constitutes the  main part of an ISD algorithm.

In an ISD algorithm we fix a weight distribution and go through all information sets to find an error vector of this weight distribution. This is in contrast to `brute-force attacks' where one fixes an information set and goes through all weight distributions of the error vector. 
In fact, due to this, ISD algorithms are in general not deterministic, since there are instances for which there exists no information set where the error vector has the sought after  weight distribution.
Clearly, a brute-force algorithm requires much more binary operations than an ISD algorithm, thus, in practice we only consider ISD algorithms.

For this section we will need to recall some notation: let $S \subseteq \{1, \ldots, n\}$ be a set of size $s$, then for a vector $\bx \in \mathbb{F}_q^n$ we denote by $\bx_S$ the vector of length $s$ consisting of the entries of $\bx$ indexed by $S$. Whereas, for a matrix $\bA \in \mathbb{F}_q^{k \times n}$, we denote by $\bA_S$ the matrix consisting of the columns of $\bA$ indexed by $S.$ For a set $S$ we denote by $S^C$ its complement. For $S \subseteq \{1, \ldots, n\}$ of size $s$ we denote by $\mathbb{F}_q^n(S)$ the vectors in $\mathbb{F}_q^n$ having support in $S$. The projection of $\bx \in \mathbb{F}_q^n(S)$ to $\mathbb{F}_q^s$ is then canonical and denoted by $\pi_S(\bx)$. On the other hand,  we denote by $\sigma_S(\bx)$ the canonical embedding of a vector $\bx \in \mathbb{F}_q^s$ to $\mathbb{F}_q^n(S).$

\subsubsection{General Algorithm}

We are given a parity-check matrix $\bH \in \F_q^{(n-k) \times n}$ of a code $\mC$, a positive integer $t$ and a syndrome $\bs \in \F_q^{n-k}$, such that there exists a vector $\be \in \F_q^n$ of Hamming weight less than or equal to $t$ with syndrome $\bs$, {i.e.}, $\be\bH^\top = \bs$. The aim of the algorithm is to find such a  vector $\be$.

\begin{itemize}
\item[1.] Find an information set $I \subset \{1, \ldots, n\}$ of size $k$ for $\mC$. 
\item[2.] Bring $\bH$ into the systematic form corresponding to $I$, i.e., find an invertible matrix $\bU \in \F_q^{(n-k) \times (n-k)}$, such that  $(\bU\bH)_I = \bA$, for some $\bA \in \F_q^{(n-k) \times k}$ and $(\bU\bH)_{I^C} = \Id_{n-k}$.
\item[3.] Go through all error vectors $\be \in \F_q^n$ having the assumed weight distribution  (and in particular having Hamming weight $t$).
\item[4.] Check if the parity-check equations, {i.e.}, $\be \bH^\top \bU^\top = \bs \bU^\top$ are satisfied. 
\item[5.] If they are satisfied, output $\be$, if not start over with a new choice of $I$. 
\end{itemize}

Since the iteration above has to be repeated several times, the cost of such algorithm is given by the cost of one iteration times the number of required iterations. 

Clearly, the average number of iterations required is given as the reciprocal of the success probability of one iteration and this probability is completely determined by the assumed weight distribution.

\subsubsection{Overview Algorithms}

The first ISD algorithm was proposed in 1962 by Prange \cite{prange} and is sometimes referred to as plain ISD. In this algorithm Prange makes use of an information set of a code, that in fact contains all the necessary information to decode, in a clever way.  For this we have to assume that there is an information set where the error vector has weight 0 (thus all $t$ errors are outside of this information set). One now only has to bring the parity-check matrix into systematic form according to this information set, which has a polynomial cost, this is called an iteration of the algorithm. However, one has to find such an information set first. This is done by trial and error, which results in a large number of iterations. 
Indeed, the assumption that no errors happen in the information set is not very likely and thus   the success probability of one iteration is  very low. 
\medskip
 
All the improvements that have been suggested to Prange’s simplest form of ISD (see for example \cite{chabanne, canteautsendrier, chabaud, dumer, kruk, leon, vantilburg}) assume a  more likely weight distribution of the error vector, which results in a higher cost
of one iteration but give overall a smaller cost, since less iterations have to be performed.
\medskip

 The improvements split into two directions: the first direction is following the idea of Lee and Brickell \cite{leebrickell} where 
they ask for $v$ errors in the information set and $t - v$ outside.
The second direction is Dumer’s approach \cite{dumer}, which is asking for $v$ errors in $k + \ell$ bits, which
are containing an information set, and $t - v$ in the remaining $n-k - \ell$ bits. Clearly, the second direction includes the first direction by setting $\ell=0$.

Following the first direction, Leon \cite{leon} generalizes Lee-Brickell's algorithm by introducing a  set of size $\ell$ outside the information set called zero-window, where no errors happen.  
In 1988, Stern \cite{stern} adapted the algorithm by Leon and proposed
to partition the information set into two sets and ask for $v$ errors in each part and $t-2v$ errors outside the information set (and outside the zero-window).  In 2010, with the rise of code-based cryptography over a general finite field $\mathbb{F}_q$, Peters generalized these algorithms to $\F_q$ \cite{peters}.
 \medskip
 
In 2011, Bernstein, Lange and Peters proposed the ball-collision algorithm \cite{ballcoll},
where   they reintroduce errors in the zero-window. In fact, they partition the zero-window into two sets and ask for $w$ errors in both and hence for $t -2v -2w$
errors outside.  This algorithm and its speed-up techniques were then generalized to $\F_q$ by Interlando, Khathuria, Rohrer, Rosenthal and Weger in \cite{ballcollFq}. 
In 2016, Hirose \cite{hirose} generalized the nearest neighbor algorithm over $\F_q$ and applied it to
the generalized Stern algorithm.
\medskip

An illustration of these algorithms is given in Figure \ref{figure1}, where we  assume for simplicity that the information set is in the first $k$ positions and the zero-window is in the adjacent $\ell$ positions.

\begin{center}
\begin{figure}[h!]
\includegraphics[width=12cm]{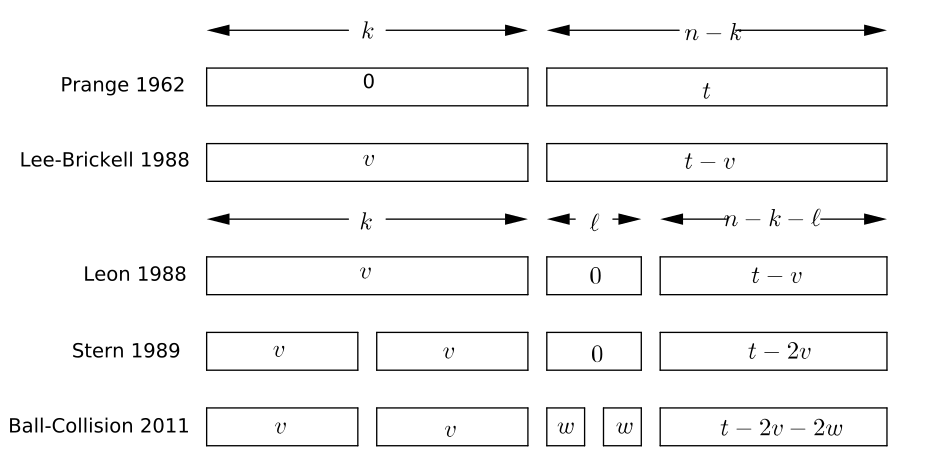}
\caption{Overview of algorithms following the splitting of Lee-Brickell, adapted from \cite{ballcoll}.}\label{figure1}
\end{figure}
\end{center}

The second direction has resulted in many improvements, for example in 2009 Finiasz and Sendrier \cite{sendrier} have built two intersecting subsets of the $k + \ell$ bits, which contain an information set, and ask for
$v$ disjoint errors in both sets and $t -2v$ in the remaining $n- k - \ell$ bits. Niebuhr, Persichetti, Cayrel,
Bulygin and Buchmann  \cite{niebuhr} in 2010 improved the performance of ISD algorithms over $\F_q$ based on the
idea of Finiasz and Sendrier. 

\medskip

In 2011, May, Meurer and Thomae \cite{mmt} proposed the use of 
the representation technique introduced by Howgrave-Graham and Joux \cite{howgrave} for the subset sum problem. Further improvements have been proposed by  Becker,
Joux, May and Meurer \cite{bjmm} in 2012 by introducing overlapping supports. We will refer to this algorithm   as BJMM. 
In 2015,
May-Ozerov \cite{MO} used the nearest neighbor algorithm to improve  BJMM  and finally in
2017, the nearest neighbor algorithm over $\F_q$ was applied to the generalized BJMM algorithm by Gueye,
Klamti and Hirose \cite{klamtihir}.

\medskip

These new approaches do not use set partitions of the support but rather a sum partition of the weight. An illustration of these algorithms is given in Figure \ref{figure2}, where we again assume that the $k + \ell$ bits containing an information set are in the beginning. The overlapping sets are denoted by $X_1$ and $X_2$ and their intersection of size $2 \alpha(k+\ell)$ is in blue. The amount of errors within the intersection is denoted by $\delta$.

\begin{center}
\begin{figure}[h!]
\includegraphics[width=12.5cm]{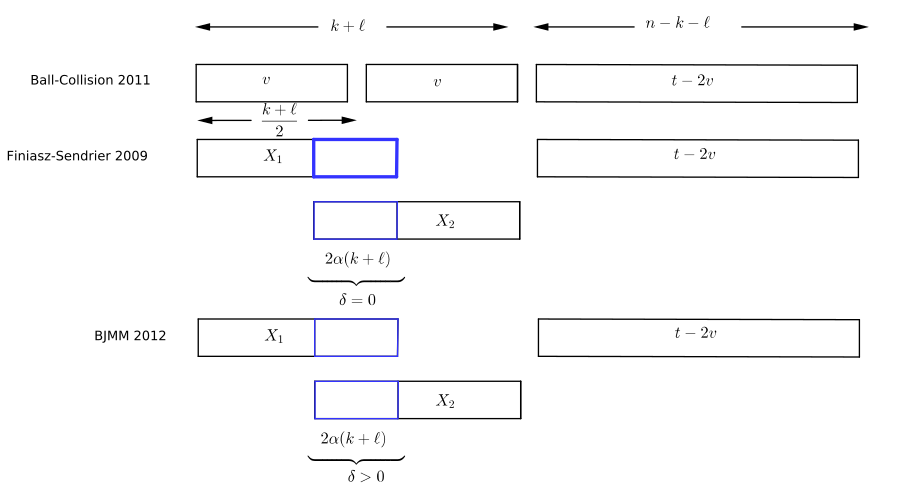}
\caption{Overview of the weight splitting in the different algorithms.}\label{figure2}
\end{figure}
\end{center}

A very introductory reading on ISD algorithms is in the thesis of Weger \cite{weger}, which we also follow closely  and for binary ISD algorithms, a very informative reading is the thesis of Meurer \cite{meurer}.\\

It is important to remark (see \cite{meurer}) that the BJMM algorithm, even if having the smallest complexity until today,
comes with a different cost: memory. In order to achieve a complexity of 128 bits, BJMM needs about
$10^9$ terabytes of memory. In fact, Meurer observed that if one restricts the memory to $2^{40}$ (which is a reasonable restriction), BJMM and
the ball-collision algorithm are performing almost the same. 

\medskip
 
 What is the possible impact on the cost of ISD algorithms when using a capable quantum computer? In \cite{bernpq} the authors  expect that quantum computers result in a square root speed up for ISD algorithms, since Grover's search algorithm \cite{grover1, grover2}  needs only $O(\sqrt{N})$ operations to find an element in a set of size $N$, instead of $O(N)$ many. Thus, intuitively, the search of an information set will become faster and thus the number of iterations needed in an ISD algorithm will decrease. 
 \medskip
 
Since all the improvements upon Prange's algorithm were only focusing on decreasing this number of iterations, the speed up for these algorithms will be smaller, than for the original algorithm by Prange. 
Hence the authors predict that  on a capable quantum computer Prange's algorithm will result as the fastest. 

  \newpage
\subsubsection{Techniques}

In   the following we introduce some speed-up techniques for ISD algorithms, mostly introduced in \cite{ballcoll} over $\F_2$ and later generalized to $\F_q$ in \cite{ballcollFq}. \\

First of all, we want to fix the cost that we consider throughout this chapter of one addition and  one multiplication over $\F_q$, {i.e.}, we assume that one addition over $\F_q$ costs $\lcq$ binary operations and one multiplication costs $\lcq^2$ binary operations. The cost of the multiplication is clearly not using the fastest algorithm known but will be good enough for our purposes. Also for the cost of multiplying two matrices we will always stick to a broad estimate given by school book long multiplication, {i.e.}, multiplying $\bA \bB$, where $\bA \in \F_q^{k \times n}$ and $\bB \in \F_q^{n \times r}$ will cost $nkr \left( \lcq + \lcq^2 \right)$ binary operations.

\paragraph{Number of Iterations}

One of the main parts in the cost of an information set decoding algorithm is the \emph{average number of iterations} needed. This number depends on the success probability of one iteration. 
In turn, the success probability is completely given by the assumed weight distribution of the error vector.
Since in one iteration we consider a fixed information set, the success probability of an iteration is given by the fraction of how many vectors  there are with the assumed weight distribution, divided by how many vectors  there are in general with the target weight $t$. 

\begin{example}
For example, we are looking for $\be \in \F_q^n$ of Hamming weight $t$, and we assume that the error vector has no errors inside an information set $I$, and thus all $t$ errors appear in $I^C$ of size $n-k$. Since there are $\binom{n-k}{t} (q-1)^t$ many vectors having support of size $t$ in a set of size $n-k$  and the total number of vectors of support $t$ in a set of size $n$ is given by $\binom{n}{t}(q-1)^t$, we have that the success probability of one iteration is given by 
$$\binom{n-k}{t} \binom{n}{t}^{-1},$$
and hence the number of iterations needed on average is given by 
$$\binom{n-k}{t}^{-1} \binom{n}{t}.$$ 
\end{example}

\paragraph{Early Abort}

In some of the algorithms we have to perform a computation %
and the algorithm only proceeds if the result of this computation satisfies a certain condition. In our case, the condition is that the weight of the resulting vector does not exceed a target weight. 

We  thus compute one entry of the result and check the weight of this entry, before proceeding to the next entry. As soon as the weight of the partially computed vector is above the target weight, we can stop the computation, hence the name \emph{early abort}.

\begin{example}
 To provide an example also for this technique, assume that we have to compute $\bx \bA$, for  $\bx \in \F_q^k$ of Hamming weight $t$ and $\bA \in \F_q^{k \times n}$. Usually computing $\bx \bA$ would cost
 $nt \left( \lcq^2 +  \lcq\right)$ binary operations.

However, assuming our algorithm only proceeds if $\wtH(\bx\bA ) =w$, we can use the method of early abort, {i.e.}, computing one entry of the resulting vector and checking its weight simultaneously. 
For this we assume that the resulting vector is uniformly distributed. Since we are over $\F_q$, the probability that an entry adds to the weight of the full vector is given by $\frac{q-1}{q}$. Hence we can expect that after computing $\frac{q}{q-1}w$ entries the resulting vector should have reached the weight $w$, and after computing $\frac{q}{q-1}(w+1)$ entries we should have exceeded the target weight $w$ and can abort. 
Since computing only one entry of the resulting vector  costs $ t \left( \lcq^2 +  \lcq \right)$ binary operations, the cost of this step is  given by 
$$\frac{q}{q-1}(w+1) t \left(  \lcq^2 +  \lcq\right)$$ binary operations, instead of  
$$nt \left(  \lcq^2 + \lcq\right).$$
Clearly, this is a speed up, whenever $ \frac{q}{q-1}(w+1) < n$.
 \end{example}

\paragraph{Number of Collisions}
 
In some algorithms we want to check if a certain condition is verified and only then we would proceed. This condition depends on two vectors $\bx$ and $\by$ living in some sets. $S$, respectively $T$. 
Hence the algorithm would go through all the vectors $\bx \in S$ 
and then through all the vectors $\by \in T$ 
in their respective sets and check if the condition is satisfied for a fixed pair $(\bx, \by)$. If this is the case, such a pair is called a \emph{collision}. 
The subsequent steps of the algorithm would be performed on all the collisions, thus multiplying the cost of these steps with the size of the set of all $(\bx, \by)$, i.e., $\mid S \mid \mid T \mid$.

Instead, we can compute the \emph{average number of collisions} we can expect on average. 

\begin{example}
Let us also give an example for this technique; assume that we only proceed whenever $$\bx + \by = \bs,$$ for a fixed   $\bs \in \F_q^{k}$ and for all $\bx \in \F_q^k$ of Hamming weight $v$ and all $\by \in \F_q^k$ of Hamming weight $w$. 
To verify this condition we have to go through all possible $\bx$ and $\by$, thus costing
$$\binom{k}{v} \binom{k}{w} (q-1)^{v+w}\min\{k,v+w\} \log_2(q)$$ binary operations. As a subsequent step one would compute for all such $(\bx,\by)$ the vector  $\bA\bx -\bB\by$, for some fixed $\bA \in \F_q^{n \times k}$ and $\bB \in \F_q^{n \times k}$. Usually one would do this for all elements in $S=\{(\bx, \by) \mid \bx, \by \in \F_q^k, \wtH(\bx) = v, \wtH(\by) =w\}$, giving this step a cost of 
$$\binom{k}{v}\binom{k}{w} (q-1)^{v+w} \min\{k,v+w\}n \left( \log_2(q) + \log_2(q)^2 \right).$$
 
However, we only have to perform the subsequent steps as many times as on average we expect a collision, {i.e.}, a pair $(\bx,\by)$ such that $\bx+\by=\bs$.
Assuming a uniform distribution, this amount is given by $$\frac{\mid S \mid}{q^n} = \frac{\binom{k}{v}\binom{k}{w}(q-1)^{v+w}}{q^n}< \binom{k}{v}\binom{k}{w}(q-1)^{v+w-n}.$$ 
Thus computing $\bA\bx -\bB\by$ for all $(\bx, \by) \in S$ costs on average 
$$\binom{k}{v}\binom{k}{w}(q-1)^{v+w-n} \min\{k,v+w\} n \left( \log_2(q) + \log_2(q)^2\right)$$ binary operations, which is clearly less than the previous cost.
\end{example}

\paragraph{Intermediate Sums}

In some algorithms we have to do a certain computation for all vectors in a certain set. The idea of \emph{intermediate sums} is to do this computation in the easiest case and to use  the resulting vector to compute the results for harder cases. This will become clear with an example.

\begin{example}
 Let $\bA \in \F_2^{k \times n}$ and assume that we want to compute $\bx \bA$ for all $\bx \in \F_2^k$ of Hamming weight $t$. This would usually cost 
$$nt \binom{k}{t}  $$ binary operations.

Using the concept of intermediate sums helps to speed up this computation: we first compute $\bx \bA$ for all $\bx \in \F_2^k$ of Hamming weight 1, thus just outputting the rows of $\bA$ which is for free. As a next step, we compute $\bx \bA$ for all $\bx \in \F_2^k$ of Hamming weight 2, which is the same as adding two rows of $\bA$ and hence costs $\binom{k}{2} n$ binary operations. As a next step, we compute $\bx \bA$ for all $\bx \in \F_2^k$ of Hamming weight 3. This is the same as adding one row of $\bA$ to one of the already computed vectors from the previous step, thus this costs $\binom{k}{3} n$ binary operations. If we proceed in this way, until we compute $\bx \bA$ for all $\bx \in \F_2^k$ of Hamming weight $t$, this step costs 
$$nL(k,t)$$
binary operations, where  
$$L(k,t) = \sum_{i=2}^t \binom{k}{i}.$$
This is a speed up to the previous cost, since  
$$n \sum_{i=2}^t \binom{k}{i} = n \left( \binom{k}{2} + \cdots + \binom{k}{t} \right) < n t \binom{k}{t}.$$

When generalizing this result to $\F_q$, computing $\bx \bA$ for all $\bx \in \F_q^k$ of Hamming weight 1 does not come for free anymore. Instead we have to compute  $\bA \cdot \lambda$ for all $\lambda \in \F_q^\star$ which costs $kn \lcq^2$ binary operations. 
Further, if we want to compute $\bx \bA$ for all $\bx \in \F_q^k$ of Hamming weight 2, we have to add two multiples of rows of $\bA$. While there are  still $\binom{k}{2} $ many rows, we now  have $(q-1)^2$ multiples. Thus, this step costs  $\binom{k}{2} (q-1)^2 n \lcq$ binary operations.
Proceeding in this way, the cost of computing $\bx \bA$ for all $\bx \in \F_q^k$ of Hamming weight $t$, is given by 
$$L_q(k,t)n \lcq + kn \lcq^2$$
binary operations, where  
$$L_q(k,t) = \sum_{i=2}^t \binom{k}{i}(q-1)^i.$$
Which is clearly less than the previous cost of 
$$\binom{k}{t}(q-1)^t  n t \left( \lcq^2 + \lcq \right)$$ binary operations.

\end{example}

\subsubsection{Prange's Algorithm}

In Prange's algorithm we assume that there exists an information set $I$ that is disjoint to the support of the error vector $\text{supp}(\be)$, {i.e.},
$$I \cap \text{supp}(\be) = \emptyset.$$ Of course, such an assumption comes with a probability whose reciprocal defines how many iterations are needed on average if the algorithm ends. Note that   Prange's algorithm is  not deterministic, {i.e.}, there are  instances which Prange's algorithm can not solve. For an easy example, one can just take an instance where $\wtH(\be)= t > n-k = \mid I^C \mid$. For a more elaborate example, which also allows unique decoding, assume that we have a parity-check matrix, which is such that each information set includes the first position. Then an error vector with non-zero entry in the first position could never be found through Prange's algorithm. \\

To illustrate the algorithm, let us assume that the information set is $I=\{1, \ldots, k\}$, and let us denote by $J= I^C$.  To bring the parity-check matrix $\bH \in \F_q^{(n-k) \times n}$  into systematic form, we multiply by an invertible matrix $\bU \in \F_q^{(n-k) \times (n-k)}$. Since we assume that no errors occur in the information set, we have that $\be = (\bz_k, \be_J)$ with $\wtH(\be_J)=t$. 
We are in the following situation:
\begin{align*}
\be\bH^\top\bU^\top = \begin{pmatrix}
\bz_k  & \be_J 
\end{pmatrix}  \begin{pmatrix}
\bA^\top \\ \Id_{n-k}
\end{pmatrix} =  \bs\bU^\top,
\end{align*}
for $\bA \in \F_q^{(n-k) \times k}$.

It follows that $\be_J  = \bs\bU^\top$ and hence we are only left with checking the weight of $\bs\bU^\top$.

We will now give the algorithm of Prange in its full generality, i.e., we are not restricting to the choice of $I$ and $J$ that we made before for simplicity.

\begin{algorithm}[h!]
\caption{Prange's Algorithm over  $\F_{q}$ in the Hamming metric}\label{prangeFq}
\begin{flushleft}
Input: $\bH \in \F_{q}^{(n-k) \times n}$, $\bs \in\F_{q}^{n-k}$, $t \in \N$.\\ 
Output: $\be \in\F_{q}^n$ with $\be\bH^\top=\bs $ and $\wtH(\be)=t$.
\end{flushleft}
\begin{algorithmic}[1]
\State Choose an information set $I\subset \{1, ...,n\}$  of size $k$ and define $J= I^C$.
\State Compute $\bU\in\F_{q}^{(n-k) \times (n-k)}$, such that $$(\bU\bH)_{I}=  \bA    \    \text{ and} \ (\bU\bH)_J=  
	 \Id_{n-k},  
 $$	 where $\bA\in\F_{q}^{(n-k) \times k}$.
\State Compute $\bs' = \bs\bU^\top$.
		\If{$\wtH(\bs')=t$}
		\State Return $\be $ such that $\be_I = \bz_k$ and $\be_J =  \bs'$.
				\EndIf
		\State  Start over with Step 1 and a new selection of $I$.
\end{algorithmic}
\end{algorithm}

\begin{theorem}\label{prangeFqcost}
Prange's algorithm over $\F_q$ requires on average
$$ \binom{n-k}{t}^{-1} \binom{n}{t}   (n-k)^2(n+1) \left(\lcq+\lcq^2\right)  $$
binary operations.
\end{theorem}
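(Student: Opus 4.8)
The plan is to compute the cost of one iteration of Algorithm \ref{prangeFq} and multiply it by the expected number of iterations, which was already identified in the \emph{Number of Iterations} paragraph as $\binom{n-k}{t}^{-1}\binom{n}{t}$. So the whole task reduces to showing that one iteration costs at most $(n-k)^2(n+1)\left(\lcq+\lcq^2\right)$ binary operations, and the dominant contribution should come from Step~2, the Gaussian elimination that brings $\bH$ into systematic form.

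First I would go through the steps of the algorithm one by one. Step~1 (choosing an information set) I would treat as free, or negligible, consistent with the convention in the ISD section. Step~2 is the real cost: finding $\bU \in \F_q^{(n-k)\times(n-k)}$ with $(\bU\bH)_J = \Id_{n-k}$ amounts to performing Gaussian elimination on an $(n-k)\times n$ matrix. I would bound this by the standard estimate: there are $n-k$ pivot columns, each pivot step requires eliminating entries in the other $n-k-1$ rows, and each such row operation touches up to $n$ columns, with each entry costing one multiplication plus one addition, i.e. $\lcq + \lcq^2$ binary operations per entry. This gives roughly $(n-k)(n-k-1)n(\lcq+\lcq^2) \leq (n-k)^2 n (\lcq+\lcq^2)$. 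Step~3 computes $\bs' = \bs\bU^\top$, a vector-times-matrix product with $\bU \in \F_q^{(n-k)\times(n-k)}$, costing $(n-k)^2(\lcq+\lcq^2)$ binary operations. Step~4, checking whether $\wtH(\bs') = t$, costs at most $n-k$ comparisons, which is absorbed into the lower-order terms. Adding Step~2 and Step~3 gives $(n-k)^2 n (\lcq+\lcq^2) + (n-k)^2(\lcq+\lcq^2) = (n-k)^2(n+1)(\lcq+\lcq^2)$, which is exactly the per-iteration factor claimed.

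The main subtlety — and the place where I would be slightly careful — is bookkeeping of the $(n-k-1)$ versus $(n-k)$ in the elimination count and making sure the slack is enough to absorb Step~3 and Step~4 cleanly into the stated formula; the clean way is to observe $(n-k)(n-k-1)n + (n-k)^2 \leq (n-k)^2 n + (n-k)^2 = (n-k)^2(n+1)$, so the costs of Steps 2, 3, 4 together fit under the claimed bound. One also has to justify why the \emph{average} number of iterations is $\binom{n-k}{t}^{-1}\binom{n}{t}$: an iteration succeeds precisely when the randomly chosen information set $I$ avoids $\supp(\be)$, i.e. when all $t$ error positions lie in $I^C$; modeling the choice of $I$ as uniform, the success probability is $\binom{n-k}{t}\binom{n}{t}^{-1}$ (as computed in the Example before this theorem), and the expected number of independent trials until a success is the reciprocal. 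Multiplying the expected iteration count by the per-iteration cost yields
$$\binom{n-k}{t}^{-1}\binom{n}{t}\,(n-k)^2(n+1)\left(\lcq+\lcq^2\right),$$
completing the proof. I would flag that this is a heuristic estimate in the usual ISD sense — it assumes independence of iterations and a uniform model for $I$, and uses the schoolbook multiplication cost fixed earlier in the \emph{Techniques} subsection — rather than a worst-case bound, and that the polynomial factor could be sharpened but is deliberately kept coarse.
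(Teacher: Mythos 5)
Your proposal is correct and follows essentially the same route as the paper: the paper likewise multiplies the expected iteration count $\binom{n-k}{t}^{-1}\binom{n}{t}$ (from the same "no errors in the information set" probability) by the per-iteration cost, which it packages in one stroke as the cost of the product $\bU \begin{pmatrix} \bH & \bs^\top \end{pmatrix}$ under the schoolbook convention, i.e. $(n-k)^2(n+1)\left(\lcq+\lcq^2\right)$. Your itemized Gaussian-elimination-plus-syndrome bookkeeping is just a slightly more explicit way of reaching the same per-iteration figure.
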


\begin{proof}
One iteration of  Algorithm \ref{prangeFq}   only consists of bringing $\bH$ into systematic form and applying the same row operations on the syndrome; thus, the cost can be assumed equal to that of computing  $\bU \begin{pmatrix} \bH & \bs^\top \end{pmatrix} $, i.e., $$(n-k)^2(n+1)(\lcq+\lcq^2) $$
binary operations. 

The success probability is given by having chosen the correct  weight distribution of $\be$.  
 In this case, we require that no errors happen in the chosen information set, hence  the probability is given by
\begin{equation*}
\binom{n-k}{t} \binom{n}{t}^{-1}.    
\end{equation*}

 Then, the estimated overall cost of Prange's ISD algorithm   over $\F_q$ is given as in the claim. 
\end{proof}

Let us consider an example for Prange's algorithm. 

\begin{example}
Over $\mathbb{F}_5$, we are given 
$$\bH= \begin{pmatrix}
3 & 2 & 1 & 4 & 3 & 0 & 4 & 4 & 3 & 4 \\
2 & 3 & 4 & 0 & 1 & 2 & 3 & 2 & 4 & 2 \\
3 & 0 & 3 & 1 & 4 & 0 & 2 & 2 & 0 & 0 \\
2 & 3 & 0 & 2 & 3 & 1 & 4 & 4 & 3 & 0 \\
0 & 2 & 3 & 0 & 2 & 0 & 3 & 4 & 2 & 4 \\
2 & 3 & 4 & 0 & 2 & 2 & 0 & 0 & 1 & 2 \\
\end{pmatrix},$$ $\bs=(2,4,0,2,0,4)$ and $t=2$.
We start by choosing an information set, since $I_1=\{1,2,3,4\}$ is not an information set, our first choice might be $I_2=\{1,2,3,5\}.$
As a next step we compute $\bU$ to get $\bH$ into systematic form. 
For this information set we have that 
$$\bU_2\bH = \begin{pmatrix}
3 & 4 & 1 & 1 & 0 & 0 & 0 & 0 & 0 & 0 \\
0 & 3 & 3 & 0 & 4 & 1 & 0 & 0 & 0 & 0 \\
4 & 4 & 2 & 0 & 4 & 0 & 1 & 0 & 0 & 0 \\
1 & 4 & 4 & 0 & 3 & 0 & 0 & 1 & 0 & 0 \\
2 & 0 & 2 & 0 & 2 & 0 & 0 & 0 & 1 & 0 \\
0 & 1 & 3 & 0 & 1 & 0 & 0 & 0 & 0 & 1 \\
\end{pmatrix}.$$
We apply the same on the syndrome, getting 
$$\bs_2' = \bs\bU_2^\top =(3,2,4,3,4,1),$$ which is now unfortunately not of Hamming weight 2. Thus, we have to choose another information set. 
This procedure repeats until the chosen information set succeeds. For example for $I= \{7,8,9,10\}.$ In fact, if we now compute the systematic form we get
$$\bU\bH = \begin{pmatrix}
1 & 0 & 0 & 0 & 0 & 0 & 4 & 0 & 0 & 4 \\
0 & 1 & 0 & 0 & 0 & 0 & 1 & 1 & 0 & 3 \\
0 & 0 & 1 & 0 & 0 & 0 & 4 & 2 & 1 & 1 \\
0 & 0 & 0 & 1 & 0 & 0 & 0 & 4 & 4 & 0 \\
0 & 0 & 0 & 0 & 1 & 0 & 2 & 3 & 2 & 0 \\
0 & 0 & 0 & 0 & 0 & 1 & 2 & 4 & 4 & 3 \\
\end{pmatrix}$$ 
and $\bs'=\bs\bU^\top= (2,0,0,4,0,0)$,  which has Hamming weight 2. Thus, $$\be=(\bs', \mathbf{0})=(2,0,0,4,0,0,0,0,0,0).$$
\end{example}

\subsubsection{Stern's Algorithm}

Stern's algorithm \cite{stern} is one of the most used ISD algorithms, as it is considered one of the fastest algorithms on a classical computer.   
In this algorithm we use the idea of Lee-Brickell and allow errors inside the information set and in addition we   partition the information set into two sets and ask for $v$ errors in both of them. Further, we also use the idea of Leon \cite{leon} to have a \emph{zero-window}  of size $\ell$ outside the information set, where no errors happen.  
\medskip

 Stern's   algorithm is given  in Algorithm \ref{sternFq}. But first we explain the algorithm and illustrate it.

The steps are the usual: we first choose an information set and then bring the parity-check matrix into systematic form according to this information set. We partition the information set into two sets and define the sets $S$ and $T$, where $S$ takes care of all vectors living in one partition and $T$ takes care of all vectors   living in the other partition. We can now check whether two of such fixed vectors give us the wanted error vector.
 
 \medskip
 
To illustrate the algorithm, we assume that the information set is $I=\{1, \ldots, k\}$ and that the zero-window is  $Z=   \{k+1, \ldots, k + \ell\}$. Further, let us define $J= (I \cup Z)^C = \{k+\ell+1, \ldots, n\}$.   
We again denote by $\bU$ the matrix that brings the parity-check matrix into systematic form and write the error vector partitioned into the information set part $I$, the zero-window part $Z$ and the remaining part $J$, as $\be = (\be_I, \bz_\ell,  \be_J)$, with $\wtH(\be_I)=2v$ and $\wtH(\be_J)=t-2v$. 
Thus, we get the following:
\begin{align*}
\be\bH^\top\bU^\top = \begin{pmatrix}
\be_I  & \bz_\ell  & \be_J 
\end{pmatrix}  \begin{pmatrix}
\bA^\top  & \bB^\top \\ \Id_{\ell } & \bz_{\ell \times (n-k-\ell)} \\ \bz_{(n-k-\ell) \times \ell} & \Id_{n-k-\ell}
\end{pmatrix} =  \begin{pmatrix} \bs_1 & \bs_2 \end{pmatrix} = \bs \bU^\top,
\end{align*}
where $\bA \in \F_q^{\ell \times k}$ and $\bB \in \F_q^{(n-k-\ell) \times k}$.

From this we get the following two conditions

\begin{align}
\be_I\bA^\top &= \bs_1, \label{sternF2cond1} \\
\be_I\bB^\top + \be_J &= \bs_2. \label{sternF2cond2}
\end{align}

We partition the information set $I$ into the sets $X$ and $Y$, for the sake of simplicity, assume that $k$ is even and $m=k/2$. Assume that  $X= \{1, \ldots, m\}$ and $Y= \{m+1, \ldots, k\}$. Hence, we can write $\be_I = (\be_X, \be_Y)$, and Condition \eqref{sternF2cond1} becomes 
\begin{equation}\label{forST} 
\sigma_X(\be_X) \bA ^\top = \bs_1 - \sigma_Y(\be_Y) \bA^\top.
\end{equation}

Observe that the $\sigma_X$ is needed, as $\be_X$ has length $m$ but we want to multiply it to $\bA^\top \in \F_q^{k \times \ell}$. In the algorithm we will not use the embedding $\sigma_X$ but rather $\F_q^k(X)$, thus $\be_X$ will have length $k$, but only support in $X$.
\medskip

In the algorithm, we  define a set $S$ that contains all  vectors of the form  $\sigma_X(\be_X) \bA^\top$, i.e., of the left side of \eqref{forST} and a set $T$ that contains all vectors  of the form $\bs_1 - \sigma_Y(\be_Y) \bA^\top$, i.e., of the right side of \eqref{forST}. Whenever a vector in $S$ and a vector in $T$ coincide, we call such a pair a collision. 
\medskip

 For each collision we define $\be_J$ such that Condition \eqref{sternF2cond2} is satisfied, i.e.,
$$\be_J =  \bs_2 - \be_I\bB^\top$$
and if the weight of $\be_J$ is the remaining $t-2v$, we have found the sought-after error vector.

 We  now give the algorithm of Stern in its full generality, i.e., we are not restricting to the choice of $I,J$ and $Z$, that we made before for illustrating the algorithm.

\begin{algorithm}[h!]
\caption{Stern's Algorithm over  $\F_{q}$ in the Hamming metric}\label{sternFq}
\begin{flushleft}
Input: $\bH \in \F_{q}^{(n-k) \times n}$, $\bs \in\F_{q}^{n-k}$, $t \in \N$, $k= m_1+m_2, \ell < n-k$ and $v < \min\{m_1,m_2, \lfloor \frac{t}{2} \rfloor\}$.\\ 
Output: $\be \in\F_{q}^n$ with $\be\bH^\top=\bs $ and $\wtH(\be)=t$.
\end{flushleft}
\begin{algorithmic}[1]
\State Choose an information set $I\subset \{1, ...,n\}$  of size $k$ and choose a zero-window $Z \subset   I^C$ of size $\ell$, and define $J=   (I \cup Z)^C$.
\State Partition $I$ into $X$ of size $m_1$ and $Y$ of size $m_2=k-m_1$.
\State Compute $\bU\in\F_{q}^{(n-k) \times (n-k)}$, such that \begin{align*} (\bU\bH) _{I}=  \begin{pmatrix} \bA \\ \bB \end{pmatrix}, \   (\bU\bH) _{Z}=  \begin{pmatrix} \Id_\ell \\ \bz_{ (n-k-\ell)\times \ell } \end{pmatrix} \   \text{ and} \ (\bU\bH)_J=  
	 \begin{pmatrix} \bz_{\ell \times (n-k-\ell)} \\ \Id_{n-k-\ell} \end{pmatrix},  
\end{align*}	 where $\bA\in\F_{q}^{\ell \times k}$ and $\bB \in \F_q^{(n-k-\ell) \times k}$.
\State Compute $\bs\bU^\top = \begin{pmatrix}
\bs_1 & \bs_2 
\end{pmatrix}$, where $\bs_1 \in \F_q^\ell$ and $\bs_2 \in \F_q^{n-k-\ell}$.
\State Compute the set $S$
$$S= \{( \be_X\bA^\top, \be_X) \mid \be_X \in \F_q^k(X), \wtH(\be_X)=v\}.$$
\State Compute the set $T$
$$T= \{( \bs_1-  \be_Y\bA^\top, \be_Y) \mid \be_Y \in \F_q^k(Y), \wtH(\be_Y)=v\}.$$
\For{$(\ba, \be_X) \in S$} \For{$(\ba,\be_Y) \in T$}
		\If{$\wtH(\bs_2 -(\be_X+\be_Y)\bB^\top)=t-2v$}
		\State Return $\be $ such that $\be_I = \be_X+ \be_Y$, $\be_Z = \bz_\ell$ and $\be_J =  \bs_2 - (\be_X+\be_Y)\bB^\top$.
				\EndIf
				\EndFor
				\EndFor
		\State  Start over with Step 1 and a new selection of $I$.
\end{algorithmic}
\end{algorithm}

\begin{theorem}\label{sternFqcost}
Stern's algorithm over $\F_q$ requires on average
\begin{align*} 
& \binom{m_1}{v}^{-1} \binom{m_2}{v}^{-1}   \binom{n-k-\ell}{t-2v}^{-1} \binom{n}{t} \\ & \cdot  \left( (n-k)^2(n+1) \left( \lcq + \lcq^2 \right)  +(m_1+m_2)\ell \lcq^2 \right. \\ 
 & +    \ell \left( L_q(m_1,v) +L_q(m_2,v) + \binom{m_2}{v}(q-1)^v \right)\lcq    \\
 & + \frac{\binom{m_1}{v}\binom{m_2}{v}(q-1)^{2v}}{q^\ell} \min\left\{n-k-\ell, \frac{q}{q-1}(t-2v+1)\right\}  \\ & \cdot \left. 2v\left( \lcq^2 +  \lcq \right) \right)
\end{align*}
binary operations.
\end{theorem}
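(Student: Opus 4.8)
The plan is to follow the same two-factor decomposition used in the proof of Theorem \ref{prangeFqcost}: the expected cost of the algorithm equals the expected number of iterations times the cost of a single iteration, and the expected number of iterations is the reciprocal of the success probability of one iteration, which in turn is governed solely by the assumed weight distribution of the error vector.

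\textbf{Step 1 (number of iterations).} One iteration of Algorithm \ref{sternFq} succeeds exactly when the fixed weight-$t$ error vector $\be$ splits, with respect to the random choices made in Steps 1--2, as $v$ errors in $X$ (size $m_1$), $v$ errors in $Y$ (size $m_2$), no errors in the zero-window $Z$ (size $\ell$), and the remaining $t-2v$ errors in $J$ (size $n-k-\ell$). Counting the vectors with this prescribed distribution against all $\binom{n}{t}(q-1)^t$ vectors of weight $t$, the factors $(q-1)^t$ cancel and the success probability is
\[
\binom{m_1}{v}\binom{m_2}{v}\binom{n-k-\ell}{t-2v}\binom{n}{t}^{-1},
\]
so the average number of iterations is $\binom{m_1}{v}^{-1}\binom{m_2}{v}^{-1}\binom{n-k-\ell}{t-2v}^{-1}\binom{n}{t}$, which is the leading factor in the claimed bound.

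\textbf{Step 2 (cost of one iteration).} I would add up four contributions matching the four summands inside the bracket. (i) Bringing $\bH$ into the systematic form attached to $I$ and applying the same transformation to $\bs$ costs $(n-k)^2(n+1)(\lcq+\lcq^2)$, exactly as in Prange's algorithm. (ii)+(iii) Building $S$ and $T$: by the intermediate-sums technique over $\F_q$, computing all products $\be_X\bA^\top$ for $\be_X\in\F_q^k(X)$ of weight $v$ (vectors of length $\ell$, since $\bA^\top\in\F_q^{k\times\ell}$) costs $L_q(m_1,v)\,\ell\,\lcq+m_1\ell\lcq^2$, and likewise $L_q(m_2,v)\,\ell\,\lcq+m_2\ell\lcq^2$ for $T$, plus $\binom{m_2}{v}(q-1)^v\,\ell\,\lcq$ for subtracting $\bs_1$ from each of the $|T|=\binom{m_2}{v}(q-1)^v$ length-$\ell$ vectors; together this yields the $(m_1+m_2)\ell\lcq^2$ and $\ell\big(L_q(m_1,v)+L_q(m_2,v)+\binom{m_2}{v}(q-1)^v\big)\lcq$ terms. (iv) Processing collisions: since the first components of elements of $S$ and $T$ lie in $\F_q^\ell$, assuming they behave like independent uniform vectors the expected number of collisions is $\binom{m_1}{v}\binom{m_2}{v}(q-1)^{2v}q^{-\ell}$; for each collision, $\be_X+\be_Y$ has weight exactly $2v$ (its blocks have disjoint supports $X$ and $Y$), and computing and weight-checking $\bs_2-(\be_X+\be_Y)\bB^\top$ with the early-abort technique costs $\min\{n-k-\ell,\tfrac{q}{q-1}(t-2v+1)\}\cdot 2v(\lcq^2+\lcq)$. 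Summing (i)--(iv) gives the per-iteration cost in the bracket, and multiplying by the factor from Step 1 concludes the proof.

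The main obstacle is not any individual calculation but the justification of the heuristic uniformity assumptions in Step 2(iv): that the syndromes $\be_X\bA^\top$ and $\bs_1-\be_Y\bA^\top$ are distributed like independent uniform elements of $\F_q^\ell$ (so the collision estimate holds) and that $\bs_2-(\be_X+\be_Y)\bB^\top$ behaves like a uniform random vector (so the expected number of entries computed before an early abort is $\tfrac{q}{q-1}(t-2v+1)$). These are the standard working assumptions in the ISD literature, and I would state them explicitly rather than attempt to prove them; with them in place, the remainder of the argument is routine bookkeeping of additions and multiplications over $\F_q$ at $\lcq$ and $\lcq^2$ binary operations respectively.
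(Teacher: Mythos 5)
Your proposal is correct and follows essentially the same route as the paper's proof: the same decomposition into (reciprocal success probability determined by the assumed weight distribution) times (per-iteration cost), with the identical four cost contributions — Gaussian elimination, building $S$ and $T$ via intermediate sums, and processing the expected number of collisions with early abort. Your explicit flagging of the uniformity heuristics is a sensible addition, but the bookkeeping and the final expression coincide with the paper's argument.
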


\begin{proof}
As in Prange's algorithm, as a first step we bring $\bH$ into systematic form and  apply the same row operations on the syndrome;  a broad estimate for the cost is given by $$(n-k)^2(n+1)\left(\lcq + \lcq^2\right) $$ binary operations. 

To compute  the set $S$, we can use the technique of intermediate sums.  We want to compute $\be_X \bA^\top$ for all $\be_X \in \F_q^k(X)$ of Hamming weight $v$. Using intermediate sums, this costs
$$L_q(m_1,v)\ell \lcq +m_1\ell \lcq^2$$
binary operations.

Similarly, we can build set $T$: we want to compute $\bs_1 -  \be_Y\bA^\top$, for all $\be_Y \in \F_q^k(Y)$ of Hamming weight $v$. Using intermediate sums, this costs
$$L_q(m_2,v)\ell \lcq + m_2\ell \lcq^2 + \binom{m_2}{v}(q-1)^v \ell \lcq$$
binary operations. Note that the $L_q(m_2,v)\ell \lcq +m_2 \ell \lcq^2$ part comes from computing $ \be_Y\bA^\top$, whereas the $   \binom{m_2}{v}(q-1)^v \ell \lcq $ part comes from subtracting from each of the vectors $\be_Y\bA^\top$ the vector $\bs_1$.
\medskip

In the remaining steps   we go through all $(\ba,\be_X) \in S$ and all $(\ba,\be_Y) \in T$, thus usually the cost of these steps should be multiplied by the size of $S \times T$. However, since the algorithm first checks for a collision, we can use instead of  $\mid S \mid \mid T \mid$ the number of collisions we expect on average. 

More precisely: 
since $S$ consists of all $\be_X \in \F_q^k(X)$ of Hamming weight $v$, $S$ is of size $\binom{m_1}{v}(q-1)^v$ and similarly $T$ is of size $\binom{m_2}{v}(q-1)^v$.

The resulting vectors $\be_X \bA^\top$, respectively, $\bs_1 -  \be_Y\bA^\top$ live in $\F_q^\ell$, and we assume that they are uniformly distributed. Hence, we have to check on average
$$\frac{\binom{m_1}{v} \binom{m_2}{v}(q-1)^{2v}}{q^{\ell}}$$ many collisions. 

For each collision we have to compute $$\bs_2 - (\be_X+\be_Y)\bB^\top.$$   
Since the algorithm only proceeds if the weight of $$\bs_2 - (\be_X+\be_Y)\bB^\top$$ is $t-2v$, we can use the concept of early abort.

Computing one entry of the vector $\bs_2 - (\be_X+\be_Y)\bB^\top$ costs $$2v \left( \lcq^2 +  \lcq \right)$$ binary operations. 
Thus,  we get that this step costs on average
$$ \frac{q}{q-1}(t-2v+1) 2v\left( \lcq^2 + \lcq\right)$$ binary operations. 
\medskip

Finally, the success probability is given by having chosen the correct  weight distribution of $\be$; this is exactly the same as over $\F_2$ and given by 
\begin{equation*}
\binom{m_1}{v}\binom{m_2}{v}\binom{n-k-\ell}{t-2v} \binom{n}{t}^{-1}.    
\end{equation*}
Thus, we can conclude.
\end{proof}

Note that we usually set in Stern's algorithm  the parameter $m_1 = \lfloor \frac{k}{2} \rfloor$. Hence assuming that $k$ is even we get a nicer formula for the cost, being

\begin{align*} 
& \binom{k/2}{v}^{-2}  \binom{n-k-\ell}{t-2v}^{-1} \binom{n}{t} \left(   \left( \lcq + \lcq^2 \right)  \right.  \\
\cdot & \left( (n-k)^2(n+1) + \binom{k/2}{v}^2(q-1)^{2v-\ell}  \min\left\{n-k-\ell, \frac{q}{q-1}(t-2v+1)\right\} 2v  \right)\\
  + & \left. k\ell \lcq^2  +    \ell \left( 2L_q(k/2,v) + \binom{k/2}{v}(q-1)^v \right)\lcq  \right)
\end{align*}
binary operations.

 \subsubsection{BJMM Algorithm}
 
 In what follows we cover the BJMM algorithm proposed in \cite{bjmm}, this is considered to be the fastest algorithm over the binary, for this reason we will stick to the binary case also for this paragraph.
 \medskip
 
In the previous ISD algorithms one always represented  the entries of the error vector as $0=0+0$ and $1=1+0=0+1$, that is one was looking for a set partition of the support.  The novel idea of the algorithm is to use also the other representations, i.e., $0=0+0=1+1$. Thus, the search space for the smaller error vector parts become larger but the probability to find the correct error becomes larger as well.

The idea of the BJMM algorithm is to write a vector $\be$ of some length $n$ and weight $v$ as  $\be=\be_1+\be_2$, where $\be_1$ and $\be_2$ are both of length $n$ and of weight $v/2 +\varepsilon,$ thus we are asking for an overlap in $\varepsilon$ positions, which will cancel out. 
\medskip

The first part of all algorithms, which belong to the second direction of improvements, is to perform a partial Gaussian elimination (PGE)  step, that is for some positive integer $\ell \leq n-k$ one wants to find an invertible matrix $\bU \in \mathbb{F}_2^{(n-k) \times (n-k)}$, such that (after some permutation of the columns)
$$\bU\bH= \begin{pmatrix}
\text{Id}_{n-k-\ell} & \bA \\ \mathbf{0} & \bB
\end{pmatrix},$$
where $\bA \in \mathbb{F}_2^{(n-k-\ell) \times (k+\ell)}$ and $\bB \in \mathbb{F}_2^{\ell \times (k+\ell)}$. Hence we are looking for $\be=(\be_1, \be_2)$, with $\be_1 \in \mathbb{F}_2^{n-k-\ell}$ of weight $t-v$ and $\be_2 \in \mathbb{F}_2^{k+\ell},$  of weight $v$. For the parity-check equations, we also split the new syndrome $\bs\bU^\top =(\bs_1, \bs_2)$ with $\bs_1 \in \mathbb{F}_2^{n-k-\ell}$ and $\bs_2 \in \mathbb{F}_2^\ell,$ that is we want to solve
$$\bU\bH\be^\top = \begin{pmatrix}
\text{Id}_{n-k-\ell} & \bA \\ \mathbf{0} & \bB
\end{pmatrix} \begin{pmatrix}
\be_1^\top \\ \be_2^\top
\end{pmatrix} = \begin{pmatrix}
\bs_1^\top \\ \bs_2^\top
\end{pmatrix}.$$ The parity-check equations can thus be written as
\begin{align*}
    \be_1^\top + \bA\be_2^\top &= \bs_1^\top, \\ \bB\be_2^\top &= \bs_2^\top.
\end{align*}
The idea of the algorithms using PGE is to solve now the second equation, i.e., to search for $\be_2$ of length $k+\ell$ and weight $v$ such that $\be_2\bB^\top=\bs_2$ and then to define $\be_1= \bs_1-\be_2\bA^\top$ and to check if this has then the remaining weight $t-v.$
\medskip

Note that this is now a smaller instance of a syndrome decoding problem, for which we want to find a list of solutions.
The success probability of such a splitting of $\be$ is then given be 
$$\binom{k+\ell}{v}\binom{n-k-\ell}{t-v}\binom{n}{t}^{-1}.$$

An important part of such algorithms is how to merge two lists of parts of the error vector together.  For this we consider two lists $\mathcal{L}_1, \mathcal{L}_2$, a positive integer $u<k$, which denotes the number of positions on which one merges, a target vector $\mathbf{t} \in \mathbb{F}_2^u$ and a target weight $w.$ For a vector $\bx$, let us denote by $\bx_{\mid u}$ the vector consisting of the first $u$ entries of $\bx.$

\begin{algorithm}[h!]
\caption{Merge}\label{algo:merge}
\begin{flushleft}
Input: The input lists $\mathcal{L}_1, \mathcal{L}_2$, the positive integers $0<u<k$ and  $0 \leq v \leq n$, the matrix  $\bB \in \mathbb{F}_2^{k \times (k+\ell)}$ and the target $\mathbf{t} \in \mathbb{F}_2^{u}$. \\ 
Output: $\mathcal{L} = \mathcal{L}_1 \bowtie \mathcal{L}_2$.
\end{flushleft}
\begin{algorithmic}[1]
\State Lexicographically sort $\mathcal L_1$  and $\mathcal{L}_2$ according to  $(\bB \bx_i^\top)_{\mid u}$, respectively $(\bB\by_j)_{\mid u} +\mathbf{t}$ for $\bx_i \in \mathcal L_1$ and $\by_j \in \mathcal{L}_2$. 
\For{$(\bx_i,\by_j) \in \mathcal{L}_1\times \mathcal{L}_2$ with $(\bB \bx_i^\top)_{\mid u} =  (\bB \by_j^\top)_{\mid u} + \mathbf{t}$}
    \If{$\wtH(\bx_i+\by_j) = w$}
        \State $\mathcal L = \mathcal L \cup \{\bx_i + \by_j\}$.
    \EndIf
\EndFor
\State Return $\mathcal{L}.$
\end{algorithmic}
\end{algorithm}

\begin{lemma} \label{lemma:merge-cost}
The average cost of the merge algorithm (Algorithm \ref{algo:merge}) is given by
\begin{align*}
    (L_1 + L_2)u(k+\ell)  +L_1\log(L_1)\\ + L_2\log_2(L_2) +(k+\ell) \left(  L_1 \cdot L_2 2^{-u}\right),
\end{align*} where $L_i = |\mathcal L_i |$ for $i = 1,2$. 
\end{lemma}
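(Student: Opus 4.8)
The plan is to split the running time of Algorithm~\ref{algo:merge} into the three phases of the procedure---computing the merge labels, sorting the two lists, and iterating over the collisions---and to estimate each phase separately over $\mathbb{F}_2$, where a single addition or comparison is charged as one binary operation.

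First I would handle the \emph{labelling} phase. For every $\bx_i \in \mathcal{L}_1$ one needs $(\bB\bx_i^\top)_{\mid u}$, which is the product of the first $u$ rows of $\bB$ (each a vector of length $k+\ell$) with $\bx_i^\top$, at a broad school-book cost of $u(k+\ell)$ binary operations per list element; doing this for both lists, and adding the fixed target $\mathbf{t}$ on the $\mathcal{L}_2$ side (a lower-order $uL_2$ term that is absorbed), yields the $(L_1+L_2)u(k+\ell)$ summand. Then the \emph{sorting} phase: a lexicographic sort of a list of length $L_i$ by its $u$-bit label costs $O(L_i\log_2 L_i)$ comparisons, giving the $L_1\log(L_1)+L_2\log_2(L_2)$ summand, while the subsequent linear-time merge-join pass that reads off the matching pairs $(\bx_i,\by_j)$ with $(\bB\bx_i^\top)_{\mid u}=(\bB\by_j^\top)_{\mid u}+\mathbf{t}$ runs in time $O(L_1+L_2)$ plus the number of collisions and is absorbed into the remaining terms.

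For the last phase I would invoke the ``number of collisions'' heuristic recalled earlier in Section~\ref{sec:ISD}: assuming the labels $(\bB\bx_i^\top)_{\mid u}$ and $(\bB\by_j^\top)_{\mid u}+\mathbf{t}$ are uniformly distributed in $\mathbb{F}_2^u$, the expected number of colliding pairs is $L_1 L_2\, 2^{-u}$; for each such pair the algorithm forms $\bx_i+\by_j \in \mathbb{F}_2^{k+\ell}$ and tests $\wtH(\bx_i+\by_j)=w$ at a cost of $k+\ell$ binary operations, producing the $(k+\ell)L_1 L_2 2^{-u}$ summand. Adding the three contributions gives the claimed bound.

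The main obstacle is not any real calculation but the accounting conventions one must agree on: the collision count is only an average under the usual uniformity heuristic used throughout this section, a comparison of two $u$-bit strings is counted as a single binary operation, and any sparsity of the vectors $\bx_i,\by_j$ is ignored (it would only decrease the cost). Once these conventions are fixed, every term in the stated expression is immediate book-keeping.
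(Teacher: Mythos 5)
Your proof is correct and is exactly the intended argument: the paper leaves this lemma as an exercise, and your three-phase accounting (label computation at $u(k+\ell)$ per list element, sorting at $L_i\log_2 L_i$, and the expected $L_1L_2 2^{-u}$ collisions each costing $k+\ell$ for forming $\bx_i+\by_j$ and checking its weight) matches both the stated formula term by term and the collision-counting and school-book cost conventions set up in the Techniques subsection. The points you flag as conventions (uniformity heuristic for the average, absorbing the $uL_2$ cost of adding $\mathbf{t}$) are handled the same way throughout the paper, so nothing further is needed.
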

\begin{exercise}
Prove Lemma \ref{lemma:merge-cost}.
\end{exercise}
The algorithm will use this merging process three times.

For the internal parameter $v$ (which can be optimized), we also choose the positive integers $\varepsilon_1, \varepsilon_2$ (also up to optimization), and define
\begin{align*}
    v_1 &= v/2 + \varepsilon_1, \\
    v_2 &= v_1/2 + \varepsilon_2.
\end{align*}
We start with creating the two base lists $\mathcal{B}_{1}$ and $\mathcal{B}_{2}$,  which depend on a partition $P_1,P_2$ of $\{1, \ldots, k+\ell\}$, of same size, i.e., $\frac{k+\ell}{2}:$
$$\mathcal{B}_i = \{ \bx \in \mathbb{F}_2^{k+\ell}(P_i) \mid \wtH(\bx) = v_2/2 \}.$$
These lists have size $$B= \binom{(k+\ell)/2}{v_2/2}.$$ 

We now choose $\mathbf{t}_1^{(1)} \in \mathbb{F}_2^{u_1}$, which determines 
$\mathbf{t}_2^{(1)}= (\mathbf{s}_2)_{\mid u_1} + \mathbf{t}_1^{(1)}.$ We also choose $\mathbf{t}_1^{(2)}, \mathbf{t}_3^{(2)} \in \mathbb{F}_2^{u_2}$, which define
\begin{align*}
    \mathbf{t}_2^{(2)} &= (\mathbf{t}_1^{(1)})_{\mid u_2} + \mathbf{t}_1^{(2)},  \\
    \mathbf{t}_4^{(2)} &= (\mathbf{t}_2^{(1)})_{\mid u_2} + \mathbf{t}_3^{(2)}.
\end{align*}

Then, for a positive integer $u_2$ and the four target vectors $\mathbf{t}_i^{(2)}$, for $i \in \{1, \ldots, 4\}$ we perform the first four merges using Algorithm \ref{algo:merge} to get $\mathcal{L}_i^{(2)}= \mathcal{B}_1 \bowtie \mathcal{B}_2$ on $u_2$ positions, weight $v_2$ and target vector $\mathbf{t}_i^{(2)}$ for $i \in \{1, \ldots, 4\}.$
The lists $\mathcal{L}_i^{(2)}$ are expected to be of size $L_2=\binom{k+\ell}{v_2}2^{-u_2}$.

With the four new lists we then perform another two merges yielding
$$\mathcal{L}_i^{(1)} = \mathcal{L}_{2i-1}^{(2)} \bowtie \mathcal{L}_{2i}^{(2)}$$ on $u_1$ positions, with weight $v_1$ and target vectors $\mathbf{t}_i^{(1)}$ for $i \in \{1,2\}.$  These lists are expected to be of size $L_1=\binom{k+\ell}{v_1}2^{-u_1}.$  

As a last step we then merge the two new lists to get the final list
$$\mathcal{L} = \mathcal{L}_1^{(1)} \bowtie \mathcal{L}_2^{(1)}$$ on $\ell$ positions, with weight $v$ and target vector $\bs_2.$ 
The final list is expected to be of size $L=\binom{k+\ell}{v}2^{-\ell}. $

One important aspect of such algorithms is the following 
\begin{center} \emph{We have to make sure that at least one representation of the solution lives in each list.}\end{center}
This can either be done by employing the probability of this happening in the success probability, thus increasing the number of iterations or by choosing $u$, the number of positions on which one merges in such a way that we can expect that at least one representation lives in the lists.

In \cite{bjmm} the authors chose the second option: observe that the number of tuples $(\be_1^{(1)}, \be_2^{(1)}) \in \mathcal{L}_1^{(1)} \times \mathcal{L}_2^{(1)}$ that represent a single solution $\be_2 \in \mathcal{L}$ is given by 
$$U_1 = \binom{v}{v/2} \binom{k+\ell-v}{\varepsilon_1}.$$ Hence choosing $u_1 = \log_2(U_1)$ ensures that $L \geq 1.$ Similarly, since we also represent $\be_i^{(1)}$ as sum of two overlapping vectors $(\be_{2i-1}^{(2)}, \be_{2i}^{(2)})$, we have that for each $\be_i^{(1)}$  we have approximately 
$$ U_2= \binom{v_1}{v_1/2} \binom{k+\ell-v_1}{\varepsilon_2}$$ many representations. Thus, we can choose $u_2 = \log_2(U_2).$

\begin{proposition}
Algorithm \ref{algo:bjmm} has an average cost of 
 \begin{align*}
 & \binom{n}{t} \binom{n-k-\ell}{t-v}^{-1}\binom{k+\ell}{v}^{-1}  \cdot \left[ (n-k-\ell)^2(n+1)\right. \\ 
     &   +   4  (2Bu_2(k+\ell) +2B\log(B)  +(k+\ell) B^2 2^{-u_2}) \\
     & + 2(2L_2u_1(k+\ell)+2L_2\log(L_2)  +(k+\ell) L_2^2 2^{-u_1}) \\
     & + (2L_1 \ell (k+\ell) +2 L_1 \log(L_1)  +(k+\ell) L_1^2 2^{-\ell})  \\
     & \left. + \binom{k+\ell}{v}2^{-\ell} 2(t-v+1)v \right] 
 \end{align*}
 binary operations.
\end{proposition}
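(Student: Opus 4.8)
The plan is to compute the cost of Algorithm \ref{algo:bjmm} as the product of the number of iterations with the cost of a single iteration, following the same scheme used for Prange (Theorem \ref{prangeFqcost}) and Stern (Theorem \ref{sternFqcost}). First I would establish the number of iterations: one iteration succeeds precisely when the random permutation $\bP$ places the error vector into the desired configuration, i.e., $t-v$ of its nonzero entries among the $n-k-\ell$ identity coordinates and $v$ among the remaining $k+\ell$ coordinates. As noted in the paragraph preceding Algorithm \ref{algo:bjmm}, this happens with probability $\binom{k+\ell}{v}\binom{n-k-\ell}{t-v}\binom{n}{t}^{-1}$, so the expected number of iterations is the reciprocal. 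A subtle point is that the representation technique requires at least one representation of the target $\be_2$ to survive in each intermediate list; since the parameters $u_1 = \log_2(U_1)$ and $u_2 = \log_2(U_2)$ were chosen exactly so that $L \geq 1$ and the intermediate lists are non-empty on average, this does not contribute an extra factor to the iteration count, and I would state this explicitly.

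Next I would tally the cost of one iteration term by term, matching the five bracketed summands. The partial Gaussian elimination step costs $(n-k-\ell)^2(n+1)$ binary operations over $\F_2$ (no multiplication cost since $q=2$), matching the first term; here I would use the same schoolbook estimate as in the preliminaries, noting that we reduce $\begin{pmatrix} \bH & \bs^\top\end{pmatrix}$. Then the two base lists $\mathcal B_1, \mathcal B_2$ each have size $B = \binom{(k+\ell)/2}{v_2/2}$ and are built for free. The four merges producing $\mathcal L_i^{(2)}$ each cost, by Lemma \ref{lemma:merge-cost} with $L_1 = L_2 = B$ and merging on $u_2$ positions for target weight $v_2$, exactly $2Bu_2(k+\ell) + 2B\log(B) + (k+\ell)B^2 2^{-u_2}$, giving the factor $4$ in front of the second term. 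Similarly the two merges producing $\mathcal L_i^{(1)}$ use $L_1 = L_2 = L_2 = \binom{k+\ell}{v_2}2^{-u_2}$ (the expected size of an $\mathcal L_i^{(2)}$) on $u_1$ positions for weight $v_1$, giving the factor $2$ before the third term, and the final merge uses lists of size $L_1 = \binom{k+\ell}{v_1}2^{-u_1}$ on $\ell$ positions for weight $v$, giving the fourth term. Finally, the concluding loop runs over the expected size $L = \binom{k+\ell}{v}2^{-\ell}$ of $\mathcal L$, and for each $\be_2$ one checks the weight of $\bs_1 - \be_2\bA^\top$; using early abort over $\F_2$ this costs about $2(t-v+1)v$ binary operations per element (a weight-$v$ vector times the $(n-k-\ell)\times(k+\ell)$ matrix $\bA$, aborted after roughly $t-v+1$ coordinates, each coordinate costing $v$ additions), yielding the fifth term.

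Assembling these, one iteration costs the quantity in square brackets, and multiplying by $\binom{n}{t}\binom{n-k-\ell}{t-v}^{-1}\binom{k+\ell}{v}^{-1}$ gives the claimed average cost. The main obstacle I anticipate is bookkeeping the list sizes consistently: one must carefully distinguish the \emph{nominal} sizes $\binom{k+\ell}{v_i}2^{-u_i}$ (which appear as $L_1, L_2$ in the merge cost at the next level) from the combinatorial sizes of the base lists, and verify that Lemma \ref{lemma:merge-cost} is being applied with the right arguments $(u, v, \text{list sizes})$ at each of the three levels. A secondary point requiring care is the early-abort estimate in the last loop: I would argue, as in the Stern proof, that the partial syndrome is heuristically uniform so the expected number of coordinates computed before the weight exceeds $t-v$ is $O(t-v)$, and that this heuristic is the standard one underlying all these complexity estimates. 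Everything else is routine arithmetic of the form already carried out for Prange and Stern.
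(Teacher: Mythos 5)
Your proposal is correct and follows exactly the derivation the paper intends (and leaves implicit): the iteration count is the reciprocal of the success probability $\binom{k+\ell}{v}\binom{n-k-\ell}{t-v}\binom{n}{t}^{-1}$ stated before Algorithm \ref{algo:bjmm}, and the per-iteration cost is the partial Gaussian elimination plus the $4+2+1$ applications of Lemma \ref{lemma:merge-cost} with list sizes $B$, $L_2$, $L_1$ and merge lengths $u_2$, $u_1$, $\ell$, plus the early-abort weight check over the expected $\binom{k+\ell}{v}2^{-\ell}$ elements of $\mathcal{L}$ (where the factor $2$ in $2(t-v+1)v$ is the $\tfrac{q}{q-1}$ of the early-abort heuristic at $q=2$, exactly as in the Stern analysis). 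Your remarks on the representation counts fixing $u_1,u_2$ so that no extra probability factor is needed, and on keeping the nominal list sizes straight, are precisely the right bookkeeping points.
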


\begin{algorithm}[h!]
\caption{BJMM}\label{algo:bjmm}
\begin{flushleft}
Input:  $0\leq \ell\leq n-k$, $0 \leq u_2 \leq u_1 \leq \ell$, $  \varepsilon_1, \varepsilon_2$,    $ t,v<t, \bH \in \mathbb{F}_2^{(n-k) \times n}$ and $\bs \in \mathbb{F}_2^{n-k}$. \\ 
Output: $\be \in \mathbb{F}_2^{n}$ with $\text{wt}_H(\be)= t$ and $\bH\be^\top = \bs^\top.$ 
\end{flushleft}
\begin{algorithmic}[1]
\State Choose an $n \times n$ permutation matrix $\bP$.
\State Find $\bU  \in \mathbb{F}_2^{(n-k) \times (n-k)}$, such that 
$$\bU\bH\bP = \begin{pmatrix}
 \Id_{n-k-\ell} & \bA \\ \mathbf{0} & \bB
\end{pmatrix}, $$ where $\bA \in \mathbb{F}_2^{(n-k-\ell) \times (k+\ell)}$ and $\bB \in \mathbb{F}_2^{\ell \times (k+\ell)}.$
\State Compute $\bU\bs^\top = \begin{pmatrix}
\bs_1^\top \\ \bs_2^\top
\end{pmatrix},$ where $\bs_1 \in \mathbb{F}_2^{n-k-\ell}, \bs_2 \in \mathbb{F}_2^{\ell}.$
\State Choose partitions $P_1, P_2$ of $\{1, \ldots, k+\ell\}$ of size $(k+\ell)/2.$
\State Set $$\mathcal{B}_j = \left\{ \bx \in \mathbb{F}_2^{k+ \ell}(P_j) \mid \wtH(\bx)= v_2/2 \right\}$$ for  $j \in \{1,2\}.$
\State Choose $\mathbf{t}_1^{(1)} \in \mathbb{F}_2^{u_1}$, set $\mathbf{t}_2^{(1)}=(\mathbf{s}_2)_{\mid u_1} + \mathbf{t}_1^{(1)}$
\State Choose $\mathbf{t}_1^{(2)}, \mathbf{t}_3^{(2)} \in \mathbb{F}_2^{u_2}$, set $\mathbf{t}_2^{(2)}=(\mathbf{t}_1^{(1)})_{\mid u_2} + \mathbf{t}_1^{(2)}$ and $\mathbf{t}_4^{(2)}=(\mathbf{t}_2^{(1)})_{\mid u_2} + \mathbf{t}_3^{(2)}$
 \For{ $ i \in \{1, \ldots, 4\}$}
        \State Compute $\mathcal{L}_i^{(2)} =  \mathcal{B}_1 \bowtie \mathcal{B}_1$ using Algorithm \ref{algo:merge} on $u_2$ positions to get weight $v_2$ and target vectors $\mathbf{t}_i^{(2)}$.
    \EndFor
    \For{ $i \in \{1,2\}$}
   \State Compute $\mathcal{L}_i^{(1)} =  \mathcal{L}_{2i-1}^{(2)} \bowtie \mathcal{L}_{2i}^{(2)}$ using Algorithm \ref{algo:merge} on $u_1$ positions to get weight $v_1$ and target vectors $\mathbf{t}_i^{(1)}$.
    \EndFor
    \State Compute $\mathcal{L} =  \mathcal{L}_{1}^{(1)} \bowtie \mathcal{L}_{2}^{(1)}$ using Algorithm \ref{algo:merge} on $\ell$ positions to get weight $v$ and target vector $\bs_2$.
    \For{$\be_2 \in \mathcal{L}$}
    \If{$\wtH(\bs_1-\be_2\bA^\top)=t-v$}
    \State Set $\be=(\be_1, \be_2).$
    \EndIf
    \EndFor
        \State Return $\bP\be.$
        \State Else start over at step 1.
 \end{algorithmic}
\end{algorithm}
\newpage
 ~
 \newpage

\subsubsection{Generalized Birthday Decoding Algorithms}

In the syndrome decoding problem (SDP)  we are given a parity-check matrix $\bH \in \mathbb{F}_q^{(n-k) \times n}$, a syndrome $\bs \in \mathbb{F}_q^{n-k}$ and a weight $t \in \mathbb{N}$ and want to find an error vector $\be \in \mathbb{F}_q^n$, such that $\bs=\be\bH^\top$ and $\text{wt}_H(\be)=t.$
\medskip

The first step of a generalized birthday algorithm (GBA) decoder is the partial Gaussian elimination step, i.e., for some positive integer $\ell \leq n-k$ we bring the parity-check matrix into the form 
$$\bH'= \begin{pmatrix} \text{Id}_{n-k-\ell} & \bA \\ 0 & \bB \end{pmatrix},$$ up to permutation of columns. 
We recall from the BJMM algorithm, that this leaves us with solving 
the smaller SDP instance: 
find $\be_2 \in \mathbb{F}_q^{k+\ell}$ of Hamming weight $v \leq t$, such that 
$$\be_2\bB^\top =\bs_2,$$ for $\bs_2 \in \mathbb{F}_q^\ell$ and $\bB \in \mathbb{F}_q^{\ell \times (k+\ell)}.$
\medskip
  
This second step is usually performed using Wagner's algorithm on $a$ levels.

By abuse of notation, we write for the rest $\be\bB^\top = \bs$, instead of $\be_2\bB^\top=\bs_2$. In a Lee-Brickell approach, one would now go through all possible $\be \in \mathbb{F}_q^{k+\ell}$ of weight $v$ and check if they satisfy the parity-check equations. The idea of GBA is to split the vector $\be$ further. Let us start with GBA on one level, that is 
$$\be=(\be_1, \be_2) $$ with $\be_i \in \mathbb{F}_q^{(k+\ell)/2}$ of weight $v/2$, for $i \in \{1,2\}.$
Hence we define $\bB= \begin{pmatrix} \bB_1 & \bB_2 \end{pmatrix},$ with $\bB_i \in \mathbb{F}_q^{\ell \times (k+\ell)/2}$, for $i \in \{1,2\}$ and split the syndrome $\bs= \bs_1+\bs_2$. We hence want that $$\be_1\bB_1^\top + \be_2\bB_2^\top = \bs = \bs_1+\bs_2.$$

For this we define two lists 
\begin{align*}
    \mathcal{L}_1 &= \{(\be_1, \be_1\bB_1^\top-\bs_1) \mid \be_1 \in \mathbb{F}_q^{(k+\ell)/2}, \text{wt}_H(\be_1) = v/2\}, \\ \mathcal{L}_2 &= \{(\be_2, \be_2\bB_2^\top-\bs_2) \mid \be_2 \in \mathbb{F}_q^{(k+\ell)/2}, \text{wt}_H(\be_2) = v/2\}.
\end{align*}
We are then looking for an element  $$((\be_1, \bx_1), (\be_2,\bx_2)) \in \mathcal{L}_1 \times \mathcal{L}_2,$$ such that $\bx_1 + \bx_2=0$, which will then imply that  $$\be_1\bB_1^\top + \be_2\bB_2^\top = \bs = \bs_1+\bs_2.$$
\medskip

This idea can be generalized to $a$ levels, thus splitting 
$$\be= (\be_1^{(1)}, \ldots, \be_{2^a}^{(1)}),$$ where $\be_i^{(1)} \in \mathbb{F}_q^{(k+\ell)/2^a}$ of weight $v/(2^a)$ and writing $$\bB= \begin{pmatrix} \bB_1 & \cdots & \bB_{2^a} \end{pmatrix},$$ where $\bB_i \in \mathbb{F}_q^{\ell \times (k+\ell)/2^a}$ and splitting $\bs= \bs_1 + \cdots +\bs_{2^a}$.  For this we will need the merging positions $0 \leq u_1 \leq \cdots \leq u_a = \ell$. One first constructs the base lists
\begin{align*}
    \mathcal{L}_j^{(1)} = \{(\be_j^{(1)},  \be_j^{(1)}\bB_j^\top- \bs_j) \mid \be_j^{(1)} \in \mathbb{F}_q^{(k+\ell)/2^a}, \text{wt}_H(\be_j^{(1)})= v/2^a\},
\end{align*} for $j \in \{1, \ldots, 2^a\}$ and then performs $a$ merges: in the $i$-th merge we are given a parameter $0 \leq u_i\leq v$ and we want to merge $$\mathcal{L}_j^{(i+1)}= \mathcal{L}_{2j-1}^{(i)} \bowtie_{u_i}  \mathcal{L}_{2j}^{(i)}.$$  

For this let us define the merge $\mathcal{L}= \mathcal{L}_1 \bowtie_u \mathcal{L}_2$ first formally. Given $\mathcal{L}_i = \{(\be_i, \bx_i)\},$ for $i \in \{1,2\}$ and $u$
$$\mathcal{L}_1 \bowtie_u \mathcal{L}_2 = \{((\be_1,\be_2), \bx_1+\bx_2) \mid \bx_1 +\bx_2 =_u \mathbf{0} \}, $$ where $\ba=_u \bb$, denotes that $\ba$ and $\bb$ are equal on the first $u$ positions.  The merging process follows the following algorithm
\begin{enumerate}
    \item Lexicographically order the elements $(\be_i, \bx_i) \in \mathcal{L}_i$ for $i \in \{1, 2\}$ according to the first $u$ positions,
    \item Search for a collision, i.e., $\bx_1+\bx_2 =_u \mathbf{0}$ and if found insert the corresponding $((\be_1,\be_2),\bx_1+\bx_2)$ in $\mathcal{L}.$
\end{enumerate}

The general idea of GBA is  that we will not use the probability that we can split $\be$ into $(\be_1, \ldots,  \be_{2^a})$ each having weight $v/2^a$, but rather we want that the merging process of  will produce a solution with high probability. The average size of $\mathcal{L}$ is given by  $$ L= \mid \mathcal{L}_1 \bowtie_u \mathcal{L}_2 \mid = \frac{\mid \mathcal{L}_1 \mid \mid \mathcal{L}_2 \mid}{q^u},$$ and thus, whenever  $L \geq 1$ we can be assured that this algorithm returns (on average) a solution $\be.$ 

This is only possible for large weights $v$. If we are in this case, there exists a further improvement on the algorithm, where one does not take the whole lists $\mathcal{L}_i^{(1)}$ but only $2^b$ many such elements, and thus the algorithm works as long as $\frac{2^{2b}}{q^u} \geq 1.$

Stern's ISD algorithm is a special case of Wagner's algorithm on one level, where $\ell=0$ and $\bs_1=0.$ However, in Stern's algorithm one employs the probability of splitting the error vector into $(\be_1,\be_2)$, rather than asking for  $$\frac{\mid \mathcal{L}_1 \mid \mid \mathcal{L}_2 \mid}{q^\ell} \geq 1.$$

The idea of GBA or more precisely of Wagner's approach was used in famous ISD papers such as BJMM and MMT, where 3 levels turned out to be an optimal choice.

\subsubsection{Asymptotic Cost}
An important aspect of ISD algorithms (apart from the cost) is their asymptotic cost. 
The idea of the asymptotic cost is that we are interested in the exponent  $e(R, q)$ such that for large $n$ the cost
of the algorithm is given by $q^{(e(R,q)+o(1))n}$.  This is crucial in order to compare different algorithms.

 We consider codes of large length $n$, and consider the dimension and the error correction capacity as functions in $n$, i.e., $k, t : \mathbb{N} \to \mathbb{N}$. For these we define \begin{align*}
    \lim\limits_{n \to \infty} t(n)/n &=T, \\
    \lim\limits_{n \to \infty} k(n)/n &=R.
\end{align*}  
If $c(n,k,t,q)$ denotes the cost of an algorithm, for example Prange's algorithm, then we are now interested in 
$$C(q,R,T) = \lim\limits_{n \to \infty} \frac{1}{n} \log_q(c(n,k,t,q)).$$

For this we often use Stirlings formula, that is
$$ \lim\limits_{n \to \infty} \frac{1}{n} \log_q\binom{(\alpha + o(1))n}{(\beta + o(1))n} = \alpha \log_q(\alpha) - \beta \log_q(\beta) - (\alpha-\beta)\log_q(\alpha-\beta).$$

One of the most important aspects in computing the asymptotic cost, is  that random codes attain the asymptotic Gilbert-Varshamov bound with high probability, thus we are allowed to choose a relative minimum distance $\delta$ such that $R=1-H_q(\delta).$

\begin{example}
The asymptotic cost of Prange's algorithm is easily computed as 
\begin{align*} \lim\limits_{n \to \infty} \frac{1}{n} &  \log_q \left( \binom{n-k}{t}^{-1} \binom{n}{t}\right) = \\ 
&-(1-T)\log_q(1-T) - (1-R) \log_q(1-R)  +(1-R-T)\log_q(1-R-T). 
\end{align*}
\end{example}

\begin{exercise}
Prove that the asymptotic cost of Prange is equal to 
$$H_q(T) - (1-R)H_q(T/(1-R)).$$
\end{exercise}

For the more sophisticated algorithms such as Stern and BJMM, we will also have internal parameters, such as $\ell,v$, which will be chosen optimal, i.e.,  giving the smallest cost.

Note that we assume half-distance decoding, i.e., $T= \delta/2$, thus $C(q,R,\delta/2)= e(R,q)$ and then compute the largest value of $e(R^\star,q)$ by taking $$R^\star= \text{argmax}_{0 < R <1} e(R,q).$$

With the asymptotic cost, we can now compare different ISD algorithms. For this, we will restrict ourselves to the binary case, since we  presented the BJMM algorithm only over the binary.
In the  following table BJMM refers to the algorithm presented in \cite{bjmm}, MMT to \cite{mmt}, BCD to the algorithm from \cite{ballcoll} and Stern and Prange refer to the algorithms of \cite{stern}, respectively \cite{prange}. 
\begin{table}[h]
 \begin{center}
 \begin{tabular}{|c|c|}
 \hline 
 Algorithm  & $e(R^*,2)$ \\\hline
 BJMM & 0.1019 \\
 MMT & 0.115 \\
 BCD & 0.1163 \\
 Stern & 0.1166 \\
 Prange & 0.1208 \\
  \hline
  \end{tabular}\caption{Asymptotic cost of different ISD algorithms over the binary}
\end{center}  
\end{table}

\subsubsection{Rank-metric ISD Algorithms}

Finally, we want to conclude this section on ISD algorithms explaining the idea of rank-metric ISD algorithms. 

For this we first recall that the Hamming support of an error vector $\be \in \mathbb{F}_{q^m}^n$ is defined as 
$$ \text{supp}_H(\be)= \{ i \in \{1, \ldots, n\} \mid \be_i \neq 0\}. $$
The Hamming weight of $\be$ is then given by the size of the Hamming support, i.e.,
$$\text{wt}_H(\be) = \mid \text{supp}_H(\be) \mid \leq n.$$
If we would want to go through all error vectors of a given Hamming weight $t$, there are 
$$\binom{n}{t}(q^m-1)^t$$ many choices.
This concept changes when we move to the rank-metric. The rank support of an error vector $\be \in \mathbb{F}_{q^m}^n$ is usually defined as the $\mathbb{F}_q$-vector space spanned by the entries of $\be:$
$$\text{supp}(\be) = \langle \be_1, \ldots, \be_n \rangle_{\mathbb{F}_q}.$$
The rank weight of $\be$ is then defined as the $\mathbb{F}_q$-dimension of the rank support, i.e.,
$$\text{wt}_R(\be) = \dim_{\mathbb{F}_q}(\text{supp}(\be)).$$
If we want to go through all vectors of a given rank weight $t$, there are
$$ \gb{m}{t}_q = \prod\limits_{i=0}^{t-1} \frac{q^m -q^i}{q^t-q^i}  \sim q^{(m-t)t}$$ many choices. 
Thus, it is quite clear, that to look for an error vector in the rank metric poses a more costly problem than its Hamming metric counterpart.

However, depending whether $m$ or $n$ are smaller, we could also consider the row or column support. 

\begin{example}
Let us consider $\be =(1, \alpha) \in \mathbb{F}_8^2$, where $\mathbb{F}_8  = \mathbb{F}_2[\alpha]$ with $\alpha^3=\alpha+1$ and the basis $\Gamma=\{1, \alpha,\alpha^2\}$. Then $\be =\bc \bR$, where $\bc=(1,\alpha)$ and $\bR=\begin{pmatrix} 1 & 0 \\ 0 & 1 \end{pmatrix}.$ Thus, the column support of $\be$ is given by $$\text{supp}_C(\be)=\langle \Gamma(\bc)^\top\rangle = \langle (1,0,0),(0,1,0)\rangle \subset \mathbb{F}_2^3 $$ of dimension 3. Whereas the row support of $\be$ is given by 
$$\text{supp}_R(\be)=\langle \bR \rangle = \langle (1,0),(0,1) \rangle \subset \mathbb{F}_2^2.$$

Note that the column and row support can also be read of  
$$\Gamma(\be) = \begin{pmatrix}
1 & 0  \\ 0 & 1  \\ 0 & 0 
\end{pmatrix} $$
as
$$\text{supp}_R(\be)=\text{rowsp}(\Gamma(\be)) \subset \mathbb{F}_q^n$$ and 
$$\text{supp}_C(\be)=\text{colsp}(\Gamma(\be)) \subset \mathbb{F}_q^m.$$
\end{example}
Thus, 
\begin{enumerate}
    \item if $m \leq n$, we consider the column support of $\be$. In this case we  have $\gb{m}{t}_q$ vector spaces to go through. 
    \item If $n \leq m$, we row support of $\be$. In this case we have $\gb{n}{t}_q$ many vector spaces.
\end{enumerate}

In the following we give only the ideas of the combinatorial and algebraic algorithms to solve the rank SDP. 
First observe that we can write $\be = \beta \bE$, where $\beta=(\beta_1, \ldots, \beta_t)$ is a basis of the support of the error vector $\be$ and $\bE \in \mathbb{F}_q^{t \times n}.$

The first proposed rank ISD algorithm \cite{chabaudstern} performs a basis enumeration. That is, we want to enumerate all possible choices for $\beta$. Since if we know $\beta$, then solving $\beta\bE\bH^\top = \bs$ has quadratic complexity. This attack has approximately a complexity of $q^{tm}$ operations. 

The second proposed rank ISD algorithm \cite{johannson} enumerates all possible matrices $\bE$ instead, resulting in a cost of approximately $q^{(t-1)(k+1)}$ operations. These approaches are called combinatorial attacks, as they solve the rank SDP through enumerations.

In \cite{grs} the authors give a Prange-like rank metric ISD algorithm. The algorithm is usually called GRS, as abbreviation for the authors Gaborit, Ruatta, Schrek, not to be confused with generalized Reed-Solomon codes.
One first chooses whether to guess the row or column support of $\be$, depending whether $n\leq m$, or $m\leq n.$
Let us first assume  that $m \leq n$ and hence we guess the column support.

Recalling that $\be=\bc \bR,$ if we know a basis of the column support $\{\gamma_1, \ldots, \gamma_t\}$ with $\gamma_i \in \mathbb{F}_q^m$, such that $\Gamma(c_i)=\gamma_i$, we can write for each $i \in \{1, \ldots,n\}$
$$e_i=\sum_{j=1}^t c_j r_{i,j}.$$
And over $\mathbb{F}_q$
$$\Gamma(e_i)= \sum_{j=1}^t \gamma_j r_{i,j}.$$
Thus, we have $nt$ unknowns $r_{i,j}$ and from $\bs=\be\bH^\top$ we have $m(n-k)$ equations.

\begin{example}
    Let us consider $\mathbb{F}_8=\mathbb{F}_2[\alpha]$ with $\alpha^3=\alpha+1$ and basis $\Gamma=\{1,\alpha,\alpha^2\}.$
    We are given the parity-check matrix $$\bH= \begin{pmatrix} 1 & 0 & 1 & \alpha^2 \\ 0 & 1 & \alpha & 1 \end{pmatrix} $$ and the syndrome $\bs=(\alpha^2,\alpha+1)$ ant $t=1.$

    We guess the column support of $\be$ to be $\langle (1,1,0) \subset \mathbb{F}_2^3$, this corresponds to $\bc=(\alpha+1)$. Hence $$e_i=\begin{pmatrix} 1 \\ 1 \\ 0 \end{pmatrix} r_i.$$
    We consider the 2 syndrome equations 
    \begin{align*}
        e_1 + e_3 + \alpha^2 e_4 & = s_1 = \alpha^2 \\ 
        e_2 + \alpha e_3 + e_4 &= s_2 = \alpha+1.    \end{align*}
        In order to write these equations over $\mathbb{F}_2$ we observe that 
        $\alpha_2 e_4 = \alpha^2(\alpha+1) r_4 =(\alpha^2+\alpha+1)r_4$ and $\alpha e_3 = \alpha (\alpha+1)r_3= (\alpha^2+\alpha)r_3$. 
        Hence we get the linear system of equations
        $$\begin{pmatrix} 1 & 0 & 1 & 1 \\ 1 & 0 & 1 & 1 \\ 0 & 0 & 0 & 1 \\ 0 & 1 & 0 & 1 \\ 0 & 1 & 1 & 1 \\ 0 & 0 & 1 & 0 \end{pmatrix} \begin{pmatrix} r_1 \\ r_2 \\ r_3 \\ r_4 \end{pmatrix} = \begin{pmatrix} 0 \\ 0 \\ 1 \\ 1 \\ 1 \\ 0 \end{pmatrix}. $$
        After solving the system, we get the unique solution $r_1=1,r_2=0,r_3=0,r_4=1$ and recompute $\be=\bc\bR= (\alpha+1,0,0,\alpha+1)$, which indeed has rank weight 1.
\end{example}
\begin{exercise}
    Perform the same example but guess the column support to be $(1,0,0).$
\end{exercise}

If we know the row support $\{\br_1, \ldots, \br_t\}$ for $\supp_R(\be) \subset \mathbb{F}_q^n$, i.e., the rows of $\bR$, then we can write for each $i \in \{1, \ldots, n\}$
$$e_i=\sum_{j=1}^t c_j r_{i,j},$$ and using the basis $\Gamma$ of $\mathbb{F}_{q^m}$ over $\mathbb{F}_q$ we can write
$$\Gamma(e_i)= \sum_{j=1}^t \Gamma(c_j) r_{i,j}.$$
Thus, over $\mathbb{F}_q$ we have $mt$ unknowns and $m(n-k)$ equations. 

Let us use a neat trick for the next example: in order to bring the parity-check equations to the base field, we need to know what to do with a multiplication.
Let $\Gamma =\{\gamma_1, \ldots, \gamma_m\}$ be a basis of $\mathbb{F}_{q^m}$ over $\mathbb{F}_q$. The multiplication with $a \in \mathbb{F}_{q^m}$ is given by 
\begin{align*}
    m_a: \mathbb{F}_{q^m} &\to \mathbb{F}_{q^m} \\
    x & \mapsto x a.
\end{align*}

This map can be extended to $\mathbb{F}_q$ as
\begin{align*} \bM_a: \mathbb{F}_{q^m} & \to \mathbb{F}_q^m \\ 
x & \mapsto \bM_a \Gamma(x), \end{align*}
where 
$\bM_a \in \mathbb{F}_q^{m \times m}$ is defined through having the columns $\Gamma(a \gamma_1), \ldots, \Gamma( a \gamma_m).$

\begin{example}

    Let us consider $\mathbb{F}-8=\mathbb{F}_2[\alpha]$ with $\alpha^3=\alpha+1$ and the basis $\Gamma=\{1,\alpha,\alpha^2\}.$ 
    Multiplication with $\alpha^2$ is given by the matrix 
    $$\bM_{\alpha^2}=\begin{pmatrix} 0 & 1 & 0 \\ 0 & 1 & 1 \\ 1 & 0 & 1 \end{pmatrix}.$$
    Then for any $x \in \mathbb{F}_8$, we get that $\Gamma(\alpha^2 x)= \bM_{\alpha^2} \Gamma(x).$
\end{example}
\begin{algorithm}[h!]
\caption{GRS Algorithm}\label{algo:grs}
\begin{flushleft}
Input:  $ \bH \in \mathbb{F}_{q^m}^{(n-k) \times n},$ $\bs \in \mathbb{F}_{q^m}^{n-k}$ and $t\leq r\leq n-k$. \\ 
Output: $\be \in \mathbb{F}_{q^m}^{n}$ with $\text{wt}_R(\be)= t$ and $\bH\be^\top = \bs^\top.$ 
\end{flushleft}
\begin{algorithmic}[1]
\State Choose random subspace $\mathcal{S}= \langle \bs_1, \ldots, \bs_r \rangle \subset \mathbb{F}_q^n$ of dimension $r.$ 
			\State Write the error vector in terms of the basis $\bs_1, \ldots, \bs_t$ as $e_i = \sum_{j=1}^{r} e_{ij}\bs_j$, with unknowns $ e_{ij}\in\mathbb{F}_q$.
			\State Solve the linear system of equations (over $\mathbb{F}_q$) implied by $\be\bH^\top = \bs$ to obtain the $e_{ij}$.
			\If{$\text{wt}_R(\be)\leq t$}
   \State Return $\be$. 
   \EndIf 
    \State Else, go to Step 1.

 \end{algorithmic}
\end{algorithm}

The cost of the GRS algorithm is only given by guessing a subspace $\mathcal{S} \subset \mathbb{F}_q^n$ of dimension $r$, which contains $\text{supp}_R(\be).$

Thus the success probability of one iteration is given by 
$$P = \frac{ | \{\mathcal{S} \subset \mathbb{F}_q^n \mid \text{dim}(\mathcal{S})=r, \text{supp}_R(\be) \subset \mathcal{S}\}|}{\mathcal{S} \subset \mathbb{F}_q^n \mid \text{dim}(\mathcal{S})=r\}|} = \gb{n-t}{r-t}_q\gb{n}{r}_q^{-1}.$$
All the other steps, namely writing $\be$ in terms of the basis of $\mathcal{S}$ and solving the linear system of equations can be done in polynomial time.  

Thus,    the GRS algorithm costs $$\gb{n}{r}_q \gb{n-t}{r-t}^{-1}_q \sim q^{(n-r)t}.$$
In order to get an overdetermined system and thus a candidate solution for $\be$, we only require to have more equations than unknowns. Since there are $rn$ many unknowns $e_{ij}$, and we have $m(n-k)$ equations over $\mathbb{F}_q$, this forces us to choose $r\leq n-k$.

\begin{proposition}
    The GRS algorithm has an asymptotic cost of  $$\gb{n}{t}_q \gb{n-k}{t}^{-1}_q \sim q^{kt}.$$
\end{proposition}
\begin{example}
  
       Let us consider $\mathbb{F}_8=\mathbb{F}_2[\alpha]$ with $\alpha^3=\alpha+1$ and basis $\Gamma=\{1,\alpha,\alpha^2\}.$
    We are given the parity-check matrix $$\bH= \begin{pmatrix} 1 & 0 & 1 & \alpha^2 \\ 0 & 1 & \alpha & 1 \end{pmatrix}, $$ the syndrome $\bs=(\alpha^2,\alpha+1)$ and $t=1.$

    We guess the row support of $\be$ to be $\langle (1,0,0,1) \subset \mathbb{F}_2^4$. Hence $e_1=e_4=c$ and $e_2=e_3=0.$
    We consider the 2 syndrome equations 
    \begin{align*}
        e_1 + e_3 + \alpha^2 e_4 & e_1+\alpha^2e_4 = s_1 = \alpha^2 \\ 
        e_2 + \alpha e_3 + e_4 &e_4 = s_2 = \alpha+1.    \end{align*}
        Using $\bM_{\alpha^2}$, we can write the equations as 
        
        \begin{align*}
\begin{pmatrix}c_0 \\ c_1 \\ c_2 \end{pmatrix} + \begin{pmatrix} c_1\\ c_1+c_2 \\ c_0+c_2\end{pmatrix} = \begin{pmatrix} 0 \\ 0 \\ 1 \end{pmatrix}, \quad \quad \text{ and }
 \quad \quad  \begin{pmatrix}c_0 \\ c_1 \\ c_2 \end{pmatrix}  = \begin{pmatrix} 1 \\ 1 \\ 0 \end{pmatrix}.       
        \end{align*} 
        From here we can already solve the system and get $\bc=(\alpha+1). $ We recompute $\be=\bc\bR= (\alpha+1,0,0,\alpha+1)$, which indeed has rank weight 1.
\end{example}
\begin{exercise}
    Perform the same example but guess the row support to be $(1,1,0,0).$
\end{exercise}

    We say that the GRS algorithm is the rank-metric analog of Prange, as it searches for $\mathcal{S}$ of dimension $n-k$ with $\supp(\be) \subset \mathcal{S}$. While Prange's algorithm in the Hamming metric searches for $I^C$ of size $n-k$ with $\supp_H(\be) \subset I^C$.

    \medskip

    Indeed, while Prange's algorithm in the Hamming metric has the cost 
    $$\binom{n}{t} \binom{n-k}{t}^{-1},$$ the rank-metric analog has the cost
    $$\gb{n}{t}_q \gb{n-k}{t}_q^{-1}.$$

The algebraic approach aims at translating the notion of the rank metric into an algebraic setting. For example via linearized polynomials: in \cite{grs} and \cite{aght} it was observed that for $\be \in \mathbb{F}_{q^m}^n$ there exists a linearized polynomial of $q$-degree $t$ of the form 
$$f(x) = \sum\limits_{i=0}^t f_i x^{q^i}$$
annihilating the error vector, i.e., $f(\be_i) = 0$ for all $i \in \{1, \ldots, n\}.$ This algorithm works well for small choices of $t$, giving an approximate cost  \cite{grs} of 
$$ \mathcal{O}\left( (n-k)^3q^{t \lceil \frac{(k+1)m}{n}\rceil-n}\right).$$

Recently,  a new benchmark for the complexity of the rank SDP has been achieved by the paper \cite{minrank}, which solves the rank SDP using the well studied MinRank problem from multivariate cryptography. This might be one of the major reasons why NIST did not choose to finalize any of the code-based cryptosystem based on the rank metric, although they were achieving much lower public key sizes; this area of code-based cryptography needs further research before we can deem it secure.

\subsubsection{Attacks on other Code-Based Problems}

 we have seen that ISD is the fastest algorithm to solve the Decoding Problem, the Syndrome Decoding Problem or the Given Weight Codeword Problem, whether we use the Hamming or the rank metric. This stays true also for the Lee metric \cite{leeNP} or restricted errors \cite{restisd,crosspaper}, clearly, adapted to the considered metrics.

 When considering code-equivalence problems,  one could expect other algorithms to be faster. 
However, also in this case the fastest known algorithms rely on ISD \cite{LESSattack}.
In fact, we have seen in Section \ref{sec:prelim}, that two equivalent codes $\mathcal{C}$ and $\mathcal{C}'$ have the same weight enumerator 
$$W_i(\mathcal{C})= |\{ \bc \in \mathcal{C} \mid \text{wt}(\bc) = i \}| = W_i(\mathcal{C}').$$
Thus, the main algorithm to solve the code equivalence problem asks to find some low weight codewords in $\mathcal{C}$ and $\mathcal{C}'$ using ISD, ordering them as 
$$S=\{bc_1, \ldots, \bc_N\},$$ respectively 

$$S'=\{\bc_1', \ldots, \bc_N'\}$$
and then searching for an isometry that maps $S$ to $S'$.
Recall, that a code $\mathcal{C} \subseteq \mathbb{F}_q^n$ of dimension $k$ has on average 
$$| B(q,n,r)| q^{k-n}$$
many codewords of weight $r,$ where $B(q,n,r)$ denotes the balls of radius $r$ in the respective metric.

If we search for codewords of very small weight, we thus get smaller sets $S,S'$ and it becomes easier to find an isometry between the two sets. However, searching for a small weight increases the cost of the ISD algorithm to find them. On the other hand, when searching for a moderate weight $r$, the ISD algorithm has a small cost, but due to the large size of $S,S'$ it becomes harder to find an isometry.

\subsection{Algebraic Attacks}\label{sec:attack}

In this section, we present some techniques which are used for algebraic attacks on certain code-based cryptosystems. Most famously, is the square code attack, which is in general a distinguisher attack.
\emph{Distinguishers} a priori want to show that the public code is in fact not behaving randomly but like an algebraically structured code.  Distinguishers can then further imply a strategy on how to recover the structure of the secret code, e.g. the evaluation points of a GRS code, or be used directly in a message recovery. 

\begin{definition}
Let $v=(v_1, \ldots, v_n), w=(w_1, \ldots, w_n) \in \mathbb{F}_q^n$ be two vectors. The \emph{Schur product} $v*w$ of $v$ and $w$ is the coordinatewise product of $v$ and $w$, i.e.,
$$ v*w := ( v_1 w_1, \ldots, v_n w_n ).$$
\end{definition}

With this definition we can also define the Schur product of two linear codes.

\begin{definition}
Let $\mC_1,\mC_2 \subset \mathbb{F}_q^n$ be two linear codes. The Schur product of $\mC_1$ and $\mC_2$ is defined as the $\mathbb{F}_q$-span generated by the Schur product of all combinations of elements, i.e.,
$$ \mC_1 * \mC_2 := \langle \{ \bc_1 * \bc_2 \: | \: \bc_1 \in \mC_1, \: \bc_2 \in \mC_2 \} \rangle \subset \mathbb{F}_q^n.$$
For a linear code $\mC \subset \mathbb{F}_q^n$, we call $\mC * \mC$ the \emph{square code} of $\mC$ and denote it with $\mC^{(2)}$.
\end{definition}

Clearly for any code $\mC \subseteq \mathbb{F}_q^n$ of dimension $k$, we have that
$$ \dim(\mC^{(2)}) \leq \min\left\{ \frac{k(k+1)}{2}, n \right\}.$$

However, for codes which  have a lot of algebraic structure, this square code dimension might be much smaller. 

\begin{proposition}\label{prop:sqGRS}
 Let $k \leq n \leq q$ be positive integers. Then,
$$\dim(\text{GRS}_{n,k}(\alpha,\beta)) = \min\{2k-1, n\}.$$
\end{proposition}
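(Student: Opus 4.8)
The statement there is a typo: it should read $\dim(\mathrm{GRS}_{n,k}(\alpha,\beta)^{(2)}) = \min\{2k-1,n\}$ — the square code of a GRS code of dimension $k$ has dimension $\min\{2k-1,n\}$, not the GRS code itself. I will prove this corrected version. The guiding idea is that the Schur product interacts nicely with the polynomial-evaluation description of GRS codes: multiplying two codewords coordinatewise corresponds to multiplying the underlying polynomials. First I would recall that $\mathrm{GRS}_{n,k}(\alpha,\beta)$ is spanned by the vectors $\beta * (\alpha^{*0}, \alpha^{*1}, \ldots, \alpha^{*(k-1)})$, i.e.\ by the evaluations $(\beta_1\alpha_1^j,\ldots,\beta_n\alpha_n^j)$ for $0 \le j \le k-1$, where $\alpha^{*j}$ denotes the coordinatewise $j$-th power of $\alpha$.

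The key computational step: for generators $\bu_j = \beta*\alpha^{*j}$ and $\bu_\ell = \beta*\alpha^{*\ell}$ one has $\bu_j * \bu_\ell = (\beta*\beta)*\alpha^{*(j+\ell)}$. As $j,\ell$ range over $\{0,\ldots,k-1\}$, the exponent $j+\ell$ ranges exactly over $\{0,1,\ldots,2k-2\}$, and every such exponent is attained. Hence $\mathrm{GRS}_{n,k}(\alpha,\beta)^{(2)}$ is spanned by the vectors $(\beta*\beta)*\alpha^{*i} = (\beta_1^2\alpha_1^i, \ldots, \beta_n^2 \alpha_n^i)$ for $i = 0,\ldots,2k-2$. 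Setting $\gamma = \beta*\beta$ (which has all nonzero entries since $\beta$ does), I recognize this span as precisely $\mathrm{GRS}_{n,\min\{2k-1,n\}}(\alpha,\gamma)$ — it is the evaluation code of polynomials of degree $< 2k-1$ (truncated to degree $< n$ when $2k-1 > n$, since then the evaluation map is already surjective onto $\F_q^n$).

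To finish, I would invoke the dimension of a GRS code: $\mathrm{GRS}_{n,k'}(\alpha,\gamma)$ has dimension $k'$ whenever $k' \le n$, which follows from the fact that its generator matrix is a weighted Vandermonde matrix of full rank (the $\alpha_i$ being distinct), as established in the exercises following Definition~\ref{def:reedsolomon}. Applying this with $k' = \min\{2k-1, n\}$ gives the claim. The only mild subtlety to handle carefully is the truncation when $2k-1 > n$: there one must note that the $n$ exponents $i = 0, \ldots, n-1$ already yield $n$ linearly independent vectors (again by the Vandermonde argument), so the dimension is exactly $n$ and cannot exceed it. I do not anticipate a genuine obstacle here; the argument is a direct unwinding of the polynomial description, and the main point is simply to get the index bookkeeping on $j+\ell$ right.
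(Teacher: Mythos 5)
Your reading of the statement is right: as the surrounding discussion and Table \ref{sqcode} confirm, the proposition is meant to assert $\dim\big(\text{GRS}_{n,k}(\alpha,\beta)^{(2)}\big)=\min\{2k-1,n\}$, and the paper itself gives no proof, leaving it as the exercise that follows. Your argument is correct and is exactly the intended one: by bilinearity of the Schur product the square code is spanned by $(\beta*\beta)*\alpha^{*i}$ for $0\le i\le 2k-2$, and the weighted Vandermonde (equivalently, a nonzero polynomial of degree less than $\min\{2k-1,n\}\le n$ cannot vanish at all $n$ distinct points $\alpha_i$) gives linear independence of the first $\min\{2k-1,n\}$ of these, including the correct handling of the case $2k-1>n$.
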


\begin{exercise}
Prove Proposition \ref{prop:sqGRS}.
\end{exercise}

 Whereas for a random linear code of dimension $k$, the expected dimension of its square code is typically quadratic in the dimension $k$:

\begin{theorem}[\text{\cite[Theorem 2.3]{cascudo}}]
For a random linear code $\mC$ over $\mathbb{F}_q$ of dimension $k$ and length $n$, we have with high probability that
$$\dim(\mC^{(2)})= \min\left\{\binom{k+1}{2},n\right\}. $$
\end{theorem}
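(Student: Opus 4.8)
The plan is to prove that for a uniformly random linear code $\mC \subseteq \mathbb{F}_q^n$ of dimension $k$, the square code $\mC^{(2)} = \mC * \mC$ has dimension $\min\{\binom{k+1}{2}, n\}$ with high probability. The upper bound is the elementary observation already stated in the excerpt: if $\bb_1, \dots, \bb_k$ is a basis of $\mC$, then $\mC^{(2)}$ is spanned by the products $\bb_i * \bb_j$ for $1 \leq i \leq j \leq k$, of which there are $\binom{k+1}{2}$, and of course $\dim(\mC^{(2)}) \leq n$ as well. So the content is the matching lower bound: with high probability the $\binom{k+1}{2}$ products $\bb_i*\bb_j$ are linearly independent, as long as $\binom{k+1}{2} \leq n$ (and span all of $\mathbb{F}_q^n$ when $\binom{k+1}{2} > n$).

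First I would fix the model: take $\bG \in \mathbb{F}_q^{k\times n}$ uniformly at random with rows $\bb_1, \dots, \bb_k$; by Theorem \ref{randomGv}-type arguments (or a direct computation) $\bG$ has full rank with probability $\geq 1 - q^{k-n}$, so we may condition on this and treat the $\bb_i$ as a basis. The key idea is to consider the linear map
\begin{align*}
\Phi: \mathbb{F}_q^{\binom{k+1}{2}} &\to \mathbb{F}_q^n, \\
(\lambda_{ij})_{i \leq j} &\mapsto \sum_{i \leq j} \lambda_{ij}\, \bb_i * \bb_j,
\end{align*}
whose image is $\mC^{(2)}$, and to show $\Phi$ has rank $\min\{\binom{k+1}{2}, n\}$ with high probability. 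Observe that the $\ell$-th coordinate of $\sum_{i\leq j}\lambda_{ij}\,\bb_i*\bb_j$ is $\sum_{i\leq j}\lambda_{ij}\, b_{i\ell}b_{j\ell}$, which is precisely the evaluation at the column vector $(b_{1\ell}, \dots, b_{k\ell})$ of the quadratic form $Q_\lambda(x_1,\dots,x_k) = \sum_{i\leq j}\lambda_{ij}x_ix_j$. So a nonzero $\lambda$ lies in $\ker\Phi$ exactly when the column vectors $\bg_1, \dots, \bg_n \in \mathbb{F}_q^k$ (the columns of $\bG$) all lie on the affine variety $\{Q_\lambda = 0\}$. Since the columns of a uniformly random $\bG$ are i.i.d.\ uniform on $\mathbb{F}_q^k$, and a nonzero quadratic form vanishes on at most a $c$-fraction of $\mathbb{F}_q^k$ for some constant $c = c(q) < 1$ (e.g.\ $c \leq 3/4$ works for $q \geq 2$, via Schwartz–Zippel / the structure theory of quadrics), the probability that all $n$ columns lie in $\{Q_\lambda = 0\}$ is at most $c^n$ for each fixed nonzero $\lambda$.

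Then I would union-bound over the $q^{\binom{k+1}{2}} - 1$ choices of nonzero $\lambda$ (up to scalars this is $(q^{\binom{k+1}{2}}-1)/(q-1)$, but the cruder count suffices): the probability that $\ker\Phi \neq 0$ is at most $q^{\binom{k+1}{2}} c^n$. When $\binom{k+1}{2} \leq n$ and $k$ is bounded away from $\sqrt{2n}$ this already tends to $0$; more carefully, writing $c = q^{-\epsilon}$ one gets a bound $q^{\binom{k+1}{2} - \epsilon n}$ which is $e^{-\Omega(n)}$ precisely in the regime $\binom{k+1}{2} \leq (1-\eta)\epsilon n$. To get the full range up to $\binom{k+1}{2} = n$ (and the "span everything" case) one refines: restrict attention to $\lambda$ whose quadratic form $Q_\lambda$ is \emph{not} in the span of $\{x_\ell^2\}$ and products already forced, i.e.\ one peels off coordinates — the cleanest route is to argue coordinate-by-coordinate that each new column $\bg_\ell$ reduces $\dim\ker\Phi_\ell$ (the kernel of the partial map using only the first $\ell$ columns) by $1$ with probability $\geq 1 - c$, as long as that kernel is still nonzero, since a nonzero $Q_\lambda$ in the current kernel vanishes on $\bg_\ell$ with probability $\leq c$. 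This is a supermartingale/stopping-time argument showing $\dim\ker\Phi = \max\{\binom{k+1}{2} - n', 0\}$ with probability $1 - e^{-\Omega(n)}$, where $n'$ is the number of steps, giving exactly $\dim\mC^{(2)} = \min\{\binom{k+1}{2}, n\}$.

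The main obstacle is the quantitative bound on how often a nonzero quadratic form over $\mathbb{F}_q$ can vanish — one needs a uniform constant $c(q) < 1$, and this requires a small case analysis (the form can be rank-deficient, and over $\mathbb{F}_2$ even a rank-$2$ form like $x_1x_2$ vanishes on $3/4$ of the space, which is the worst case), plus care that conditioning on earlier columns does not spoil the independence needed in the coordinate-by-coordinate version. I expect the cleanest writeup to just cite \cite{cascudo} for the precise statement and sketch the quadratic-form-vanishing argument, since the full martingale bookkeeping to reach the boundary case $\binom{k+1}{2} = n$ is somewhat technical; for the purposes of this survey, establishing the bound $q^{\binom{k+1}{2}}c^n \to 0$ in the generic regime $\binom{k+1}{2} < n$ already conveys the essential mechanism, namely that a structured (small-dimension-square) code is a rare event.
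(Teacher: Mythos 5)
First, a point of comparison: the survey itself gives no proof of this statement --- it is quoted verbatim from \cite{cascudo}, Theorem 2.3 --- so what can be judged is whether your sketch would actually establish the claim. Your central reduction is the right one, and it is the mechanism used in the cited reference: an element of $\ker\Phi$ is a nonzero quadratic form $Q_\lambda$ vanishing at all $n$ columns of $\bG$, the columns of a uniformly random $\bG$ are i.i.d.\ uniform on $\mathbb{F}_q^k$, and a nonzero quadratic form vanishes on at most a fraction $c(q)=2/q-1/q^2\le 3/4$ of $\mathbb{F}_q^k$ (worst case the hyperbolic rank-$2$ form, e.g.\ $x_1x_2$ over $\mathbb{F}_2$). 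The resulting union bound $\Pr[\ker\Phi\neq 0]\le q^{\binom{k+1}{2}}c^n$ is correct, but it only proves the theorem in the regime $\binom{k+1}{2}\le (1-\eta)\log_q(1/c)\,n$, and $\log_q(1/c)$ is a constant strictly smaller than $1$ (about $0.415$ for $q=2$), far below the threshold $\binom{k+1}{2}=n$ that the statement reaches.

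The genuine gap is in the refinement you propose to close this. In the column-by-column argument you bound the failure probability at each step by the probability that the new column lies in the zero locus of a \emph{single} nonzero form of the current kernel, i.e.\ by $c$. A supermartingale/Chernoff argument then only yields $\dim\ker\Phi\le \binom{k+1}{2}-(1-c)n+o(n)$ with high probability --- for $q=2$ a deficiency of order $n/4$ --- and not $\dim\ker\Phi=\max\{\binom{k+1}{2}-n,0\}$ as you assert; so the stopping-time argument as stated does not give the exact dimension, says nothing near the threshold, and does not treat the surjective regime $\binom{k+1}{2}>n$ at all. What is actually needed (and what \cite{cascudo} supplies) is a finer count of zeros of quadratic forms over $\mathbb{F}_q$, stratified by rank: a form of rank $r$ vanishes on a fraction $1/q+O(q^{-r/2})$ of points, so the roughly $q^{\binom{k+1}{2}}$ high-rank forms behave essentially like the $1/q$ benchmark and survive the union bound up to the threshold, while the low-rank forms, which may vanish on up to about $2/q$ of the points, number only about $q^{kr}$ and are handled separately; alternatively one must exploit that a large kernel forces the column into the common zero locus of many independent forms, which is much smaller than a single quadric. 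Without such an ingredient your argument proves a strictly weaker statement (the generic regime $\binom{k+1}{2}\lesssim \gamma n$ with $\gamma=\log_q(1/c)<1$) rather than the quoted theorem, so citing \cite{cascudo} for the full range, as you suggest at the end, is not optional but necessary.
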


This clearly provides a distinguisher between random codes and algebraically structured codes. Let us list some of the codes, which suffer from such a distinguisher
 
\begin{enumerate}
    \item GRS codes: Proposition \ref{prop:sqGRS},
    \item low-codimensional subcodes of GRS codes: \cite{wiesche},
    \item Reed-Muller codes: \cite{borodin},
    \item Polar codes: \cite{vlad},
    \item some Goppa codes: \cite{wildattack},
    \item high rate alternant codes: \cite{high},
    \item algebraic geometry codes \cite{attackag, attackag2}.
\end{enumerate}
  Note that square code attacks often need to be performed on a modified version of the public code, for example 
 \begin{enumerate}
     \item the sum of two GRS codes: \cite{distinguisher, distinguisher2},
     \item GRS codes with additional random entries: \cite{rlceattack},
     \item expanded GRS codes: \cite{matthieu}.
 \end{enumerate}
 McEliece proposed to use classical binary Goppa codes  as secret codes in \cite{mceliece}, and no algebraic attack on this system has been developed. Thus, they are considered to be reasonably secure and were chosen as the finalists for the NIST standardization process \cite{NISTMcEliece}. 

Recall that Goppa codes are heavily connected to GRS codes:
let us consider a GRS code over $\mathbb{F}_{q^m}$ and some $1 \leq \lambda \leq m$. The code $\mC$ which contains all codewords of the GRS code living in a fixed $\lambda$-dimensional  $\mathbb{F}_q$-vector subspace of $\mathbb{F}_{q^m}$ is called a \emph{subspace subcode} of a GRS code.
\begin{itemize}
    \item If we choose $\lambda=m$ we get a GRS code, which provides very low key sizes for the McEliece cryptosystem due to their large error correction capacity  and only considering ISD attacks. They are however insecure due to the square code attack.
    \item If we choose $\lambda =1$ we get a Goppa code, which  suffers from very large key sizes due to their small correction capacity, but they are deemed to be secure against algebraic attacks. 
\end{itemize}

The proposal \cite{kk} and also \cite{ssrs} propose to use a different $\lambda$ in the McEliece system, trying to find a balance between the two extreme points and profiting from both advantages: smaller key sizes than Goppa codes would provide and thwarting the vulnerability of GRS codes. But also this suggestion has been attacked for $\lambda \geq m/2$ by the square code attack in \cite{matthieu}:

\begin{center}
    \includegraphics[width=6cm]{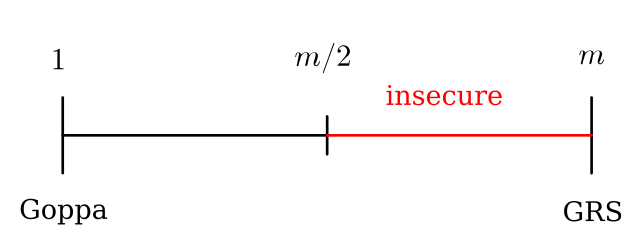}
\end{center}

 Let us summarize this in  Table \ref{sqcode}.

 \begin{table}[h!]
 \begin{center}
 \begin{tabular}{|c|c|}
 \hline 
 Code $\mC$  & $\dim\left(\mC^{(2)}\right)$\\\hline \hline
 Random Code &  $ \min\left\{ \frac{k(k+1)}{2}, n \right\} $ (with high probability) \\ \hline
 RS Code  & $\min\{2k-1,n\}$ \\ \hline 
 Binary Goppa Codes & $\min\left\{ \frac{k(k+1)}{2} - \frac{mr}{2}\left(2r\log_2(r)-r-1\right),n \right\} $\\
 $[n,k=n-mr]$ & (with high probability) \\ \hline
 Expanded GRS Code & $\min\left\{ \mathcal{O}(mk^2), n \right\}$ \\
 $[mn,mk]$ & (with high probability) \\
  \hline
  \end{tabular}\caption{Square code dimension of different codes}\label{sqcode}
\end{center}  
\end{table}

Note that for the rank-metric based cryptosystems a similar distinguisher exists for the rank analogues of the Reed-Solomon codes, namely the Gabidulin codes:  
these attacks all stem from the original attack of Overbeck \cite{overbeck} on the proposal \cite{gpt} to use Gabidulin codes in the GPT framework, but also includes the attack of \cite{AL} on its generalization \cite{loidreau, ggpt}. They main tool here is that instead of taking the square code, one performs the Frobenius map on the code.

Let us consider an extension field $\mathbb{F}_{q^m}$ of the base field $\mathbb{F}_q$. We denote by $[i]$ the $i$th Frobenius power, $q^i$. The Frobenius map can be applied to a matrix or a vector by doing so coordinatewise, i.e., for a matrix $\bM \in \mathbb{F}_{q^m}^{k \times n}$ with entries $(m_{j,\ell})$ we denote by $\bM^{[i]}$ the matrix with entries $(m_{j,\ell}^{[i]})$.

\begin{definition}
Let $\bM \in \mathbb{F}_{q^m}^{k \times n}$ and $\ell \in \mathbb{N}$, then we define the operator $\Lambda_\ell$ as
\begin{align*}
    \Lambda_\ell: \mathbb{F}_{q^m}^{k \times n} &\to \mathbb{F}_{q^m}^{(\ell+1)k \times n}, \\
    \bM &\mapsto \Lambda_\ell(\bM) = \begin{pmatrix}
    \bM \\ \bM^{[1]}, \\ \vdots \\ \bM^{[\ell]} 
    \end{pmatrix}.
\end{align*}
\end{definition}
The Frobenius attack now considers the rowspan of this new matrix.

\begin{proposition}[\cite{overbeck}, Lemma 5.1]\label{prop:frob}
If $\bM$ is the generator matrix of an $[n,k]$ Gabidulin code and $\ell \leq n-k-1$, then the subvector space spanned by the rows of $\Lambda_\ell(\bM)$ is an $[n,k+\ell]$ Gabidulin code.
\end{proposition}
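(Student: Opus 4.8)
The plan is to recall the definition of a Gabidulin code as an evaluation code of $q$-polynomials, and to show that applying the Frobenius operator to a generator matrix corresponds to shifting the $q$-degree of the evaluated polynomials. First I would fix a generator matrix of the $[n,k]$ Gabidulin code in the form $\bM = M_{1,k}(g_1,\ldots,g_n)$, so that the row space of $\bM$ is exactly $\{(f(g_1),\ldots,f(g_n)) \mid f \in P_{k-1}\}$, where $P_{k-1}$ denotes the $q$-polynomials of $q$-degree at most $k-1$. The key observation is that for a $q$-polynomial $f(x) = \sum_{i=0}^{k-1} f_i x^{q^i}$ and any element $a \in \mathbb{F}_{q^m}$, one has $f(a)^{[1]} = f(a)^q = \sum_{i=0}^{k-1} f_i^q\, a^{q^{i+1}}$, which is precisely $\tilde f(a)$ where $\tilde f(x) = \sum_{i=0}^{k-1} f_i^q\, x^{q^{i+1}}$ is a $q$-polynomial of $q$-degree at most $k$. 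More generally, the $j$-th Frobenius power of a codeword $(f(g_1),\ldots,f(g_n))$ is the evaluation vector of a $q$-polynomial whose nonzero coefficients sit in $q$-degrees $j, j+1, \ldots, j+k-1$.

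Next I would assemble this. The rows of $\Lambda_\ell(\bM)$ are the evaluation vectors $(f^{[j]}(g_1)^{\text{shift}},\ldots)$ — more precisely, the row space of $\Lambda_\ell(\bM)$ is the $\mathbb{F}_{q^m}$-span of all vectors of the form $(h(g_1),\ldots,h(g_n))$ where $h$ ranges over $q$-polynomials obtained by taking a $q$-polynomial of $q$-degree $\leq k-1$, raising all coefficients to some power $q^j$ with $0 \leq j \leq \ell$, and multiplying each monomial's $q$-degree index by shifting up by $j$. Since the $j$-th block contributes $q$-polynomials supported on $q$-degrees $\{j,\ldots,j+k-1\}$, the union over $j \in \{0,\ldots,\ell\}$ covers all $q$-degrees in $\{0, 1, \ldots, k+\ell-1\}$. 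Thus the row span is contained in $\{(h(g_1),\ldots,h(g_n)) \mid h \in P_{k+\ell-1}\}$, the $[n,k+\ell]$ Gabidulin code. For the reverse inclusion, I would check that every pure monomial $x^{q^s}$ with $0 \le s \le k+\ell-1$ appears, up to $\mathbb{F}_{q^m}$-scaling and Frobenius twisting of the base monomials, in the span; since the coefficient map $f_i \mapsto f_i^q$ is a bijection of $\mathbb{F}_{q^m}$, each shifted block $\{j,\ldots,j+k-1\}$ of $q$-degrees is fully realized, and the overlap of consecutive blocks lets us isolate individual $q$-degrees by taking $\mathbb{F}_{q^m}$-linear combinations.

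Finally I would address dimension: we must confirm the span really has dimension $k+\ell$ and not less, i.e.\ that $\{(h(g_1),\ldots,h(g_n)) \mid h \in P_{k+\ell-1}\}$ is a genuine $[n,k+\ell]$ Gabidulin code, which requires $k+\ell \leq n$. This is exactly where the hypothesis $\ell \leq n-k-1$ enters, giving $k+\ell \le n-1 < n$; then the evaluation map on $P_{k+\ell-1}$ is injective (a nonzero $q$-polynomial of $q$-degree $\le k+\ell-1$ has at most $q^{k+\ell-1}$ roots forming an $\mathbb{F}_q$-subspace of dimension $\le k+\ell-1 < n$, so it cannot vanish on $n$ linearly independent points $g_1,\ldots,g_n$). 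I expect the main obstacle to be bookkeeping the precise correspondence between Frobenius powers of codewords and $q$-degree shifts — in particular making sure the coefficient-twisting by $q^j$ does not shrink the span (it does not, since Frobenius is an automorphism) and that the inclusion is an equality rather than just a containment. Once the monomial-by-monomial argument for surjectivity onto $P_{k+\ell-1}$ is set up carefully, the dimension count and the conclusion follow immediately from the injectivity of evaluation guaranteed by $\ell \le n-k-1$.
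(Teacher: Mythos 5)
The paper itself gives no proof of this proposition --- it is quoted from Overbeck's work (Lemma 5.1 of \cite{overbeck}) --- so your argument can only be measured against the standard one, and it is essentially that argument, carried out correctly: the rows of the $j$-th Frobenius block are evaluation vectors of $q$-polynomials supported on $q$-degrees $\{j,\ldots,j+k-1\}$, the union over $j\in\{0,\ldots,\ell\}$ exhausts the degrees $0,\ldots,k+\ell-1$, and injectivity of the evaluation map on $P_{k+\ell-1}$ (root space of dimension at most $k+\ell-1<n$ versus $n$ linearly independent points $g_1,\ldots,g_n$) gives dimension exactly $k+\ell$, where $\ell\leq n-k-1$ guarantees $k+\ell\leq n-1$. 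Two small points would tighten it. First, you silently assume $\bM$ is the Moore matrix $M_{1,k}(g_1,\ldots,g_n)$; since the statement concerns an arbitrary generator matrix, you should add the one-line reduction: writing $\bM=\bS\,M_{1,k}(g_1,\ldots,g_n)$ with $\bS$ invertible, entrywise Frobenius is a field automorphism, so $\bM^{[j]}=\bS^{[j]}M_{1,k}(g_1,\ldots,g_n)^{[j]}$ with $\bS^{[j]}$ invertible, hence each block --- and therefore the row space of $\Lambda_\ell(\bM)$ --- depends only on the code, not on the chosen generator matrix (this matters because $c\mapsto c^{[j]}$ is only semilinear, so the invariance is not completely automatic). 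Second, the step where you ``isolate individual $q$-degrees using the overlap of consecutive blocks'' is superfluous: once each block's row space is identified with all evaluations of $q$-polynomials supported on $\{j,\ldots,j+k-1\}$ (Frobenius twisting of the coefficients being a bijection of $\mathbb{F}_{q^m}$), every monomial evaluation $x^{q^s}$, $0\leq s\leq k+\ell-1$, already lies in the sum of the blocks, and equality with the $[n,k+\ell]$ Gabidulin code on the same evaluation points follows directly by linearity of evaluation.
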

Note that this is similar to Proposition \ref{prop:sqGRS}, where one shows that the square code of a GRS code is again a GRS code. And as the square code dimension of a GRS code is $2k-1$, in this case the dimension of the rowspace of the Frobenius of a Gabidulin code is $k+\ell.$

However, for a random code $\mC$, the Frobenius of this code should have dimension of order $k\ell.$
\begin{theorem}[\cite{loid}]
Let $\bM \in \mathbb{F}_{q^m}^{k \times n}$ be a random matrix of full column rank over $\mathbb{F}_q$. Then $\Lambda_\ell(\bM)$ has rank
$$\min\{(\ell+1)k, n\},$$ with probability at least $1-4q^{-m}.$
\end{theorem}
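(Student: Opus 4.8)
The plan is to prove the generic (lower) bound first and then the matching upper bound $\min\{(\ell+1)k,n\}$, since the two directions use different ideas. For the upper bound there is nothing to do: $\Lambda_\ell(\bM)$ is a matrix with $(\ell+1)k$ rows and $n$ columns, so its rank is trivially at most $\min\{(\ell+1)k,n\}$. Hence the real content is showing that, with high probability, the rank is \emph{at least} $\min\{(\ell+1)k,n\}$, i.e.\ that the $(\ell+1)k$ rows of $\Lambda_\ell(\bM)$ are as independent as the dimension count allows. I would reduce this to the case $(\ell+1)k \le n$ (otherwise pass to a column-restriction: if $(\ell+1)k>n$, it suffices to exhibit $n$ independent rows, and one can argue on a suitable set of columns after noting that a random full-column-rank $\bM$ restricted to any $n'$ columns is again ``random enough'').

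First I would set up the right probabilistic model: $\bM \in \mathbb{F}_{q^m}^{k\times n}$ is chosen uniformly among matrices of full column rank over $\mathbb{F}_q$ — equivalently, $\bM = \bA\bG_0$ where $\bG_0$ is a fixed generator matrix of an $[n,k]$ code and $\bA\in\GL_k(\mathbb{F}_{q^m})$ is uniform, or more simply one thinks of the $k$ rows as spanning a random $k$-dimensional $\mathbb{F}_{q^m}$-subspace in ``general position.'' The key algebraic observation is that the rows of $\Lambda_\ell(\bM)$ are $\bm_1,\dots,\bm_k,\bm_1^{[1]},\dots,\bm_k^{[1]},\dots,\bm_1^{[\ell]},\dots,\bm_k^{[\ell]}$, and an $\mathbb{F}_{q^m}$-linear dependence among them is exactly a tuple of linearized ($q$-)polynomials $P_1,\dots,P_k$, each of $q$-degree $\le \ell$, with $\sum_{j=1}^k P_j(m_{j,i}) = 0$ for all columns $i$, not all $P_j$ zero. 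So $\mathrm{rank}(\Lambda_\ell(\bM)) < (\ell+1)k$ if and only if there is a nonzero ``row vector of linearized polynomials'' annihilating every column of $\bM$ simultaneously.

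Now I would bound the probability of such a bad event by a union bound over the ``type'' of the dependency. Fix a nonzero tuple $(P_1,\dots,P_k)$ of $q$-polynomials of $q$-degree $\le \ell$; the number of such tuples is $q^{m(\ell+1)k}$, but I only need to count them up to scaling and, more importantly, the condition ``$(P_1,\dots,P_k)$ annihilates column $i$'' is a single $\mathbb{F}_{q^m}$-linear condition on the $i$-th column of $\bM$. Because $\bM$ has random columns subject to full column rank, each new column that is not already forced imposes an independent constraint, and one shows that for a fixed nonzero $(P_1,\dots,P_k)$ the probability that it annihilates \emph{all} $n$ columns is at most something like $q^{-m(n-k)}$ (the $-k$ accounting for the at most $k$ columns whose entries are linearly determined by the others once we condition on full column rank). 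Multiplying the count of candidate dependencies by this per-dependency probability, and being a bit careful to organize the union bound by the dimension $r\le k$ of the $\mathbb{F}_{q^m}$-span of $\{P_1,\dots,P_k\}$ so the exponents actually beat each other, yields a total failure probability bounded by $4q^{-m}$ (the constant $4$ being the slack absorbed from the geometric-series tails in the union bound). The main obstacle — and the step I would spend the most care on — is exactly this bookkeeping: pinning down the correct per-constraint probability under the ``full column rank'' conditioning and choosing the stratification of the union bound so that the dominant term is $q^{-m(n-(\ell+1)k)}\cdot(\text{polynomial factors}) \le 4q^{-m}$ when $(\ell+1)k\le n-1$; the boundary and the $(\ell+1)k>n$ regime then follow by the column-restriction reduction mentioned above.
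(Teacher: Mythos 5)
First, note that the paper does not prove this statement at all: it is quoted from \cite{loid}, so there is no internal proof to compare against. Judged on its own terms, your plan has the right skeleton at the start (the upper bound is indeed trivial, and the translation of an $\mathbb{F}_{q^m}$-linear dependence among the rows of $\Lambda_\ell(\bM)$ into a nonzero tuple $(P_1,\dots,P_k)$ of $q$-polynomials of $q$-degree at most $\ell$ annihilating every column is correct and is the standard starting point), but the quantitative heart of the argument is missing and, as sketched, cannot work. A plain first-moment/union bound over annihilating tuples is provably too lossy in exactly the interesting regime: when $(\ell+1)k=n$ there are about $q^{mn}$ candidate tuples and each is satisfied by independent uniform columns with probability at least $q^{-mn}$, so the expected number of witnesses is of order $1$ (and even after normalizing out scalar multiples, the claimed per-tuple probability $q^{-m(n-k)}$ would still only yield a useful bound when roughly $(\ell+2)k<n$, not in the full claimed range). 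Your suggestion to stratify by the dimension of the span of the $P_j$ does not remove the dominant contribution, and the asserted bound $q^{-m(n-k)}$ is itself unjustified: for a tuple with a single nonzero $P_j$ of $q$-degree $\ell$ the per-column probability is $q^{\ell-m}$, not $q^{-m}$.

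The deeper issue is your treatment of the conditioning on full column rank over $\mathbb{F}_q$ as a small correction (``the $-k$ accounting for at most $k$ determined columns''). That conditioning is not a perturbation of the i.i.d.\ model; it is what the result lives on. Already for $k=1$ the statement is deterministic precisely because a nonzero $q$-polynomial of $q$-degree $\le\ell$ has a kernel of $\mathbb{F}_q$-dimension at most $\ell$ and hence cannot vanish on $n>\ell$ columns that are $\mathbb{F}_q$-linearly independent; a union bound in the unconditioned model cannot see this and the ``low-rank annihilator'' tuples then swamp the estimate. A correct argument must use the $\mathbb{F}_q$-independence of the columns structurally (e.g., bounding, for each possible kernel structure of the annihilating tuple, how many independent columns it can kill, or arguing sequentially column by column / by induction on the Frobenius level as in \cite{loid,overbeck}), which is an essentially different and more delicate argument than the one you outline. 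Your side remarks also need repair: the model ``$\bM=\bA\bG_0$ with $\bG_0$ fixed and $\bA\in\GL_k(\mathbb{F}_{q^m})$ uniform'' is not equivalent to a uniform full-column-rank matrix (it fixes the row space), and the reduction for the case $(\ell+1)k>n$ is not a mere column restriction, since dropping columns lowers the target rank $n$ rather than the number of rows.
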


The Frobenius map can thus distinguish between a   Gabidulin code and a random code. 

\subsection{Other Attacks}\label{sec:other}

We want to note here, that there exist also several other attacks on code-based cryptosystems, such as: side-channel attacks and chosen-ciphertext attacks. 
Since these attacks are less mathematically involved, we will just quickly cover them  and refer interested readers to \cite{cayrel}.
\medskip

\emph{Side-channel attacks} try to get information from the implementation of the cryptosystem, which includes timing information, power consumption and many more. 
Thus, side-channel attacks complement the algebraic and non-structural attacks we have discussed before by considering also the physical security of the cryptosystem. 
\medskip

There have been many side-channel attacks on the McEliece cryptosystem (see for example \cite{side1, side2, side3, side4, side5}) which aim for example at the timing/reaction attacks based on the error weight or recover the error weight using a  simple power analysis on the syndrome
computation. 
\medskip

Note that recently the information gained through side-channel attacks was used in ISD algorithms in \cite{julian}.
\medskip

Another line of attacks is the \emph{chosen-ciphertext attack } (CCA): in a chosen-ciphertext attack we consider the scenario in which the attacker has the ability to choose ciphertexts $c_i$ and to view their corresponding decryptions, i.e., the messages $m_i$. In this scenario we might speak of an oracle that is queried with ciphertexts. The aim of the attacker is to gain the secret key or to get as much information as possible on the attacked system. 

\medskip

In an \emph{adaptive chosen-ciphertext attack} (CCA2) the attacker wants to distinguish a target ciphertext without consulting the oracle on this target.  Thus, the attacker may query the oracle on many ciphertext but the target one. This means that the new ciphertexts are created based on responses (being the corresponding messages) received previously. 
\medskip

In this context we also speak of ciphertext indistinguishability, meaning that an attacker can not distinguish   ciphertexts based on the message they encrypt. We have two main definitions:
\begin{enumerate}
    \item \emph{ Indistinguishability under chosen-plaintext attack }  (IND-CPA),
    \item \emph{ Indistinguishability under adaptive chosen-ciphertext attack }  (IND-CCA2).
\end{enumerate}

These are usually defined over a game, which is played between an attacker and a \emph{challenger}, where we assume that we have a public-key encryption scheme with a secret key $\mathcal{S}$ and a publicly known public key $\mathcal{P}$. 

\medskip

For IND-CPA, the attacker and the challenger are playing the following game.
\begin{enumerate}
    \item The attacker sends two distinct messages $m_1,m_2$ to the challenger.
    \item The challenger selects one of the messages $m_i$ and sends the \emph{challenge} $c_i$, which is the encrypted message $m_i.$
    \item The attacker tries to guess $i.$
\end{enumerate}
We say that a system is \emph{IND-CPA secure} if an attacker has only a negligible advantage over randomly guessing $i.$
\newpage

For IND-CCA2, the attacker and the challenger are playing the following game.
\begin{enumerate}
    \item The attacker sends two distinct messages $m_1,m_2$ to the challenger.
    \item The challenger selects one of the messages $m_i$ and sends the \emph{challenge} $c_i$, which is the encrypted message $m_i.$
    \item The attacker may query a decryption oracle on any cipher but the target cipher $c_i.$
        \item The attacker tries to guess $i.$
\end{enumerate}
\medskip
We say that a system is \emph{IND-CCA2 secure} if an attacker has only a negligible advantage over randomly guessing $i.$
\medskip

Let us consider the McEliece framework from Section \ref{sec:mcframework}.
\medskip

The IND-CPA security for this framework translates  as:
the challenger preforms the key generation, getting the secret key $\bG\in \mathbb{F}_q^{k \times n}$ and sends the public key $\bG'\in \mathbb{F}_q^{k \times n}$ to the attacker. 
The attacker chooses two messages $\bm_1, \bm_2 \in \mathbb{F}_q^k$ and sends them to the challenger. The challenger now chooses $b \in \{1,2\}$ and encrypts $\bm_b$ as
$$\bc= \bm_b \bG'+\be,$$ for some random error vector of Hamming weight $t$. 
The challenger sends $\bc$ back to the attacker. The attacker tries to figure out whether $\bm_1$ or $\bm_2$ was encrypted.
\medskip

\begin{proposition}
    The classic McEliece framework is not IND-CPA secure.
\end{proposition}
\begin{proof}
    The attacker can easily recover which message was encrypted by computing 
    \begin{align*} 
    \bc_1 &= \bm_1\bG', \\ 
    \bc_2 &= \bm_2\bG',
    \end{align*}
    and testing whether
    the received $\bc$ has distance $t$ from one of the codewords. Indeed, if $\bm_1$ was encrypted, then $$\bc-\bc_1=\bm_1\bG' -\bm_1\bG' +\be=\be$$ has weight $t$, whereas 
    $$\bc-\bc_2=\bm_1\bG'-\bm_2\bG'+\be= (\bm1-\bm_2)\bG'+\be$$ has weight larger than $t$, as any codeword (thus also $(\bm_1-\bm_2)\bG'$) has weight at least $2t+1$ and adding $\be$, we can decrease the weight to at least $t+1.$
\end{proof}

\begin{exercise}
    Show that the Niederreiter framework is not IND-CPA secure.
\end{exercise}

An easy fix for this issue is called \emph{random padding.}
Instead of choosing the message $\bm \in \mathbb{F}_q^k$, we only choose a part of the message, say $\bm' \in \mathbb{F}_q^{\ell}$ and choose the remaining $k-\ell$ position at random, called $\br$. 
\newpage

\begin{proposition}
    The McEliece framework using random padding is IND-CPA secure.
\end{proposition}
\begin{proof}
    The attacker has now chosen $\bm_1, \bm_2 \in \mathbb{F}_q^\ell$ and sends them to the challenger. 
    Assume the challenger encrypts $\bm_1$ as 
    $$\bc= (\bm_1, \br) \bG' + \be,$$ and sends this back to the attacker. Let us split the public generator matrix into $\bA \in \mathbb{F}_q^{\ell \times n}$ and $\bB \in \mathbb{F}_q^{(k-\ell) \times n},$ hence the ciphertext is 
    $$\bc=\bm_1\bA +\br \bB +\be.$$
    The attacker can now compute
    \begin{align*}
        \bc_1 &= \bm_1\bA, \\
        \bc_2 &= \bm_2\bA.
    \end{align*}
    Taking these away from the received ciphertext, the attacker gets
    \begin{align*} \bc-\bc_1& =\br\bB+\be,\\ 
    \bc-\bc_2&= \br\bB + (\bm_1-\bm_2)\bA +\be. \end{align*}
    However, the only way to recover $\br$ or $\be$ is to solve the SDP.
\end{proof}

Note that in \cite{kobara} the authors gave conversions of the McEliece system to achieve CCA2 security. 
\medskip
 
For digital signature schemes, we have a similar notion to CCA and CPA, called Existential UnForgeability under Chosen Message Attack (EUF-CMA).

The new game works as follows.
\begin{enumerate}

\item     The challenger generates a secret key $\mathcal{S}$ and a public key $\mathcal{P}$ and sends $\mathcal{P}$ to the attacker.
   \item  The attacker chooses  messages $m_1, \ldots, m_N$ and sends them to the challenger. 
   \item The challenger   generates the signatures $(\sigma_1,\ldots, \sigma_N)$ and sends them to the attacker. 
   \item The attacker wins, if the attacker is able to generate a valid signature $\sigma$ for some message $m \neq m_i$.  
   \end{enumerate}

The signature scheme is called EUF-CMA secure if no (efficient) adversary has a non-negligible advantage in winning the game. 
Note that EUF-CMA security, thus, also asks for signatures to behave indistinguishably from some random distribution. 

\newpage

\section{Historical Overview}\label{sec:overview}
 
 There have been many proposals especially for the McEliece framework. We will here only list a small choice of them, which we hope represent well the major difficulties in proposing new code-based cryptosystems.

McEliece proposed to use binary Goppa codes for his framework, and while the initially proposed parameters are now broken with information set decoding \cite{def}, algebraic attacks are only known for specific parameter sets of Goppa codes \cite{wildattack,high}. In fact, for most parameter sets, there is no algebraic property of binary Goppa codes known which distinguishes them from a random code. The drawback of binary Goppa codes, however, is that they can only correct a small amount of errors, leading to  large  generator matrices for cryptosystems to reach a fixed security level, resulting in large key sizes. 

Other proposals have tried to avoid this problem by using other classes of algebraic codes. Several proposals are based on GRS codes, since these codes have the largest possible error correction capability, but were ultimately broken: Sidelnikov-Shestakov proposed an attack \cite{sidelnikovattack} which recovers parameters for the Niederreiter scheme \cite{niederreiter}, where GRS codes were originally proposed. 

Attempts to avoid this weakness \cite{berger, bbcrs, bbcrs2, bbcrs3, bolkema, weight2, niederreiter, kk, trs} were often unsuccessful, as GRS codes can be distinguished from random codes with the help of the square code \cite{wiesche, distinguisher, distinguisher2, matthieu, attacktrs}, since the square code of a  GRS code has a very low dimension. 

Other proposals have been made using non-binary Goppa codes \cite{wild}, algebraic geometry
codes \cite{ag}, LDPC and MDPC codes \cite{ldpc,mo00p,mdpc}, Reed-Muller codes \cite{rm} and convolutional
codes \cite{conv}, but most of them were unsuccessful in hiding the structure of the private code
\cite{attackag, wildattack, convattack, rmattack, qcattack}. \\

\begin{table}[h!]
\begin{center}
\begin{tabular}{|c|c|c|}
\hline 
Code   & proposed in & attack  \\\hline \hline
Goppa & \cite{mceliece, NISTMcEliece} & \\ \hline
Wild Goppa  & \cite{wild} & \cite{wildattack} \\ \hline 
 Interleaved Goppa & \cite{IG} & \\ \hline
GRS & \cite{niederreiter} & \cite{sidelnikovattack} \\ \hline
 Twisted RS & \cite{trs} & \cite{attacktrs} \\ \hline 
low-codimensional subcodes of GRS  & \cite{berger} & \cite{wiesche} \\ \hline
Sum of GRS & \cite{bbcrs, weight2} & \cite{distinguisher} \\ \hline 
Expanded GRS & \cite{kk} & \cite{matthieu} \\ \hline 
Subspace Subcodes of GRS & \cite{ssrs} & \cite{matthieu} \\ \hline
GRS and random columns & \cite{NISTRLCE, wiesch}  & \cite{rlceattack} \\  \hline
$(U, U+V)$ RS & \cite{uuv} & \\ \hline 
Reed-Muller & \cite{rm} & \cite{borodin, rmattack} \\ \hline
Polar  & \cite{polar} & \cite{vlad} \\ \hline 
Algebraic geometry  & \cite{ag} & \cite{attackag, attackag2} \\ \hline
LDPC  & \cite{ldpc,mo00p} & \cite{qcattack} \\ \hline 
MDPC & \cite{mdpc} & \cite{qcattack} \\ \hline
Convolutional & \cite{conv} & \cite{convattack} \\
 \hline
 Ordinary concatenated & \cite{concat1} & \cite{concatattack} \\ \hline
 Generalized concatenated & \cite{concat2} & \\ \hline
 \end{tabular}\caption{Proposals for the McEliece Framework}
\end{center}  
\end{table}

The first rank-metric code based cryptosystem called GPT  was proposed  in 1991 by   Gabidulin,   Paramonov and  Tretjakov  \cite{gpt}. The authors suggest the use of Gabidulin codes, which can be seen as the rank-metric analog of GRS codes. Similar to the distinguisher on GRS codes, namely the square code attack, also Gabidulin codes suffer from a distinguisher by Overbeck \cite{overbeck} using the Frobenius map. 
The GPT system was then generalized in \cite{ggpt}, but still suffers from an extended Frobenius distinguisher \cite{AL}. Since this proposal some authors have tried to fix this security issue by tweaking the Gabidulin code \cite{gab, tgab}. Other rank-metric systems include \cite{pierre,gab3,acht}. \\
\begin{table}[h!]
\begin{center}
\begin{tabular}{|c|c|c|}
\hline 
Code   & proposed in & attack  \\\hline \hline
Gabidulin & \cite{gpt,ggpt} & \cite{overbeck,AL} \\ \hline
Subspace subcodes of Gabidulin & \cite{gab} & \\ \hline
Twisted Gabidulin & \cite{tgab} &   \\ \hline
 \end{tabular}\caption{Proposals for the GPT framework}
\end{center}  
\end{table}

\begin{table}[h!]
\begin{center}
\begin{tabular}{|c|c|c|}
\hline 
Code   & proposed in & attack  \\\hline \hline
GRS (list decoding) & \cite{augot} & \cite{augotattack} \\ \hline 
Gabidulin (list decoding) & \cite{FL} & \cite{FLattack} \\ \hline
Interleaved Gabidulin & \cite{repair, LIGA, IntG} & \cite{bombar}  \\ \hline
Gabidulin & \cite{ramesses} & \cite{bombar} \\ \hline 
 \end{tabular}\caption{Proposals for the AF framework}
\end{center}  
\end{table}

Next, we want to list some of the most important proposals for code-based signature schemes. The first code-based signature scheme was proposed in 2001 by Courtois, Finiasz and Sendrier (CFS) \cite{cfs}. Again this can be considered as a framework, but the code suggested by the authors was a high rate Goppa code, for which, unfortunately, a distinguisher exists \cite{high}. 
Another way to approach this problem is to relax the weight condition on the error vector. This idea has been followed in \cite{ldgm} where low-density generator matrices were proposed, in  \cite{convsign}, where convolutional codes were suggested, and in \cite{NISTpqsigRM}, where they use Reed-Muller codes. The proposals \cite{ldgm, convsign} have been attacked in \cite{ldgmattack, convsignattack} respectively. 

Also notable are the signature schemes in \cite{KKS1, KKS2, BMS, GS}, which can at most be considered as one-time signatures due to the attack in \cite{COV, OT}. 

In \cite{SURF} the authors propose binary $(U,U+V)$ codes in a signature scheme and the security relies on the problem of finding a closest codeword. However, the hull of such a code is typically much larger than
for a random linear code of the same length and dimension.
Thus, this proposal has been attacked in \cite{SURFattack}.
This problem has later been solved by the authors of Wave  \cite{wave},
by using generalized $(U,U+V)$ codes over the ternary and basing the security on the farthest codeword problem. In addition, Wave provides a proof of the preimage sampleable property (first introduced in \cite{gentry}), which thwarts all attacks trying to exploit the knowledge of signatures.

In \cite{SHMWW} the authors propose a  code-based signature scheme from the Lyubashevsky framework, which was then broken in \cite{karan}. 

Also the code-equivalence problem has been used for a code-based signature scheme in \cite{less}, which was attacked in \cite{LESSattack}. The $\mathsf{LESS}$ signature scheme resolved the vulnerability in \cite{lessNIST}.

A one-time signature scheme from quasi-cyclic codes has been proposed in \cite{persign}. Also this proposal has been attacked in \cite{paolo}.

The signature scheme RaCoSS \cite{NISTRaCoSS} submitted to NIST standardization process is similar to the hash-and-sign approach of CFS but depending on some Bernoulli distributed vector. This proposal has been broken (either see \cite{racossattack} or the comment section on the NIST website\footnote{\url{https://csrc.nist.gov/CSRC/media/Projects/Post-Quantum-Cryptography/documents/round-1/official-comments/RaCoSS-official-comment.pdf}}).  

Finally, the signature scheme pqsigRM \cite{NISTpqsigRM} is an adaption of the broken CFS scheme \cite{cfs}, where the authors propose the use of Reed-Muller codes instead of  Goppa codes, this proposal has also been cryptanalyzed\footnote{\url{https://csrc.nist.gov/CSRC/media/Projects/Post-Quantum-Cryptography/documents/round-1/official-comments/pqsigRM-official-comment.pdf}}.  \\

In the rank metric, one of the most notable signature schemes is that of RankSign \cite{NISTRankSign}, which has been attacked in \cite{attackranksign}. 
Other rank-metric signature schemes include Durandal \cite{durandal}, which is in the Lyubashevsky framework and MURAVE \cite{murave}. 
Note that, even though Durandal has an EUF-CMA security proof, it has recently been broken \cite{durandalattack}.  \\

Due to the Fiat-Shamir transform, we also  include code-based ZK protocols here, although the proposals until now all suffer from large signature sizes. The ZK protocols usually use random codes, thus we will often  not specify a particular proposed code.

 The first code-based ZK protocol was proposed by Stern in 1993 \cite{SternZK} and recently after also by V\'eron \cite{veron}. In this survey we have covered two improvements on their idea, namely CVE \cite{cve} and AGS \cite{ags}. 
 
 In a recent paper \cite{rest} the authors propose to use restricted error vectors in CVE, which leads to smaller signature sizes.
 
 Another approach to reduce the signature sizes is the quasi-cyclic version of Stern's ZK protocol, proposed in \cite{qcstern}.  
 
 Also rank-metric ZK protocols have been proposed in the recent paper \cite{bellini}, with the aim of turning it into a fully fledged rank-metric signature scheme. 

\section{Submissions to NIST}\label{sec:nist}

In 2016 the National Institute of Standards and Technology (NIST) started a competition to establish post-quantum cryptographic standards for public-key cryptography and signature schemes. Initially, 82 proposals were submitted of which 69 could participate in the first round. 19 of these submissions were based on coding theory.
\medskip

In 2020, the third round was announced. Of the initial candidates, 9 public-key systems and 6 signature schemes still remain in this round. Three of the 9 public-key cryptosystems are code-based, one of them being Classic McEliece \cite{NISTMcEliece}, a Niederreiter-based adaption of the initial McEliece cryptosystem. 
\medskip

The other two candidates put effort on avoiding the drawback of large public-key sizes. BIKE \cite{NISTBike} achieves this by combining circulant matrices with MDPC codes, whereas HQC \cite{NISTHQC} is a proposal based on the quasi-cyclic scheme, which does not require using the algebraic structure of the error-correcting code.
\medskip

In this section, we will study these candidates in depth, for this we  provide tables summarizing the submissions that were eliminated in round 1, round 2 and finally the finalists of round 3. 
\medskip
 
Table \ref{round1} contains all public-key encryption and key-encapsulation mechanism candidates, which were eliminated in round one. All candidates use the Hamming metric (HM) or the rank metric (RM). Key sizes will be given in kilobytes, pk denotes the public key and sk the secret key.
\medskip

Due to space limitations, we will sometimes abbreviate the McEliece framework with MF, the Niederreiter framework with NF,   the framework of Alekhnovich by AF, the quasi-cyclic framework by QCF and finally a Diffie-Hellman approach by DH. 
\medskip

In addition to acronyms that were already introduced, we also abbreviate quasi-cyclic (QC), Ideal Code (IC) and double-circulant (DC).

The given key sizes are for the parameter sets that were proposed for $128$ bits of security (however, some proposals contained multiple suggestions for parameter sets for this security level).
\medskip

All data is taken from the supporting documentations of the NIST proposals BIG QUAKE \cite{NISTBigQuake}, DAGS \cite{NISTDags}, Edon-K \cite{NISTEdonK}, LAKE \cite{NISTLake}, LEDAkem \cite{NISTLEDAkem}, LEDApkc \cite{NISTLEDApkc}, Lepton \cite{NISTLepton}, LOCKER \cite{NISTLOCKER}, McNie \cite{NISTMcNie}, Ouroboros-R \cite{NISTOuroborosR}, QC-MDPC KEM \cite{NISTQCMDPC}, Ramstake \cite{NISTRamstake} and RLCE-KEM \cite{NISTRLCE}.\\
 
 \begin{table}[h!]
\begin{center}
\begin{tabular}{|c|c|c|c|c|c| }
\hline
Candidate & Framework & Code & Metric & Pk Size &  Reason for Drop Out\\ 
\hline
BIG QUAKE &  NF & QC Goppa & HM & $25-103$ & large key sizes \\
DAGS &  MF & dyadic GS & HM &$8.1$ & broken \cite{DAGSattack} \\
Edon K &  MF & binary Goppa & HM & $2.6$ &  broken \cite{EdonKattack} \\
LAKE &  NF & IC, DC, LRPC & RM & $0.4$ & merged (ROLLO) \\
LEDAkem &  NF & QC LDPC & HM & $3.5 - 6.4$ &  merged (LedAcrypt) \\
LEDApkc &  MF & QC LDPC & HM & $3.5 - 6.4$ &  merged (LedAcrypt) \\
Lepton &  AF & BCH & HM & $1.0$ &  cryptanalysis \\
LOCKER &  NF & IC, DC, LRPC & RM  &$0.7$ &  merged (ROLLO) \\
McNie &  MF/NF& QC LRPC & RM & $0.3 - 0.5$ &  broken \cite{McNieattack1} \cite{McNieattack2}\\
Ouroboros-R &  QCF & DC LRPC & RM & $1.2$ & merged (ROLLO)\\
QC-MDPC KEM &  MF & QC MDPC & HM & $1.2 -2.6$ & N/A\\
Ramstake & DH & RS & HM & $26.4$ & broken \cite{Ramstakeattack}   \\
RLCE-KEM &  MF & GRS & HM & $118-188 $ &   broken \cite{rlceattack}\\
\hline
\end{tabular}\caption{Code-based PKE/KEM submissions to NIST, eliminated in round 1}\label{round1}
\end{center}
\end{table}

The reason for the drop out of BIG QUAKE was mainly discussed at CBC 2019\footnote{\url{https://drive.google.com/file/d/1nruEobwdeJbtwouJssbjZCK0WQiBN7rW/view}}, and is due to the large key sizes of the proposal, as it is "still worse than completely unstructured lattice KEM."
The reason for Lepton's drop out, is a security issue that can be found in the comment section of the NIST website\footnote{url{https://csrc.nist.gov/CSRC/media/Projects/Post-Quantum-Cryptography/documents/round-1/official-comments/Lepton-official-comment.pdf}}.
 
 \newpage
In Table \ref{sigprop} we list  all code-based signature schemes that were eliminated during round one, which in every case was due to cryptanalysis.
\medskip

The table contains their signature sizes, public key sizes, secret key sizes (all in kilobytes) and the recommended number of rounds necessary to ensure verification with a very high probability. 
\medskip

For this a security level of 128-bit is fixed in the respective scheme. The signature size of pqsigRM is taken from \cite{pqsigmod}, all other data is taken from the supporting documentations pqsigRM \cite{NISTpqsigRM}, RaCoSS \cite{NISTRaCoSS} and RankSign \cite{NISTRankSign}.\\

\begin{table}[h!]
\begin{center}
\begin{tabular}{|c|c|c|c|c|c|c|c| }
\hline
Candidate & Signature Size & Pk Size & Sk Size & Rounds \\
\hline
pqsigRM & $0.5$ & $262$ & $138$ & $100$ \\
RaCoSS &  $0.3$ & $169$ & $100$ &  $100$ \\
RankSign &  $1.4 -1.5$& $10$ & $1.4 - 1.5$ &  N/A \\
\hline
\end{tabular}\caption{Code-based signature submissions to NIST, eliminated in round 1}\label{sigprop}
\end{center}
\end{table}
Table \ref{r2} contains all PKE/KEM candidates that were eliminated during round two. There are no code-based signature schemes that made it to round two or further.
\medskip

All data is taken from the supporting documentations of LEDAcrypt \cite{NISTLEDAcrypt}, NTS-KEM \cite{NISTNTS}, ROLLO \cite{NISTRollo} and RQC \cite{NISTRQC}.\\

\begin{table}[h!]
\begin{center}
\begin{tabular}{|c|c|c|c|c|c| }
\hline
Candidate & Framework & Code & Metric & Pk Size &  Reason for Drop Out\\ 
\hline
LEDAcrypt &  McE/N & QC LDPC & HM & $1.4 - 2.7$ & broken \cite{LEDAcryptattack} \\
NTS-KEM & Niederreiter & binary Goppa & HM & $319$ & merged (Classic McE) \\
ROLLO & Niederreiter & IC, LRPC & RM & $0.7$ & cryptanalysis \cite{ROLLOattack}\\
RQC &  Quasi-Cyclic & IC, Gabidulin & RM & $1.8$ & N/A\\
\hline
\end{tabular}\caption{Code-based PKE/KEM submissions to NIST, eliminated in round 2}\label{r2}
\end{center}
\end{table}

Finally, there are three candidates that made it to the final round, round three. Classic McEliece, as main candidate, and BIKE and HQC as alternative candidates.

As before, the public key (pk) size is given in kilobytes, data is taken from the proposed parameters for the $128$-bit security level.

\begin{table}[h!]
\begin{center}
\begin{tabular}{|c|c|c|c|c| }
\hline
Candidate & Framework & Code & Metric & pk size \\ 
\hline
Classic McEliece &  Niederreiter & binary Goppa & Hamming & $261$ \\
BIKE &  Niederreiter & MDPC & Hamming & $1.5$ \\
HQC &  Quasi-Cyclic & decodable code of choice, QC & Hamming & $2.2$ \\
\hline
\end{tabular}\caption{Final round code-based PKE submissions to NIST}
\end{center}
\end{table}

\newpage
\subsection{Round 4 Candidates: Classic McEliece, BIKE and HQC}\label{sec:finalist}
In this section, we present the three code-based proposals Classic McEliece, BIKE and HQC, which are in the fourth round of the NIST standardization call from 2016. For each one, we give a mathematical description and the proposed parameters.

\subsubsection{Classic McEliece}

The NIST submission Classic McEliece uses the Niederreiter framework (Section \ref{sec:niedframework}) with binary Goppa codes (Definition \ref{def:goppa}) as secret codes. This subsection is based on the round 3 submission \cite{NISTMcEliece}.

Let us start with the description of the scheme.  Let $m$ be a positive integer, $q=2^m$, $n \leq q$ and $t \geq 2$ be positive integers such that $mt < n$ and set $k = n - mt$.

Further, pick a monic irreducible polynomial $f(z) \in \mathbb{F}_2 [z]$ of degree $m$ and identify $\mathbb{F}_{q}$ with $\mathbb{F}_2[z] / f(z)$. Note that under this identification, every element in $\mathbb{F}_{2^m}$ can be written as $$u_0 + u_1 z + \ldots + u_{m-1}z^{m-1}$$ for a unique vector $(u_0, u_1, \ldots, u_{m-1}) \in \mathbb{F}_2^m$.

With these preliminaries set, we can describe the public-key encryption scheme:
\begin{itemize}
    \item \textbf{Key Generation:}   \begin{enumerate}
        \item Generate a random monic irreducible polynomial $g(x) \in \mathbb{F}_q[x]$ of degree $t$ and $n$ random distinct elements $\alpha_1, \ldots, \alpha_n \in \mathbb{F}_q$.
        \item Compute a parity-check matrix $\tilde{\bH} = \{ \tilde{h}_{ij} \}_{ij}$ of the binary Gopppa code with parameters $(g, \alpha_1, \ldots, \alpha_n)$ by computing $\tilde{h}_{ij} = \alpha_j^{i-1} / g(\alpha_j)$.
        \item Apply an invertible matrix to $\tilde{\bH}$ and permute the columns of this matrix to get a matrix in systematic form $\bH=( \text{Id}_{n-k} | \bT)$.
        
        Denote with $(\alpha'_1, \ldots, \alpha_n')$ the $n$-tuple obtained by applying the same permutation to $(\alpha_1, \ldots, \alpha_n)$. 
        
        Note that $( \text{Id}_{n-k} | \bT)$ is a parity-check matrix of the Goppa code defined by $(g, \alpha'_1, \ldots, \alpha'_n)$.
    \end{enumerate}
    \item \textbf{Private Key:} The private key is the $(n+1)$-tuple $\Gamma' = (g, \alpha'_1, \ldots, \alpha'_n)$.
    \item \textbf{Public Key:} The public key is the $(n-k) \times (n-k)$ matrix $\bT$ and the number $t$.
    \item \textbf{Encryption:} Encode the message as weight $t$ vector $\be \in \mathbb{F}_2^n$ and compute $$\bc_0 = \bH\be^\top \in \mathbb{F}_2^{n-k}.$$
    \item \textbf{Decryption:} Extend $\bc_0$ to $\bv=(\bc_0^\top, 0, \ldots, 0) \in \mathbb{F}_2^n$. The parameters $\Gamma'$ of the private key define a Goppa code, so we can use a decoding algorithm for Goppa codes to find a codeword $\bc$ with distance $\leq t$ to $\bv$ (if it exists). 
    \medskip
    
    We then recover $\be$ as $\be=\bv+\bc$ and check that it indeed satisfies $\bH\be^\top = \bc_0$ and is of weight $t$.
\end{itemize}

\begin{remark}
The decryption works for the following reason: we have that $\bH = (\text{Id}_{n-k} | \bT)$, so $$\bH\bv^\top = \text{Id}_{n-k} \bc_0 = \bc_0.$$ Thus,  it follows that $$\bH(\bv+\be)^\top = 0, $$ and  $\bc=\bv+\be$ is a codeword of the Goppa code defined by $\Gamma'$. 
\medskip

Since this code has minimum distance at least $ 2t+1$, we get that $\bv+\be$ is also the unique codeword of distance up to $ t$ from $\bv$, so we may recover the error vector as $\be=\bv+\bc$.
\end{remark}

\subsubsection{Proposed Parameters for Classic McEliece}

We give an overview of the proposed parameter sets, input and output sizes for the  expected security levels. Level 1 corresponds to 128 bits, level 3 corresponds to 192 bits and level 5 corresponds to 256 bits of security. The key sizes and ciphertext size are given in bytes.

\begin{table}[h!]
\begin{center}
\begin{tabular}{|c|c|c|c|c|c|c|c|}
\hline
Parameter set & $m$ & $n$ & $t$ & Public key & Private key & Ciphertext & Security level\\ 
\hline
mceliece348864 & $12$ & $3488$ & $64$& $261120$ & $6492$ & 128 & 1 \\
mceliece460896 & $13$& $4608$ & $96$& $524160$ & $13608$ & 188 & 3 \\
mceliece6688128 &$13$ & $6688$& $128$& $1044992$ & $13932$ & $240$ & 5  \\
mceliece6960119 & $13$& $6960$& $119$& $1047319$ & $13948$ & 226 & 5 \\
mceliece8192128 &$13$& $8192$ & $128$& $1357824$ & $14120$ & 240 & 5 \\
\hline
\end{tabular}\caption{Parameters for Classic McEliece}
\end{center}
\end{table}

The Classic McEliece submission is considered the main candidate for standardization by NIST. It is clearly based on the original proposal  of McEliece \cite{mceliece} and thus a rather conservative choice by NIST. 
The main advantage of Classic McEliece is thus its well studied security, as there are no known algebraic attacks on the original proposal of McEliece since 1978, but it still suffers from the same disadvantage, i.e.,   the large size of its public keys.

\subsubsection{BIKE}

The NIST submission Bit Flipping Key Encapsulation (BIKE) combines circulant matrices with the idea of moderate density parity-check matrices (Definition \ref{def:mdpc}). The usage of circulant matrices keeps key sizes small  while using moderate density parity-check matrices allows efficient decoding with a Bit-Flipping algorithm. We follow the NIST round 3 submission  \cite{NISTBike} and give a ring-theoretic description of the system. Note however that BIKE can also be fully described with matrices.

Let $r$ be prime number such that $2$ is primitive modulo $r$, i.e., $2$ generates the multiplicative group $\mathbb{Z}/ r\mathbb{Z}^\star$. The parameter $r$ denotes the block size, from which we obtain the code length $n=2r$. We further pick an even row weight $w \approx \sqrt{n}$ such that $w/2$ is odd and an error weight $t \approx \sqrt{n}$. 

We then set $R:= \mathbb{F}_2[x]/(x^r -1)$. Any element $a \in R$ can be represented as polynomials of degree less or equal than $r-1$ and can uniquely be written as linear combination of the form 
$$a=\sum_{i=0}^{r-1} a_i x^i,$$
where $a_i \in \mathbb{F}_2$ for all $i \in \{ 0, 1, \ldots, r-1 \}$. 
\medskip

This gives us a natural notion of the weight of $a$, which we denote with $\wt(a)$, i.e.,
$$ \wt(a)= | \{ i \in \{ 0,1, \ldots, r-1 \} \mid  a_i \neq 0 \}|. $$

\begin{remark}
The choice of $r$ ensures that the irreducible factors of $x^r -1$ are $x-1$ and $x^{r-1} + x^{r-2} + \cdots + 1$ (see Exercise \ref{factors}). As a consequence of this, an element $a \in R$ is invertible if and only if $\wt(a)$ is odd and $\wt(a) \neq r$.
\end{remark}

\begin{itemize}
    \item \textbf{Key Generation:} Pick a pair $(h_0, h_1) \in R^2$ such that $\wt(h_0) = \wt(h_1) = w/2$. Then compute $h = h_1 h_0^{-1} \in R$.
    \item \textbf{Private Key:} The private key is the pair $(h_0, h_1)$.
    \item \textbf{Public Key:} The public key is the element $h \in R$ and the integer $t$.
    \item \textbf{Encryption:} The message gets encoded as error $(e_0, e_1) \in R^2$ such that $\wt(e_0) + \wt(e_1) = t$ and then encrypted as $s= e_0 + e_1h$.
    \item \textbf{Decryption:} We compute $sh_0 = e_0 h_0 + e_1 h_1$. Since $h_0$ and $h_1$ are of moderate density, this can be decoded efficiently with a Bit-Flipping algorithm to recover the pair $(e_0, e_1)$.
\end{itemize}

\begin{remark}
The difficulty of attacking BIKE lies in finding an element $\tilde{h} \in R$ of at most moderately high weight, such that $h \tilde{h}$ is also of at most moderately high weight.
\end{remark}

\begin{remark}
BIKE can also be described with matrices: for $$a=\sum_{i=0}^{r-1} a_i x^i \in R$$ and $$b=\sum_{i=0}^{r-1} b_i x^i,$$ we are considering the code with parity-check matrix
$$ \bH = \left(\begin{array}{ccccc|ccccc}
    a_0 & a_1 & \cdots&a_{r-2} & a_{r-1} & b_0 & b_1 & \cdots& b_{r-2} & b_{r-1} \\
    a_{r-1}& a_0 & \cdots & a_{r-3} & a_{r-2} & b_{r-1}& b_0 & \cdots & b_{r-3} & b_{r-2}\\
    \vdots       &    & \ddots    &   & \vdots & \vdots       &    & \ddots    &   & \vdots \\
    a_2       & a_3   & \cdots    & a_0  & a_1 & b_2       & b_3   & \cdots    & b_0  & b_1 \\
    a_1& a_2 & \cdots  & a_{r-1} & a_0 & b_1& b_2 & \cdots  & b_{r-1} & b_0 \end{array}\right).$$

    \medskip

In this case, the errors $e_0 = \sum_{i=0}^{r-1} e_{0,i}x^i$ and $e_1 = \sum_{i=0}^{r-1} e_{1,i}x^i$ may be viewed as vectors $$\tilde{\be}_j=( e_{j,0}, e_{j,r-1}, e_{j, r-2}, \ldots, e_{j,1})$$ for all $j \in \{1,2 \}$. We then compute syndromes by $$\bH (\tilde{\be}_1 \: | \: \tilde{\be}_2 )^\top.$$
\end{remark}

\begin{exercise}\label{factors}
Let $r$ be a prime such that $2$ generates $\mathbb{Z}/r\mathbb{Z}^\star$. Show that the irreducible factors of $x^r -1 \in \mathbb{F}_2[x]$ are $x-1 $ and $x^{r-1} + x^{r-2} + \cdots 1$. You may use the following steps:
\begin{enumerate}
    \item Let $p(x)$ be a monic irreducible factor of $x^{r-1} + x^{r-2} + \cdots +1$ and $\alpha$ a root of $p(x)$ in the algebraic closure. Show that $r$ is the smallest positive integer such that $\alpha^r = 1$.
    \item Justify that the roots of $p(x)$ are the elements of the set $\left\{ \alpha^{(2^n)} \: | \: n \in \mathbb{N}_{\geq 1} \right \}$.
    \item Show that $\left\{ \alpha^{(2^n)} \: | \: n \in \mathbb{N}_{\geq 1} \right \}$ contains exactly $r-1$ elements and conclude that $p(x)=x^{r-1} + x^{r-2} + \cdots + 1$.
\end{enumerate}
\end{exercise}

\subsubsection{Proposed Parameters for BIKE}

We now present the proposed parameters for three levels of security, where again level 1 is 128 bits of security, level 3 is 192 bits, and level 5 is 256 bits of security. We also include an estimate for the decoding failure rate (DFR) and key and ciphertext sizes in bytes.

\begin{table}[h!]
\begin{center}
\begin{tabular}{ | c | c | c | c | c | c | c | c | }
\hline
Security & $r$ & $w$ & $t$ & Private key & Public key & Ciphertext & DFR\\ 
\hline
 Level 1 & $12323$ & $142$ & $134$ & $281$ & $1541$ & $1573$ & $2^{-128}$ \\
 Level 3 & $24659$ & $206$ & $199$ & $419$ & $3083$ & $3115$ & $2^{-192}$ \\  
 Level 5 & $40973$ & $274$ & $264$ & $580$ & $5122$ & $5154$ & $2^{-256}$ \\
 \hline
\end{tabular}\caption{Parameters for  BIKE }
\end{center}
\end{table}

It can be seen that BIKE has small public key sizes, which is a big advantage over the other systems.

\subsubsection{HQC}

The submission Hamming Quasi-Cyclic (HQC) is based on the quasi-cyclic framework (see Section \ref{sec:quasicyclic}) and uses a combination of a decodable code of choice and circulant matrices. 
\medskip

The third round proposal suggests to use concatenated Reed-Muller and Reed-Solomon codes (Definitions \ref{def:concat}, \ref{def:reedmuller}, \ref{def:reedsolomon}), in the initial NIST submission \cite[Section 1.6]{hqcround1} a tensor product code of a BCH and a repetition code was proposed. An important feature of HQC is the fact that the used codes are not secret. 
\medskip

We follow the NIST submission \cite{NISTHQC} for the detailed description.

\medskip

Let $n$ be such that $(x^n -1)/(x-1)$ is irreducible over $\mathbb{F}_2$. We pick a positive integer $k<n$ and an $[n,k]$ linear code $\mathcal{C}$ with an efficient decoding algorithm, whose error correcting capacity is given by $t$. We are further given error weights $w$, $w_r$ and $w_e$, all in the range of $\frac{\sqrt{n}}{2}$. 
We set $R:= \mathbb{F}_2[x]/(x^n-1)$. Recall that any element $a \in R$ can be written as 
$$a=a_{n-1}x^{n-1} + a_{n-2}x^{n-2} + \ldots +a_0$$ 
for unique $a_0, a_1, \ldots, a_{n-1} \in \mathbb{F}_2$. For such an element we denote its Hamming weight as
$$ \wt_H(a) = | \{ i \in \{ 0, 1, \ldots, n-1 \} \mid a_i \neq 0 \}|.$$
Note also that we can identify a vector  $ \ba =( a_0, a_1, \ldots, a_{n-1}) \in \mathbb{F}_2^n$ with the element $a=\sum_{i=0}^{n-1} a_i x^i \in R$ and vice versa. In the following description any bold letter, e.g. $\bu$, refers to the associated vector in $\mathbb{F}_2^n$ of an element in $R$, e.g. $u \in R.$

\begin{itemize}
    \item \textbf{Key Generation:}  Given the parameters $(n,k, t, w, w_e, w_r)$, choose a generator matrix $\bG$ of the code $\mathcal{C}$ and generate a random $h \in R$.  
    \item \textbf{Private Key:} The private key is a randomly generated pair $(y,z) \in R^2$ such that $\wt_H(y) = \wt_H(z) = w$.
    \item \textbf{Public Key:} We compute $s=y+hz \in R$. The public key is given by  $(\bG,h,s, t)$.
    \item \textbf{Encryption:} We randomly generate an element $e \in R$ such that $\wt(e) = w_e$ and a pair $(r_1, r_2) \in R^2$ such that $\wt_H(r_1) = \wt_H(r_2) = w_r$. 
    
    Let $\bm \in \mathbb{F}_2^k$ be the message, which gets encrypted as the pair $\bc=(\bu,\bv) \in R^2$, where $u=r_1 + hr_2$ and $\bv=\bm\bG + \bs\br_2 +\be$. 
    \item \textbf{Decryption:} As mentioned in the quasi-cyclic framework, we compute that 
    $$\bv-\bu\mathbf{z} =  \bm\bG + (\by\br_2 - \br_1\mathbf{z} +\be).$$
    The term $\by\br_2 - \br_1\mathbf{z} +\be$ has Hamming  weight $\leq t$ with high probability (this follows non-trivially from the choice of the parameters). If this is the case, we can use the decoding algorithm of $\mathcal{C}$ to recover the message $\bm$.
\end{itemize}

\subsubsection{Proposed Parameters for HQC}

The following table contains the proposed parameters for HQC together with an upper estimate on the decoding failure rate (DFR) and ciphertext size and key sizes.  The key and ciphertext sizes are given in bytes and as before, security levels 1,3 and 5 correspond to $128$-bit, $192$-bit and $256$-bit security respectively.

\begin{table}[h!]
\begin{center}
\begin{tabular}{ | c | c | c | c | c | c | c | c | c | c |}
\hline
Security & $n$ & $w$ & $w_r = w_e$ & Public key & Private key & Ciphertext & DFR \\ 
\hline
 Level 1 &  $17669$ & $66$ & $75$& $2249$& $40$ & $4481$ &$2^{-128}$ \\
 Level 3 &  $35851$ & $100$ & $114$ & $4522$ & $40$ & $9026$ & $2^{-192}$ \\
 Level 5 &  $57637$ & $131$ & $149$ & $7245$ & $40$ & $14469$ & $2^{-256}$ \\
 \hline
\end{tabular}\caption{Parameters for HQC}
\end{center}
\end{table}

The advantages of HQC are its efficient implementation and its small key sizes. However, HQC suffers from a low encryption rate.

\subsection{Code-Based Signature Schemes}\label{sec:new}

In 2023, NIST has opened an additional standardization call for post-quantum signature schemes. Out of  the  50 submitted schemes, 40 have been found complete and proper and have been published as official round 1 candidates. 
\medskip

Among the 40 schemes, we find
\begin{itemize}
    \item[12] multivariate schemes,
    \item[7] lattice-based schemes,
    \item[4] symmetric schemes,
    \item[1] isogeny-based scheme,
    \item[5] schemes that have been grouped as ``other'',
    \item[11] code-based schemes.
    
\end{itemize}
Within the first 2 months, 11 of the schemes have been attacked. At the moment of this writing, we have 29 surviving schemes, out of which we find
\begin{itemize}
    \item[9] multivariate schemes,
    \item[5] lattice-based schemes,
    \item[4] symmetric schemes,
    \item[1] isogeny-based scheme,
    \item[1] scheme that has been grouped as ``other'',
    \item[9] code-based schemes.
    
\end{itemize}
The interested reader can compare the 29 survivors on \begin{center} 

\url{https://pqshield.github.io/nist-sigs-zoo/}
\end{center}

\medskip

In the following, we will only consider the 11 submitted code-based signatures. 

\newpage

Recall the three different approaches to construct a signature scheme, with the benefits and limitations:
\begin{table}[h!]
    \centering
    \begin{tabular}{|c|c|c|} \hline 
       \multicolumn{3}{|c|}{\textbf{Hash-and-Sign}}   \\ \hline 
         Needs & Limitations & Advantages \\ 
        \hline 
    Trapdoor & Large public keys & Small signatures \\ 
    Secret code & Slow signing & \\ 
    \hline 
    \multicolumn{3}{|c|}{\textbf{ZK Protocol and Fiat-Shamir Transform}}   \\ \hline 
        Needs & Limitations & Advantages \\ 
        \hline 
    Hard problem & Large signatures & Small public keys \\ 
        \hline 
        \multicolumn{3}{|c|}{\textbf{ZK Protocol and MPCitH}}   \\ \hline 
        Needs & Limitations & Advantages \\ 
        \hline 
    Hard problem & Slow signing & Small signatures \\ 
    $(N-1)$-private MPC & Slow verifying & Small public keys\\ 
    \hline 
    \end{tabular}
    \caption{Comparison of the different techniques to construct a code-based signature scheme.}
    \label{tab:comptech}
\end{table}

\subsubsection{Hash-and-sign schemes}
Let us start with the three code-based hash-and-sign schemes.

\begin{table}[h]
    \centering
    \begin{tabular}{|c|c|c|c|} \hline 
        \textbf{Trapdoor} & \textbf{Secret Code} & \textbf{Scheme} & \textbf{Comment} \\ 
        \hline 
    Lee SDP & Quasi-cyclic code & $\mathsf{FuLeeca}$ & Broken \\ 
        \hline 
     
        SDP & Reed-Muller code & Enhanced pqsigRM  & Broken\\ 
        \hline 
   SDP & $(U,U+V)$-code & $\mathsf{WAVE}$ & Large public keys \\ 
        \hline 
    \end{tabular}
    \caption{Hash-and-sign schemes submitted to the additional call of NIST for signature schemes.}
    \label{tab:compHS}
\end{table}

\begin{enumerate}
\item $\mathsf{FuLeeca}$ \\ 

$\mathsf{FuLeeca}$ \cite{FuLeecaNIST} is the first cryptosystem based on the Lee metric.  It uses a secret quasi-cyclic code with low Lee weight generators $\ba, \bb$, i.e., $\text{wt}_L(\ba,\bb)=w_{key}$, defining the two circulant matrices $\bA, \bB$ which give the secret generator matrix $$\bG= \begin{pmatrix} \bA & \bB \end{pmatrix}.$$

\medskip

The public generator matrix is given by $\bG$ in systematic form,   $$\bG'= \begin{pmatrix} \text{Id}_k & \bT\end{pmatrix},$$ for $$\bT=\bA^{-1}\bB.$$ Clearly, it is enough to publish one row of $\bT.$
\medskip

In order to sign a message $\bm$, the signer hashes  $\bm$ getting $\bc=\mathsf{Hash}(\bm)$ and iteratively searches for a small $\bx$, such that $\bv=\bx\bG$ satisfies two conditions 
\begin{enumerate}
    \item $\text{wt}_L(\bv) \in [w_{sig}-2w_{key},w_{sig}]$,
    \item $\text{LMP}(\bv,\bc) > \lambda+64$.
\end{enumerate}
The first assumption ensures that an impersonator has to solve the Lee SDP in order to forge a signature, and the second conditions binds the message to the signature. On a high level, the hash of the message should have many signs matching with the codeword.    By setting their LMP larger than $\lambda$, one ensures that an impersonator has to go through $2^{\lambda}$ randomly chosen $\bv$ before finding enough signs matching.  
Since the codeword $\bv=(\by,\by\bT),$
he signature is then given by $\by.$

A verifier first recovers $\bv=(\by,\by\bT)$ checks exactly these two conditions
\begin{enumerate}
    \item $\text{wt}_L(\bv) \in [w_{sig}-2w_{key},w_{sig}]$,
    \item $\text{LMP}(\bv,\bc) > \lambda+64$,
\end{enumerate} in order to accept the signature $\by.$

The signature scheme shines with very small public key and signature sizes, one of the only code-based schemes to achieve both. 

\begin{table}[]
    \centering
    \begin{tabular}{c|c|c|c|c}
        Level & Public key size & Signature size & Signing time & Verification time  \\ \hline 
        
        I & 1.3 & 1.1 & 1803 & 1.4 \\ 
        III & 1.9 & 1.6 & 2139 & 2.5 \\
        V & 2.6 & 2.1 & 11805 & 3.8 
    \end{tabular}
    \caption{Performance of $\mathsf{FuLeeca}$. Sizes are in kilobytes and timings in MCycles. }
    \label{tab:fuleeca}
\end{table}

Unfortunately, the scheme was broken by van Woerden and H\"ormann. The attack makes use of the following facts:
\begin{itemize} 
\item The $\bx$ used to get the codeword $\bv=\bx\bG \mod p$ is chosen so small, there is no modular reduction necessary.  That is $\bv=\bx\bG$ also over $\mathbb{Z}.$ This allows the attackers to directly use the integer lattice $L(\bG).$
\item The quasi-cyclic structure of the code allows the attackers further to only search for a solution in one part, i.e., $\bG=\begin{pmatrix}
    \bA & \bB
\end{pmatrix}$ and it is enough to work with $L(\bA).$
\item Finally, using BKZ \cite{bkz}, the attacker can find short Euclidean vectors in $L(\bA)$. Usually, one would expect exponentially many such short vectors and only very few of those are also of small Lee weight. However, the chosen instances of $\mathsf{FuLeeca}$ allow for this attack to work fast. 
\end{itemize}
\medskip

\item {Enhanced pqsigRM}\\

 This proposals \cite{enhpqsigrmNIST} follows closely the original idea of CFS using a modified Reed-Muller code.

 Thus, the secret code is given by a Reed-Muller code having parity-check matrix $\bH$ and the public code is a scrambled parity-check matrix $\bH'=\bH\bP.$
 Upon a message $\bm$, one hashes the messages $\mathsf{Hash}(\bm)$ and hopes that it is the syndrome of a low weight vector $\be$, i.e.,
 $\be\bH^\top =\mathsf{Hash}(\bm).$
 In this case, one sends $\be\bP$ as signature.
 The verifier can easily check that  $\be\bP \bP^\top \bH^\top =\mathsf{Hash}(\bm).$

 Note that a scrambled Reed-Muller code can be distinguished and the secret code can be recovered using the attack \cite{rmattack}.
 Thus, Enhanced pqsigRM proposes a modified Reed-Muller code. Recall from Section \ref{sec:prelim}, that Reed-Muller codes are$(U,U+V)$ codes. The original attack makes use of the fact that the hull of such a code, i.e., $\mathcal{C} \cap \mathcal{C}^\perp$ only consists of $(U,U)$-codewords, which helps to reveal the secret code.  To avoid this, the
proposed code is designed so that $\text{dim}(U^\perp \cap V)$ is large.

\begin{table}[]
    \centering
    \begin{tabular}{c|c|c|c|c}
        Level & Public key size & Signature size & Signing time & Verification time  \\ \hline 
        
        I & 2000 & 1.03 & 2.2 & 0.2  
    \end{tabular}
    \caption{Performance of Enhanced pqsigRM. Sizes are in kilobytes and timings in MCycles. }
    \label{tab:enhanced}
\end{table}

Nevertheless,  Enhanced pqsigRM has been broken by  Debris-Alazard,  Loisel and  Vasseur again exploiting the $(U,U+V)$ structure to recover the secret code.
\medskip

\item $\mathsf{WAVE}$ \\ 

$\mathsf{WAVE}$ \cite{WAVENIST} is a hash-and-sign scheme, whose trapdoor is based on permuted generalized $(U,U+ V )$-codes. Unlike most code-based schemes, $\mathsf{WAVE}$ does not rely on finding small weight codewords, but rather large weight codewords. In fact, until the Hamming weight $\frac{q-1}{q}(n-k)$ it is hard to find low weight codewords and similarly after the Hamming weight $k+\frac{q-1}{q}(n-k)$ it is again hard to find large weight codewords.
\medskip

Again a signer starts with a secret generalized $(U,U+V)$ code and scrambles it to publish the parity-check matrix $\bH'=\bH\bP.$
\medskip

Upon a message $\bm$ the signer computes the hash $\mathsf{Hash}(\bm)$ and hopes that it is the syndrome of a \emph{large} weight vector, i.e., $\mathsf{Hash}(\bm)=\be\bH\top$. 
In order to find such large weight $\be$, $\mathsf{WAVE}$ makes use of the secret generalized $(U,U+V)$ code and performing ISD in the $V$ part.

\medskip

In this case, the signer sends the signature $\be\bP$. A verifier can then easily check that $\mathsf{Hash}(\bm)=\be\bP\bP^\top\bH^\top.$

\begin{table}[]
    \centering
    \begin{tabular}{c|c|c|c|c}
        Level & Public key size & Signature size & Signing time & Verification time  \\ \hline 
        
        I & 3677 & 0.8 & 1160 &205   \\ 
        III & 7867 & 1.2 & 3507  &  464 \\ 
         V & 13632 & 1.6 &  7936 &  813  
    \end{tabular}
    \caption{Performance of $\mathsf{WAVE}$. Sizes are in kilobytes and timings in MCycles. }
    \label{tab:wave}
\end{table}

The main advantage of $\mathsf{WAVE}$ is in its security, in fact a large amount of work has been performed using rejection sampling and smartly choosing the distribution, such that  the preimage sampleable property is achieved, which thwarts all attacks trying to exploit the knowledge of signatures.

As limitations, $\mathsf{WAVE}$ has quite large public key sizes in the range of 3 MB. 

\end{enumerate}

\subsubsection{ZK Protocols and Fiat-Shamir Transform}

In the additional call 3 code-based signature schemes using ZK protocols have been submitted, namely $\mathsf{CROSS}$ based on restricted errors, $\mathsf{LESS}$ based on LEP and $\mathsf{MEDS}$ based on MCE. 

Let us start with the three code-based hash-and-sign schemes.

\begin{table}[h]
    \centering
    \begin{tabular}{|c|c|c|} \hline 
        \textbf{Hard Problem}   & \textbf{Scheme} & \textbf{Comment} \\ 
        \hline 
  Restricted SDP  & $\mathsf{CROSS}$ &  \\ 
        \hline 
     
       LEP & $\mathsf{LESS}$  & Large total size\\ 
        \hline 
  MCE  &   $\mathsf{MEDS}$ & Large total size \\ 
        \hline 
    \end{tabular}
    \caption{Signatures from ZK protocols submitted to the additional call of NIST for signature schemes.}
    \label{tab:compzk}
\end{table}

\begin{enumerate}
    \item $\mathsf{CROSS}$\\

   The signature scheme  $\mathsf{CROSS}$ \cite{CROSSNIST} uses an adapted version of the code-based ZK protocol CVE (see Section \ref{sec:ZKID}). However, instead of using SDP and thus $\sigma$ a linear isometry in the Hamming metric, $\mathsf{CROSS}$ relies on the Restricted SDP. This allows not only to represent vectors $\be \in \mathbb{E}^n$ using only the exponents $\ell(\be) \in \mathbb{F}_z^n$, thus having size $n \lceil \log_2(z)\rceil$, but also the maps that act transitively on $\mathbb{E}^n$ are given by componentwise multiplication with vectors in $\mathbb{E}^n$. 
   \medskip

   $\mathsf{CROSS}$ makes use of several techniques to compress sizes, such as Merkle trees and  and weighted challenge vectors, $\bb \in \{0,1\}^t$. In fact, seeing that one of the responses (where $b_i=1$) has a much smaller size to send than the other, in order to reduce the signature size one would sample challenge vectors $\bb$ of large weight. Note that this information could potentially be used by an attacker. Thus, $\mathsf{CROSS}$ adapted the forgery attack \cite{kales} in order to choose the weight $w$ of $\bb$ and the number of rounds $t$, in a secure way.

\begin{table}[]
    \centering
    \begin{tabular}{c|c|c|c|c|c}
       Variant &  Level & Public key size & Signature size & Signing time & Verification time  \\ \hline 
        
      R-SDP-f &   I & 0.06 & 19 & 1.28 & 0.78   \\ 
        R-SDP-b &   I & 0.06 & 12 & 2.38 & 1.44   \\ 
          R-SDP-s &   I & 0.06 & 10 & 8.96 & 5.84   \\
              R-SDP($G)$-f &   I & 0.03 & 12 & 0.94 & 0.55   \\ 
        R-SDP($G)$-b &   I & 0.03 & 9.2 & 1.85 & 1.09   \\ 
         R-SDP($G)$-s &   I & 0.03 & 7.9 & 6.54 & 3.96   \\
          
        R-SDP-f &   III & 0.09 & 42 & 2.75 & 1.69   \\ 
        R-SDP-b &   III & 0.09 & 28 & 4.97 &2.89  \\ 
          R-SDP-s &   III & 0.09 & 23 & 12.2 & 6.8   \\
              R-SDP($G)$-f &   III & 0.06 & 27 & 2.04 & 1.21   \\ 
        R-SDP($G)$-b &   III & 0.06 & 23 & 2.63 & 1.53   \\ 
         R-SDP($G)$-s &   III & 0.06 & 18 & 9.67 & 5.61   \\

          R-SDP-f &   V & 0.12 & 76 & 4.93 & 3.04   \\ 
        R-SDP-b &   V & 0.12 & 51 & 8.26 & 5   \\ 
          R-SDP-s &   V & 0.12 & 43 & 15.69 & 9.37   \\
              R-SDP($G)$-f &   V & 0.07 & 48 & 3.93 & 2.32   \\ 
        R-SDP($G)$-b &   V & 0.07 & 40 & 4.99 & 2.96   \\ 
         R-SDP($G)$-s &   V & 0.07 & 32 & 14.12 & 7.73   \\
    \end{tabular}
    \caption{Performance of $\mathsf{CROSS}$. Sizes are in kilobytes and timings in MCycles. }
    \label{tab:cross}
\end{table}

\medskip

$\mathsf{CROSS}$ provides several variants, one relying on Restricted SDP, denoted by R-SDP, one relying on Restricted SDP in a subgroup $G$, denotes by R-SDP($G$). The ``f'' variant stands for \emph{fast}, the ``b'' variant provides a \emph{balanced} solution and the ``s'' variant provides a \emph{small} solution. 

\newpage

\item $\mathsf{LESS}$ \\ 

$\mathsf{LESS}$ \cite{lessNIST} is a code-based signature scheme based on LEP and using a ZK protocol with the Fiat-Shamir transform. 

On a high level, the idea of $\mathsf{LESS}$ is as follows.   A prover publishes $\bG \in \mathbb{F}_q^{k \times n}$ chosen at random and chooses a secret permutation matrix  $\bP$ and a $\bv \in (\mathbb{F}_q^\star)^n$ at random. The prover computes and publishes $\bG'=\bG\bP\text{diag}(\bv)$, while the monomial transformation $\bP \text{diag}(\bv)$ is kept secret.
In order to prove knowledge of the monomial transformation, the prover also computes the commitment $\bG''= \bG\bP'\text{diag}(\bv')$ for some permutation matrix $\bP'$ and $\bv' \in (\mathbb{F}_q^\star)^n.$ The prover can thus easily provide the monomial transformation from $\bG$ to $\bG''$ (being $\bP'\text{diag}(\bv')$) or the linear isometry from $\bG'$ to $\bG''$ (being $\bP^{-1}\text{diag}(\bv)^{-1}\bP' \text{diag}(\bv')$) without revealing any information on the secret monomial from $\bG$ to $\bG'$ (being $\bP\text{diag}(\bv)$).

Clearly such ZK protocol comes with a cheating probability of 1/2. $\mathsf{LESS}$ decreases the cheating probability by using multiple public keys. In more details, one chooses several monomial transformations $\bQ_1, \ldots, \bQ_N$ and publishes $\bG\bQ_1, \ldots, \bG\bQ_N$. The verifier now chooses from which $\bG\bQ_i$ the monomial transformation to $\bG''$ should be revealed, thus increasing the challenge space to $N+1$ and the cheating probability to $\frac{1}{N+1}$.

Since the $\bG$ was chosen at random it is enough to send a seed as public key. 
A drawback that comes with $\mathsf{LESS}$ is that the commitments and the responses are structured matrices, thus needing a lot of bits to be sent. 

$\mathsf{LESS}$ also makes use of several compression techniques such as seed trees and weighted challenges.

\begin{table}[]
    \centering
    \begin{tabular}{c|c|c|c|c|c}
       Variant &  Level & Public key size & Signature size & Signing time & Verification time  \\ \hline 
        
      $\mathsf{LESS}$-1b &   I & 13.7 & 8.4 &  878.7 & 890.8    \\ 
       $\mathsf{LESS}$-1i &   I & 41.1 & 6.1 & 876.6  & 883.6   \\ 
       $\mathsf{LESS}$-1s &   I & 95.9 & 5.2 & 703.6  &   714.7 \\
      $\mathsf{LESS}$-3b &   III & 34.5 & 18.4 & 7224 & 7315   \\ 
       $\mathsf{LESS}$-3s &   III & 68.9 & 14.1 &  8527 & 8608   \\ 
    $\mathsf{LESS}$-5b&   V & 64.6 & 32.5& 33787  &    34014 \\
$\mathsf{LESS}$-5s&   V & 129& 26.1 &  22621&   22703 \\
  
    \end{tabular}
    \caption{Performance of $\mathsf{LESS}$. Sizes are in kilobytes and timings in MCycles. }
    \label{tab:less}
\end{table}

Also $\mathsf{LESS}$ provides several variants: a \emph{balanced} configuration, denoted with ``b'', where public
key and signature are roughly of the same size, and a \emph{small} configuration, denotes with ``s'', providing a small signature at the cost of larger public keys. Finally, for level I also an \emph{intermediate} configuration, denoted with ``i'' is given.

Note that at the moment of this writing, the contributors of $\mathsf{LESS}$ suggested a novel approach to shorten the signatures. In \cite{lessnew} the authors propose to use canonical forms of matrices, this corresponds to a short representative of a certain equivalence class. As a first step, the monomial transformations are split as 
$(\bP,\bv, \bP',\bv')$ for $\bP$ a $k\times k$ permutation matrix, $\bP'$ a $(n-k) \times (n-k)$ permutation matrix and $\bv \in (\mathbb{F}_q^\star)^k, \bv' \in (\mathbb{F}_q^\star)^{n-k}$, thus getting $\bG$ and $\bG'$ are monomially equivalent if there exist $(\bP,\bv,\bP',\bv')$ such that 
$$\bG= \bS \bG' \begin{pmatrix} \bP \text{diag}(\bv) & \bz \\ \bz & \bP'\text{diag}(\bv')\end{pmatrix},$$ for some $\bS \in \text{GL}_k(q).$

Since one sends the generator matrices in systematic form, this allows the authors to  restrict the monomial transformation to the redundant $k \times (n-k)$ part and only send $\bP^{-1}\text{diag}(\bv)^{-1}\bP' \text{diag}(\bv').$

The resulting sizes are much smaller now, as shown in Table \ref{tab:Lessnew}.

\begin{table}[]
    \centering
    \begin{tabular}{c|c|c|c}
    Variant &  Level & Public key size & Signature size    \\ \hline 
        
         $\mathsf{LESS}$-1c &   I & 13.9 & 2.4 \\
      $\mathsf{LESS}$-1f &   I & 41.8& 1.8 \\
     $\mathsf{LESS}$-3c &   III & 35 &5.6 \\
     $\mathsf{LESS}$-3f &   III & 105.2& 4.4 \\

   $\mathsf{LESS}$-5c &   V & 65.8 &  10 \\
$\mathsf{LESS}$-5f &   V & 197.3& 7.8\\
  
    \end{tabular}
    \caption{New sizes of $\mathsf{LESS}$. Sizes are in kilobytes. }
    \label{tab:Lessnew}
\end{table}

   \medskip 
   \item $\mathsf{MEDS}$ \\

   $\mathsf{MEDS}$ \cite{MEDSNIST} uses the same strategy as $\mathsf{LESS},$ but adapted to matrix codes and the rank metric.
   
   A prover publishes $\bG_1, \ldots, \bG_k \in \mathbb{F}_q^{m \times n}$ chosen at random and chooses the secret matrices  $\bA \in \text{GL}_m(q), \bB \in \text{GL}_n(q)$   at random. The prover computes and publishes $\bG_i'=\bA \bG_i\bB$, while the rank-metric isometry $(\bA,\bB)$ is kept secret.
In order to prove knowledge of the monomial transformation, the prover also computes the commitment $\bG_i''= \bA' \bG_i\bB'$ for some $\bA' \in \text{GL}_m(q), \bB \in \text{GL}_n(q)$. The prover can thus easily provide the   transformation from $\mathcal{C}= \langle \bG_1, \ldots, \bG_k \rangle$ to $\mathcal{C}'= \langle \bG_1'',\ldots, \bG_k''\rangle$ (being $\bA', \bB'$) or the   isometry from $\mathcal{C}'= \langle \bG_1', \ldots, \bG_k\rangle$ to $\mathcal{C}''=\langle \bG_1'', \ldots, \bG_k''\rangle$ (being $\bA' \bA^{-1}, \bB^{-1}\bB'$) without revealing any information on the secret isometry from $\mathcal{C}$ to $\mathcal{C}'$ (being $\bA,\bB$).

The signature scheme $\mathsf{MEDS}$ also makes use of the same compression techniques as $\mathsf{LESS},$ namely seed trees, fixed weight challenges and multiple public keys.

Similar to $\mathsf{LESS},$ also $\mathsf{MEDS}$ results in quite total sizes, being the size of the signature added to the size of the public key.

\begin{table}[]
    \centering
    \begin{tabular}{c|c|c|c|c|c}
       Variant &  Level & Public key size & Signature size & Signing time & Verification time  \\ \hline 
        
      $\mathsf{MEDS}$-9923 &   I & 9.9 & 9.8 &  518 & 515.6   \\ 
      $\mathsf{MEDS}$-13220 &   I & 13.2 & 12.98 & 88.9 &   87.48 \\ 
       $\mathsf{MEDS}$-41711 &   III & 41.7 & 41 &  1467& 1462 \\
     $\mathsf{MEDS}$-55604 &   III & 55.6 &  54.7 & 387.3  &  380.7 \\ 
   $\mathsf{MEDS}$-134180 &   V & 134.2 & 132.6 & 1629.9 & 1612.6 \\ 
     $\mathsf{MEDS}$-167717&   V & 167.7 & 165.5& 961.8 & 938.9 \\

    \end{tabular}
    \caption{Performance of $\mathsf{MEDS}$. Sizes are in kilobytes and timings in MCycles. }
    \label{tab:meds}
\end{table}
The $\mathsf{MEDS}$ proposal also gives two different parameter sets for each security level, one being tuned for small signatures, and the other for fast signing and verifying. 

\end{enumerate}

\subsubsection{ZK Protocols and MPCitH}

In the additional round for post-quantum signature schemes, one finds 5 code-based schemes which are using ZK protocols and using the MPCitH technique.

\begin{table}[h]
    \centering
    \begin{tabular}{|c|c|c|} \hline 
        \textbf{Hard Problem} & \textbf{MPC}  & \textbf{Scheme}   \\ 
        \hline 
SDP  & hypercube, threshold & SDitH  \\ 
        \hline 
     
       Rank SDP & hypercube additive &  RYDE    \\ 
        \hline 
  Relaxed PKP  &  BG splitting &   PERK    \\ \hline  
  MinRank & additive hypercube, linearized polynomials & MIRA  \\ \hline 
  MinRank & Kipnis-Shamir modeling & MiRitH  \\ 
        \hline 
    \end{tabular}
    \caption{Signatures from ZK protocols  and MPCitH technique submitted to the additional call of NIST for signature schemes.}
    \label{tab:compmpc}
\end{table}

\begin{enumerate}
    \item SDitH: \\

The SDitH signature scheme \cite{SDitHNIST}  relies on the SDP and an MPC protocol which efficiently checks whether
a given shared input corresponds to the solution of a SDP  instance. The used MPC protocol is called \emph{hypercube technique} \cite{hypercube} and instead traditional additive sharings, SDitH uses low-threshold linear secret
sharings to exploit their error-correcting feature, called \emph{threshold
approach} \cite{threshold}.
\medskip

First of all, recall that due to the systematic form of a parity-check matrix $$\bH= \begin{pmatrix} \bA & \text{Id}_{n-k} \end{pmatrix},$$ any syndrome $$\bs= (\be,\be')\bH^\top= \be\bA^\top+\be'.$$ Thus, it is enough to use $\be$ for the secret sharing. 
\medskip

Let $\mathbb{F}_q=\{f_1, \ldots, f_q\}$ . The   MPC
protocol is based on four polynomials, $$S(x),P(x),Q(x),F(x),$$ defined as
\begin{itemize}
    \item $S(x) \in \mathbb{F}_q[x]$ of degree up to $n-1$ such that $S(f_i)=e_i$,
    \item $Q(x) \in \mathbb{F}_q[x]$ of degree $t= \text{wt}_H(\be',\be)$ such that $Q(x)= \prod_{i \in \text{supp}(\be',\be)} (x-f_i)$,
    \item $F(x) \in \mathbb{F}_q[x]$ of degree $n$ such that $F(x)= \prod_{i=1}^n (x-f_i)$,
    \item $P(x) \in \mathbb{F}_q[x]$ of degree up to $t-1$, such that $P(x)= \frac{S(x)Q(x)}{F(x)}.$
\end{itemize}
The correctness of the SDP solution amounts to verifying the relation:
$$S(x)Q(x)=P(x)F(x).$$
 
 While $F(x)$ is made public, the prover wants to convince the verifier of the knowledge of $P(x),Q(x)$, such that $S(f_i)Q(f_i))=P(f_i)F(f_i))=0$ for all $i \in \{1,\ldots,n\}.$

 The soundness of the MPC protocol is based on the fact that $\text{wt}_H(\be',\be)=t$ is equivalent to the existence of $P(x),Q(x)$ of degree up to $t-1$, respectively $t$, such that $S(x)Q(x)=P(x)F(x).$ The parties thus get as shares $(\be, P(x),Q(x))$, locally compute $(\be',\be)$ and $S(x)$ by Lagrange interpolation and verify that $S(x)Q(x)=P(x)F(x).$

\begin{table}[]
    \centering
    \begin{tabular}{c|c|c|c|c|c}
       Variant &  Level & Public key size & Signature size & Signing time & Verification time  \\ \hline 
        
      SDitH-gf256-L1-hyp &   I & 0.1 & 8.2 &  13.4 & 12.5   \\ 
      SDitH-gf251-L1-hyp & I & 0.1 & 8.2 & 22.1 & 21.2\\
       SDitH-gf256-L1-thr &I & 0.1& 10.1& 5.1 & 1.6 \\ 
       SDitH-gf251-L1-thr & I & 0.1 & 10.1 & 4.4 & 0.6 \\
             SDitH-gf256-L3-hyp&   III & 0.2 & 19.1 & 30.5& 27.7  \\
          SDitH-gf251-L3-hyp & III & 0.2 &19.1  & 51.1 & 49 \\
          SDitH-gf256-L3-thr & III & 0.2&24.9 & 14.8 & 4.9\\
          SDitH-gf251-L3-thr & III & 0.2 & 24.9& 11.7 & 1.5 \\ 
                        SDitH-gf256-L5-hyp &   V & 0.2& 33.4& 59.2 & 54.4 \\ 
          SDitH-gf251-L5-hyp&   V &  0.2 & 33.4 & 94.8 & 91.3 \\
          SDitH-gf256-L5-thr & V &  0.2&43.9& 30.5 & 10.2 \\ 
          SDitH-gf251-L5-thr & V & 0.2 &43.9 & 23.9 & 3.2 \\ 
    \end{tabular}
    \caption{Performance of SDitH. Sizes are in kilobytes and timings in MCycles. }
    \label{tab:sdith}
\end{table}  

SDitH provides for each security level 4 parameter sets, two for the hypercube approach and two for the threshold approach. There is a clear trade-off between the two variants, as the hypercube approach achieves smaller signatures, while the threshold approach is faster.

 \medskip
 \item RYDE: \\ 

 RYDE \cite{RYDENIST} is based on the Rank SDP and using the $(\ell,N)$-threshold linear secret sharing
scheme as MPC protocol. For this a secret $s$ is split into $N$ shares $[[s]]= (s_1, \ldots, s_N)$, such that the secret can be recovered from any $\ell+1$ shares $s_i.$ 

RYDE uses an additive $(N,N)$-threshold linear secret sharing scheme, as explained in Section \ref{sec:mpc}, that is the shares of $s$ are given by $$(r_1, \ldots, r_{N-1}, s-\sum_{i=1}^{N-1} r_i),$$ for some random $r_i.$

In more details, the MPC protocol works as follows. We are given a parity-check matrix $$\bH=\begin{pmatrix} \bA & \text{Id}_{n-k} \end{pmatrix} \in \mathbb{F}_{q^m}^{(n-k) \times n},$$ a syndrome $\bs \in \mathbb{F}_{q^m}^{n-k}$ and a weight $t$. Let $(\be,\be')$ be a solution to the Rank SDP instance. Each party is then given a share of  $[[\be]]$. Let $\mathcal{S}$ be the error support of $(\be, \be'),$ i.e., $\mathcal{S}=\langle e_1, \ldots, e_n\rangle$ of $\mathbb{F}_q$- dimension $t$. Then $\mathcal{S}$ has an annihilator polynomial $$f(x)= \prod_{s \in \mathcal{S}}(x-s).$$ The parties also take the following as shares $\bb,\ba,c$, where $\bb \in \mathbb{F}_{q^m}^t$ is a vector containing the coefficients of 
$$L=\sum_{i=1}^t b_i(x^{q^i}-x),$$
 $\ba \in \mathbb{F}_{q^{m \eta}}^t$ is randomly sampled and $c= - \langle \bb,\ba\rangle.$
 The parties now proceed as
 \begin{enumerate}
     \item sample at random $(\gamma_1, \ldots, \gamma_n,\varepsilon) \in \mathbb{F}_{m \cdot \eta}^{n+1}$,
     \item locally compute $\be' = \bs-\bA\be$,
     \item locally compute $z=-\sum_{j=1}^n \gamma_j (e_j^{q^t}-e_j)$,
     \item locally compute $w_i= \sum_{j=1}^n \gamma_j(e_j^{q^i}-e_j)$ for all $i \in \{1, \ldots, t-1\}$,
     \item locally compute and open $\alpha = \varepsilon \mathbf w + \ba$,
     \item locally compute and open $v=\varepsilon z - \langle \alpha,\bb\rangle -c$,
     \item and they accept if $v=0.$
 \end{enumerate}

 RYDE is able to achieve smaller signatures than SDitH, however at the cost of a slower signing and verifying process. In Table \ref{tab:ryde}, we can see the two parameter sets for each security level, one denoted by ``F'' for a \emph{fast} version and one denoted by ``S'' for a \emph{small} version.

\begin{table}[]
    \centering
    \begin{tabular}{c|c|c|c|c|c}
       Variant &  Level & Public key size & Signature size & Signing time & Verification time  \\ \hline 
        
      RYDE-128F &   I & 0.09 & 7.4 &  5.4& 4.4   \\ 
     RYDE-128S & I & 0.09 & 6 & 23.4 & 20.1 \\
    RYDE-192F &   III & 0.13 & 16.4 & 12.2 & 10.7  \\
          RYDE-192S & III & 0.13 & 13 & 49.6& 44.8\\ 
          RYDE-256 &   V & 0.2 & 29.1& 26 & 22.7 \\ 
          RYDE-256 & V & 0.2 &22.8 &105.5 & 94.9\\ 
    \end{tabular}
    \caption{Performance of RYDE. Sizes are in kilobytes and timings in MCycles. }
    \label{tab:ryde}
\end{table}

\medskip
\item PERK: \\

PERK \cite{PERKNIST} is based on the relaxed PKP, that is, one publishes a parity-check matrix $\bH \in \mathbb{F}_q^{(n-k)\times n}$, a vector $\be \in \mathbb{F}_q^n$ and a permuted syndrome $\bs \in \mathbb{F}_q^{n-k},$ i.e., there exists some $\sigma \in S_n$ such that $\bH\sigma(\be)^\top=\bs^\top.$ Hence, the secret is given by the permutation $\sigma.$

PERK is based on the BG ZK protocol introduced in \cite{bg}, and employs a simple MPC protocol. 

The BG protocol works as follows: one samples randomly permutations $\sigma_2, \ldots, \sigma_N \in S_n,$ and vectors $\bv_2, \ldots, \bv_N \in \mathbb{F}_q^n$. One computes the commitments $c_i$ from the hashes of the used seeds to generate $\sigma_i,\bv_i.$

One then computes the permutation $\sigma_1= \sigma_2^{-1} \circ \cdots \circ \sigma_N^{-1} \circ \sigma$ and samples a random $\bv_1 \in \mathbb{F}_q^n.$ The commitment $c_1$ is given by the hash of $\sigma_1,$ and the seed for $\bv_1$. One then computes $$\bv= \bv_N+\sum_{i=1}^{N-1} \sigma_N \circ \cdots \circ \sigma_{i+1}(\bv_i)$$ and the commitment $c$ which is the hash of the syndrome $\bv\bH^\top.$

The first challenge of the verifier is some $\beta \in \mathbb{F}_q,$ with this the prover computes $\widetilde{\be}_0=\beta\be$ and for all $i \in \{1,\ldots, N\}$ the vectors $\widetilde{\be}_i= \sigma_i(\widetilde{\be}_{i-1})+\bv_i.$
The first response is given by the hash of all the $\widetilde{\be}_i.$ The verifier can then challenge any $i \in \{1, \ldots, N\}$ and the prover responds with $c_i, \widetilde{\be}_i$ and in the case $i=1$ also with $\sigma_1.$

The employed MPC protocol asks $N$ parties  to perform the BG steps $i \in \{1,\ldots, N\}$
\begin{itemize}
    \item if $i\neq 1$ sample random $(\sigma_i,\bv_i) \in S_n \times \mathbb{F}_q^n$ and compute the commitment $c_i=\mathsf{Hash}(\sigma_i,\bv_i)$ (actually of their seeds),
    \item if $i=1$ sample random $\bv_1 \in \mathbb{F}_q^n$ and compute $\sigma_1$ as usual, i.e., $\sigma_1= \sigma_2^{-1} \circ \cdots \circ \sigma_N^{-1}\circ \sigma$ and the commitment $c_1=\mathsf{Hash}(\sigma_1,\bv_1)$,
    \item upon the challenge $\beta$ one sets $\widetilde{\be}_0)=\beta \be$ and each party computes $\widetilde{\be}_i=\sigma_i(\widetilde{\be}_{i-1})+\bv_i.$
    \item The verifier has to recompute the commitments and $\widetilde{\be}_i$ for each $i \in \{1, \ldots,N\}$.
\end{itemize}
We will denote this MPC protocol as ``BG splitting''.

\begin{table}[]
    \centering
    \begin{tabular}{c|c|c|c|c|c}
       Variant &  Level & Public key size & Signature size & Signing time & Verification time  \\ \hline 
        
    PERK-I-fast3 &   I &0.15 & 8.35 &  7.6 & 5.3   \\ 
    PERK-I-fast5 & I & 0.24& 8.03 & 7.2 & 5.1\\
    PERK-I-short3 & I & 0.15& 6.56 &  39 & 27   \\ 
    PERK-I-short5 & I &0.24 & 6.06 & 36 & 25 \\
    PERK-III-fast3 &   III &0.23 &  18.8&  16 & 13   \\ 
    PERK-III-fast5 & III &0.37 & 18 & 15 & 12 \\
    PERK-III-short3 &   III & 0.23& 15 &  82 & 65\\ 
    PERK-III-short5 & III & 0.37&13.8  & 77 & 60 \\
      PERK-V-fast3 &   V & 0.31& 33.3 &  36 & 28   \\ 
    PERK-V-fast5 & V &0.51 & 31.7 & 34 & 26 \\
    PERK-V-short3 &   V & 0.31& 26.4 &  185 & 143   \\ 
    PERK-V-short5 & V & 0.51&  24.2& 171 & 131 \\
    \end{tabular}
    \caption{Performance of PERK. Sizes are in kilobytes and timings in MCycles. }
    \label{tab:perk}
\end{table}  
\medskip 

\item MIRA: \\

MIRA \cite{MIRANIST} is based on the MinRank problem, i.e., the decoding problem for Matrix codes endowed with the rank metric. The MPC protocol used in MIRA is an additive sharing. That is for a secret $s$, the shares are $(r_1, \ldots, r_{N-1}, s- \sum_{i=1}^{N-1} r_i)$, for some random $r_i.$ 

The MPC protocol is similar to the one in RYDE; we have the generating matrices $\bG_1, \ldots, \bG_k \in \mathbb{F}_q^{m \times n},$ one chooses a secret $\bx \in \mathbb{F}_q^k$ and publishes $\bE$ of rank $t$ and $\bR= \bE- \sum_{i=1}^k \bG_i x_i.$

Each party received $\bx \in \mathbb{F}_q^k$ and the coefficients $b_i \in \mathbb{F}_{q^m}$ of the annihilating polynomial $$L(x)= \sum_{i=1}^t b_i x^{q^i},$$ a random $\ba \in \mathbb{F}_{q^{m \eta}}^t$ and $c= -\langle \ba,\bb \rangle.$
The parties proceed as follows
\begin{enumerate}
    \item sample random $(\gamma_1, \ldots, \gamma_n,\varepsilon) \in \mathbb{F}_{q^{m \eta}}^{n+1},$
    \item compute $\bE= \bR+ \sum_{i=1}^k x_i \bG_i$,
    \item set $e_i \in \mathbb{F}_{q^m}$ associated to the $i$th column of $\bE$, that is for some basis $\Gamma$ of $\mathbb{F}_{q^m}$ over $\mathbb{F}_q$ compute $e_i = \Gamma^{-1}(\bE_{\{i\}}),$
    \item compute $z= -\sum_{j=1}^n \gamma_j e_j^{q^t}$,
    \item compute $w_i= \sum_{j=1}^n \gamma_j e^{q^i}$ for all $i \in \{1, \ldots, t\}$,
    \item open the shares to compute $\alpha=\varepsilon \mathbf w + \ba,$
    \item open the shares to compute $v= \varepsilon z - \langle \alpha,\bb \rangle -c$,
    \item and accept if $v=0.$
\end{enumerate}

 MIRA has two parameter sets for each security level, given in Table \ref{tab:mira}. One parameter set is denoted by ``F'' for a \emph{fast} version and one denoted by ``S'' for a \emph{small} version. Compared to RYDE, which uses the same MPC protocol but is based on the rank decoding problem for $\mathbb{F}_{q^m}$-linear codes instead of $\mathbb{F}_q$-linear codes, we can observe that MIRA is able to achieve slightly smaller signature sizes than RYDE, however at the cost of a much slower signing and verification process.

\begin{table}[]
    \centering
    \begin{tabular}{c|c|c|c|c|c}
       Variant &  Level & Public key size & Signature size & Signing time & Verification time  \\ \hline 
        
      MIRA-128F &   I & 0.09 & 7.4 &  37.4 & 36.7  \\ 
     MIRA-128S & I & 0.09 & 5.6 & 46.8 & 43.9 \\
    MIRA-192F &   III & 0.12 & 15.5 & 107.2 & 107  \\
     MIRA-192S & III & 0.12 & 11.8 & 119.7 & 116.2 \\ 
    MIRA-256F &   V & 0.15 & 27.7 & 322.3 & 323.2 \\ 
    MIRA-256S & V & 0.15 &20.8 & 337.7 & 331.4\\ 
    \end{tabular}
    \caption{Performance of MIRA. Sizes are in kilobytes and timings in MCycles. }
    \label{tab:mira}
\end{table}  

\medskip
\item MiRitH: \\

Also MiRitH \cite{MiRitHNIST}  is based on the MinRank problem and uses an MPC protocol. However, MiRitH uses a Kipnis-Shamir \cite{kipnis} modeling, instead of the linearized polynomials used in MIRA. This leads to faster verification and singing.

Recall that in MinRank,
the generating matrices $\bG_1, \ldots, \bG_k \in \mathbb{F}_q^{m \times n},$  a received matrix $\bR \in \mathbb{F}_q^{m \times n}$ are made public, and the task is to find $\bx \in \mathbb{F}_q^k$ such that $\bE= \bR- \sum_{i=1}^k \bG_i x_i$ has rank at most $t$. 

 The Kipnis-Shamir modeling is based on the following fact, if there exists a vector $\bx \in \mathbb{F}_q^k$ and a matrix $\mathbf K\in \mathbb{F}_q^{t\times (n-t)},$ such that 
\begin{equation}\label{ks} (\bR-\sum_{i=1}^k x_i \bG_i)\begin{pmatrix} \bS \\ \mathbf K \end{pmatrix}=\bz,\end{equation} for some invertible $\bS \in \mathbb{F}_q^{(n-t) \times (n-t)}$ then $\bx$ is a solution to the MinRank instance $\bR, \bG_1, \ldots, \bG_k.$
Thus, if we write $\bR=\begin{pmatrix}\bR' & \bR''\end{pmatrix}$ and  $\bG_i=\begin{pmatrix} \bG_i' & \bG_i'' \end{pmatrix},$ for each $i \in \{1, \ldots,k\}$  then we can transform Equation \eqref{ks} to
$$\bR' - \sum_{i=1}^k x_i \bG_i' = \left( \bR''-\sum_{i=1}^k x_i \bG_i''\right) \mathbf K.$$
Thus, let us write $\bR_x= \bR-\sum_{i=1}^k x_i \bG_i$ and as before $\bR_x= \begin{pmatrix}
    \bR_x' & \bR_x''
\end{pmatrix}$, hence the Kipnis-Shamir modeling amounts to showing that $\bR_x'= \bR_x'' \mathbf K.$

Thus, each party gets the additive sharings $[[\bx]]$ and $[[\mathbf K]]$ and $[[\bA]]$ for a random $\bA \in \mathbb{F}_q^{s \times t}$ and $[[\bC]]$ for $ \bC= \bA\mathbf K.$
The parties then proceed as follows
\begin{enumerate}
    \item locally compute sharings $[[\bR_x']]$ and $[[\bR_x'']]$,
    \item sample a random matrix $\bX \in \mathbb{F}_q^{s \times m}$,
    \item locally compute $$[[\mathbf Y]]= \bX [[\bR_x'' ]] + [[\bA]]$$ and open the sharings, so each party gets $\mathbf Y$,
    \item locally compute $$[[\mathbf V]]=\mathbf Y\mathbf K-\bX[[\bR_x']]-\bC$$ and open the sharings, so that each party gets $\mathbf V$,
    \item accept if $\mathbf V=\bz.$
\end{enumerate}

\begin{table}[]
    \centering
    \begin{tabular}{c|c|c|c|c|c}
       Variant &  Level & Public key size & Signature size & Signing time & Verification time  \\ \hline 
        
      MiRitH-Iaf &   I & 0.13 & 7.7 &  4.8 & 4.5  \\ 
     MiRitH-Ias & I & 0.13 & 5.7 & 42.9 & 42.7 \\
    MiRitH-Ibf &   I & 0.14 & 8.8 & 6.4 & 5.9  \\ 
     MiRitH-Ibs & I & 0.14 & 6.3 & 51.5 & 51.8 \\
      MiRitH-IIIaf &   III & 0.2 & 16.7 &  11.2&10.4  \\ 
     MiRitH-IIIas & III & 0.2 & 12.4 & 94.5 & 94.2 \\
      MiRitH-IIIbf &   III & 0.2 & 17.9 &  13.3 &12.3  \\ 
     MiRitH-IIIbs & III & 0.2 & 13.1 & 112.2 & 112\\
      MiRitH-Vaf &   V & 0.25 & 29.6 &  23.9 & 22.2   \\ 
     MiRitH-Vas & V & 0.25 & 21.8 & 196.7 & 194.6 \\
     MiRitH-Vbf &   V & 0.27 & 32 &  28.3 & 26.3   \\ 
     MiRitH-Vbs & V & 0.27 & 23.1 & 241.6 & 241 \\
    \end{tabular}
    \caption{Performance of MiRitH. Sizes are in kilobytes and timings in MCycles. }
    \label{tab:mirith}
\end{table}  
MiRitH presents four parameter sets for each security level, two denoted with ``a'', and two denotes with ``b'', where the ``b'' variant achieves a greater security level to leave some margins for possible further improvements on solving PKP. The parameter sets denoted by ``f'' are a \emph{fast}  variant, while the ``s'' denotes the \emph{small} variant.  We can see a clear difference in the timings compared to MIRA. 

Another variant of MiRitH is using the hypercube technique, which allows to get even shorter signatures. While the hypercube variant presents several parameter sets for short signatures, we chose only the shortest variant.

\begin{table}[]
    \centering
    \begin{tabular}{c|c|c|c|c|c}
       Variant &  Level & Public key size & Signature size & Signing time & Verification time  \\ \hline 
        
      MiRitH-hyper-Iaf &   I & 0.13 & 6.2 &  4.1 & 3.4  \\ 
     MiRitH-hyper-Ias & I & 0.13 & 3.9 & 3122 & 3066 \\
    MiRitH-hyper-Ibf &   I & 0.14 & 6.7 & 5.3 & 4.4  \\ 
     MiRitH-hyper-Ibs & I & 0.14 & 4.1 & 3184 &3156 \\
      MiRitH-hyper-IIIaf &   III & 0.21 & 13.4 &  9 & 8.2   \\ 
     MiRitH-hyper-IIIas & III & 0.21 & 8.7 & 5149 & 5120 \\
      MiRitH-hyper-IIIbf &   III & 0.21 & 13.8 &  10.2 &9.1  \\ 
     MiRitH-hyper-IIIbs & III & 0.21 & 8.8 & 5278 & 5250\\
      MiRitH-hyper-Vaf &   V & 0.25 & 23.9 &  17.4 & 14.8   \\ 
     MiRitH-hyper-Vas & V & 0.25 & 15.1 & 9730 & 9800 \\
     MiRitH-hyper-Vbf &   V & 0.27 & 25 &  21.2 & 18.2   \\ 
     MiRitH-hyper-Vbs & V & 0.27 & 15.4 & 9767& 9811 \\
    \end{tabular}
    \caption{Performance of MiRitH using Hypercube. Sizes are in kilobytes and timings in MCycles. }
    \label{tab:mirith2}
\end{table}  

\end{enumerate}
 \begin{remark}
     Note that all the reported timings are from the respective documentations and based on different implementations. For signature sizes, we have taken the average sizes. 
 \end{remark}

\newpage
\section{Conclusion}
In this book chapter, we presented a comprehensive collection of code-based cryptography, concerning its history and most famous schemes, until the latest advances, especially in signature schemes.

There are several open question within this research area, prominent ones include 
\begin{itemize}
    \item Is the Rank Syndrome Decoding Problem NP-hard?
    \item Can we distinguish classical Goppa codes?
    \item How to improve the code-equivalence solvers?
    \item How to construct an efficient and secure hash-and-sign scheme?
\end{itemize}
.. and many more. 

We hope that this book chapter helps young researchers to get into code-based cryptography, so that we can advance in these open question together. 

Any comments, typos or additions can be sent to \url{violetta.weger@tum.de} and we will update the ArXiv version regularly. 
       
\section*{Acknowledgement}
The authors would like to thank Jean-Pierre Tillich, Nicolas Sendrier and Thomas Debris-Alazard for fruitful discussions. The authors would also like to thank Giovanni Tognolini and the anonymous reviewers for pointing out some of the typos.
\newline
Violetta Weger is  supported by the Swiss National Science Foundation grant number 195290 and  by the European Union's Horizon 2020 research and innovation programme under the Marie Sk\l{}odowska-Curie grant agreement no. 899987. 
\newline
Niklas Gassner and Joachim Rosenthal are supported by armasuisse Science and Technology (Project Nr.: CYD C-2020010). 

 \clearpage

\renewcommand{\bibsection}{\section*{Bibliography}}
\bibliographystyle{plain}
\bibliography{references.bib}

\end{document}